\title{\huge {\bf 
Probabilistic Principles for\vspace{-5pt}\\ Biophysics and Neuroscience}\vspace{20pt}\\
\Large { \bf Entropy production, \vspace{-5pt}\\Bayesian mechanics\vspace{-5pt}\\ \& \vspace{-5pt}\\the free-energy principle.}}
\author{\vspace{42pt}\\\vspace{0pt} A thesis presented for the degree of\\
Doctor of Philosophy of Imperial College London\\
\vspace{20pt}by,\\
\vspace{110pt}\large{\bf Lancelot Da Costa}\\
Department of Mathematics \\
Imperial College London\\
\vspace{20pt}180 Queen's Gate, London SW7 2AZ}
\date{October 2023}
\begin{document}

\maketitle

\addcontentsline{toc}{chapter}{Abstract}

\begin{abstract}

This thesis focuses on three fundamental aspects of biological systems; namely, entropy production, Bayesian mechanics, and the free-energy principle. The contributions are threefold: 1) We compute the entropy production for a greater class of systems than before, including almost any stationary diffusion process, such as degenerate diffusions where the driving noise does not act on all coordinates of the system. Importantly, this class of systems encompasses Markovian approximations of stochastic differential equations driven by colored noise, which is significant since biological systems at the macro- and meso-scale are generally subject to colored fluctuations. 2) We develop a Bayesian mechanics for biological and physical entities that interact with their environment in which we give sufficient and necessary conditions for the internal states of something to infer its external states, consistently with variational Bayesian inference in statistics and theoretical neuroscience. 3) We refine the constraints on Bayesian mechanics to obtain a description that is more specific to biological systems, called the free-energy principle. This says that active and internal states of biological systems unfold as minimising a quantity known as free energy. The mathematical foundation to the free-energy principle, presented here, unlocks a first principles approach to modeling and simulating behavior in neurobiology and artificial intelligence, by minimising free energy given a generative model of external and sensory states.
\end{abstract}

\clearpage

\addcontentsline{toc}{chapter}{Acknowledgements}

\begin{acknowledgements}


I dedicate this thesis to my loving partner Pauline Chatellard who continuously and restlessly supports me through highs and lows, and so for the past many years. I could not have done this without you. 

My heartfelt gratitude goes to my awesome team of supervisors, Grigorios A. Pavliotis and Karl Friston, who have been incredibly supportive, helpful, and insightful throughout my research journey. Through your mentorship you have both given me a rich and complementary perspective, and a unique technical expertise. It is in large part to you that I owe my present and future academic success.

I thank my family for their support; especially my brother for inspiring me to be dedicated, resilient, and never give up, my father for encouraging me to pursue mathematics and neuroscience, and stimulating my curiosity since childhood, my mother for supporting me whatever happened, and teaching me to be kind and generous in all areas of life, and my aunt for being there in the most difficult moments. I am grateful to my longtime friends, particularly Nicolas Ward, Mathieu Binder, Arnaud Pedrazzani, Baptiste Pesanti, Ritwick Sundar, Stéphane Nordin, and Wojtek Reise.

Thank you to my amazing colleagues and collaborators. Your perspectives have been mind expanding, and many of you have become dear friends over the years. You are the reason why the PhD has been such a fun, interesting and rewarding experience. At the risk of omitting many, I particularly thank Conor Heins, Maxwell Ramstead, Thomas Parr, Kai Ueltzhöffer, Noor Sajid, Alessandro Barp, Victorita Neacsu, Laura Convertino, Anjali Bhat, Ryan Smith, Biswa Sengupta, Pablo Lanillos, Tim Verbelen, Dalton Sakthivadivel,  Théophile Champion, Guillermo B. Morales, Guilherme França, Aswin Paul, Magnus T. Koudahl, Beren Millidge, Peter Zeidman, Anil Seth, Christopher Buckley, Zafeirios Fountas, Alexey Zakharov, Umais Zahid, Sergio Rubin, Cyrus Mostajeran, Come Annicchiarico, Jeremie Mattout, Lars Sandved-Smith, Antoine Lutz, Danijar Hafner, Mahault Albarracin, Riddhi Pitliya Jain and Jonas Mago. Not least, I am deeply grateful to Michael I. Jordan, Joshua B. Tenenbaum, Philipp Hennig, and Samuel Gershman for invaluable career advice. Lastly, I thank my friends and classmates from the Centre for Doctoral Training in Mathematics of Random Systems, and particularly Rémy Messadene, Felix Prenzel, Alain Rossier, Victoria Klein, Julian Sieber, Benedikt Petko, Jonathan Tam, and Alessandro Micheli.

I thank the staff of the Centre for Doctoral Training in Mathematics of Random Systems, particularly our centre manager Lydia Noa, for always being available, incredibly helpful, and supportive; and the directors, who took a chance on me in spite of my atypical academic background.

Last but not least, I will always be grateful too the Fonds National de la Recherche of Luxembourg (Project code: 13568875) for funding my PhD research. Your generous funding has allowed me to pursue my PhD free of many administrative burdens, with unparalleled intellectual freedom, and with the ability to attend and contribute to many scientific events.

\end{acknowledgements}
\clearpage
\addcontentsline{toc}{chapter}{Integrity}

\begin{integrity}
  I certify that this thesis, and the research to which it refers, are the product of my own work,
and that any ideas or quotations from the work of other people, published or otherwise, are
fully acknowledged in accordance with the standard referencing practices of the discipline. In addition, I certify that I have obtained the necessary copyright allowances for reproducing some of my own academic papers in this thesis.
\end{integrity}
\vspace{25pt}

\addcontentsline{toc}{chapter}{Copyright}

\begin{copyr}
The copyright of this thesis rests with the author. Unless otherwise indicated, its contents are licensed under a Creative Commons Attribution-Non Commercial 4.0 International Licence (CC BY-NC).

Under this licence, you may copy and redistribute the material in any medium or format. You may also create and distribute modified versions of the work. This is on the condition that: you credit the author and do not use it, or any derivative works, for a commercial purpose.

When reusing or sharing this work, ensure you make the licence terms clear to others by naming the licence and linking to the licence text. Where a work has been adapted, you should indicate that the work has been changed and describe those changes.

Please seek permission from the copyright holder for uses of this work that are not included in this licence or permitted under UK Copyright Law.
\end{copyr}

\tableofcontents

\listoffigures

\chapter{Introduction}






\section{Background and motivation}

The ambition for this thesis started many years ago as I was finishing Part III of the Mathematical Tripos at Cambridge. I already knew that I wanted to work in neuroscience – deriving a fundamental theory of brain function that could also be used for principled artificial intelligence. A friend of the time, Sebastiano Cultrera, told me about the free-energy principle. This is a principle postulated by the British neuroscientist Karl Friston promising a powerful and unifying account of brain function in terms of minimising one single objective – variational free-energy \cite{fristonFreeenergyPrincipleUnified2010}. I did not understand the theory at the time but I thought it – given my idealistic pure mathematician's background – extremely exciting and a promising starting point  towards the end goal of a complete mathematical theory of the brain. Following my masters in mathematics, I undertook a masters in brain science at UCL where I was very fortunate to work with Karl Friston himself on applications of the free-energy principle for modelling neural dynamics in the brain. During that year, I realised that the free-energy principle was a physicist's intuition that did not yet admit rigorous mathematical and physical underpinnings – although attempts led by Karl Friston over the last decade had made considerable progress towards this. I realised that there was a significant opportunity for a mathematician like me to develop this fundamental theory. To this end, I organised an inter-university co-supervision with Professor Pavliotis, a world-renowned expert in stochastic differential equations – the type of mathematics that seemed to be needed to formalise the principle – and Karl Friston.

This collaboration and co-supervision has exceeded my expectations for learning and contributing to mathematical neuroscience and various related fields. For instance, I learned that the free-energy principle had been postulated to hold in biological systems and some physical systems, and a burgeoning line of work – which started at about the same time as I started my PhD – applies the free-energy principle to generate intelligent agents for artificial intelligence.

My PhD work has been devoted to developing the mathematical foundation of the free-energy principle for neuroscience and biophysics more broadly, as a description of human and biological cognition, and establishing principled and rigorous methodologies for its application in artificial intelligence. I also explored its potential for generating intelligent agents that are robust, whose decisions are explainable, and that are safe to interact with, which are current main challenges in artificial intelligence \cite{dacostaHowActiveInference2022a}. The overarching goal was to obtain an unbroken narrative from mathematical physics, to cognitive neuroscience, and finally to artificial intelligence – and a coherent framework to think about these fields.

This thesis focuses on three highlights from my PhD's work which pertain to fundamental principles in biophysics and neuroscience, and practically ignores all of the applications to artificial intelligence. The reason for this is twofold: to keep the length of this thesis within reasonable bounds, and to focus on some of the most mathematically significant aspects of my PhD project.

\section{Overview of thesis}


This thesis starts with the fundamental assumption any fundamental theory for biophysics and neuroscience has to be consistent with the rest of physics. Its starting point must therefore be a stochastic differential equation (a.k.a. a Langevin equation)

\begin{equation}
\label{eq: sde}
    \dot x_t = f(x_t)  + w_t.
\end{equation}
This equation decomposes the motion of a system into a deterministic part $f(x_t)$ governed by a vector field $f$ , which summarises what we know about the system (this part is usually called the \textit{drift} or the \textit{flow}), and a stochastic part $w_t$ which summarises what we don't know about the system (this part is usually called the \textit{noise} or the \textit{random fluctuations}). So why start here? Because this description is consistent with the rest of physics. For instance, stochastic differential equations form 
the basis of statistical mechanics \cite{pavliotisStochasticProcessesApplications2014}. In particular classical mechanics is the limit of statistical mechanics as particles become large and random fluctuations on the motion become infinitesimally small. Note that we did not commit to any specific interpretation of the stochastic differential equation at this stage – e.g. Stratonovich or Ito – and we did not commit to any specific assumptions about the functional form of the random fluctuations. This will come in later.

This thesis is the collection of three papers from my PhD's work which study three fundamental aspects of biological systems; namely, entropy production, Bayesian mechanics, and the free-energy principle.

\subsection{Chapter 2: Entropy production\footnote{\normalsize \textbf{Adapted from:} L Da Costa, GA Pavliotis. The entropy production of stationary diffusions. \textit{Journal of Physics A: Mathematical
and Theoretical}. 2023.}}

One of the fundamental traits of living systems is time-reversibility and entropy production. Entropy production measures the amount of time-irreversibility present in a system, that is to what extent the process as time goes forward differs from the process as time goes backwards. The entropy production also measures the minimum amount of energy consumed by the system for the period of time over which it is measured, or equivalently the minimal amount of heat dissipated by the system during that period of time.

In this chapter we develop a comprehensive and general theory for the entropy production of systems described by stochastic differential equations, and contextualise it within the broader literatures of probability and statistical physics.

From a mathematical perspective we are able to give a computable expression for the entropy production in a wider range of equations than was previously available, allowing for so-called degenerate diffusions which are SDEs where the noise process need not act on all coordinates. Although we focus on diffusion processes, that is SDEs driven by white noise, the fact that we allow degenerate diffusions means that the results apply to Markovian approximations of systems driven by coloured noise. This is fundamentally important because any biological system that is not modelled at a molecular level will plausibly be subject to coloured fluctuations. Although we focus on Euclidean state spaces, our results can easily be generalised to manifold state spaces.

Technically, we show that any stationary diffusion can be decomposed into a time reversible part and a time irreversible part, which generalises the classical Helmholtz-Hodge decomposition from vector calculus. We show that this decomposition is equivalent to writing the Fokker-Planck equation of the system in pre-GENERIC form, where GENERIC stands for General Equations for Non-Equilibrium Reversible-Irreversible Coupling – a prominent framework for analysing Dynamics in non-equilibrium statistical physics pioneered by Ottinger \cite{ottingerEquilibriumThermodynamics2005}. This is also equivalent to a decomposition of the backward Kolmogorov equation of the system considered in hypocoercivity theory, a theory developed by C. Villani for quantifying the rate of convergence of dynamics to their non-equilibrium steady state \cite{villaniHypocoercivity2009}.

All in all, this paper reviews essential concepts related to entropy production and time irreversibility and provides a wide range of tools for statistical physicists and biological physicists to study non-equilibrium phenomena.

\subsection{Chapter 3: Bayesian mechanics\footnote{\normalsize \textbf{Adapted from:} L Da Costa, K Friston, C Heins, GA Pavliotis. Bayesian mechanics for stationary processes. \textit{Proceedings of the
Royal Society A}. 2021.}}

The main idea that underwrites this chapter, is that if a 'thing' (e.g. a cell) and its surrounding environment, together defining a 'system' $x_t$, have an attracting stationary state, then the thing can be interpreted as being adaptive in the sense that its interactions on the environment look as if the thing is bringing itself – and the environment – towards a preferred set of states (i.e. homeostatic) following any kind of perturbation.

The implications of this simple observation are quite rich. For instance, the existence of a thing and what is not the thing entails the existence of boundary states $b_t$ between the internal states of the thing $\mu_t$ and the states of every thing else $\eta_t$. We formalise the idea of a boundary between a thing and its environment as a Markov blanket of the stationary density $p$ (i.e. conditional independence given the blanket states)

\begin{equation}
\label{eq: Markov blanket}
\mu_t \perp \eta_t \mid b_t, \text{ where } x_t = (\mu_t,b_t,\eta_t)\sim p.
\end{equation}
that exists over the period of time that the thing exists.

The existence of a Markov blanket means that internal states synchronise with external states via vicarious interactions through the blanket. We can go beyond this and show that internal states encode probabilistic beliefs about external states and update this continuously in the face of new blanket states (e.g. new sensory information) consistently with variational Bayesian inference in theoretical neuroscience and statistics.

If we further decompose blanket states into sensory and active states, then we can show that active states control the system back to its stationary state defining its range of preferred states following any perturbation, consistently with active inference in theoretical neuroscience and generalized forms of PID (Proportional-Integral-Derivative) control in engineering.

Since the publication of this article, it has become clear that the proper way to define a boundary between a thing and everything else is through a Markov blanket over trajectories, that is the fulfilment of \eqref{eq: Markov blanket} where states are replaced by paths over some period of time, as explored in the next Chapter of this thesis. Indeed, although articulated in a different language, this coincides with the notion of a multipartite process in statistical physics \cite{kardesThermodynamicUncertaintyRelations2021,wolpertMinimalEntropyProduction2020a}. Fortunately for this paper, a Markov blanket over paths often entails a Markov blanket over states. This was first realized empirically in simulations of a primordial soup \cite{fristonLifeWeKnow2013}, and then in further simulations involving sparsely coupled stochastic Lorenz systems \cite{fristonStochasticChaosMarkov2021}. The sparse coupling conjecture claims that for sufficiently high dimensional and nonlinear random dynamical systems, a Markov blanket over paths implies the existence of a Markov blanket over states (partial demonstrations are given in \cite{heinsSparseCouplingMarkov2022,sakthivadivelWeakMarkovBlankets2022}). In particular, this conjecture always holds in SDEs driven by coloured noise that can be expanded in generalized coordinates \cite{fristonFreeEnergyPrinciple2023a,fristonPathIntegralsParticular2023}.

\subsection{Chapter 4: the free-energy principle\footnote{\normalsize \textbf{Adapted from:} K Friston, L Da Costa, N Sajid, C Heins, K Ueltzhoeffer, GA Pavliotis, T Parr. The free energy principle made simpler but not too simple. \textit{Physics Reports}. 2023}}

Chapter 3 goes one step further in defining the type of constraints that a biological organism must comply with, and derives stronger conclusions. In this chapter, the boundary between an organism and its environment is defined as a Markov blanket over paths. As before, the boundary is composed of sensory and active states. Further, sensory states are permitted to influence internal states directly but internal states may not directly influence sensory states. Similarly for active states: they may influence external states directly, however external states may not directly influence active states. These influences are summarised by sparse coupling in the flow/drift vector field in the stochastic differential equation defining the whole system.

With this special set up, the conclusions of chapter 2 still hold and then some. We can show that the internal and active parts of the organism have paths of least action (i.e. most likely path) that minimise a quantity known as variational free-energy – the objective that used in variational Bayesian inference and theoretical neuroscience to describe the purpose of brain function. When we consider stochastic differential equations driven by coloured noise, we can show that internal dynamics correspond to generalized (Bayesian) filtering \cite{fristonGeneralisedFiltering2010} – perhaps the most generic filtering algorithm in the literature.

When we consider organisms at a sufficiently high-level of coarse-graining such that the dynamics of the organism itself (and not necessarily its environment) are governed by classical mechanics, we can see that the most likely path of internal and active states satisfy a principle of least action by minimizing quantity known as expected free-energy. The expected free-energy is the analog of the variational free-energy for trajectories. What is interesting is that by examining the expected free-energy carefully we recover as special cases several objective functions that are used to describe optimal behavior in various fields of science and engineering, e.g. economics, machine learning, and psychology. This suggests that several of these theories that have been arrived at independently may be unified and contextualized within a generic free-energy framework.

Pragmatically, the free-energy principle on offer gives us equations of motion, in terms of minimizing variational and expected free-energy, for describing and simulating intelligent biological behavior. This is shown by rehearsing some simulations from the literature such as a simulation of handwriting and a simulation of saccadic eye movements.

Much remains to be done in terms of analyzing the precise relationship between the free-energy principle and other theories of purposeful biological behavior that exist in the literature. Much also remains in refining the principle to obtain a description that is exclusive to the human brain, with more specific implications for neuroscience and artificial intelligence. However, the vast body of work utilizing the free-energy principle for modeling and generating intelligent biological behavior both in neuroscience and artificial intelligence, and its mathematical foundation – in terms of statistical physics – presented in this thesis suggests that the free-energy principle provides a solid foundation for a physics of sentient and cognitive beings.

\section{Limitations of thesis}

Perhaps the main limitation of this thesis is the reliance on white noise assumptions for the fluctuations in the SDE \eqref{eq: sde} in chapter 1 (Entropy production) and chapter 3 (the free-energy principle). Clearly white noise may be a good model for the random fluctuations driving systems at the molecular level (e.g. molecular biology) but is not fit for purpose for describing complex biological systems at a higher level of coarse graining, because in that case random fluctuations are usually the output of other dynamical systems, and thus have a nontrivial autocorrelation structure. In these two chapters the only place where coloured noise is treated explicitly is in chapter 3. (Note that there exists a refinement of chapter 3 that is specific to the case of coloured noise \cite{fristonPathIntegralsParticular2023}, however it is less mathematically rigorous than the current state of chapter 3).

\textit{Practically all of my fourth year of PhD has been devoted developing rigorous analysis, numerical simulation and filtering techniques for SDEs driven by coloured noise. My original plan was to include this in the thesis as an additional chapter placed between current chapters 1 and 2. However, the administration of Imperial College has been very inflexible in refusing to grant me an unpaid two month extension to the submission deadline, the time I need to complete this work. As a result I am not including this in my thesis submission, but would be happy to add it during revision stage as, crucially, it 1) nicely complements chapter 1, addressing its main limitations, 2) underpins chapters 2 and 3, and 3) offers a systematic treatment of a fundamentally important topic in the modelling and simulation of biological and neural systems.}

\section{A note on notations and terminologies}

While I have done my best to keep notation and terminology consistent across papers, i.e. chapters, there remain discrepancies which reflect the fact that each of the papers was originally written with a different audience in mind. For instance, probabilists use the letter $b$ for the drift of a stochastic differential equation, while physicists use the letter $f$ and call it the flow. Here I mostly chose to maintain the notation and terminology of the community that each paper was originally addressed to. This highlights the parallels between different and converging fields of research, which is perhaps the main implicit thread linking the various chapters.

\section{Publications}
My privileged position as one of the only mathematicians working on the free-energy principle, a topic that is becoming widely adopted in neuroscience and other application domains, means that I have been able and fortunate to collaborate on a wide range of projects, with collaborators of many different disciplines, which has been an incredibly mind expanding and enthralling experience.

\subsection{List of papers (32)}
\small{\textit{$^*$ Corresponding author, $^\dagger$ Equal contribution, $\star$ Highlighted work, $\diamond$ Cover article.}}

\subsubsection{Manuscripts Under Review (6)}
N Da Costa, M Pförtner, \textbf{L Da Costa}, P Hennig. Sample Path Regularity of Gaussian Processes from the Covariance Kernel. \textit{Under review in JMLR. arXiv:2312.14886}. 2023.

K Friston, \textbf{L Da Costa}, et al. Supervised structure learning. \textit{Under review in Neural Computation. arXiv:2311.10300} 2023.

C Heins, B Millidge, \textbf{L Da Costa}, RP Mann, K Friston, ID Couzin. Collective behavior from surprise minimization. \textit{Under review in PNAS.} 2023.

A Paul, N Sajid, \textbf{L Da Costa}, A Razi. On efficient computation in active inference. \textit{Under review in IEEE Transactions on Neural Networks and Learning Systems. arXiv:2307.00504}. 2023.

S Rubin, C Heins, T Mitsui, \textbf{L Da Costa}, K Friston. Climate homeostasis by and for active inference in the biosphere. \textit{Under review in Nature Communications.} 2023.
    
N Sajid, E Holmes, \textbf{L Da Costa}, C Price, K Friston. A mixed generative model of auditory word repetition. \textit{Under review in Frontiers in Neuroscience. BioRxiv}. 2022.

\subsubsection{Book Chapters (2)}

$\hspace{-8pt}\star$ A Barp$^\dagger$, \textbf{L Da Costa}$^\dagger$, G Franca$^\dagger$, K Friston, M Girolami, MI Jordan, GA Pavliotis. Geometric Methods for Sampling, Optimisation, Inference and Adaptive Agents. \textit{Geometry and Statistics, Handbook of Statistics Series vol 46 (Eds. F Nielsen, AS Rao, CR Rao)}. 2022.

N Sajid, \textbf{L Da Costa}, T Parr, K Friston. Active inference, Bayesian optimal design, and expected utility. \textit{The Drive for Knowledge (Eds. IC Dezza, E Schulz, CM Wu)}. 2022.


\subsubsection{Workshop Papers (2)}

N Sajid, P Tigas, Z Fountas, Q Guo, A Zakharov, \textbf{L Da Costa}. Modelling non-reinforced preferences using selective attention. \textit{First Conference on Lifelong Learning Agents: Workshop Track. arXiv:2207.13699}. 2022.
    
\textbf{L Da Costa}$^*$, N Sajid, D Zhao, S Tenka. Active inference as a model of agency. \textit{RLDM 2022 Workshop on Reinforcement Learning as a Model of Agency}. 2022.


\subsubsection{Journal Articles (22)}
\textbf{L Da Costa}, L Sandved-Smith. Towards a Bayesian mechanics of metacognitive particles: A commentary on "Path integrals, particular kinds, and strange things" by Friston, Da Costa, Sakthivadivel, Heins, Pavliotis, Ramstead, and Parr. \textit{Physics of Life Reviews.} 2023.

    K Friston, \textbf{L Da Costa}$^*$, DAR Sakthivadivel, C Heins, GA Pavliotis, MJD Ramstead, T Parr. Path integrals, particular kinds, and strange things. \textit{Physics of Life Reviews. arXiv:2210.12761}. 2023.
    
    $\hspace{-8pt}\star$
    K Friston, \textbf{L Da Costa}$^*$, N Sajid, C Heins, K Ueltzhoeffer, GA Pavliotis, T Parr. The free-energy principle made simpler but not too simple. \textit{Physics Reports}. 2023.

    $\hspace{-8pt}\star$
    \textbf{L Da Costa}$^*$, GA Pavliotis. The entropy production of stationary diffusions. \textit{Journal of Physics A: Mathematical and Theoretical}. 2023.

    $\hspace{-8pt}\star$ {\textbf{L Da Costa}}$^*$, N Sajid, T Parr, K Friston, R Smith. Reward Maximisation through Discrete Active inference. \textit{Neural Computation}. 2023.

    MJD Ramstead, DAR Sakthivadivel, C Heins, M Koudahl, B Millidge, \textbf{L Da Costa}, B Klein, K Friston. On Bayesian Mechanics: A Physics of and by Beliefs. \textit{Journal of the Royal Society Interface}. 2023.

    C Heins, \textbf{L Da Costa}. Sparse coupling and Markov blankets: A comment on "How particular is the physics of the free-energy Principle?" by Aguilera, Millidge, Tschantz and Buckley. \textit{Physics of Life Reviews}. 2022.

    T Champion, \textbf{L Da Costa}, H Bowman, M Grzes. Branching Time Active Inference: the theory and its generality. \textit{Neural Networks}. 2022.
    
    N Sajid, F Faccio, {\textbf{L Da Costa}}, T Parr, J Schmidhuber, K Friston. Bayesian brains and the R\'enyi divergence. \textit{Neural Computation}. 2022.
    
    $\hspace{-8pt}\diamond$ \textbf{L Da Costa}$^\dagger$$^*$, P Lanillos$^\dagger$, N Sajid, K Friston, S Khan. How active inference could help revolutionise robotics. \textit{Entropy}. 2022.

    $\hspace{-8pt}\star$
    \textbf{L Da Costa}$^*$, K Friston, C Heins, GA Pavliotis. Bayesian mechanics for stationary processes. \textit{Proceedings of the Royal Society A}. 2021.
    
    K Friston, C Heins, K Ueltzhoeffer, {\textbf{L Da Costa}}, T Parr. Stochastic Chaos and Markov Blankets. \textit{Entropy}. 2021.
    
    K Ueltzhoeffer, {\textbf{L Da Costa}}, D Cialfi, K Friston. A Drive towards Thermodynamic Efficiency for Dissipative Structures in Chemical Reaction Networks. \textit{Entropy}. 2021.
    
    T Parr, {\textbf{L Da Costa}}, C Heins, MJD Ramstead, K Friston. Memory and Markov Blankets. \textit{Entropy}. 2021.
    
    K Friston, {\textbf{L Da Costa}}, T Parr. Some interesting observations on the free-energy principle. \textit{Entropy}. 2021.
    
    K Ueltzhoeffer, {\textbf{L Da Costa}}, K Friston. Variational free-energy, individual fitness, and population dynamics under acute stress. Comment on "Dynamic and thermodynamic models of adaptation" by Alexander N. Gorban et al. \textit{Physics of Life Reviews}. 2021.
    
    T Parr, N Sajid, {\textbf{L Da Costa}}, MB Mirza, K Friston. Generative models for active vision. \textit{Frontiers in Neurobotics}. 2021.
    
    $\hspace{-8pt}\star$ {\textbf{L Da Costa}}$^*$, T Parr, B Sengupta, K Friston. Neural dynamics under active inference: plausibility and efficiency of information processing. \textit{Entropy}. 2021.
    
    K Friston, {\textbf{L Da Costa}}, D Hafner, C Hesp, T Parr. Sophisticated inference. \textit{Neural Computation}. 2021.
    
    $\hspace{-8pt}\star$ {\textbf{L Da Costa}}$^*$, T Parr, N Sajid, S Veselic, V Neacsu, K Friston. Active inference on discrete state-spaces: a synthesis. \textit{Journal of Mathematical Psychology}. 2020.
    
    S Rubin, T Parr, {\textbf{L Da Costa}}, K Friston. Future climates: Markov blankets and active inference in the biosphere. \textit{Journal of the Royal Society Interface}. 2020.
    
    T Parr, {\textbf{L Da Costa}}, K Friston. Markov blankets, information geometry and stochastic thermodynamics. \textit{Philosophical Transactions of the Royal Society A}. 2020.

\section{Awards for PhD work}

My PhD's work has been honoured by several awards including a Doris Chen Award by Imperial College London (2023), a Best Paper Award by the journal Entropy (2023), an Excellence Grant by G-Research (2022), and a Distinguished Student Award by the American Physical Society at the APS March meeting held in Chicago, IL, USA (2022).

\Xchapter{Entropy production}{The entropy production of stationary diffusions}{By Lancelot Da Costa and Grigorios A. Pavliotis\blfootnote{\normalsize \textbf{Adapted from:} L Da Costa, GA Pavliotis. The entropy production of stationary diffusions. \textit{Journal of Physics A: Mathematical
and Theoretical}. 2023}}

\newpage

\section{Abstract}
The entropy production rate is a central quantity in non-equilibrium statistical physics, scoring how far a stochastic process is from being time-reversible. In this chapter, we compute the entropy production of diffusion processes at non-equilibrium steady-state under the condition that the time-reversal of the diffusion remains a diffusion. We start by characterising the entropy production of both discrete and continuous-time Markov processes. We investigate the time-reversal of time-homogeneous stationary diffusions and recall the most general conditions for the reversibility of the diffusion property, which includes hypoelliptic and degenerate diffusions, and locally Lipschitz vector fields. We decompose the drift into its time-reversible and irreversible parts, or equivalently, the generator into symmetric and antisymmetric operators. We show the equivalence with a decomposition of the backward Kolmogorov equation considered in hypocoercivity theory, and a decomposition of the Fokker-Planck equation in GENERIC form. The main result shows that when the time-irreversible part of the drift is in the range of the volatility matrix (almost everywhere) the forward and time-reversed path space measures of the process are mutually equivalent, and evaluates the entropy production. When this does not hold, the measures are mutually singular and the entropy production is infinite. We verify these results using exact numerical simulations of linear diffusions. We illustrate the discrepancy between the entropy production of non-linear diffusions and their numerical simulations in several examples and illustrate how the entropy production can be used for accurate numerical simulation. Finally, we discuss the relationship between time-irreversibility and sampling efficiency, and how we can modify the definition of entropy production to score how far a process is from being generalised reversible.

\textbf{Keywords:} measuring irreversibility; time-reversal; hypoelliptic; degenerate diffusion; Helmholtz decomposition; numerical simulation; Langevin equation; stochastic differential equation; entropy production rate.



\section{Introduction}

The entropy production rate $e_p$ is a central concept in statistical physics. In a nutshell, it is a measure of the time-irreversibility of a stochastic process, that is how much random motion differs statistically speaking as one plays it forward, or backward, in time.

At non-equilibrium steady-state, the $e_p$ is quantified by the relative entropy $\H$ (a.k.a Kullback-Leibler divergence) 
\begin{align*}
    e_p= \frac 1 T \H[\p_{[0,T]} \mid \bar \p_{[0,T]}]
\end{align*}
between the path-wise distributions of the forward and time-reversed processes in some time interval $[0,T]$, denoted by $\p_{[0,T]},\bar \p_{[0,T]}$, respectively.

Physically, the $e_p$ measures the minimal amount of energy needed, per unit of time, to maintain a system at non-equilibrium steady-state. Equivalently, it quantifies the heat dissipated by a physical system at non-equilibrium steady-state per unit of time \cite[p. 86]{jiangMathematicalTheoryNonequilibrium2004}. The second law of thermodynamics for open systems is the non-negativity of the entropy production. 

The $e_p$ plays a crucial role in stochastic thermodynamics. It is the central quantity in the so-called Gallavotti-Cohen fluctuation theorem, which quantifies the probability of entropy decrease along stochastic trajectories \cite{seifertStochasticThermodynamicsFluctuation2012,gallavottiDynamicalEnsemblesNonequilibrium1995a,lebowitzGallavottiCohenTypeSymmetry1999}. 
More recently, entropy production is at the heart of the so-called thermodynamic uncertainty relations, which provide estimates of $e_p$ from observations of the system \cite{dechantMultidimensionalThermodynamicUncertainty2018}.
The $e_p$ is also an important tool in biophysics, as a measure of the metabolic cost of molecular processes, such as molecular motors \cite{seifertStochasticThermodynamicsFluctuation2012}, 
and it was shown empirically that brain states associated with effortful activity correlated with a higher entropy production from neural activity \cite{lynnBrokenDetailedBalance2021}.

In this chapter, we will be primarily concerned with the entropy production of diffusion processes, that is, solutions of stochastic differential equations. Consider an Itô stochastic differential equation (SDE)
\begin{equation*}
    \d x_t=b\left(x_t\right) \d t+\sigma\left( x_t\right) \d w_{t}
\end{equation*}
with drift $b: \R^d \to \R^d$ and volatility $\sigma :\R^d\to \R^{d\times m}$, and $w_t$ a standard Brownian motion on $\R^m$, whose solution is stationary at a probability measure $\mu$ with density $\rho$.
A result known as the Helmholtz decomposition, which is central in non-equilibrium statistical physics \cite{grahamCovariantFormulationNonequilibrium1977,eyinkHydrodynamicsFluctuationsOutside1996,aoPotentialStochasticDifferential2004,pavliotisStochasticProcessesApplications2014} but also in statistical sampling \cite{barpUnifyingCanonicalDescription2021,maCompleteRecipeStochastic2015,barpGeometricMethodsSampling2022,chaudhariStochasticGradientDescent2018}, tells us that we can decompose the drift into time-reversible and time-irreversible parts: $b= b\rev+b\irr$. In particular, $b\irr \rho$ is the stationary probability current, as considered by Nelson \cite{nelsonDynamicalTheoriesBrownian1967}. Jiang and colleagues derived the entropy production rate for such systems under the constraint that the coefficients of the SDE $b, \sigma$ are globally Lipschitz, and the solution uniformly elliptic (i.e., the diffusion matrix field $D = \frac 1 2 \sigma \sigma^\top$ is uniformly positive definite). This takes the form of \cite[Chapter 4]{jiangMathematicalTheoryNonequilibrium2004}:
\begin{align*}
    e_p = \intr b\irr^\top D^{-1} b\irr \rho(x) \d x.
\end{align*}
In this chapter, we extend their work by computing the entropy production for a greater range of diffusion processes, which includes non-elliptic, hypoelliptic and degenerate diffusions, and SDEs driven by locally Lipschitz coefficients. Non-elliptic diffusions are solutions to SDEs whose diffusion matrix field $D= \frac 1 2 \sigma \sigma^\top$ is not positive definite everywhere; this means that there are regions of space in which the random fluctuations cannot drive the process in every possible direction. Depending on how the volatility interacts with the drift (i.e., Hörmander's theorem \cite[Theorem 1.3]{hairerMalliavinProofOrmander2011}), solutions initialised at a point may still have a density---the hypoelliptic case---or not---the degenerate case; prominent examples are underdamped Langevin dynamics and deterministic dynamics, respectively. In our treatment, we only assume that the time-reversal of a diffusion is a diffusion (the necessary and sufficient conditions for which were first established by Millet, Nualart and Sanz \cite{milletIntegrationPartsTime1989a}) and sufficient regularity to apply Girsanov's theorem or the Stroock-Varadhan support theorem.

This extension has become important since many processes that are commonplace in non-equilibrium statistical physics or statistical machine learning are hypoelliptic. For instance, the underdamped and generalised Langevin equations in phase space \cite{pavliotisStochasticProcessesApplications2014}, which model the motion of a particle interacting with a heat bath, and which form the basis of efficient sampling schemes such as Hamiltonian Monte-Carlo \cite{barpGeometricMethodsSampling2022}, or, stochastic gradient descent in deep neural networks \cite{chaudhariStochasticGradientDescent2018}, or, the linear diffusion process with the fastest convergence to stationary state \cite{guillinOptimalLinearDrift2021}, which informs us of the properties of efficient samplers. Much research in statistical sampling has drawn the connection between time-irreversibility and sampling efficiency \cite{ottobreMarkovChainMonte2016,barpGeometricMethodsSampling2022,rey-belletIrreversibleLangevinSamplers2015,hwangAcceleratingDiffusions2005}, so it is informative to understand the amount of time-irreversibility associated with the most efficient samplers. Lastly, it is known that numerically integrating a diffusion processes can modify the amount of irreversibility present in the original dynamic \cite{katsoulakisMeasuringIrreversibilityNumerical2014}. The entropy production rate is thus an important indicator of the fidelity of a numerical simulation, and can serve as a guide to developing sampling schemes that preserve the statistical properties of efficient (hypoelliptic) samplers.

The outline of the chapter and our contribution are detailed below.





\subsection{Chapter outline and contribution}

\paragraph{Section \ref{sec: ep stationary Markov process}:} We give various characterisations and formulas for the $e_p$ of stationary Markov processes in discrete and continuous-time, and recall a crude but general recipe for numerical estimation. 

\paragraph{Section \ref{sec: time reversal of stationary diffusions}:}
We investigate the time-reversal of time-homogeneous diffusions. We give the general conditions under which the time-reversal of a time-homogeneous diffusion remains a diffusion, based on the results of Millet, Nualart and Sanz \cite{milletIntegrationPartsTime1989a}. 
We then recall how the drift vector field of an SDE can be decomposed into time-reversible and irreversible parts. We show that this decomposition is equivalent to a decomposition of the generator of the process into symmetric and antisymmetric operators on a suitable function space. Then we show that this decomposition is equivalent to two other fundamental decompositions in the study of far from equilibrium systems: the decomposition of the backward Kolmogorov equation considered in hypocoercivity theory \cite{duongNonreversibleProcessesGENERIC2021,villaniHypocoercivity2009} and the decomposition of the Fokker-Planck equation considered in the GENERIC formalism \cite{duongNonreversibleProcessesGENERIC2021,ottingerEquilibriumThermodynamics2005}. 

\paragraph{Section \ref{sec: the ep of stationary diffusions}:}
We compute the $e_p$ of stationary diffusion processes under the condition that the time-reversal of the diffusion remains a diffusion. We show that:
\begin{itemize}
    \item \textbf{Section \ref{sec: epr regularity}:} If $b_{\mathrm{irr}}(x) \in \im \sigma(x)$, for $\mu$-almost every $x \in \R^d$\footnote{$\mu$-almost everywhere: This means that the statement holds with probability $1$ when $x$ is distributed according to the probability measure $\mu$.} (in particular for elliptic or time-reversible diffusions). Then the forward and backward path-space measures are mutually equivalent, in other words, the sets of possible trajectories by forward and backward processes are equal
    ---
    and the entropy production equals $e_p = \int b_{\mathrm{irr}}^\top D^- b_{\mathrm{irr}} \d \mu$, where $\cdot^-$ denotes the Moore-Penrose matrix pseudo-inverse. See Theorems \ref{thm: epr regular case simple}, \ref{thm: epr regular case general} for details.
    \item \textbf{Section \ref{sec: ep singularity}:} When the above does not hold, the forward and backward path-space measures are mutually singular
    , in other words, there are trajectories that are taken by the forward process that cannot be taken by the backward process---and vice-versa\footnote{Precisely, two measures are mutual singular if and only if they are not mutually equivalent.}. In particular, the entropy production rate is infinite $e_p = +\infty$. See Theorem \ref{thm: epr singular} for details.
\end{itemize}

\paragraph{Section \ref{sec: illustrating results epr}:} We compute the $e_p$ of various models such as the multivariate Ornstein-Uhlenbeck and the underdamped Langevin process. We numerically simulate and verify the value of $e_p$ when the coefficients are linear. We then discuss how numerical discretisation can influence the value of $e_p$. As examples, we compute and compare the $e_p$ of Euler-Maruyama and BBK \cite{brungerStochasticBoundaryConditions1984,roussetFreeEnergyComputations2010} discretisations of the underdamped Langevin process. We summarise the usefulness of $e_p$ as a measure of the accuracy of numerical schemes in preserving the time-irreversibility properties of the underlying process, and give guidelines, in terms of $e_p$, for developing accurate simulations of underdamped Langevin dynamics.

\paragraph{Section \ref{sec: discussion}:} We give a geometric interpretation of our main results and discuss future perspectives: what this suggests about the relationship between time-irreversibility and mixing in the context of sampling and optimisation, and how we could modify the definition---and computation---of $e_p$ to quantify how a process is far from being time-reversible up to a one-to-one transformation of its phase-space.

\section{The $e_p$ of stationary Markov processes}
\label{sec: ep stationary Markov process}

In this section, $(x_t)_{t \in [0,T]}, T>0$ is a time-homogeneous Markov process on a Polish space $\mathcal X$ with almost surely (a.s.) continuous trajectories.

\begin{definition}[Time reversed process]
\label{def: Time reversed process}
The time reversed process $(\bar x_t)_{t \in [0,T]}$ is defined as $\bar x_t= x_{T-t}$.
\end{definition}
 Note that since the final state of the forward process is the initial state of the backward process this definition makes sense only on finite time intervals.

We define $\left(C([0,T], \mathcal X), d_\infty\right)$ as the space of $\mathcal X$-valued continuous paths endowed with the supremum distance $d_\infty$, defined as
    $d_\infty(f,g)= \sup_{t\in [0,T]}d(f(t),g(t))$,
where $d$ is a choice of distance on the Polish space $\mathcal X$. Naturally, when we later specialise to $\mathcal X= \mathbb R^d$, the supremum distance will be given by the $L^\infty$-norm $\|\cdot\|_\infty$.

\begin{definition}[Path space measure]
\label{def: path space measure}
Each Markov process $(x_t)_{0\leq t \leq T}$ with a.s. continuous trajectories defines probability measure $\p_{[0,T]}$ on the canonical path space $\left(C([0,T], \mathcal X), \mathscr B\right)$, where $\mathscr B$ is the Borel sigma-algebra associated with the supremum distance. This probability measure determines the probability of the process to take any (Borel set of) paths. 
\end{definition}


\begin{remark}[Cadlag Markov processes]
All definitions and results in this section hold more generally for Markov processes with \emph{cadlag} paths (i.e., right continuous with left limits), simply replacing the canonical path-space $\left(C([0,T], \mathcal X), d_\infty\right)$ with the Skorokhod space. We restrict ourselves to processes with continuous paths for simplicity.
\end{remark}

\begin{definition}[Restriction to a sub-interval of time]
Given a path space measure $\p_{[0,T]}$ and times $a<b \in [0,T]$, we define $\p_{[a,b]}$ to be the path space measure describing $\left(x_t\right)_{t \in [ a,b]}$. This is the restriction of $\p_{[0,T]}$ to the sub sigma-algebra
    $\mathscr B_{[a, b]}:= \left\{A \in \mathscr B : \left.A\right|_{[a, b]} \in \text{Borel sigma-algebra on} \left(C([a, b], \mathcal X), d_\infty\right)\right\}$.
\end{definition}

Let $\p$ be the path space measure of the Markov process $x_t$ and $\bar \p$ be that of its time-reversal $\bar x_t$, respectively. We can measure the statistical difference between the forward and time-reversed processes at time $\tau \in [0,T]$ with the \emph{entropy production rate} 
    \begin{align}
    \label{eq: time dependent EPR}
    \lim _{\e \downarrow 0} \frac{1}{\e} \H\left[\p_{[\tau, \tau+\e]}, \bar \p_{[\tau, \tau+\e]}\right],
    \end{align}
where $\H$ is the relative entropy (a.k.a. Kullback-Leibler divergence). This measures the rate at which the forward and backward path space measures differ in the relative entropy sense at time $\tau$.

The following result \cite[Theorem 2.2.4]{jiangMathematicalTheoryNonequilibrium2004} (see also \cite[Theorem 10.4]{varadhanLargeDeviationsApplications1988}) 
shows that the limit exists in stationary and time-homogeneous Markov processes. Obviously, the limit is independent of $\tau$ in this case.

\begin{theorem} 
\label{thm: ep is a rate over time}
Suppose that $\left(x_t\right)_{t \in [ 0,T]}$ is a stationary time-homogeneous Markov process on a Polish space $\mathcal X$ with continuous sample paths. Stationarity implies that we can set the time-horizon $T>0$ of the process to arbitrarily large values. Then the quantity
\begin{align*}
\frac 1 t\H\left[\p_{[\tau,\tau+t]} \mid   \bar \p_{[\tau,\tau+t]}\right] \text{ for all } \tau \in [0, +\infty), t \in (0, +\infty)
\end{align*}
is a constant $\in [0, +\infty]$.
\end{theorem}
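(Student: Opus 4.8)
The plan is to reduce everything to the one-variable function $g(t):=\H[\p_{[0,t]}\mid\bar\p_{[0,t]}]$ and to show that it is linear, $g(t)=ct$ with $c\in[0,+\infty]$. I would first dispose of the dependence on $\tau$. Because $(x_t)$ is stationary and time-homogeneous, shifting the time-origin leaves all finite-dimensional distributions unchanged, so the law of $(x_{\tau+u})_{u\in[0,t]}$ equals that of $(x_u)_{u\in[0,t]}$ and $\p_{[\tau,\tau+t]}$ coincides with $\p_{[0,t]}$ up to the canonical reparametrisation of the index set. Applying the same shift-invariance to the reversal $\bar x_r=x_{T-r}$---and using that, on a stationary process, reversal commutes with time-shift---gives $\bar\p_{[\tau,\tau+t]}\cong\bar\p_{[0,t]}$. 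Since relative entropy is invariant under measurable isomorphisms, $\H[\p_{[\tau,\tau+t]}\mid\bar\p_{[\tau,\tau+t]}]=g(t)$ for every $\tau$.

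The crux is the additivity $g(s+t)=g(s)+g(t)$. Fixing $s,t>0$, I would apply the chain rule (disintegration) for relative entropy to the projection $\pi$ that restricts a path on $[0,s+t]$ to $[0,s]$; regular conditional probabilities exist since the path space over the Polish space $\mathcal X$ is itself Polish. This gives
\[
g(s+t)=\H[\pi_*\p_{[0,s+t]}\mid\pi_*\bar\p_{[0,s+t]}]+\int\H[\p_{[0,s+t]}(\cdot\mid\pi=\omega)\mid\bar\p_{[0,s+t]}(\cdot\mid\pi=\omega)]\,\pi_*\p_{[0,s+t]}(\d\omega).
\]
The first term equals $\H[\p_{[0,s]}\mid\bar\p_{[0,s]}]=g(s)$ by the projection identities above. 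In the second term the Markov property of the forward process and of its reversal---which is again a stationary, time-homogeneous Markov process with stationary law $\mu$---makes both conditionals depend on $\omega$ only through its endpoint $\omega(s)$; by time-homogeneity this conditional relative entropy equals $\H[\p^x_{[0,t]}\mid\bar\p^x_{[0,t]}]$, where $\p^x_{[0,t]},\bar\p^x_{[0,t]}$ are the forward and reversed laws started from $x=\omega(s)$. Since $\omega(s)\sim\mu$ under $\pi_*\p_{[0,s+t]}$ (stationarity), the integral becomes $\int\H[\p^x_{[0,t]}\mid\bar\p^x_{[0,t]}]\,\mu(\d x)$, which a second chain rule---conditioning $\p_{[0,t]}$ and $\bar\p_{[0,t]}$ on their common initial marginal $\mu$---identifies with $g(t)$. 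Hence $g(s+t)=g(s)+g(t)$.

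It remains to upgrade Cauchy's functional equation to genuine linearity, for which the needed regularity is automatic: relative entropy is non-negative, so $g\geq0$, and additivity then forces $g$ to be non-decreasing. A monotone additive function on $(0,\infty)$ is linear, so $g(t)=ct$ with $c=g(1)$. If $g$ attains $+\infty$ at some point then additivity propagates this to every $t>0$ and $c=+\infty$; otherwise $c\in[0,+\infty)$. Together with the $\tau$-independence, this shows $\tfrac1t\H[\p_{[\tau,\tau+t]}\mid\bar\p_{[\tau,\tau+t]}]=c\in[0,+\infty]$ for all $\tau\geq0$ and $t>0$, as claimed.

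I expect the main obstacle to lie in making the additivity step fully rigorous: justifying the disintegration of relative entropy on continuous path space and, more delicately, verifying that the conditional law of the \emph{reversed} process on $[s,s+t]$ given its value at $s$ coincides, after a time-shift, with the reversed law on $[0,t]$ started from that value. This hinges on the fact that the time-reversal of a stationary time-homogeneous Markov process is again such a process, and on careful bookkeeping of which endpoint the reversal pins down. Finally, the possibility $c=+\infty$ means that all chain-rule identities must be read as equalities in $[0,+\infty]$ throughout.
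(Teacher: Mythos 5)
Your proof is correct. The paper does not actually prove this statement itself—it is imported from Jiang--Qian--Qian (Theorem 2.2.4) and Varadhan—but your argument (removing $\tau$ by stationarity, establishing additivity of $t\mapsto\H[\p_{[0,t]}\mid\bar\p_{[0,t]}]$ via the chain rule for relative entropy together with the Markov property and time-homogeneity of both the process and its reversal, then concluding with the monotone Cauchy functional equation, with all identities read in $[0,+\infty]$) is precisely the standard route taken in that source, and your second chain-rule step is exactly the paper's Proposition \ref{prop: aggregating local ep}.
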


This yields the following general definition of entropy production rate for stationary Markov processes: 

\begin{definition}[Entropy production rate of a stationary Markov process]
\label{def: epr}
Let $\left(x_t\right)_{t \in [ 0,T]}$ be a stationary time-homogeneous Markov process. Stationarity implies that we can set the time-horizon $T>0$ to be arbitrarily large. For such processes, the entropy production rate is a constant $e_p \in [0, +\infty]$ defined as
\begin{equation}
\label{eq: def epr}
    e_p := \frac{1}{t}\H\left[\p_{[0,t]} \mid  \bar \p_{[0,t]} \right]
\end{equation}
for any $t \in (0,+\infty)$. $e_p$ scores the amount to which the forward and time-reversed processes differ per unit of time. In particular, $Te_p$ is the total entropy production in a time interval of length $T$. Note that, in the literature, the $e_p$ is often defined as $e_p = \lim _{t \to +\infty}\frac{1}{t}\H\left[\p_{[0,t]} \mid  \bar \p_{[0,t]} \right]$, e.g., \cite[Definition 4.1.1]{jiangMathematicalTheoryNonequilibrium2004}; this is just \eqref{eq: def epr} in the limit of large $t$. However, Theorem \ref{thm: ep is a rate over time} showed us that \eqref{eq: def epr} is constant w.r.t. $t\in (0,+\infty)$ so we do not need to restrict ourselves to defining the $e_p$ as \eqref{eq: def epr} in the limit of large $t$. This added generality will be very helpful to compute $e_p$ later, by exploiting the fact that \eqref{eq: def epr} is often more easily analysed in the regime of finite or small $t$.
\end{definition}

\begin{remark}[Physical relevance of Definition \ref{def: epr}]
\label{rem: physical relevance}
    In some stationary processes (e.g., Hamiltonian systems), physicists define entropy production as Definition \ref{def: epr} with an additional operator applied to the path space measure of the time-reversed process; that is,
    \begin{equation}
    \label{eq: rem def gen epr}
    e_p^{\mathrm{gen}, \theta} := \lim _{\e \downarrow 0} \frac{1}{\e} \H\left[\p_{[0, \e]}, \theta_\#\bar \p_{[0, \e]}\right],
    \end{equation}
    where $\theta_\#$ is the pushforward operator associated to an involution of phase-space $\theta$ (e.g., the momentum flip \cite{vanvuUncertaintyRelationsUnderdamped2019,eckmannEntropyProductionNonlinear1999,spinneyNonequilibriumThermodynamicsStochastic2012}) that leaves the stationary distribution invariant\footnote{This generalised definition of entropy production is taken as a limit of $\e \downarrow 0$ analogously to \eqref{eq: time dependent EPR} to capture the fact that we are modelling a \emph{rate}. We cannot, a priori state that the expression is constant for any $\e \in (0,+\infty)$, as in Definition \ref{def: epr}, since we do not know whether a result analogous to Theorem \ref{thm: ep is a rate over time} holds in this case.}. In this article, we will refer to \eqref{eq: def epr} as \emph{entropy production} and to \eqref{eq: rem def gen epr} as \emph{generalised entropy production}, and proceed to derive general results for \eqref{eq: def epr}. The results we derive are informative of the process and applicable independently of whether \eqref{eq: def epr} is the physically meaningful definition of entropy production; yet, physicists looking to interpret these results should bear in mind that they are physically informative about entropy production insofar as Definition \ref{def: epr} is physically meaningful for the system at hand. We will briefly revisit generalised entropy production in the discussion (Section~\ref{sec: discussion generalised non-reversibility}).
\end{remark}


\begin{proposition}
\label{prop: aggregating local ep}
Let $\left(x_t\right)_{t \in [ 0,T]}$ be a time-homogeneous Markov process on a Polish space $\mathcal X$, stationary at the probability measure $\mu$. Then the entropy production rate equals
\begin{equation*}
    e_p = \frac 1 t \E_{x\sim \mu}\left[\H\left[\p^{x}_{[0,t]}\mid \bar  \p^{x}_{[0,t]}\right]\right]
\end{equation*}
for any $t\in (0, +\infty)$.
\end{proposition}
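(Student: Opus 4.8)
The plan is to reduce the claim to the chain rule (additivity under disintegration) for relative entropy. The crucial observation is that the forward path measure $\p_{[0,t]}$ and the time-reversed path measure $\bar \p_{[0,t]}$ share the \emph{same} marginal $\mu$ at the initial time $0$: for the forward process this is the stationarity hypothesis $x_0 \sim \mu$, while for the reversed process the initial state is $\bar x_0 = x_t$, which is also distributed as $\mu$ by stationarity. This shared base marginal is what makes the marginal contribution to the relative entropy vanish, leaving exactly the averaged conditional term in the statement.

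First I would introduce the time-zero evaluation map $e_0 : C([0,t], \mathcal X) \to \mathcal X$, $\omega \mapsto \omega(0)$, and note that pushing forward either $\p_{[0,t]}$ or $\bar \p_{[0,t]}$ through $e_0$ returns $\mu$, by the paragraph above. Since $\mathcal X$ is Polish and $\left(C([0,t],\mathcal X), d_\infty\right)$ is therefore Polish as well, the disintegration theorem guarantees regular conditional path measures $\left\{\p^x_{[0,t]}\right\}_{x}$ and $\left\{\bar \p^x_{[0,t]}\right\}_{x}$ given $e_0 = x$; these are precisely the laws of the forward and time-reversed processes started at $x$. Applying the chain rule for relative entropy over the common base map $e_0$ then gives
\[
\H\left[\p_{[0,t]} \mid \bar \p_{[0,t]}\right] = \H[\mu \mid \mu] + \E_{x\sim\mu}\left[\H\left[\p^x_{[0,t]} \mid \bar \p^x_{[0,t]}\right]\right],
\]
and since $\H[\mu \mid \mu] = 0$, dividing by $t$ and invoking Definition \ref{def: epr} yields the claim. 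I would also record that this identity is valued in $[0,+\infty]$, so it remains correct in the mutually singular regime where $e_p = +\infty$: because the base marginals agree, any failure of absolute continuity of $\p_{[0,t]}$ with respect to $\bar \p_{[0,t]}$ must be carried by a set of $x$ of positive $\mu$-measure on which the conditionals are mutually singular, forcing the right-hand side to be $+\infty$ as well.

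The main obstacle is measure-theoretic bookkeeping rather than any sharp estimate. One must check that the two disintegrations are taken over the \emph{same} base map $e_0$, so that the chain rule applies verbatim; that the additivity identity is invoked in its full $[0,+\infty]$-valued form, which is what lets the absolutely continuous and singular cases be handled uniformly; and that $\bar \p^x_{[0,t]}$, defined abstractly as a conditional of $\bar \p_{[0,t]}$, genuinely coincides with the reversed-process-started-at-$x$ object of the statement, which follows from Definition \ref{def: Time reversed process} together with stationarity. None of these steps is deep, but stating the chain rule at the right level of generality (arbitrary Polish base, extended-real-valued) is the point on which the uniform treatment of the finite and infinite $e_p$ cases hinges.
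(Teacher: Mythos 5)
Your proposal is correct and follows essentially the same route as the paper: the paper's proof also conditions on the initial state, using that the forward and time-reversed path measures share the stationary marginal $\mu$ at time $0$, so that the Radon--Nikodym derivative of the joint path measures coincides ($\p$-a.s.) with that of the conditionals $\p^x_{[0,t]}, \bar\p^x_{[0,t]}$ — which is precisely your chain-rule identity with the $\H[\mu\mid\mu]=0$ term. Your write-up is if anything slightly more careful than the paper's, since it makes explicit the disintegration over the evaluation map $e_0$ and the extended-real-valued form of the chain rule that handles the mutually singular case.
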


A proof is provided in Section \ref{app: aggregating local ep}.

\begin{notation}
\label{notation: path space measure deterministic initial condition}
By $\p^{x}$ we mean the path space measure of the 
process initialised (possibly out of stationarity) at a deterministic initial condition $x_0=x \in \mathcal X$.  
\end{notation}

\begin{proposition}[$e_p$ in terms of transition kernels]
\label{prop: epr transition kernels}
Let $\left(x_t\right)_{t \in [ 0,T]}$ be a time-homogeneous Markov process on a Polish space $\mathcal X$, stationary at the probability measure $\mu$. Denote by $p_t(dy,x)$ the transition kernels of the Markov semigroup, and by $\bar p_t(dy,x)$ those of the time-reversed process. Then the entropy production rate equals
\begin{equation*}
    e_p = \lim_{\e \downarrow 0} \frac{1}{\e} \E_{x\sim \mu}\left[\H\left[p_\e(\cdot,x)\mid\bar p_\e(\cdot,x)\right]\right].
\end{equation*}
\end{proposition}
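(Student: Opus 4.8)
The plan is to bootstrap from Proposition \ref{prop: aggregating local ep}, which already identifies $e_p$ with an average, over the stationary law, of the short-time path-space relative entropy. Taking $t=\e$ there gives $e_p = \tfrac1\e\,\E_{x\sim\mu}[\H[\p^x_{[0,\e]}\mid\bar\p^x_{[0,\e]}]]$ for every $\e>0$, and since this is constant in $\e$ we may in particular write $e_p = \lim_{\e\downarrow0}\tfrac1\e\,\E_{x\sim\mu}[\H[\p^x_{[0,\e]}\mid\bar\p^x_{[0,\e]}]]$. The whole task is then to show that, as $\e\downarrow0$, the full-path relative entropy is already captured by the relative entropy of the endpoint transition kernels $p_\e(\cdot,x)$ and $\bar p_\e(\cdot,x)$.

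First I would disintegrate each short-time path measure over its terminal state $x_\e=y$, writing $\p^x_{[0,\e]} = p_\e(\mathrm dy,x)\,\p^{x\to y}_{[0,\e]}$ and $\bar\p^x_{[0,\e]} = \bar p_\e(\mathrm dy,x)\,\bar\p^{x\to y}_{[0,\e]}$, where $\p^{x\to y}_{[0,\e]},\bar\p^{x\to y}_{[0,\e]}$ are the pinned (bridge) laws of the forward and time-reversed processes between $x$ and $y$. The chain rule for relative entropy under disintegration then yields
\begin{equation*}
\H[\p^x_{[0,\e]}\mid\bar\p^x_{[0,\e]}] = \H[p_\e(\cdot,x)\mid\bar p_\e(\cdot,x)] + \int \H[\p^{x\to y}_{[0,\e]}\mid\bar\p^{x\to y}_{[0,\e]}]\, p_\e(\mathrm dy,x).
\end{equation*}
Since the bridge term is non-negative, integrating against $\mu$ and dividing by $\e$ gives the easy inequality $\limsup_{\e\downarrow0}\tfrac1\e\,\E_{x\sim\mu}[\H[p_\e(\cdot,x)\mid\bar p_\e(\cdot,x)]]\le e_p$; this is just the data-processing inequality, the endpoint marginal being a coarse-graining of the whole path.

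It remains to establish the matching lower bound, i.e. that the averaged bridge contribution is negligible at rate $\e$:
\begin{equation*}
\frac1\e\,\E_{x\sim\mu}\!\int \H[\p^{x\to y}_{[0,\e]}\mid\bar\p^{x\to y}_{[0,\e]}]\, p_\e(\mathrm dy,x)\;\xrightarrow[\e\downarrow0]{}\;0.
\end{equation*}
Here I would use that the time-reversed bridge from $x$ to $y$ is exactly the path-reversal of the forward bridge from $y$ to $x$, so that this term measures only the residual irreversibility internal to an infinitesimally short bridge. Over $[0,\e]$ the pinned dynamics is dominated by its time-symmetric martingale part, while the drift — the sole source of irreversibility — enters the bridge at higher order; hence each bridge relative entropy is $o(\e)$ and, after averaging, so is the whole term. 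Combining the two bounds identifies the limit with $e_p$; I expect the same comparison to force $+\infty$ on both sides in the mutually singular regime.

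The main obstacle is this last step: controlling the bridge relative entropy and showing it vanishes faster than $\e$ uniformly enough to survive the average against $\mu$, in the stated generality (Polish state space, and later possibly degenerate or hypoelliptic diffusions with merely locally Lipschitz coefficients). I expect this to require either the short-time bridge asymptotics available once the time-reversal is itself a diffusion, or a direct discretisation estimate comparing $\p_{[0,\e]}$ with its finite-dimensional endpoint marginals and passing to the limit through the Markov chain rule.
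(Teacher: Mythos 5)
Your reduction via Proposition \ref{prop: aggregating local ep}, the disintegration of the short-time path measures over their terminal state, and the resulting data-processing bound $\limsup_{\e \downarrow 0} \frac{1}{\e}\E_{x\sim\mu}\left[\H\left[p_\e(\cdot,x)\mid\bar p_\e(\cdot,x)\right]\right]\leq e_p$ are all correct. In fact you are more careful than the paper itself: its proof consists of the same reduction followed by a bare exchange of the path-space log-likelihood ratio for the transition-kernel log-likelihood ratio inside the limit, i.e.\ it implicitly treats your bridge term as zero, with no justification. So your chain-rule decomposition isolates exactly the step that carries all of the content of the proposition.

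The genuine gap is your lower bound, and the mechanism you propose for it is not merely unproven but false in a regime the chapter explicitly covers. Your heuristic is sound for elliptic diffusions (there the difference between the two bridge drifts couples to the $O(\sqrt\e)$ bridge displacement, making each bridge relative entropy $O(\e^2)$), but take a hypoelliptic diffusion with $b_{\mathrm{irr}}(x)\notin\im\sigma(x)$, e.g.\ underdamped Langevin: by Theorem \ref{thm: epr singular} the forward and time-reversed path measures are mutually singular, so $\H\left[\p^x_{[0,\e]}\mid\bar\p^x_{[0,\e]}\right]=+\infty$ for every $\e>0$, while hypoellipticity makes the endpoint kernels $p_\e(\cdot,x)$ and $\bar p_\e(\cdot,x)$ mutually equivalent with finite relative entropy (this is exactly what Figure \ref{fig: exact underdamped} displays). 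Your own chain rule then forces the bridge contribution to be $+\infty$ at every $\e$, so it is certainly not $o(\e)$; the short-time bridge is not dominated by a time-symmetric part when the drift points outside the span of the noise. For the same reason your closing remark is wrong: mutual singularity of path measures does not pass to the endpoint kernels at fixed $\e$, so it does not ``force $+\infty$ on both sides''; the divergence of the kernel rate as $\e\downarrow 0$ has to be proved by a separate argument (for linear diffusions it follows from Lemma \ref{lemma: limit ep formula OU process}, the mechanism being that the kernel means separate at order $\e$ while the variances in the irreversible directions are $o(\e^2)$). A route that avoids bridges altogether, and works in the stated Polish-space generality when $e_p<\infty$: write the path relative entropy as the supremum of relative entropies of finite-dimensional marginals (martingale convergence of Radon--Nikodym derivatives); by Markovianity, stationarity and the chain rule each such marginal relative entropy equals $\sum_i \phi(\Delta_i)$ with $\phi(t):=\E_{x\sim\mu}\left[\H\left[p_t(\cdot,x)\mid\bar p_t(\cdot,x)\right]\right]$; $\phi$ is subadditive by data processing and satisfies $\phi(t)\leq t\,e_p$, so a Fekete-type argument gives $\lim_{\e\downarrow 0}\phi(\e)/\e=e_p$. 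The case $e_p=+\infty$ still requires its own treatment—a gap your proposal shares with the paper's one-line exchange.
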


The fact that the time-reversed process possesses transition kernels holds as it is also a stationary Markov process \cite[p. 113]{jiangMathematicalTheoryNonequilibrium2004}. A proof of Proposition \ref{prop: epr transition kernels} is provided in Section \ref{app: epr transition kernels}.

\subsection{The $e_p$ of numerical simulations}
\label{sec: ep of numerical schemes theory}

Proposition \ref{prop: epr transition kernels} entails a formula for the $e_p$ of Markov processes in discrete time:
\begin{definition}
\label{def: EPR numerical scheme}
The entropy production rate of a discrete-time Markov process with time-step $\e$ equals
\begin{equation}
\label{eq: def EPR numerical scheme}
    e_p^{\text{NS}}(\e) = \frac{1}{\e} \E_{x\sim \tilde \mu}\int p_\e(y,x) \log \frac{p_\e(y,x)}{p_\e(x,y)} \d y,
\end{equation}
where $\tilde \mu$ is the invariant measure of the process.
\end{definition}

This definition is useful, for example, to quantify the entropy production of numerical simulations of a stochastic process \cite{katsoulakisMeasuringIrreversibilityNumerical2014}.
In particular, it suggests a simple numerical estimator of the entropy production rate for numerical simulations (at stationarity). Consider a small $\delta$ (e.g., $\delta$ is the time-step of the numerical discretisation). Given samples from the process at $\delta$ time intervals, discretise the state-space into a finite partition $U_1, \ldots, U_n$, and approximate $\p_{[0,\delta]}$ and $\bar \p_{[0,\delta]}$ by the empirical transition probabilities $p_{i \to j}$ between $U_i, U_j$ from time $0$ to $\delta$.

\begin{align*}
    e_p^{\text{NS}}= \lim_{\e \to 0} \frac{1}{\e} \H\left[\p_{[0,\e]}\mid\bar \p_{[0,\e]}\right] \approx \frac{1}{\delta} \H\left[\p_{[0,\delta]}\mid\bar \p_{[0,\delta]}\right] \approx \frac{1}{\delta} \sum_{i,j}p_{i \to j} \log \frac{p_{i \to j}}{p_{j \to i}}. 
\end{align*}

Note that this method measures the entropy production rate of the numerical discretisation as opposed to that of the continuous process. This typically produces results close to $e_p$, but does not necessarily converge to $e_p$ in the continuum limit $\delta \to 0$ of the numerical discretisation. Indeed \cite{katsoulakisMeasuringIrreversibilityNumerical2014} showed that numerical discretisations can break detailed balance, so that the continuum limit of the numerical discretisation can differ from the initial process. Thus one should choose numerical schemes carefully when preserving the entropy production rate of a process is important. We will return to this in Section \ref{sec: illustrating results epr}.


\section{Time reversal of stationary diffusions}
\label{sec: time reversal of stationary diffusions}


We now specialise to diffusion processes in $\R^d$. These are Markov processes with an infinitessimal generator that is a second order linear operator without a constant part \cite[Definition 1.11.1]{bakryAnalysisGeometryMarkov2014}, 
which entails almost surely (a.s.) continuous sample paths. 
Conveniently, diffusion processes are usually expressible as solutions to stochastic differential equations.
From now on, we consider an Itô stochastic differential equation
\begin{equation}
\label{eq: Ito SDE}
    \d x_t=b\left(x_t\right) \d t+\sigma\left( x_t\right) \d w_{t}
\end{equation}
with drift $b:\mathbb{R}^{d} \rightarrow \mathbb{R}^{d}$ and volatility $\sigma: \mathbb{R}^{d}  \rightarrow \R^{d \times m}$, and $w_t$ a standard Brownian motion on $\R^m$.

\begin{notation}
Let $D=\sigma\sigma^\top/2 \in \mathbb{R}^{{d \times d}}$ be the \textit{diffusion tensor}. Denote by $\|\cdot\|$ the Euclidean distance, and, for matrices
$$
\|\sigma\|^2:=\sum_{i=1}^{d} \sum_{j=1}^{n}\left|\sigma_{i j}\right|^{2}.
$$
Throughout, $\nabla$ and $\nabla \cdot$ are the gradient and the divergence in the distributional sense. We operationally define the divergence of a matrix field $Q : \R^d \to \mathbb{R}^{{d \times d}}$ by $(\nabla \cdot Q)_i:=\sum_{j=1}^d \partial_j Q_{i j}$ for $0\leq i \leq d$.
We will denote by $\mu$ the stationary probability measure of the process $x_t$ and by $\rho$ its density with respect to the Lebesgue measure, i.e., $\mu(\d x)= \rho(x) \d x$ (assuming they exist).
\end{notation}

\subsection{On the time-reversibility of the diffusion property}

There is a substantial literature studying the time-reversal of diffusion processes. In general, the time-reversal of a diffusion need not be a diffusion \cite{milletIntegrationPartsTime1989a}, 
but Haussman and Pardoux showed that the diffusion property is preserved under some mild regularity conditions on the diffusion process \cite{haussmannTimeReversalDiffusions1986}. A few years later Millet, Nualart, Sanz derived necessary and sufficient conditions for the time-reversal of a diffusion to be a diffusion \cite[Theorem 2.2 \& p. 220]{milletIntegrationPartsTime1989a}. We provide these conditions here, with a proof of a different nature that exploits the existence of a stationary distribution.

\begin{lemma}[Conditions for the reversibility of the diffusion property]
\label{lemma: reversibility of the diffusion property}
Let an Itô SDE \eqref{eq: Ito SDE} with locally bounded, Lebesgue measurable coefficients $b: \mathbb{R}^{d} \rightarrow \mathbb{R}^{d}, \sigma : \mathbb{R}^{d}  \rightarrow \R^{d \times m}$. Consider a strong solution $(x_t)_{t\in [0,T]}$, i.e., a process satisfying
\begin{align*}
    x_t=x_0+\int_0^t b\left(x_s\right) d s+\int_0^t \sigma\left(x_s\right) d w_s,
\end{align*}
and assume that it is stationary with respect to a probability measure $\mu$ with density $\rho$. Consider the time-reversed stationary process $(\bar x_t)_{t\in [0,T]}$. Then, the following are equivalent:
\begin{itemize}
    \item $(\bar x_t)_{t\in [0,T]}$ is a Markov diffusion process. 
    \item The distributional derivative $\nabla \cdot (D \rho)$ is a function, which is then necessarily in $L^1\loc(\R^d, \R^d)$. 
\end{itemize}
\end{lemma}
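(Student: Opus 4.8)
The plan is to identify the generator of the time-reversed process explicitly and to read off the equivalence from the requirement that this generator be a genuine second-order differential operator, in the sense of the definition of a diffusion used in this chapter. Throughout I write $L g = b\cdot\nabla g + \sum_{ij} D_{ij}\partial_i\partial_j g$ for the (formal) generator of the forward SDE and $L^\ast h = -\nabla\cdot(bh) + \sum_{ij}\partial_i\partial_j(D_{ij}h)$ for its formal adjoint in $L^2(\R^d,\d x)$, i.e. the Fokker--Planck operator; stationarity of $x_t$ at $\mu$ is then the distributional identity $L^\ast\rho = 0$. I would first record the facts that hold \emph{before} anything is known about the diffusion property: the time-reversal of a stationary time-homogeneous Markov process is again a stationary Markov process, and its transition semigroup $\bar P_t$ is the adjoint of $P_t$ in $L^2(\mu)$, so that $\langle P_t f, g\rangle_\mu = \langle f, \bar P_t g\rangle_\mu$. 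Differentiating at $t=0$ on a suitable core (e.g. $C_c^\infty$) gives $\langle Lf, g\rangle_\mu = \langle f, \bar L g\rangle_\mu$, and rewriting the left-hand side as $\langle Lf, g\rho\rangle_{L^2(\d x)} = \langle f, L^\ast(g\rho)\rangle_{L^2(\d x)} = \langle f, \rho^{-1}L^\ast(g\rho)\rangle_\mu$ identifies the reversed generator, weakly, as
\[
\bar L g = \frac{1}{\rho}\, L^\ast(g\rho).
\]
This is the only place where I genuinely exploit the existence of the stationary density, and it is what lets me sidestep the pathwise arguments of Millet--Nualart--Sanz.

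The heart of the argument is to expand $\rho^{-1}L^\ast(g\rho)$ and watch the zeroth-order term disappear. By the Leibniz rule one has $\nabla\cdot(bg\rho) = g\,\nabla\cdot(b\rho) + (b\rho)\cdot\nabla g$ and, using the symmetry of $D$,
\[
\sum_{ij}\partial_i\partial_j(D_{ij}\, g\rho) = g\sum_{ij}\partial_i\partial_j(D_{ij}\rho) + 2\,(\nabla\cdot(D\rho))\cdot\nabla g + \sum_{ij}D_{ij}\rho\,\partial_i\partial_j g,
\]
where $(\nabla\cdot(D\rho))_i = \sum_j\partial_j(D_{ij}\rho)$. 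Substituting into $\rho\bar L g = -\nabla\cdot(bg\rho) + \sum_{ij}\partial_i\partial_j(D_{ij}g\rho)$ and invoking the stationary Fokker--Planck equation $\nabla\cdot(b\rho) = \sum_{ij}\partial_i\partial_j(D_{ij}\rho)$, the two multiplication-by-$g$ terms cancel exactly, leaving
\[
\bar L g = \Big(-b + \frac{2}{\rho}\,\nabla\cdot(D\rho)\Big)\cdot\nabla g + \sum_{ij}D_{ij}\,\partial_i\partial_j g.
\]
Thus the reversed generator, interpreted distributionally, \emph{always} has vanishing constant part and the same diffusion tensor $D$, and its formal drift is $\bar b := -b + \frac{2}{\rho}\nabla\cdot(D\rho)$, defined $\mu$-a.e. on $\{\rho>0\}$.

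The equivalence then follows by inspecting when this object is a bona fide second-order differential operator. If $\nabla\cdot(D\rho)$ is a function in $L^1\loc$, then $\bar b$ is a measurable vector field, $\bar L$ is a second-order operator without constant part, and one concludes that the reversed process is the diffusion with generator $\bar L$ (for instance by verifying it solves the martingale problem for $\bar L$ on $\operatorname{supp}\mu$). Conversely, if the reversed process is a diffusion, its generator is by definition $\bar b\cdot\nabla + \sum_{ij}\bar D_{ij}\partial_i\partial_j$ with function coefficients; matching with the formula above forces $\bar D = D$ and $\nabla\cdot(D\rho) = \frac{\rho}{2}(b+\bar b)$, which is a function. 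Since $b$ is locally bounded, $\rho\in L^1\loc$, and $\bar b$ is locally $\mu$-integrable (being the drift of a well-defined diffusion), the right-hand side lies in $L^1\loc$, so the claimed integrability comes ``for free''.

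The main obstacle, in my view, is not the algebra --- the cancellation is automatic once $L^\ast\rho=0$ is used --- but the analytic bookkeeping at low regularity: rigorously justifying the weak generator identity $\bar L g = \rho^{-1}L^\ast(g\rho)$ and all integrations by parts when $b,\sigma$ are only locally bounded and measurable and the derivatives are distributional; handling the set $\{\rho=0\}$ where $\bar b$ is undefined; and, in the direction $(2)\Rightarrow(1)$, upgrading the formal identification of $\bar L$ to the genuine statement that the reversed process \emph{is} a diffusion, a martingale-problem well-posedness issue in which the degenerate and hypoelliptic cases must be accommodated. The converse local-integrability claim likewise hinges on the fact that a well-defined diffusion has locally $\mu$-integrable drift, which should be argued rather than taken for granted.
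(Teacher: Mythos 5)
Your proposal is correct and follows essentially the same route as the paper's proof: identify the generator of the time-reversed process as the adjoint of the forward generator under the $\mu$-duality pairing, compute it weakly via integration by parts together with the stationary Fokker--Planck equation so that the zeroth-order terms cancel, and read off the equivalence from whether the resulting drift coefficient $-b + 2\rho^{-1}\nabla\cdot(D\rho)$ is a genuine function (then automatically in $L^1_{\mathrm{loc}}$). The only cosmetic differences are that the paper carries out the computation directly against test functions $f,g \in C_c^\infty(\R^d)$ rather than through the identity $\bar L g = \rho^{-1}L^{\ast}(g\rho)$, and that the martingale-problem upgrade you flag as an obstacle is not needed here, since the chapter's definition of a diffusion (a Markov process whose generator is a second-order linear operator without constant part) makes the final equivalence immediate once the generator has been identified.
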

A proof is provided in Section \ref{app: reversibility of the diffusion property}.

\subsection{Setup for the time-reversal of diffusions}

From now on, we will work under the assumption that the time-reversal of the diffusion is a diffusion. We assume that:
\begin{assumption}
\label{ass: coefficients time reversal}
\begin{enumerate}
    \item \label{ass: locally bounded and Lipschitz} The coefficients of the SDE \eqref{eq: Ito SDE} $b, \sigma$ are 
    \emph{locally Lipschitz continuous}. In other words, $\forall x \in \R^d, \exists r >0, k > 0 \st \forall y\in \R^d:$
    \begin{align*}
         \|x-y \|<r \Rightarrow \|b( x)-b( y)\|+\|\sigma(x)-\sigma(y)\|\leq k\|x-y\|,
    \end{align*}
    \item \label{ass: global solution} The solution $x_t$ to \eqref{eq: Ito SDE} is defined globally up to time $T>0$. Sufficient conditions in terms of the drift and volatility for Itô SDEs are given in Theorem \cite[Theorem 3.1.1]{prevotConciseCourseStochastic2008}. 
\end{enumerate}
\end{assumption}
Assumption \ref{ass: coefficients time reversal}.\ref{ass: locally bounded and Lipschitz} ensures the existence and uniqueness of strong solutions locally in time \cite[Chapter IV Theorem 3.1]{watanabeStochasticDifferentialEquations1981}, while Assumption \ref{ass: coefficients time reversal}.\ref{ass: global solution} ensures that this solution exists globally in time (i.e., non-explosiveness). Altogether, Assumption \ref{ass: coefficients time reversal} ensures that the SDE \eqref{eq: Ito SDE} unambiguously defines a diffusion process. 


Furthermore, we assume some regularity on the stationary distribution of the process.

\begin{assumption}
\label{ass: steady-state time reversal}
\begin{enumerate}
    \item $(x_t)_{t\in [0,T]}$ is stationary at a probability distribution $\mu$, with density $\rho$ with respect to the Lebesgue measure, i.e., $\mu(\d x)= \rho(x) \d x$.
\end{enumerate}
Then, $\rho \in L^1(\R^d)$ and, under local boundedness of the diffusion tensor (e.g., Assumption \ref{ass: coefficients time reversal}), $D\rho \in L^1\loc(\R^d, \R^{d\times d})$. Thus, we can define the distributional derivative $\nabla \cdot (D\rho)$. We assume that:
\begin{enumerate}
\setcounter{enumi}{1}
    \item $\nabla \cdot (D\rho) \in L^1\loc(\R^d, \R^{d})$, i.e., the distributional derivative $\nabla \cdot (D\rho)$ is a function.
\end{enumerate}
\end{assumption}

Assumption \ref{ass: steady-state time reversal} ensures that the time-reversal of the diffusion process remains a diffusion process, as demonstrated in Lemma \ref{lemma: reversibility of the diffusion property}.

\subsection{The time reversed diffusion}

Now that we know sufficient and necessary conditions for the time-reversibility of the diffusion property, we proceed to identify the drift and volatility of the time-reversed diffusion. This was originally done by Hausmann and Pardoux \cite[Theorem 2.1]{haussmannTimeReversalDiffusions1986}, and then by Millet, Nualart, Sanz under slightly different conditions \cite[Theorems 2.3 or 3.3]{milletIntegrationPartsTime1989a}. Inspired by these, we provide a different proof, which applies to stationary diffusions with locally Lipschitz coefficients.

\begin{theorem}[Characterisation of time-reversal of diffusion]
\label{thm: time reversal of diffusions}
Let an Itô SDE \eqref{eq: Ito SDE} with coefficients satisfying Assumption \ref{ass: coefficients time reversal}. Assume that the solution $(x_t)_{t\in [0,T]}$ is stationary with respect to a density $\rho$ satisfying Assumption \ref{ass: steady-state time reversal}. 
Then, the time-reversed process $(\bar x_t)_{t\in [0,T]}$ is a Markov diffusion process, stationary at the density $\rho$, with drift
\begin{align}
\label{eq: time reversed drift}
\bar b(x)=& \begin{cases}
        -b(x) +2\rho^{-1} \nabla \cdot \left(D \rho\right)(x)\quad \text{when } \rho(x) > 0,\\
        -b(x) \quad \text{when } \rho(x) = 0.
    \end{cases}
\end{align}
and diffusion $\bar D = D$. Furthermore, any such stationary diffusion process induces the path space measure of the time-reversed process $\bar \p_{[0,T]}$. 
\end{theorem}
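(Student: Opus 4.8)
The plan is to identify the generator of the time-reversed process as the $L^2(\mu)$-adjoint of the forward generator, and then to compute that adjoint explicitly, using the stationary Fokker--Planck equation to cancel the zeroth-order terms. That $(\bar x_t)$ is a Markov diffusion, stationary at $\rho$, comes essentially for free: the reversal of a stationary process is stationary with the same one-time marginal $\mu$, and Lemma~\ref{lemma: reversibility of the diffusion property} — which applies because Assumption~\ref{ass: steady-state time reversal} guarantees $\nabla\cdot(D\rho)\in L^1\loc$ — already delivers the diffusion property of the reversal. So the real content is the identification of $\bar b$ and $\bar D$.

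First I would establish the semigroup duality that encodes stationarity. Writing $P_t,\bar P_t$ for the forward and reversed Markov semigroups, stationarity gives that the law of $(\bar x_0,\bar x_t)=(x_T,x_{T-t})$ coincides with that of $(x_t,x_0)$, so that for bounded measurable $f,g$,
\[
\int g\,(\bar P_t f)\,\d\mu=\int f\,(P_t g)\,\d\mu .
\]
This is exactly the statement $\bar P_t=P_t^{*}$ in $L^2(\mu)$; differentiating at $t=0$ on a suitable core (e.g.\ $C_c^\infty$) yields $\bar\L=\L^{*}$, the $L^2(\mu)$-adjoint of the forward generator $\L g=b\cdot\nabla g+\sum_{ij}D_{ij}\partial_{ij}g$.

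Next I would compute $\L^{*}$ on test functions $g\in C_c^\infty(\R^d)$ by integrating by parts against the weight $\rho$, transferring the derivatives in $\L f$ onto $g\rho$. Using the symmetry $D_{ij}=D_{ji}$, the mixed first-derivative contributions assemble into $2\rho^{-1}(\nabla\cdot(D\rho))\cdot\nabla g$, while all the terms carrying an undifferentiated $g$ collect into $\rho^{-1}\big[-\nabla\cdot(b\rho)+\sum_{ij}\partial_{ij}(D_{ij}\rho)\big]g$, which vanishes by the stationary Fokker--Planck equation $\nabla\cdot(b\rho)=\sum_{ij}\partial_{ij}(D_{ij}\rho)$. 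What survives is precisely
\[
\L^{*}g=\Big(-b+2\rho^{-1}\nabla\cdot(D\rho)\Big)\cdot\nabla g+\sum_{ij}D_{ij}\partial_{ij}g ,
\]
from which one reads off $\bar D=D$ and the stated $\bar b$ on $\{\rho>0\}$; the convention on $\{\rho=0\}$ is harmless, since that set is $\mu$-null and does not affect the path law. The final clause — that any stationary diffusion with these coefficients induces $\bar\p_{[0,T]}$ — then follows because the generator together with the stationary initial law fixes all finite-dimensional distributions, hence the path-space measure, through well-posedness of the associated martingale problem.

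The main obstacle is making the integration by parts rigorous: under merely locally Lipschitz coefficients, with $\nabla\cdot(D\rho)$ only assumed to be an $L^1\loc$ function, every manipulation must be read in the distributional sense, the decay of boundary terms controlled, and the stationary Fokker--Planck identity verified weakly; in particular one must argue that $\rho^{-1}\nabla\cdot(D\rho)$ defines the drift $\mu$-almost everywhere despite the possible degeneracy of $\rho$. A secondary subtlety is that $\bar b$ need not itself be locally Lipschitz, so the uniqueness-in-law step underlying the last clause should be carried through the martingale problem (or directly through the duality above) rather than by quoting a strong-uniqueness theorem.
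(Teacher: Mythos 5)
Your proposal is correct and follows essentially the same route as the paper: the drift and diffusion are identified by computing the $L^2(\mu)$-adjoint of the generator on $C_c^\infty$ via integration by parts against $\rho$, with the stationary Fokker--Planck equation cancelling the zeroth-order terms (this is exactly the content of the paper's proof of Lemma \ref{lemma: reversibility of the diffusion property}), and the final clause is settled through well-posedness of the martingale problem rather than strong uniqueness for the reversed SDE. Your closing remark about $\bar b$ not being locally Lipschitz is precisely the point the paper addresses by establishing a one-to-one correspondence, under time reversal, between stationary solutions of the martingale problems for $\L$ and $\bar \L$.
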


A proof is provided in Section \ref{app: time reversed diffusion}. 
Similar time-reversal theorems exist in various settings: for more singular coefficients on the torus \cite{quastelTimeReversalDegenerate2002}, under (entropic) regularity conditions on the forward path space measure \cite{follmerTimeReversalWiener1986,cattiauxTimeReversalDiffusion2021}, for infinite-dimensional diffusions \cite{belopolskayaTimeReversalDiffusion2001,milletTimeReversalInfinitedimensional1989,follmerTimeReversalInfinitedimensional1986}, for diffusions on open time-intervals \cite{nagasawaDiffusionProcessesOpen1996}, or with boundary conditions \cite{cattiauxTimeReversalDiffusion1988}. Furthermore, we did not specify the Brownian motion driving the time-reversed diffusion but this one was identified in \cite[Remark 2.5]{pardouxTimereversalDiffusionProcesses1985}.

We illustrate the time-reversal of diffusions with a well-known example:

\begin{example}[Time reversal of underdamped Langevin dynamics]
\label{eg: Time reversal of underdamped Langevin dynamics}
Underdamped Langevin dynamics is an important model in statistical physics and sampling \cite{pavliotisStochasticProcessesApplications2014,roussetFreeEnergyComputations2010,maThereAnalogNesterov2021}. Consider a Hamiltonian $H(q,p)$ function of positions $q\in \R^n$ and momenta $ p \in \R^n$. We assume that the Hamiltonian has the form
\begin{align*}
   H(q,p) = V(q) + \frac 1 2 p^\top M^{-1} p,
\end{align*}
for some smooth potential function $V: \R^d \to \R$ and diagonal mass matrix $M \in \R^{d\times d}$. The underdamped Langevin process is given by the solution to the SDE \cite[eq 2.41]{roussetFreeEnergyComputations2010}
\begin{align}
\label{eq: underdamped Langevin dynamics}
\begin{cases}
\d q_{t} = M^{-1}p_{t} \d t \\
\d p_{t} =-\nabla V\left(q_{t}\right) \d t-\gamma M^{-1} p_{t} \d t+\sqrt{2 \gamma \beta^{-1}} \d w_{t}
\end{cases}
\end{align}
for some friction, and inverse temperature coefficients $\gamma, \beta >0$. The stationary density, assuming it exists, is the canonical density \cite[Section 2.2.3.1]{roussetFreeEnergyComputations2010}
\begin{align}
\label{eq: canonical density underdamped}
    \rho(q,p)\propto e^{-\beta H(q,p)}= e^{-\beta V(q)-\frac{\beta}{2} p^\top M^{-1} p}.
\end{align}
Thus, the time-reversal of the stationary process (Theorem \ref{thm: time reversal of diffusions}) 
is a weak solution to the SDE
\begin{align*}
&\begin{cases}
\d \bar q_{t} = -M^{-1}\bar p_{t} \d t \\
\d \bar p_{t} =\nabla V\left(\bar q_{t}\right) \d t-\gamma M^{-1} \bar p_{t} \d t+\sqrt{2 \gamma \beta^{-1}} \d  w_{t}.
\end{cases}
\end{align*}
Letting $\hat p_{t} = -\bar p_{t}$, 
the tuple $(\bar q_t,\hat p_t)$ solves the same SDE as $(q_t,p_t)$ but with a different Brownian motion $\hat w_t$
    \begin{equation*}
    \begin{cases}
    \d \bar q_{t} = M^{-1}\hat p_{t} \d t \\
    \d \hat p_{t} =-\nabla V\left(\bar q_{t}\right) \d t-\gamma M^{-1} \hat p_{t} \d t+\sqrt{2 \gamma \beta^{-1}} \d \hat w_{t}.
    \end{cases}
    \end{equation*}
Since path space measures are agnostic to changes in the Brownian motion, 
this leads to the statement that time-reversal equals momentum reversal in underdamped Langevin dynamics (with equality in law, i.e., in the sense of path space measures)
    \begin{align*}
        (\bar q_t, \bar p_t)_{t\in [0,T]}=(\bar q_t,-\hat p_t)_{t\in [0,T]}\stackrel{\ell}{=} (q_t, -p_t)_{t\in [0,T]}.
    \end{align*}
In other words, we have $\p_{[0,T]} = \theta_\# \bar \p_{[0,T]}$ where $\theta(q,p)=(q,-p)$ is the momentum flip transformation in phase space. 
\end{example}

\subsection{The Helmholtz decomposition}
\label{sec: helmholtz decomposition}

Armed with the time-reversal of diffusions we proceed to decompose the SDE into its time-reversible and time-irreversible components.
This decomposition is called the Helmholtz decomposition because it can be obtained geometrically by decomposing the drift vector field $b$ into horizontal $b_{\mathrm{irr}}$ (time-irreversible, conservative) and vertical $b_{\mathrm{rev}}$ (time-reversible, non-conservative) components with respect to the stationary density \cite{barpUnifyingCanonicalDescription2021}. These vector fields are called horizontal and vertical, respectively, because the first flows along the contours of the stationary density, while the second ascends the landscape of the stationary density (see the schematic in the upper-left panel of Figure \ref{fig: Helmholtz decomposition}). For our purposes, we provide a self-contained, non-geometric proof of the Helmholtz decomposition in Section \ref{app: helmholtz decomposition}.

\begin{proposition}[Helmholtz decomposition]
\label{prop: helmholtz decomposition}
Consider the solution $(x_t)_{t\in [0,T]}$ of the Itô SDE \eqref{eq: Ito SDE} with coefficients satisfying Assumption \ref{ass: coefficients time reversal}. Let a probability density $\rho$ satisfying $\nabla \cdot(D\rho) \in L^1\loc (\R^d, \R^d)$. Then, the following are equivalent:
\begin{enumerate}
    \item The density $\rho$ is stationary for $(x_t)_{t\in [0,T]}$.
    \item We can write the drift as
    \begin{equation}
    \label{eq: Helmholtz decomposition of drift}
    \begin{split}
        b &= b_{\mathrm{rev}}+ b_{\mathrm{irr}}\\
        b_{\mathrm{rev}} &=
        \begin{cases}
        D\nabla \log \rho + \nabla \cdot D \quad \text{if } \rho(x) > 0\\
        0 \quad \text{if } \rho(x) = 0
        \end{cases}\\
        \nabla\cdot (b_{\mathrm{irr}} \rho)&= 0. 
    \end{split}
    \end{equation}
\end{enumerate}
Furthermore, $b_{\mathrm{rev}}$ is \emph{time-reversible}, while $b_{\mathrm{irr}}$ is \emph{time-irreversible}, i.e.,
    \begin{align*}
        b = b_{\mathrm{rev}}+ b_{\mathrm{irr}},\quad  \bar b= b_{\mathrm{rev}}- b_{\mathrm{irr}}.
    \end{align*}
\end{proposition}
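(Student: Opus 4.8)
The plan is to translate the probabilistic statement ``(1) $\rho$ is stationary'' into the analytic statement that $\rho$ solves the stationary Fokker--Planck equation, and then to recognise the claimed decomposition as nothing more than a rewriting of that equation. Stationarity of $\rho$ under the SDE \eqref{eq: Ito SDE} is equivalent, in the distributional sense, to $\mathcal{L}^*\rho = 0$ where $\mathcal{L}^*$ is the Fokker--Planck operator; writing this out, stationarity reads
\begin{align*}
\nabla\cdot(b\rho) = \nabla\cdot\left(\nabla\cdot(D\rho)\right),
\end{align*}
which is precisely $\nabla\cdot J = 0$ for the stationary probability current $J := b\rho - \nabla\cdot(D\rho)$. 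Both $b\rho$ (by local boundedness of $b$ together with $\rho\in L^1$) and $\nabla\cdot(D\rho)$ (by the standing hypothesis $\nabla\cdot(D\rho)\in L^1\loc$) are locally integrable, so $J$ and its distributional divergence are well defined.

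The crux is then the algebraic identity $b_{\mathrm{rev}}\,\rho = \nabla\cdot(D\rho)$. Where $\rho>0$ I would in fact take $b_{\mathrm{rev}} := \rho^{-1}\nabla\cdot(D\rho)$ as the rigorous definition, the displayed expression $D\nabla\log\rho + \nabla\cdot D$ being merely its expansion by the product rule whenever that is licit; multiplying by $\rho$ gives the identity immediately. Granting it, setting $b_{\mathrm{irr}} := b - b_{\mathrm{rev}}$ yields $b_{\mathrm{irr}}\rho = b\rho - \nabla\cdot(D\rho) = J$, whence $\nabla\cdot(b_{\mathrm{irr}}\rho) = \nabla\cdot J$, and the chain
\begin{align*}
(1)\iff \nabla\cdot J = 0 \iff \nabla\cdot(b_{\mathrm{irr}}\rho)=0 \iff (2)
\end{align*}
closes the equivalence in both directions at once.

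Two points need care. First, the identity $b_{\mathrm{rev}}\rho = \nabla\cdot(D\rho)$ must also hold on $\{\rho = 0\}$, where $b_{\mathrm{rev}}$ is set to $0$ by convention; here I would invoke the standard fact that the weak gradient of an $L^1\loc$ function vanishes almost everywhere on its zero level set, so that $\nabla\cdot(D\rho)=0$ a.e.\ on $\{\rho=0\}$ and both sides agree. Second, for the ``Furthermore'' claim I would feed the identity into Theorem \ref{thm: time reversal of diffusions}: on $\{\rho>0\}$ the time-reversed drift is $\bar b = -b + 2\rho^{-1}\nabla\cdot(D\rho) = -b + 2 b_{\mathrm{rev}}$, and substituting $b = b_{\mathrm{rev}} + b_{\mathrm{irr}}$ gives $\bar b = b_{\mathrm{rev}} - b_{\mathrm{irr}}$; on $\{\rho=0\}$ one has $\bar b = -b = b_{\mathrm{rev}} - b_{\mathrm{irr}}$ trivially. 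This exhibits $b_{\mathrm{rev}}$ and $b_{\mathrm{irr}}$ as the parts of $b$ that are, respectively, invariant under and sign-reversed by time reversal, justifying the labels \emph{time-reversible} and \emph{time-irreversible}.

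I expect the main obstacle to be purely measure-theoretic rather than conceptual: making the product-rule manipulations and the passage between $\rho^{-1}\nabla\cdot(D\rho)$ and $D\nabla\log\rho+\nabla\cdot D$ rigorous under the weak hypothesis $\nabla\cdot(D\rho)\in L^1\loc$, and correctly handling the degeneracy set $\{\rho=0\}$, where neither $\nabla\log\rho$ nor the separate summands need exist. Keeping the entire argument at the level of the combined object $\nabla\cdot(D\rho)$, and only dividing by $\rho$ on the open set where it is strictly positive, is what lets one sidestep these regularity issues.
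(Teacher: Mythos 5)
Your route to the equivalence is essentially the paper's: both arguments reduce (1) $\iff$ (2) to the distributional stationary Fokker--Planck equation together with the key identity $b_{\mathrm{rev}}\rho = \nabla\cdot(D\rho)$, and both tie the decomposition to time reversal via the drift formula of Theorem \ref{thm: time reversal of diffusions}. The only structural difference is cosmetic: the paper \emph{defines} $b_{\mathrm{rev}} := (b+\bar b)/2$ and $b_{\mathrm{irr}} := (b-\bar b)/2$, so that the ``furthermore'' holds by construction and the functional form is what gets verified, whereas you define $b_{\mathrm{rev}}$ by its formula and recover the time-reversal statement at the end by substitution. Nothing of substance hangs on this.

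There is, however, a genuine problem with the one step you single out as needing care. The fact you invoke---``the weak gradient of an $L^1_{\mathrm{loc}}$ function vanishes a.e.\ on its zero level set''---is a statement about $W^{1,1}_{\mathrm{loc}}$ functions: it requires each partial derivative to exist individually as a locally integrable function. Under Assumption \ref{ass: steady-state time reversal} only the \emph{combined} divergence $\nabla\cdot(D\rho)$ is assumed to be a function; the entries of $D\rho$ need not be weakly differentiable, so the lemma does not apply to them. Worse, the generic statement you would need is false: for a vector field $v \in L^1_{\mathrm{loc}}$ with $\nabla\cdot v \in L^1_{\mathrm{loc}}$, vanishing of $v$ on a positive-measure set does not force its divergence to vanish there. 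For instance, in $d=2$, Alberti's Lusin-type theorem for gradients yields $\psi \in C^1$ with $\nabla\psi(x) = (-x_2,\, x_1)$ on a set $E$ of nearly full measure; then $v(x) := x - (\partial_2\psi, -\partial_1\psi)$ is continuous, vanishes on $E$, yet $\nabla\cdot v \equiv 2$ since the rotated gradient is divergence-free. Hence closing the gap on $\{\rho = 0\}$ must exploit the special structure of $\nabla\cdot(D\rho)$---each component is $\nabla\cdot(\rho D_{i\cdot})$ with $\rho \geq 0$ and $D_{i\cdot}$ locally Lipschitz---for example by a fiberwise argument along lines or flow coordinates of $D_{i\cdot}$, where the one-dimensional $W^{1,1}$ level-set fact does apply, or by strengthening the hypothesis to $\rho \in W^{1,1}_{\mathrm{loc}}$ so that the componentwise product rule is licit. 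To be fair, the paper's own proof uses $b_{\mathrm{rev}}\rho = \nabla\cdot(D\rho)$ a.e.\ (including on $\{\rho=0\}$) without comment, so you have correctly surfaced a real subtlety that the paper glosses over; but the lemma you cite is not a proof of it.
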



The fundamental importance of the Helmholtz decomposition was originally recognised in the context of non-equilibrium thermodynamics by Graham in 1977 \cite{grahamCovariantFormulationNonequilibrium1977}, but its inception in this field dates from much earlier: for instance, the divergence free vector field $b_{\mathrm{irr}} \rho$ is precisely the stationary probability current or flux introduced by Nelson in 1967 \cite{nelsonDynamicalTheoriesBrownian1967}. More recently, the decomposition has recurrently been used in non-equilibrium statistical physics \cite{eyinkHydrodynamicsFluctuationsOutside1996,aoPotentialStochasticDifferential2004,qianDecompositionIrreversibleDiffusion2013,pavliotisStochasticProcessesApplications2014,fristonStochasticChaosMarkov2021,dacostaBayesianMechanicsStationary2021a,yangPotentialsContinuousMarkov2021}, and in statistical machine learning as the basis of Monte-Carlo sampling schemes \cite{maCompleteRecipeStochastic2015,barpUnifyingCanonicalDescription2021,lelievreOptimalNonreversibleLinear2013,pavliotisStochasticProcessesApplications2014}.

\begin{figure}[t!]
    \centering
    \includegraphics[width= 0.45\textwidth]{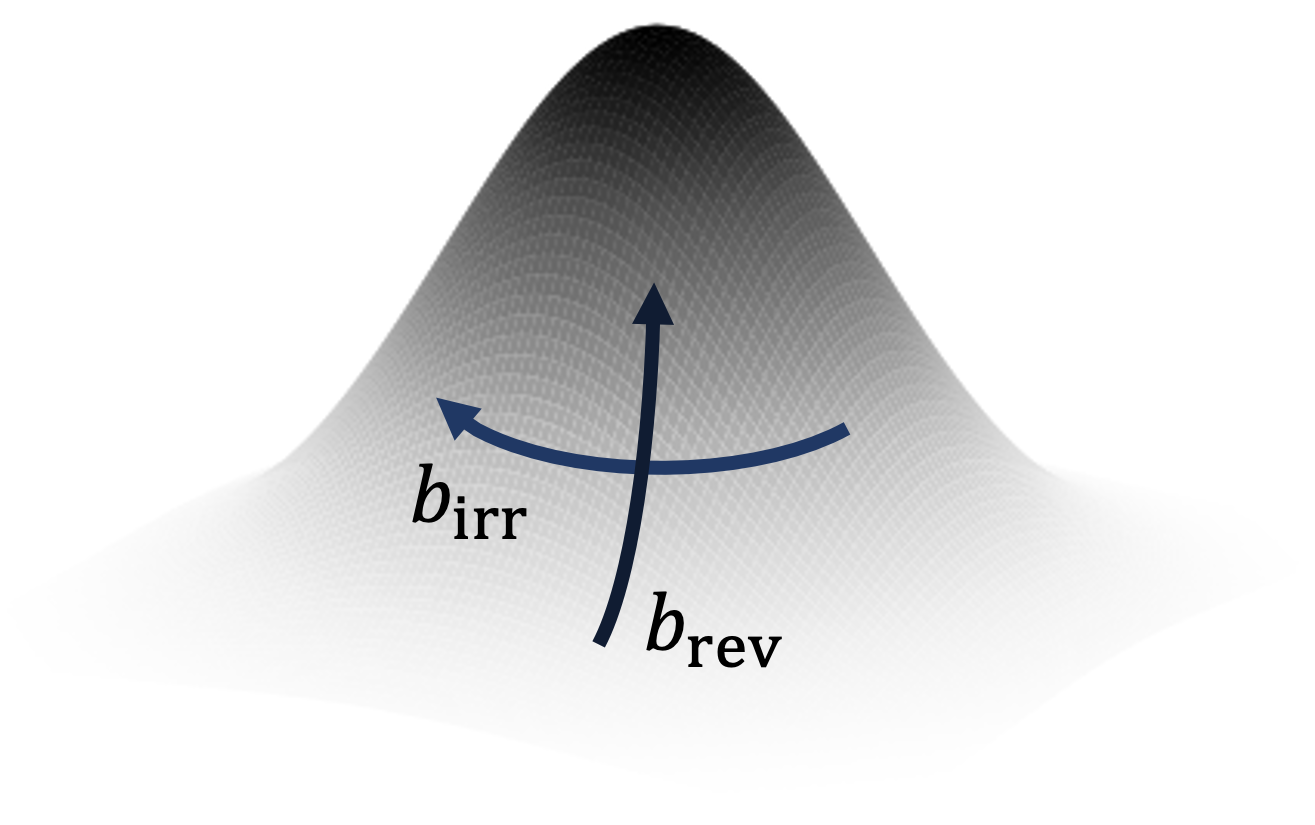}
    \includegraphics[width= 0.45\textwidth]{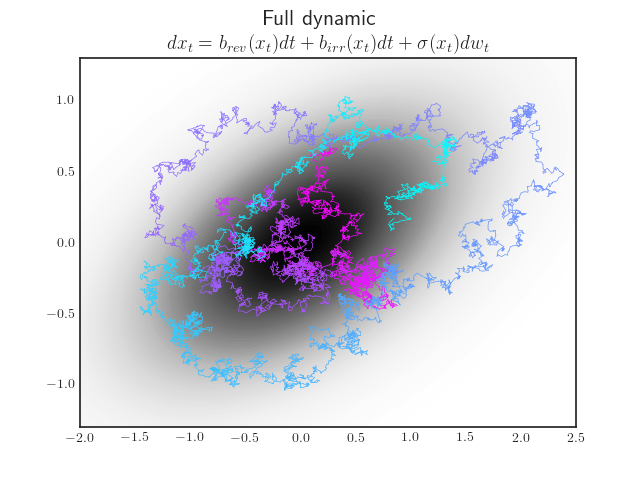}
    \includegraphics[width= 0.45\textwidth]{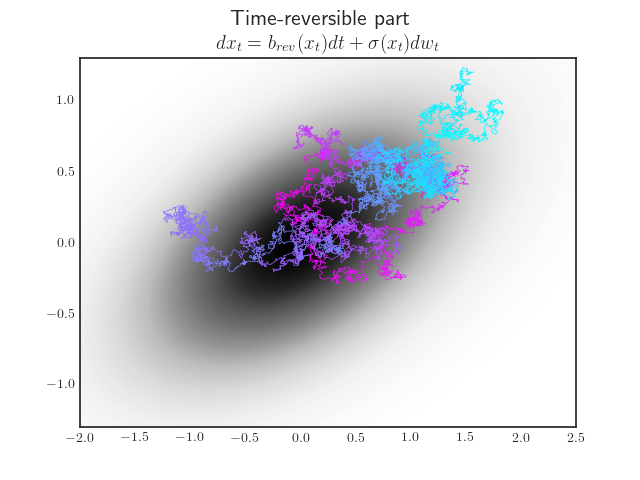}
    \includegraphics[width= 0.45\textwidth]{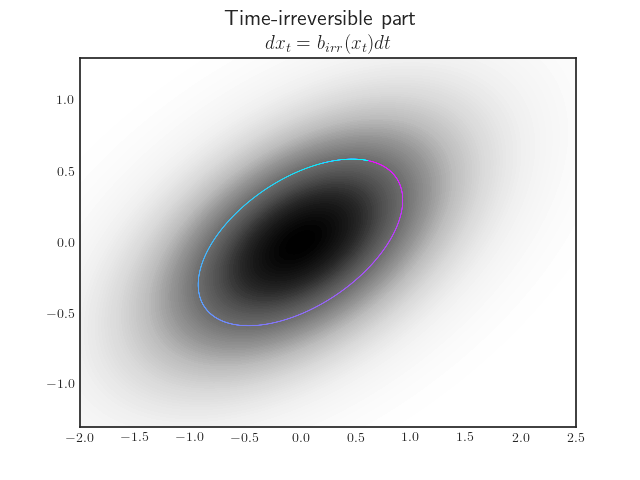}
    \caption[Helmholtz decomposition]{\footnotesize\textbf{Helmholtz decomposition.} The upper left panel illustrates the Helmholtz decomposition of the drift into time-reversible and time-irreversible parts: the time-reversible part of the drift flows towards the peak of the stationary density, while the time-irreversible part flows along its contours. The upper right panel shows a sample trajectory of a two-dimensional diffusion process stationary at a Gaussian distribution. The lower panels plot sample paths of the time-reversible (lower left) and time-irreversible (lower right) parts of the dynamic. Purely conservative dynamics (lower right) are reminiscent of the trajectories of massive bodies (e.g., planets) whose random fluctuations are negligible, as in Newtonian mechanics. Together, the lower panels illustrate time-irreversibility: If we were to reverse time, 
    the trajectories of the time-reversible process would be statistically identical, while the trajectories of the time-irreversible process be distinguishable by flow, say, clockwise instead of counterclockwise. The full process (upper right) is a combination of both time-reversible and time-irreversible dynamics. The time-irreversible part defines a non-equilibrium steady-state and induces its characteristic wandering, cyclic behaviour.}
    \label{fig: Helmholtz decomposition}
\end{figure}

\begin{remark}[Probabilistic reversibility]
Here, time-reversible means reversibility in a probabilistic sense; that is, invariance under time reversal, also known as detailed balance \cite[Proposition 3.3.4]{jiangMathematicalTheoryNonequilibrium2004}. 
Probabilistic reversibility often leads to the non-conversation of quantities like the potential $-\log \rho(x)$. 
For example, the identity $\nabla\cdot (b_{\mathrm{irr}} \rho)= 0$ implies that the time-irreversible drift $b_{\mathrm{irr}}$ flows along the contours of the probability density; in other words, the probability density and the potential are conserved along the time-irreversible vector field. In contrast, none of them are conserved when flowing along the time-reversible vector field $b_{\mathrm{rev}}$. See Figure \ref{fig: Helmholtz decomposition} for an illustration.


\end{remark}

\begin{remark}[Stratonovich formulation of Helmholtz decomposition]
\label{rem: stratonovich Helmholtz}
There exists an equivalent decomposition of the drift of Stratonovich SDEs into time-reversible and irreversible parts. Assuming that $\sigma$ is differentiable, we can rewrite the Itô SDE \eqref{eq: Ito SDE} into its equivalent Stratonovich SDE 
\begin{align*}
    dx_t = b^s(x_t) dt + \sigma(x_t) \circ dw_t.
\end{align*}
where $b^s =b- \iota$ and $\iota$ is the Itô to Stratonovich correction \cite[eq. 3.31]{pavliotisStochasticProcessesApplications2014}. Note that the correction is time-reversible. It follows that
\begin{align}
\label{eq: irr drift equals ito and stratonovich}
    b^s_{\mathrm{irr}}&=b_{\mathrm{irr}},
\end{align}
and for $x \st \rho(x) >0$,
\begin{align*}
    b^s_{\mathrm{rev}}(x)&= (b_{\mathrm{rev}}-\iota)(x)=  D\nabla \log \rho(x) + \frac{1}{2} \sigma \nabla \cdot \sigma^\top(x).
\end{align*}
In particular, for $x \st \rho(x) >0$,
\begin{align}
\label{eq: stratonovich drift in image of vol equiv irr drift in image of vol}
    b^s(x) \in \im\sigma(x) &\iff b^s_{\mathrm{irr}}(x) \in \im\sigma(x)
\end{align}
as $b^s_{\mathrm{rev}}(x)\in \im\sigma(x)$. For diffusions driven by additive noise, Itô and Stratonovich formulations coincide $b^s=b$. Thus, we conclude
\begin{align}
    \sigma \text{ is constant} &\Rightarrow b(x) = \begin{cases}
        D\nabla \log \rho(x) + b_{\mathrm{irr}}(x) \quad \text{if } \rho(x) > 0\\
        b_{\mathrm{irr}}(x) \quad \text{if } \rho(x) = 0
        \end{cases} \nonumber\\
        &\Rightarrow \left( b(x) \in \im\sigma(x) \iff b_{\mathrm{irr}}(x) \in \im\sigma(x)\right) \label{eq: drift image of volatility additive noise}
\end{align}
These identities will be useful to compute the entropy production rate later on.
\end{remark}

The time-irreversible part of the drift often takes a simple form:

\begin{proposition}[Characterisation of time-irreversible drift]
\label{prop: characterisation of irreversible drift}
Consider a smooth, strictly positive probability density $\rho$ and an arbitrary smooth vector field $b_{\mathrm{irr}}$. Then
\begin{align*}
    \nabla \cdot( b_{\mathrm{irr}}\rho) = 0 \iff 
    b_{\mathrm{irr}} = Q \nabla \log \rho + \nabla \cdot Q
\end{align*}
where $Q=-Q^\top $ is a smooth antisymmetric matrix field.
\end{proposition}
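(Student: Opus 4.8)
The plan is to prove the two directions by reducing the statement to a standard fact about divergence-free vector fields and their antisymmetric potentials. First I would absorb $\rho$ into the unknown matrix field by setting $A := \rho Q$. Since $\rho$ is smooth and strictly positive, the map $Q \mapsto \rho Q$ is a bijection of the smooth antisymmetric matrix fields, so it suffices to work with $A$. A direct computation using $\rho\nabla\log\rho = \nabla\rho$ gives, componentwise, $(\nabla\cdot(\rho Q))_i = \sum_j \partial_j(\rho Q_{ij}) = \rho(\nabla\cdot Q)_i + \sum_j Q_{ij}\partial_j\rho = \rho\,(Q\nabla\log\rho + \nabla\cdot Q)_i$. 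Hence, writing $j := b_{\mathrm{irr}}\rho$, the desired identity $b_{\mathrm{irr}} = Q\nabla\log\rho + \nabla\cdot Q$ is equivalent to $j = \nabla\cdot A$, and the proposition becomes: $\nabla\cdot j = 0 \iff j = \nabla\cdot A$ for some smooth antisymmetric $A$.

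The ($\Leftarrow$) direction is then immediate. If $j = \nabla\cdot A$ with $A = -A^\top$, then $\nabla\cdot j = \sum_{i}\partial_i\sum_j\partial_j A_{ij} = \sum_{i,j}\partial_i\partial_j A_{ij} = 0$, because $\partial_i\partial_j$ is symmetric in $(i,j)$ while $A_{ij}$ is antisymmetric, so the contraction vanishes (this is $\delta^2 = 0$ in disguise). Tracing the reduction backwards then yields $\nabla\cdot(b_{\mathrm{irr}}\rho)=0$.

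The substantive direction is ($\Rightarrow$): given $\nabla\cdot j = 0$, I must construct a smooth antisymmetric $A$ with $\nabla\cdot A = j$. I would phrase this through the de Rham complex on $\R^d$. Identify $j$ with the $1$-form $\omega = \sum_i j_i\,dx^i$; under the flat Hodge star the matrix divergence corresponds to the codifferential, $(\delta\beta)_i = \sum_j\partial_j\beta_{ij}$ for a $2$-form $\beta$, and the condition $\nabla\cdot j = 0$ reads $\delta\omega = 0$, equivalently $d(\star\omega) = 0$, i.e. the $(d-1)$-form $\star\omega$ is closed. Because $\R^d$ is contractible, the Poincaré lemma supplies a $(d-2)$-form $\eta$ with $d\eta = \star\omega$; setting $\beta := \star^{-1}\eta$ and letting $A$ be its associated antisymmetric matrix field gives $\nabla\cdot A = \delta\beta = j$. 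Alternatively one can avoid forms and exhibit $A$ explicitly through the homotopy operator, e.g. an integral of the form $A_{ij}(x) = \int_0^1 t^{\,d-2}\,(\cdots)(tx)\,dt$, and verify $\nabla\cdot A = j$ directly from $\nabla\cdot j = 0$. Either way, smoothness of $A$ — and hence of $Q = A/\rho$ — is inherited from smoothness of $j$ via the integral formula, completing the reduction.

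The main obstacle is precisely this existence step: it is a \emph{cohomological} statement — that a closed $(d-1)$-form on $\R^d$ is exact — so the construction is not purely algebraic and rests on the triviality of the de Rham cohomology of $\R^d$ in the relevant degree (the Poincaré lemma). The remaining work is routine but slightly delicate: bookkeeping the Hodge-star signs, and checking that the homotopy formula genuinely returns an antisymmetric $A$ with $\nabla\cdot A = j$. I would also note the degenerate case $d=1$ separately, where there are no nonzero antisymmetric matrices; there $\nabla\cdot j = 0$ forces $j$ constant, and integrability of $\rho$ forces $j\equiv 0$, so $b_{\mathrm{irr}}\equiv 0$ and the claim holds vacuously.
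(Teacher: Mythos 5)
Your proof takes essentially the same route as the paper's: reduce via $A := \rho Q$ and the product rule $\nabla\cdot(\rho Q) = \rho\,(Q\nabla\log\rho + \nabla\cdot Q)$, get the easy direction from the cancellation $\nabla\cdot(\nabla\cdot A)=0$ (symmetric second derivatives contracted against an antisymmetric field), and for the substantive direction use the fact that every smooth divergence-free vector field on $\mathbb{R}^d$ is the divergence of a smooth antisymmetric matrix field. The only difference is that the paper cites this last fact (as a consequence of Poincar\'e duality in de Rham cohomology), whereas you sketch its proof via the Poincar\'e lemma and the Hodge star; that is precisely the content of the cited result, so the two arguments coincide in substance, with yours being self-contained.

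One concrete error in your aside on $d=1$: integrability of $\rho$ does \emph{not} force $j\equiv 0$. There, $\nabla\cdot(b_{\mathrm{irr}}\rho)=0$ gives $b_{\mathrm{irr}}\rho = c$ constant, and since $\rho$ is smooth and strictly positive, $b_{\mathrm{irr}} = c/\rho$ is an admissible smooth vector field for any $c$ (e.g.\ $\rho$ a standard Gaussian and $b_{\mathrm{irr}}(x) = \sqrt{2\pi}\,e^{x^2/2}$). Since the only antisymmetric $1\times 1$ matrix field is $Q\equiv 0$, the forward implication then fails: the proposition is genuinely false in $d=1$, not vacuously true, and both the statement and the underlying antisymmetric-potential fact implicitly require $d\geq 2$. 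This does not affect your main argument, which is correct in dimensions $d \geq 2$.
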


A proof is provided in Section \ref{app: helmholtz decomposition}. We conclude this section by unpacking the Helmholtz decomposition of underdamped Langevin dynamics.

\begin{example}[Helmholtz decomposition of underdamped Langevin]
\label{eg: helmholtz decomp Langevin}
Following Example \ref{eg: Time reversal of underdamped Langevin dynamics}, it is straightforward to decompose underdamped Langevin dynamics into its time-irreversible and time-reversible parts. Indeed we just need to identify the parts of the drift whose sign changes, and remains invariant, under time reversal:

\begin{align*}
b_{\mathrm{rev}}(q,p) = \begin{bmatrix}
    0 \\
    -\gamma M^{-1} p
\end{bmatrix}, \quad b_{\mathrm{irr}}(q,p) &= \begin{bmatrix}
    M^{-1}p  \\
    -\nabla V\left(q\right)
\end{bmatrix}.   
\end{align*}
We can rewrite these in canonical form recalling the gradient of the stationary density \eqref{eq: canonical density underdamped}
\begin{align*}
    &b_{\mathrm{rev}}(q,p)= D \nabla \log \rho(q,p) , \quad b_{\mathrm{irr}}(q,p)= Q \nabla \log \rho(q,p)\\
    \nabla \log  \rho(q,p)&= -\beta \begin{bmatrix}
         \nabla V(q)\\
         M^{-1}p
    \end{bmatrix},
    \quad D=\begin{bmatrix}
        0 &0\\
        0& \gamma \beta^{-1}\id_n
    \end{bmatrix}
 , \quad Q=\beta^{-1}\begin{bmatrix}
        0 &-\id_n\\
        \id_n & 0
    \end{bmatrix}.
\end{align*}
Clearly, the time-irreversible part of the process $\d [q_t, p_t] = b_{\mathrm{irr}}(q_t,p_t)\d t$ is a Hamiltonian dynamic that preserves the energy (i.e., the Hamiltonian), while the time-reversible part is a reversible Ornstein-Uhlenbeck process. Example trajectories of the time-irreversible trajectory are exemplified in Figure \ref{fig: Helmholtz decomposition} (bottom right).
\end{example}



\subsection{Multiple perspectives on the Helmholtz decomposition}

The Helmholtz decomposition is a cornerstone of the theory of diffusion processes. In addition to being a geometric decomposition of the drift \cite{barpUnifyingCanonicalDescription2021}, it is, equivalently, a time-reversible and irreversible decomposition of the SDE \eqref{eq: Helmholtz decomp SDE}, of the generator and the (backward and forward) Kolmogorov PDEs describing the process. Briefly, the Helmholtz decomposition is equivalent to a functional analytic decomposition of the generator into symmetric and antisymmetric operators in a suitable function space. This corresponds to a decomposition of the backward Kolmogorov equation---which determines the evolution of (macroscopic) observables under the process---into a conservative and a dissipative flow. This decomposition can be used as a starting point to quantify the speed of convergence of the process to its stationary state from arbitrary initial conditions using hypocoercivity theory \cite{villaniHypocoercivity2009}. The same goes for the Fokker-Planck equation, which can be decomposed into a dissipative gradient flow, and a flow that is conservative in virtue of being orthogonal to the gradient flow in a suitable function space. This casts the Fokker-Planck equation in GENERIC form (General Equations for Non-Equilibrium Reversible-Irreversible Coupling), a general framework for analysing dynamical systems arising in non-equilibrium statistical physics  \cite{ottingerEquilibriumThermodynamics2005,duongNonreversibleProcessesGENERIC2021}.

Below we outline these different equivalent perspectives. 
This section is provided for independent interest but will not be used to derive our main results on entropy production; \emph{you may conveniently skip it on a first reading}.

\subsubsection{Helmholtz decomposition of the SDE}

Proposition \ref{prop: helmholtz decomposition} is equivalent to a Helmholtz decomposition of the SDE into its time-reversible and time-irreversible parts, noting that the volatility is invariant under time-reversal (Theorem \ref{thm: time reversal of diffusions})
\begin{align}
\label{eq: Helmholtz decomp SDE}
\mathrm{d} x_t= \underbrace{b_{\mathrm{irr}}\left(x_t\right) \mathrm{d} t}_{\text{Time-irreversible}} +\underbrace{b_{\mathrm{rev}}\left(x_t\right) \mathrm{d} t+\sigma\left(x_t\right) \mathrm{d} w_t}_{\text{Time-reversible}}.
\end{align}
Figure \ref{fig: Helmholtz decomposition} illustrates this decomposition with simulations.

\subsubsection{Helmholtz decomposition of the infinitesimal generator}
\label{sec: helmholtz decomp inf gen}

Following the differential geometric viewpoint, a deterministic flow---namely, a vector field $b$---is given by a first order differential operator $b \cdot \nabla$. Similarly, a stochastic flow given by a diffusion---namely, a vector field $b$ and a diffusion tensor $D$---is characterised by a second order differential operator
\begin{align}
\label{eq: generic generator}
    \L= b \cdot \nabla + D\nabla \cdot \nabla,
\end{align}
known as the generator. Note that the first order part is the deterministic flow given by the drift while the second order part is the stochastic flow determined by the diffusion. More precisely, the generator of a diffusion process solving the SDE \eqref{eq: Ito SDE} under Assumptions \ref{ass: coefficients time reversal} and \ref{ass: steady-state time reversal} is a linear, unbounded operator defined as
\begin{align}
&\L: C_c^\infty(\R^d) \subset  \dom \L \subset  L^p_\mu(\R^d) \to L^p_\mu(\R^d), 1 \leq p \leq \infty, \quad 
     \L f(y) := \lim_{t\downarrow 0}\frac 1 t\E[f(x_t)-f(y) \mid x_0=y], \label{eq: def generator}\\
     &f \in \dom \L = \left\{f \in L^p_\mu(\R^d) \mid \exists g \in L^p_\mu(\R^d) \st \frac 1 t\E[f(x_t)-f(y) \mid x_0=y] \xrightarrow{t \downarrow 0} g(y) \text{ in }  L^p_\mu(\R^d)\right\} \nonumber
\end{align}
Diffusions are among the simplest and most canonical Markov processes because they are characterised by generators that are second order differential operators (with no constant part). 
Indeed, starting from \eqref{eq: def generator}, a quick computation using Itô's formula yields \eqref{eq: generic generator}.

Recall that we have a duality pairing $\langle \cdot, \cdot\rangle_\mu : L^{p'}_\mu(\R^d) \otimes L^p_\mu(\R^d) \to \R$ defined by
    $\langle f,g \rangle_\mu= \intr fg\: \d \mu$,
where $\frac{1}{p'}+\frac{1}{p}=1$.

A well-known fact is that the generator $\bar \L$ of the time-reversed diffusion is the adjoint of the generator under the above duality pairing \cite{yosidaFunctionalAnalysis1995,pazySemigroupsLinearOperators2011}, \cite[Thm 4.3.2]{jiangMathematicalTheoryNonequilibrium2004}.
The adjoint $\bar \L$ is implicitly defined by the relation
\begin{align*}
    &\langle f,\L g \rangle_\mu = \langle \bar \L f, g \rangle_\mu, \forall f \in \dom \bar \L, g \in \dom \L,\\
    \dom \bar \L=\:&\{f \in L^1_\mu(\R^d) \mid  \exists h \in L^1_\mu(\R^d),  \forall g\in \dom \L: \langle f,\L g \rangle_\mu =\langle h, g \rangle_\mu   \}.
\end{align*}
(The concept of adjoint generalises the transpose of a matrix in linear algebra to operators on function spaces).
The proof of Lemma \ref{lemma: reversibility of the diffusion property} explicitly computes the adjoint and shows that it is a linear operator $\bar \L: L^1_\mu(\R^d) \to L^1_\mu(\R^d)$ which equals
\begin{align*}
    \bar \L f&= -b \cdot \nabla f + 2 \rho^{-1}\nabla \cdot (D\rho )\cdot \nabla f+ D\nabla \cdot \nabla f.
\end{align*}
Notice how the first order part of the adjoint generator is the drift of the time-reversed diffusion, while the second order part is its diffusion, as expected (cf. Theorem~\ref{thm: time reversal of diffusions}).

Much like we derived the Helmholtz decomposition of the drift by identifying the time-reversible and irreversible parts (see the proof of Proposition \ref{prop: helmholtz decomposition}), we proceed analogously at the level of the generator. Indeed, just as any matrix can be decomposed into a sum of antisymmetric and symmetric matrices, we may decompose the generator into a sum of antisymmetric and symmetric operators
\begin{equation}
\label{eq: gen symmetric and antisymmetric decomp}
\begin{split}
    \L &= \A + \S ,\quad \A := \left(\L-\bar \L\right)/2, \quad \S:= \left(\L+\bar \L\right)/2, \quad \dom \A= \dom\S= \dom\L\cap \dom\bar \L.
\end{split}
\end{equation}
By its analogous construction, this decomposition coincides with the Helmholtz decomposition; indeed, the symmetric operator recovers the time-reversible part of the dynamic while the antisymmetric operator recovers the time-irreversible part. In a nutshell, the Helmholtz decomposition of the generator is as follows
\begin{align*}
\label{eq: symmetric and antisymmetric unpacked}
   \L &= \A + \S , \quad \underbrace{\A= b_{\mathrm{irr}} \cdot \nabla}_{\text{Time-irreversible}} \quad \underbrace{ \S= b_{\mathrm{rev}} \cdot \nabla + D\nabla \cdot \nabla}_{\text{Time-reversible}},
\end{align*}
where the summands are symmetric and antisymmetric operators because they behave accordingly under the duality pairing:
\begin{align*}
    \underbrace{\langle \A f, g\rangle_\mu= -\langle f, \A g\rangle_\mu}_{\text{Antisymmetric}}, \quad \forall f,g \in \dom \A,\quad  \underbrace{\langle \S f, g\rangle_\mu= \langle f, \S g\rangle_\mu}_{\text{Symmetric}}, \quad \forall f,g \in \dom \S.
\end{align*}
Noting that $-\S$ is a positive semi-definite operator, we can go slightly further and decompose it into its square roots. To summarise:

\begin{proposition}
\label{proposition: hypocoercive decomposition of the generator}
We can rewrite the generator of the diffusion process as
$\operatorname L = \operatorname A-\operatorname \Sigma^*\operatorname \Sigma$ where $\A$ is the antisymmetric part of the generator, and $-\operatorname \Sigma^*\operatorname \Sigma$ is the symmetric part, as defined in \eqref{eq: gen symmetric and antisymmetric decomp}. Here $\cdot^*$ denotes the adjoint with respect to the duality pairing $\langle\cdot, \cdot\rangle_\mu$. The operators have the following functional forms: $\operatorname A f= b_{\mathrm{irr}}\cdot \nabla f$, $\sqrt 2 \operatorname \Sigma f =\sigma^\top \nabla f$, $\sqrt 2 \operatorname \Sigma^*g=- \nabla \log \rho \cdot \sigma g- \nabla \cdot(\sigma g) $.
\end{proposition}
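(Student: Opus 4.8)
The plan is to make the symmetric/antisymmetric split \eqref{eq: gen symmetric and antisymmetric decomp} fully explicit and then exhibit the square-root factorisation of the symmetric part. First I would feed the time-reversed generator from Theorem \ref{thm: time reversal of diffusions}, $\bar \L f = \bar b \cdot \nabla f + D\nabla\cdot\nabla f$ with $\bar b = -b + 2\rho^{-1}\nabla\cdot(D\rho)$ on $\{\rho>0\}$, into the definitions of $\A$ and $\S$. By the Leibniz rule, $\rho^{-1}\nabla\cdot(D\rho) = \nabla\cdot D + D\nabla\log\rho = b_{\mathrm{rev}}$ (Proposition \ref{prop: helmholtz decomposition}), so $\bar b = b_{\mathrm{rev}} - b_{\mathrm{irr}}$. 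Since $\L$ and $\bar\L$ share the second-order part $D\nabla\cdot\nabla$, the antisymmetric combination $\A = (\L-\bar\L)/2$ keeps only first-order terms, $\A f = \tfrac12(b-\bar b)\cdot\nabla f = b_{\mathrm{irr}}\cdot\nabla f$, while $\S = (\L+\bar\L)/2 = b_{\mathrm{rev}}\cdot\nabla f + D\nabla\cdot\nabla f$. This confirms the form of $\A$ and isolates the operator I must factor.

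Next I would define $\Sigma : L^2_\mu(\R^d) \to L^2_\mu(\R^d;\R^m)$ by $\Sigma f = \tfrac{1}{\sqrt2}\sigma^\top\nabla f$ on the core $C_c^\infty(\R^d)$, the codomain carrying the pairing $\langle u,v\rangle_\mu = \intr u\cdot v\,\d\mu$. To identify $\Sigma^*$ I would write $\langle \Sigma f, g\rangle_\mu = \tfrac{1}{\sqrt2}\intr \nabla f\cdot(\sigma g)\,\rho\,\d x$ and integrate by parts, using $\nabla\cdot((\sigma g)\rho) = \rho\,[\nabla\cdot(\sigma g) + \nabla\log\rho\cdot\sigma g]$; the boundary term vanishes by compact support, yielding $\langle \Sigma f, g\rangle_\mu = \langle f, \Sigma^* g\rangle_\mu$ with $\sqrt2\,\Sigma^* g = -\nabla\log\rho\cdot\sigma g - \nabla\cdot(\sigma g)$, exactly the claimed form.

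Finally I would compose the two operators. Putting $g = \Sigma f = \tfrac{1}{\sqrt2}\sigma^\top\nabla f$ and using $\sigma\sigma^\top = 2D$ gives $\sigma g = \sqrt2\,D\nabla f$, whence $\Sigma^*\Sigma f = -\nabla\log\rho\cdot D\nabla f - \nabla\cdot(D\nabla f)$. Expanding $\nabla\cdot(D\nabla f) = (\nabla\cdot D)\cdot\nabla f + D\nabla\cdot\nabla f$ with the matrix-divergence convention of the paper, and using symmetry of $D$ to rewrite $\nabla\log\rho\cdot D\nabla f = (D\nabla\log\rho)\cdot\nabla f$, I collect $\Sigma^*\Sigma f = -(D\nabla\log\rho + \nabla\cdot D)\cdot\nabla f - D\nabla\cdot\nabla f = -b_{\mathrm{rev}}\cdot\nabla f - D\nabla\cdot\nabla f = -\S f$. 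Thus $\L = \A + \S = \A - \Sigma^*\Sigma$, and the positive semi-definiteness of $-\S$ is automatic, since $\langle -\S f, f\rangle_\mu = \langle \Sigma f, \Sigma f\rangle_\mu \ge 0$.

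The main obstacle is analytic rather than algebraic: justifying the integration by parts and the appearance of $\nabla\log\rho = \nabla\rho/\rho$. I would perform the computation on $C_c^\infty(\R^d)$ localised to $\{\rho>0\}$, where the regularity in Assumption \ref{ass: steady-state time reversal} makes the integrands locally integrable and the boundary terms vanish, and treat $\{\rho=0\}$ via the convention $b_{\mathrm{rev}} = 0$ there (as in Proposition \ref{prop: helmholtz decomposition}); the operator identities then extend from this core to the domains in \eqref{eq: gen symmetric and antisymmetric decomp}. Some care is also needed to ensure $\Sigma^*$ is the genuine Hilbert-space adjoint, which is why the pairing is taken at $p=2$ where $L^2_\mu$ is a Hilbert space.
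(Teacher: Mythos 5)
Your proposal is correct and follows essentially the same route as the paper's proof: identify $\sqrt 2\,\Sigma^* g = -\nabla\log\rho\cdot\sigma g - \nabla\cdot(\sigma g)$ by integration by parts against $\mu$ on the core $C_c^\infty(\R^d)$, then compute $\Sigma^*\Sigma f$ using $\sigma\sigma^\top = 2D$ and match it with $-\S f = -(b_{\mathrm{rev}}\cdot\nabla f + D\nabla\cdot\nabla f)$. The only cosmetic difference is that you re-derive the forms of $\A$ and $\S$ from the time-reversal theorem, whereas the paper simply cites its earlier section on the Helmholtz decomposition of the generator.
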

A proof is provided in Section \ref{app: multiple perspectives on Helmholtz}.

\subsubsection{Helmholtz decomposition of the backward Kolmogorov equation}

We say that a real-valued function over the state-space of the process $f: \R^d \to \R$ is an \textit{observable}. Intuitively, this is a macroscopic quantity that can be measured or observed in a physical process (e.g., energy or pressure) when the (microscopic) process is not easily accessible. The evolution of an observable $f$ given that the process is prepared at a deterministic initial condition is given by $f_t(x) = \E [f(x_t)|x_0=x]$.

The \textit{backward Kolmogorov equation} is a fundamental equation describing a Markov process, as it encodes the motion of observables
\begin{equation*}
\label{eq: def backward Kolmogorov}
    \partial_t f_t  = \L f_t, \quad f_0 = f \in \dom \L.
\end{equation*}
In other words, $f_t= \E [f(x_t)|x_0=x]$ solves the equation. This highlights the central importance of the generator as providing a concise summary of the process.

The Helmholtz decomposition entails a decomposition of the backward Kolmogorov equation
\begin{equation}
\label{eq: helmholtz BKE}
    \partial_t u_t  = \A f_t + \S f_t = (\operatorname A - \operatorname \Sigma^*\operatorname \Sigma)f_t, \quad f_0 = f \in \dom \L.
\end{equation}
This decomposition is appealing, as it allows us to further characterise the contributions of the time-reversible and irreversible parts of the dynamic. Along the time-irreversible part of the backward Kolmogorov equation $\partial_t f_t=\operatorname A f_t$, the $L^2_\mu(\R^d)$-norm $\|\cdot\|_\mu$ is conserved. Indeed, since $\operatorname A$ is antisymmetric, $\langle \operatorname A f, f\rangle_\mu=0$ for every $f \in \dom \operatorname A$, and hence
$$
\partial_t\left\|f_t\right\|_\mu^2=2\left\langle \operatorname A f_t, f_t\right\rangle_\mu=0 .
$$
On the other hand, along the time-reversible part of the backward Kolmogorov equation generated by $-\operatorname \Sigma^*\operatorname \Sigma$, the $L^2_\mu(\R^d)$-norm is dissipated:
$$
\partial_t\left\|f_t\right\|_\mu^2=-2\left\langle \operatorname \Sigma^*\operatorname \Sigma f_t,  f_t\right\rangle_\mu=-2\left\|\operatorname \Sigma f_t\right\|_\mu^2 \leq 0.
$$
This offers another perspective on the fact that the time-irreversible part of the dynamic is conservative, while the time-reversible part is dissipative---of the $L_\mu^2(\R^d)$-norm.

Beyond this, hypocoercivity theory allows us to analyse the backward Kolmogorov equation, once one has written its Helmholtz decomposition. Hypocoercivity is a functional analytic theory developed to analyse abstract evolution equations of the form \eqref{eq: helmholtz BKE}, originally devised to systematically study the speed of convergence to stationary state of kinetic diffusion processes like the underdamped Langevin dynamics and the Boltzmann equation. As an important result, the theory provides sufficient conditions on the operators $\operatorname A,\operatorname \Sigma$ to ensure an exponentially fast convergence of the backward Kolmogorov equation to a fixed point \cite[Theorems 18 \& 24]{villaniHypocoercivity2009}. Dually, these convergence rates quantify the speed of convergence of the process to its stationary density from a given initial condition.

\subsubsection{GENERIC decomposition of the Fokker-Planck equation}


This perspective can also be examined directly from the Fokker-Planck equation. The Fokker-Planck equation is another fundamental equation describing a diffusion process: it encodes the evolution of the density of the process over time (when it exists). 
The Fokker-Planck equation is the $L^2(\R^d)$-dual to the backward Kolmogorov equation. It reads
\begin{align*}
    \partial_t \rho_t =\L'\rho_t= \nabla \cdot (-b \rho_t + \nabla \cdot(D \rho_t)),
\end{align*}
where $\L'$ is the adjoint of the generator with respect to the standard duality pairing $\langle\cdot, \cdot\rangle$; in other words $\langle \L'f , g \rangle =\langle f , \L g \rangle$ where $\langle f, g \rangle = \intr fg(x) \: \d x$.

The Helmholtz decomposition implies a decomposition of the Fokker-Planck equation into two terms: assuming for now that $\rho_t,\rho >0$ (e.g., if the diffusion is elliptic)
\begin{equation}
\label{eq: Fokker-Planck equation GENERIC form}
\begin{split}
    \partial_t \rho_t &= \nabla \cdot (-b\irr \rho_t) + \nabla \cdot(-b\rev \rho_t +\nabla \cdot(D \rho_t))\\
    &= \nabla \cdot (-b\irr \rho_t) + \nabla \cdot(-\rho_t \rho^{-1}\nabla \cdot(D \rho) +\nabla \cdot(D \rho_t))\\
    &= \nabla \cdot (-b\irr \rho_t) + \nabla \cdot\left(\rho_t D \nabla \log \frac{\rho_t}{\rho} \right).
\end{split}
\end{equation}
We will see that this decomposition casts the Fokker-Planck equation in pre-GENERIC form.

GENERIC (General Equations for Non-Equilibrium Reversible-Irreversible Coupling) is an important theory for analysing dynamical systems arising in non-equilibrium statistical physics like the Fokker-Planck equation. The framework rose to prominence through the seminal work of Ottinger \cite{ottingerEquilibriumThermodynamics2005} and was later developed by the applied physics and engineering communities. Only recently, the framework developed into a rigorous mathematical theory. We refer to \cite{duongNonreversibleProcessesGENERIC2021} for mathematical details. The following Proposition shows how we can rewrite the Fokker-Planck equation in pre-GENERIC form:

\begin{align}
\label{eq: pre-GENERIC equation}
    &\partial_t \rho_t = \underbrace{\operatorname W(\rho_t)}_{\text{time-irreversible}} - \underbrace{\operatorname M_{\rho_t}\left(\d \operatorname{H}[\rho_t \mid \rho ]\right)}_{\text{time-reversible}},
\end{align}

\begin{proposition}[GENERIC decomposition of the Fokker-Planck equation]
\label{prop: GENERIC decomposition of the Fokker-Planck equation}
The Fokker-Planck equation \eqref{eq: Fokker-Planck equation GENERIC form} is in pre-GENERIC form \eqref{eq: pre-GENERIC equation}, with
\begin{align*}
    &W(\rho_t) = \nabla \cdot (-b\irr \rho_t), \quad -M_{\rho_t}\left( \d \operatorname{H}[\rho_t \mid \rho ]\right)=\nabla \cdot\left(\rho_t D \nabla \log \frac{\rho_t}{\rho} \right),\\
     &M_{\rho_t}(\xi) = \Sigma'(\rho_t \Sigma \xi) = - \nabla \cdot (\rho_t D \nabla \xi),\quad \d \operatorname{H}[\rho_t \mid \rho ]= \log \frac{\rho_t}{\rho}+1,
\end{align*}
where $\H[\cdot \mid \cdot]$ is the relative entropy, $\cdot '$ denotes the adjoint under the standard duality pairing $\langle\cdot, \cdot\rangle$, $\d$ is the Fréchet derivative in $L^2(\R^d)$, and $W, M_{\rho_t}$ satisfy the following relations:
\begin{itemize}
    \item \emph{Orthogonality}: $\langle \operatorname W(\rho_t), \d \operatorname{H}[\rho_t \mid \rho ] \rangle =0$,
    \item \emph{Semi-positive definiteness}: $\langle \operatorname M_{\rho_t}(h), g\rangle=\langle h, \operatorname M_{\rho_t}(g)\rangle$, and $\langle \operatorname M_{\rho_t}(g), g\rangle \geq 0$.
\end{itemize}
\end{proposition}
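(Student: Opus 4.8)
The plan is to treat this as a verification proposition: the decomposition of the right-hand side is already furnished by \eqref{eq: Fokker-Planck equation GENERIC form}, which writes $\partial_t\rho_t$ as $\nabla\cdot(-b_{\mathrm{irr}}\rho_t)$ plus $\nabla\cdot(\rho_t D\nabla\log(\rho_t/\rho))$. So the only content is (i) to identify the second summand with $-M_{\rho_t}(\d\operatorname{H}[\rho_t\mid\rho])$ for the claimed $M_{\rho_t}$ and Fréchet derivative $\d\operatorname{H}[\rho_t\mid\rho]$, (ii) to exhibit the factorisation $M_{\rho_t}(\xi)=\Sigma'(\rho_t\Sigma\xi)$, and (iii) to check the orthogonality and semi-positive definiteness relations. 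I would proceed through these in order.

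First I would compute the Fréchet derivative of the relative entropy. Writing $\operatorname{H}[\rho_t\mid\rho]=\intr\rho_t\log(\rho_t/\rho)\,\d x$ and differentiating along a perturbation $\phi$ yields $\intr\phi(\log(\rho_t/\rho)+1)\,\d x$, so $\d\operatorname{H}[\rho_t\mid\rho]=\log(\rho_t/\rho)+1$. Since $\nabla$ annihilates the additive constant, $-\nabla\cdot(\rho_t D\nabla\,\d\operatorname{H}[\rho_t\mid\rho])=\nabla\cdot(\rho_t D\nabla\log(\rho_t/\rho))$, which confirms that $M_{\rho_t}(\xi)=-\nabla\cdot(\rho_t D\nabla\xi)$ reproduces the time-reversible term. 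For the factorisation I would take $\Sigma$ to be the gradient-type operator of Proposition \ref{proposition: hypocoercive decomposition of the generator}, namely $\sqrt2\,\Sigma\xi=\sigma^\top\nabla\xi$, and compute its adjoint $\Sigma'$ under the \emph{standard} pairing (not the $\mu$-weighted one) by a single integration by parts, giving $\sqrt2\,\Sigma'v=-\nabla\cdot(\sigma v)$; composing and using $D=\sigma\sigma^\top/2$ then returns $\Sigma'(\rho_t\Sigma\xi)=-\nabla\cdot(\rho_t D\nabla\xi)=M_{\rho_t}(\xi)$.

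Both structural relations then follow by integration by parts. For orthogonality, I would integrate $\langle W(\rho_t),\d\operatorname{H}[\rho_t\mid\rho]\rangle$ by parts to obtain $\intr b_{\mathrm{irr}}\rho_t\cdot\nabla\log(\rho_t/\rho)\,\d x$; substituting $\rho_t=\rho\,(\rho_t/\rho)$ and setting $g=\rho_t/\rho$ collapses this to $\intr b_{\mathrm{irr}}\rho\cdot\nabla g\,\d x$, and a final integration by parts gives $-\intr g\,\nabla\cdot(b_{\mathrm{irr}}\rho)\,\d x=0$ precisely because $\nabla\cdot(b_{\mathrm{irr}}\rho)=0$ from the Helmholtz decomposition (Proposition \ref{prop: helmholtz decomposition}); this divergence-free identity is the one nontrivial input. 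For the semi-positive definiteness, integration by parts gives $\langle M_{\rho_t}(h),g\rangle=\intr\rho_t\,\nabla g^\top D\nabla h\,\d x$, which is symmetric in $(h,g)$ since $D=D^\top$, and nonnegative at $h=g$ since $\rho_t\geq0$ and $D=\sigma\sigma^\top/2$ is positive semi-definite.

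The main obstacle is not any single algebraic step—each is a short computation—but justifying the integrations by parts rigorously: controlling the boundary terms at infinity, ensuring the integrals involving $b_{\mathrm{irr}}\rho_t$, $\nabla\log(\rho_t/\rho)$ and $\rho_t\,\nabla g^\top D\nabla g$ are finite, and using $\rho_t,\rho>0$ so that the logarithm and quotient $\rho_t/\rho$ are well defined. The statement already flags the positivity assumption, so I would make explicit the decay and local-integrability hypotheses under which the boundary contributions vanish, and then carry out the computations above.
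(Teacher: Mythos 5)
Your proposal is correct and follows essentially the same route as the paper's proof: the Gâteaux/Fréchet computation giving $\d \operatorname{H}[\rho_t \mid \rho] = \log(\rho_t/\rho)+1$, the identification $\sqrt{2}\,\Sigma = \sigma^\top \nabla$ with adjoint $\sqrt{2}\,\Sigma' g = -\nabla\cdot(\sigma g)$ under the standard pairing, orthogonality via integration by parts and $\nabla\cdot(b_{\mathrm{irr}}\rho)=0$ from Proposition \ref{prop: helmholtz decomposition}, and symmetry/positivity from $\langle M_{\rho_t}h, g\rangle = \intr \rho_t\, \nabla h^\top D \nabla g \,\d x$. The paper handles your integrability concern simply by verifying the identities on $C_c^\infty(\R^d)$ test functions, which kills the boundary terms.
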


A proof is provided in Section \ref{app: multiple perspectives on Helmholtz}.

Writing the Fokker-Planck equation in pre-GENERIC form \eqref{eq: pre-GENERIC equation} explicits the contributions of the time-reversible and time-irreversible parts at the level of density dynamics. Indeed, the relative entropy functional $\operatorname{H}[\rho_t \mid \rho ]$ is conserved along the time-irreversible part of the Fokker-Planck equation $\partial_t \rho_t = \operatorname W(\rho_t)$
\begin{align*}
    \frac{d \operatorname{H}[\rho_t \mid \rho ]}{d t}=\left\langle \partial_t \rho_t, \d \operatorname{H}[\rho_t \mid \rho ]\right\rangle=\langle \operatorname W(\rho_t), \d \operatorname{H}[\rho_t \mid \rho ]\rangle =0.
\end{align*}
Contrariwise, the relative entropy is dissipated along the time-reversible part of the equation
\begin{align*}
    \frac{d \operatorname{H}[\rho_t \mid \rho ]}{d t}=\left\langle \partial_t \rho_t, \d \operatorname{H}[\rho_t \mid \rho ]\right\rangle=-\left\langle\operatorname M_{\rho_t}\left( \d \operatorname{H}[\rho_t \mid \rho ]\right), \d \operatorname{H}[\rho_t \mid \rho ] \right\rangle \leq 0.
\end{align*}
Aggregating these results, we obtain the well-known fact that the relative entropy with respect to the stationary density is a Lyapunov function of the Fokker-Planck equation; a result sometimes known as de Bruijn's identity or Boltzmann's H-theorem \cite[Proposition 1.1]{chafaiEntropiesConvexityFunctional2004}.

\section{The $e_p$ of stationary diffusions}
\label{sec: the ep of stationary diffusions}

We are now ready to investigate the entropy production of stationary diffusions. First, we give sufficient conditions guaranteeing the mutual absolute continuity of the forward and time-reversed path space measures and compute the entropy production rate. Second, we demonstrate that when these conditions fail the entropy production is infinite.

\subsection{Regular case}
\label{sec: epr regularity}

\begin{figure}[t!]
    \centering
    \includegraphics[width= 0.45\textwidth]{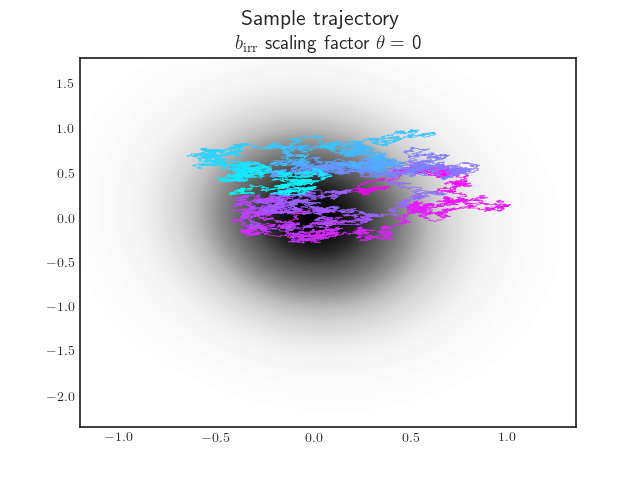}
    \includegraphics[width= 0.45\textwidth]{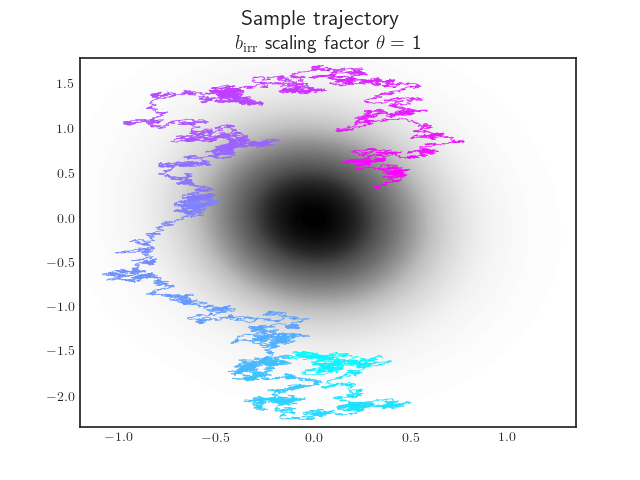}
    \includegraphics[width= 0.45\textwidth]{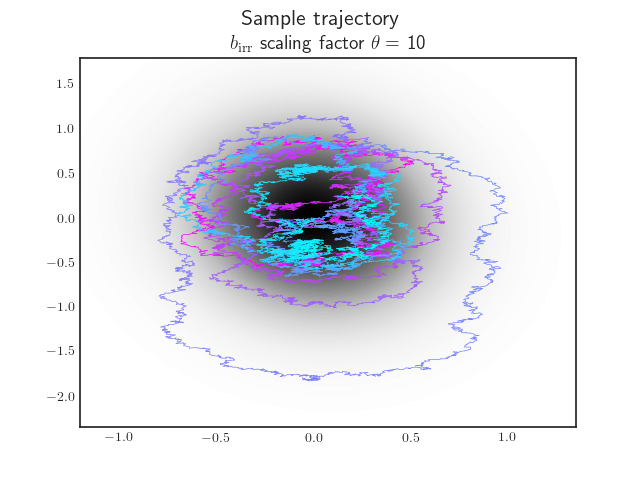}
    \includegraphics[width= 0.45\textwidth]{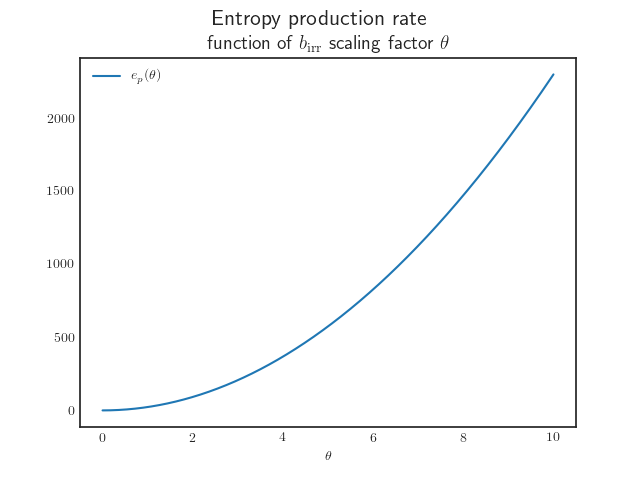}
    \caption[Entropy production as a function of time-irreversible drift]{\footnotesize\textbf{Entropy production as a function of time-irreversible drift.} This figure illustrates the behaviour of sample paths and the entropy production rate as one scales the irreversible drift $b_{\mathrm{irr}}$ by a parameter $\theta$. The underlying process is a two-dimensional Ornstein-Uhlenbeck process, for which exact sample paths and entropy production rate are available (Section \ref{sec: OU process}). The heat map represents the density of the associated Gaussian steady-state. One sees that a non-zero irreversible drift induces circular, wandering behaviour around the contours of the steady-state, characteristic of a non-equilibrium steady-state (top right and bottom left). This is accentuated by increasing the strength of the irreversible drift. The entropy production rate measures the amount of irreversibility of the stationary process. It grows quadratically as a function of the irreversible scaling factor $\theta$ (bottom right). When there is no irreversibility (top left), we witness an equilibrium steady-state. This is characterised by a vanishing entropy production (bottom right).}
    \label{fig: EPR as a function of gamma}
\end{figure}

\begin{theorem}
\label{thm: epr regular case simple}
Let an Itô SDE \eqref{eq: Ito SDE} with coefficients satisfying Assumption \ref{ass: coefficients time reversal}. Assume that the solution $(x_t)_{t\in [0,T]}$ is stationary with respect to a density $\rho$ satisfying Assumption \ref{ass: steady-state time reversal}. Denote by $b_{\mathrm{irr}}$ the time-irreversible part of the drift (Proposition \ref{prop: helmholtz decomposition}), and by $\cdot^-$ the Moore-Penrose matrix pseudo-inverse. Suppose that:
\begin{enumerate}
    \item \label{item: epr regular drift image of volatility} For $\rho$-almost every $x \in \R^d$,
        $b_{\mathrm{irr}}(x) \in \im\sigma(x)$, and
    \item \label{item: continuity} The product $\sigma^- b_{\mathrm{irr}}: \R^d \to \R^m$ is Borel measurable (e.g., if $\sigma^- b_{\mathrm{irr}}$ is continuous), and
    \item 
         \label{item: epr bounded}
        $\int_{\R^d} b_{\mathrm{irr}}^\top D^- b_{\mathrm{irr}} \rho(x)\d x <+\infty$.
\end{enumerate}
Denote by $\p_{[0, T]}, \bar \p_{[0, T]}$ the path space measures of the forward and time-reversed diffusions, respectively, on $C([0,T], \R^d)$ (Definition \ref{def: path space measure}). Then,
\begin{enumerate}
    \item The path-space measures are equivalent $\p_{[0, T]} \sim \bar \p_{[0, T]}$, and 
    \item $e_p = \int_{\R^d} b_{\mathrm{irr}}^\top D^- b_{\mathrm{irr}} \rho(x)\d x$.
\end{enumerate} 
\end{theorem}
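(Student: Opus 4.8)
The plan is to recognise that, by the time-reversal theorem (Theorem~\ref{thm: time reversal of diffusions}), $\bar\p_{[0,T]}$ is nothing but the law of a diffusion sharing the \emph{same} volatility $\sigma$ and the \emph{same} initial law $\rho$ as the forward process, but with drift $\bar b = b_{\mathrm{rev}} - b_{\mathrm{irr}}$. Hence $\p_{[0,T]}$ and $\bar\p_{[0,T]}$ are two diffusion measures on $C([0,T],\R^d)$ that differ only through their drifts, with drift gap
\begin{align*}
    b - \bar b = 2 b_{\mathrm{irr}}.
\end{align*}
Comparing such measures is exactly the situation handled by Girsanov's theorem, so the whole problem reduces to finding the Girsanov shift that absorbs $2b_{\mathrm{irr}}$ and evaluating the resulting log-density.

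First I would define the shift $c := 2\,\sigma^- b_{\mathrm{irr}} : \R^d \to \R^m$. Hypothesis~\ref{item: epr regular drift image of volatility} guarantees $b_{\mathrm{irr}}(x) \in \im\sigma(x)$ for $\rho$-a.e.\ $x$, so that $\sigma(x) c(x) = 2 b_{\mathrm{irr}}(x)$, i.e.\ the gap genuinely lies in the range of the noise; hypothesis~\ref{item: continuity} makes $c$ Borel measurable; and, using stationarity together with the linear-algebra identity below, hypothesis~\ref{item: epr bounded} yields $\E_{\p}\!\left[\int_0^T \|c(x_s)\|^2 \,\d s\right] = T\intr \|c\|^2 \rho \,\d x < +\infty$. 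With $c$ in hand, Girsanov's theorem identifies the Radon--Nikodym derivative
\begin{align*}
    \frac{\d \p}{\d \bar\p}\Big|_{[0,T]} = \exp\left( \int_0^T c(x_s)\cdot \d \bar w_s - \tfrac12 \int_0^T \|c(x_s)\|^2 \,\d s \right),
\end{align*}
where $\bar w$ is the Brownian motion driving the time-reversed diffusion. Since the density is a.s.\ strictly positive and (as a true martingale) integrates to one, this simultaneously gives $\p_{[0,T]} \ll \bar\p_{[0,T]}$ and $\bar\p_{[0,T]} \ll \p_{[0,T]}$, i.e.\ the claimed equivalence.

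To evaluate $e_p = \tfrac1T \H[\p_{[0,T]}\mid \bar\p_{[0,T]}]$ I would take logarithms and integrate against $\p$. Rewriting the reversed-noise differential under the forward measure via $\d\bar w_s = \d w_s + c(x_s)\,\d s$ and using that $\int_0^T c(x_s)\cdot \d w_s$ is a mean-zero $\p$-martingale (justified by the $L^2$ bound above), the stochastic integral contributes nothing and one is left with
\begin{align*}
    \H[\p_{[0,T]}\mid \bar\p_{[0,T]}] = \E_{\p}\left[\log \frac{\d \p}{\d \bar\p}\right] = \tfrac12\, \E_{\p}\left[\int_0^T \|c(x_s)\|^2\,\d s\right].
\end{align*}
A short singular-value-decomposition computation gives the pseudo-inverse identity $(\sigma^-)^\top \sigma^- = (\sigma\sigma^\top)^- = \tfrac12 D^-$, whence $\|c\|^2 = 4\, b_{\mathrm{irr}}^\top (\sigma^-)^\top\sigma^- b_{\mathrm{irr}} = 2\, b_{\mathrm{irr}}^\top D^- b_{\mathrm{irr}}$. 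Substituting and invoking stationarity (so that $\E_{\p}[g(x_s)] = \intr g\rho\,\d x$ for every $s$) collapses the time integral to a factor $T$, giving $\H[\p_{[0,T]}\mid\bar\p_{[0,T]}] = T\intr b_{\mathrm{irr}}^\top D^- b_{\mathrm{irr}}\rho\,\d x$ and hence the stated $e_p$ after dividing by $T$. It is worth noting that the two measures share the initial law $\rho$, so no $\H[\rho\mid\rho]=0$ boundary term appears.

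The main obstacle I anticipate is the rigorous justification of Girsanov's theorem under the present weak hypotheses: the coefficients are only locally Lipschitz, and hypothesis~\ref{item: epr bounded} delivers an $L^2$ bound on $c$ but \emph{not} a Novikov-type exponential moment, so the exponential local martingale is not obviously a true martingale. I would handle this by localisation --- stopping at exit times of an increasing sequence of bounded domains on which the coefficients are Lipschitz and $c$ is bounded, applying the classical change of measure on each stopped process, and passing to the limit to recover both the equivalence of the full path measures and the density formula. Controlling this limit (equivalently, ruling out loss of the martingale property) is where stationarity and the finiteness in hypothesis~\ref{item: epr bounded} do the real work; everything downstream is the deterministic computation above.
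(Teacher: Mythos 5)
Your proposal is correct and follows essentially the same route as the paper's proof: identify $\bar \p_{[0,T]}$ via Theorem \ref{thm: time reversal of diffusions}, absorb the drift gap $2b_{\mathrm{irr}} = \sigma\,(2\sigma^- b_{\mathrm{irr}})$ with a Girsanov shift (legitimate by assumptions \ref{item: epr regular drift image of volatility}--\ref{item: continuity}), and handle the absence of a Novikov condition exactly as you anticipate — by localisation: the paper truncates the coefficients radially, applies Girsanov with the resulting bounded shift, pieces the measures together along the exit times $\tau_n$ (finite by non-explosiveness), and uses Itô's isometry with assumption \ref{item: epr bounded} plus stationarity to show the stochastic-integral term is a mean-zero martingale, which yields the entropy formula. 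The remaining differences are cosmetic: the paper computes $d\bar\p_{[0,T]}/d\p_{[0,T]}$ in terms of the forward Brownian motion rather than $d\p_{[0,T]}/d\bar\p_{[0,T]}$ in terms of $\bar w$, and your pseudo-inverse identity $(\sigma^-)^\top\sigma^- = (2D)^-$ is the same algebra the paper uses to pass from $|\sigma^- b_{\mathrm{irr}}|^2$ to $\tfrac 1 2 b_{\mathrm{irr}}^\top D^- b_{\mathrm{irr}}$.
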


Under the assumptions of Theorem \ref{thm: epr regular case simple}, the $e_p$ is a quadratic form of the time-irreversible drift, see Figure~\ref{fig: EPR as a function of gamma}. 

A proof of Theorem \ref{thm: epr regular case simple} is provided in Section \ref{app: epr regularity}. The idea of the proof is simple: in the elliptic case, the approach follows \cite[Chapter 4]{jiangMathematicalTheoryNonequilibrium2004} with some generalisations. In the non-elliptic case, the condition $b_{\mathrm{irr}}(x) \in \im\sigma(x)$ intuitively ensures that the solution to the SDE, when initialised at any point, evolves on a sub-manifold of $\R^d$ and is elliptic on this manifold (e.g., Figure \ref{fig: OU process b in Im sigma}). The pseudo-inverse of the diffusion tensor is essentially the inverse on this sub-manifold. Thus, a proof analogous to the elliptic case, but on the sub-manifold (essentially replacing all matrix inverses by pseudo-inverses and making sure everything still holds), shows that the path space measures of the forward and backward processes initialised at a given point are equivalent---and Girsanov's theorem gives us their Radon-Nykodym derivative. Finally, Proposition \ref{prop: aggregating local ep} gives us the usual formula for the entropy production rate but with the matrix inverse replaced by the pseudo-inverse. Please see Section \ref{sec: geom} for a geometric discussion of this proof.

Suppose either of assumptions \ref{item: continuity}, \ref{item: epr bounded} of Theorem \ref{thm: epr regular case simple} do not hold. Then we have the following more general (and technical) result:

\begin{theorem}
\label{thm: epr regular case general}
Let $(\Omega, \mathscr{F}, P)$ be a probability space and $\left(w_t\right)_{t \geqslant 0}$ a standard Wiener process on $\mathbb{R}^{m}$, with respect to the filtration $(\mathscr{F}_{t})_{t\geq 0}$ \cite[Definition 2.1.12]{prevotConciseCourseStochastic2008}. 
Consider the Itô SDE \eqref{eq: Ito SDE} with coefficients satisfying Assumption \ref{ass: coefficients time reversal}. Consider its unique strong solution $(x_t)_{t\in [0,T]}$ with respect to the given Brownian motion on the filtered probability space $(\Omega, \mathscr{F}, (\mathscr{F}_{t})_{t\geq 0}, P)$. Assume that the solution is stationary with respect to a density $\rho$ satisfying Assumption \ref{ass: steady-state time reversal}. Denote by $b_{\mathrm{irr}}$ the time-irreversible part of the drift (Proposition \ref{prop: helmholtz decomposition}), and by $\cdot^-$ the Moore-Penrose matrix pseudo-inverse. Suppose that:
\begin{enumerate}
    \item \label{item: epr regular general drift image of volatility} For $\rho$-almost every $x \in \R^d$, $b_{\mathrm{irr}}(x) \in \im\sigma(x)$, and
    \item \label{item: epr regular general adapted} $\sigma^- b_{\mathrm{irr}}(x_t)$ is an $\mathscr F_t$-adapted process (e.g., $\sigma^- b_{\mathrm{irr}}: \R^d \to \R^m$ is Borel measurable), and
    \item \label{item: epr regular general expectation} The following holds 
    \begin{equation}
    \label{eq: exponential condition}
        \E_P\left[Z_T \right]=1,  Z_T :=\exp \left( -2 \int_0^T \langle \sigma^- b_{\mathrm{irr}}(x_t), dw_t \rangle + |\sigma^- b_{\mathrm{irr}}(x_t) |^2 dt\right).
    \end{equation}
\end{enumerate}
Denote by $\p_{[0, T]}, \bar \p_{[0, T]}$ the path space measures on $C([0,T], \R^d)$ of the forward and time-reversed diffusions, respectively (Definition \ref{def: path space measure}). Then,
\begin{enumerate}
    \item The path-space measures are equivalent $\p_{[0, T]} \sim \bar \p_{[0, T]}$, and 
    \item $e_p = \int_{\R^d} b_{\mathrm{irr}}^\top D^- b_{\mathrm{irr}} \rho(x)\d x$.
\end{enumerate}
\end{theorem}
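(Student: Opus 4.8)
The plan is to read this as a Girsanov change of measure between the forward and time-reversed path measures, which by construction share the same volatility. This statement is the weakening of Theorem~\ref{thm: epr regular case simple} in which the integrability bound (its condition~\ref{item: epr bounded}) is dropped in favour of directly assuming that the candidate density $Z_T$ is a true martingale (condition~\ref{item: epr regular general expectation}, $\E_P[Z_T]=1$). I would therefore follow the same line of argument as in the simple case, but take $\E_P[Z_T]=1$ as a hypothesis rather than deriving it from a Novikov-type estimate.

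First, by Theorem~\ref{thm: time reversal of diffusions} and the Helmholtz decomposition (Proposition~\ref{prop: helmholtz decomposition}), $\bar\p_{[0,T]}$ is the law of a diffusion with the same tensor $\bar D = D$ and drift $\bar b = b_{\mathrm{rev}} - b_{\mathrm{irr}}$, so that $\p$ and $\bar\p$ differ only through their drifts, with $b - \bar b = 2 b_{\mathrm{irr}}$. Condition~\ref{item: epr regular general drift image of volatility}, namely $b_{\mathrm{irr}}(x)\in\im\sigma(x)$, lets me write $b - \bar b = \sigma u$ with $u := 2\sigma^- b_{\mathrm{irr}}$, since $\sigma\sigma^-$ is the orthogonal projection onto $\im\sigma$ and hence $\sigma\sigma^- b_{\mathrm{irr}} = b_{\mathrm{irr}}$. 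This is precisely what is needed for Girsanov to apply in the degenerate (non-elliptic) case: the drift difference lies in the range of the noise coefficient, and $\sigma^-, D^-$ act as the relevant inverses on the submanifold to which the solution is confined.

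I would then recognise $Z_T$ as the Girsanov exponential associated with $v := -2\sigma^- b_{\mathrm{irr}}$ and verify that $Z_T = \frac{\d\bar\p}{\d\p}$ on $\mathscr F_T$: condition~\ref{item: epr regular general adapted} makes the stochastic integral well defined, and condition~\ref{item: epr regular general expectation} promotes the exponential local martingale $Z_t$ to a true martingale, i.e.\ $\d\bar\p := Z_T\,\d\p$ defines a probability measure with $\bar\p\ll\p$. Since $Z_T>0$ almost surely, the density inverts, giving $\p\sim\bar\p$ (conclusion~1); both path measures carry the same initial marginal $\mu$, as the time-reversal of a stationary process is stationary at $\mu$, so no initial-density factor survives. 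For the value, Definition~\ref{def: epr} gives $e_p = \frac{1}{T}\H[\p_{[0,T]}\mid\bar\p_{[0,T]}] = \frac{1}{T}\E_{\p}[-\log Z_T] = \frac{1}{T}\E_{\p}\big[2\!\int_0^T\langle\sigma^- b_{\mathrm{irr}}(x_t),dw_t\rangle + 2\!\int_0^T|\sigma^- b_{\mathrm{irr}}(x_t)|^2\,dt\big]$. The stochastic-integral term vanishes in expectation and, by stationarity ($x_t\sim\mu$ for every $t$), the quadratic term collapses to $2\int_{\R^d}|\sigma^- b_{\mathrm{irr}}|^2\rho(x)\,\d x$; the pseudo-inverse identity $(\sigma^-)^\top\sigma^- = (\sigma\sigma^\top)^-$, read off from the SVD of $\sigma$, then yields $2|\sigma^- b_{\mathrm{irr}}|^2 = b_{\mathrm{irr}}^\top D^- b_{\mathrm{irr}}$ and hence $e_p = \int_{\R^d} b_{\mathrm{irr}}^\top D^- b_{\mathrm{irr}}\,\rho(x)\,\d x$ (conclusion~2).

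I expect the main obstacle to be justifying that the stochastic-integral term has zero expectation without the $L^2$ control that condition~\ref{item: epr bounded} of the simple theorem supplied. Here I would lean on the true-martingale property guaranteed by $\E_P[Z_T]=1$ directly; equivalently, I would pass to $\bar\p$, under which $w_t + 2\int_0^t\sigma^- b_{\mathrm{irr}}\,ds$ is a Brownian motion by Girsanov, and transfer the expectation back. A secondary, more bookkeeping-level difficulty is handling the Moore--Penrose pseudo-inverses carefully in the degenerate case, ensuring that $\sigma^- b_{\mathrm{irr}}$ is the correct integrand and that the matrix identities on $\im\sigma$ hold $\mu$-almost everywhere.
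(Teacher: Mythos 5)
Your proposal is correct and follows essentially the same route as the paper's proof: rewrite the drift using $b_{\mathrm{irr}}=\sigma\sigma^- b_{\mathrm{irr}}$, apply Girsanov with the exponential $Z_T$ (whose martingale property is exactly assumption \ref{item: epr regular general expectation}) to identify the law of $x_\bullet$ under the tilted measure as $\bar\p_{[0,T]}$ via Theorem \ref{thm: time reversal of diffusions}, conclude $\p_{[0,T]}\sim\bar\p_{[0,T]}$ from positivity of $Z_T$, and compute $e_p=\frac{1}{T}\E_P[-\log Z_T]$ using the vanishing of the stochastic integral, Tonelli, stationarity, and the pseudo-inverse identity $2(\sigma^-)^\top\sigma^-=D^-$. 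The only cosmetic difference is that the paper keeps the Radon--Nikodym derivative on the sample space $\Omega$ and transfers it to path space through a measurable factor $Z_T^C$ with $Z_T^C\circ x_\bullet = Z_T$, whereas you write it directly on path space; the substance is identical.
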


A proof is provided in Section \ref{app: epr regularity}. The proof is similar to that of Theorem \ref{thm: epr regular case simple}, but much shorter, since \eqref{eq: exponential condition} allows one to apply Girsanov's theorem directly; in contrast, a large part of the proof of Theorem \ref{thm: epr regular case simple} is dedicated to showing that indeed, a version of Girsanov's theorem can be applied.




In relation to assumption \ref{item: epr regular general adapted} of Theorem \ref{thm: epr regular case general}, note that if a matrix field $\sigma : \R^d \to \R^{d\times m}$ is Borel measurable, then its pseudo-inverse $\sigma^- : \R^d \to \R^{m \times d}$ is also Borel measurable. We now give sufficient conditions for the exponential condition \eqref{eq: exponential condition}. 

\begin{proposition}
\label{prop: suff conds exp cond}
Consider a stochastic process $(x_t)_{t\in [0,T]}$ on the probability space $(\Omega, \mathscr{F}, (\mathscr{F}_{t})_{t\geq 0}, P)$, which is stationary at the density $\rho$. Assume that $\sigma^-b_{\mathrm{irr}}(x_t)$ is $\mathscr F_t$-adapted. Then, either of the following conditions implies \eqref{eq: exponential condition}:
\begin{enumerate}
    \item \label{item: suff cond martingale} $Z_t=\exp \left(-2{\int_{0}^{t}\langle \sigma^- b_{\mathrm{irr}}(x_s), d w_s\rangle+|\sigma^- b_{\mathrm{irr}}(x_s)|^{2} d s}\right), t\in [0,T]$ is a martingale on the probability space $(\Omega, \mathscr{F},\{\mathscr{F}_{t}\}_{t\geq 0}, P)$.
    \item \label{item: suff cond novikov} $\mathbb{E}_{P}\left(e^{2\int_{0}^{T}|\sigma^- b_{\mathrm{irr}}(x_t)|^{2} d t}\right)<+\infty$ (Novikov's condition).
    \item \label{item: suff cond delta exp condition} There exists $\delta>0$ such that $ \mathbb{E}_{\rho}\left(e^{\delta|\sigma^- b_{\mathrm{irr}}(x)|^{2}}\right)<+\infty$.
    \item \label{item: suff cond Kazamaki} $\sup_{t \in [0,T]} \mathbb{E}_P\left[\exp \left(- \int_0^t \langle\sigma^- b_{\mathrm{irr}}(x_s), d w_s\rangle\right)\right]<+\infty$ (Kazamaki's criterion).
    \item \label{item: suff cond Dellacherie meyer} There exists $K<1$ s.t. for all $t \in [0,T]$
    \begin{equation*}
        \E_P\left[\left. 2\int_t^T\left|\sigma^- b_{irr}(x_t)\right|^2 ds \:\right| \mathscr F_t\right]\leq K.
    \end{equation*}
    \item \label{item: suff cond tail} The tail of $|\sigma^- b_{\mathrm{irr}}(x)|^{2}, x\sim \rho$ decays exponentially fast, i.e., there exists positive constants $c,C, R > 0$ such that for all $r> R$ 
    \begin{align}
    \label{eq: exponential boundedness of tail}
         P(|\sigma^- b_{\mathrm{irr}}(x)|^{2}> r)\leq Ce^{-c r}.
    \end{align}
\end{enumerate}
Furthermore, (\ref{item: suff cond novikov} or \ref{item: suff cond Kazamaki}) $\Rightarrow$ \ref{item: suff cond martingale}; \ref{item: suff cond Dellacherie meyer} $\Rightarrow$ \ref{item: suff cond novikov}; and \ref{item: suff cond tail} $\Rightarrow$ \ref{item: suff cond delta exp condition}.
\end{proposition}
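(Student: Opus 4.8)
The plan is to recognise $Z_t$ as the Dol\'eans--Dade (stochastic) exponential $\mathcal E(M)_t = \exp\!\big(M_t-\frac12\langle M\rangle_t\big)$ of the continuous local martingale
\begin{equation*}
M_t := -2\int_0^t \langle \sigma^- b_{\mathrm{irr}}(x_s),\,\d w_s\rangle, \qquad \langle M\rangle_t = 4\int_0^t |\sigma^- b_{\mathrm{irr}}(x_s)|^2 \,\d s,
\end{equation*}
which is well defined because the assumption that $\sigma^- b_{\mathrm{irr}}(x_t)$ is $\mathscr F_t$-adapted renders the integrand progressively measurable; this $\mathcal E(M)_t$ is precisely the $Z_t$ of \eqref{eq: exponential condition}. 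As a nonnegative local martingale started at $Z_0=1$, $Z$ is automatically a supermartingale, so $\E_P[Z_T]\le 1$ holds unconditionally. The whole content of the proposition is therefore to supply conditions under which $Z$ is a genuine martingale, which is equivalent to $\E_P[Z_T]=1$, i.e. to \eqref{eq: exponential condition}.

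Three of the entries are then classical criteria for a stochastic exponential to be a true martingale, invoked by citation. Condition \ref{item: suff cond martingale} is the conclusion restated, since a true martingale satisfies $\E_P[Z_T]=\E_P[Z_0]=1$. Because $\frac12\langle M\rangle_T = 2\int_0^T|\sigma^- b_{\mathrm{irr}}(x_t)|^2\,\d t$, condition \ref{item: suff cond novikov} is exactly \emph{Novikov's criterion} $\E_P[\exp(\frac12\langle M\rangle_T)]<\infty$; and since $\frac12 M_t = -\int_0^t\langle\sigma^- b_{\mathrm{irr}}(x_s),\d w_s\rangle$, condition \ref{item: suff cond Kazamaki} is exactly \emph{Kazamaki's criterion} $\sup_{t}\E_P[\exp(\frac12 M_t)]<\infty$. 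Both classically guarantee the martingale property, giving $(\ref{item: suff cond novikov}\text{ or }\ref{item: suff cond Kazamaki})\Rightarrow\ref{item: suff cond martingale}$ and hence the conclusion.

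The two auxiliary implications come next. For $\ref{item: suff cond Dellacherie meyer}\Rightarrow\ref{item: suff cond novikov}$, set $A_t := 2\int_0^t|\sigma^- b_{\mathrm{irr}}(x_s)|^2\,\d s$, so the hypothesis reads $\E_P[A_T-A_t\mid\mathscr F_t]\le K<1$ for all $t$; the classical Dellacherie--Meyer lemma (a manifestation of the John--Nirenberg inequality for such bounded-mean-oscillation increments), proved by iterating the conditional bound, then gives $\E_P[e^{A_T}]\le (1-K)^{-1}<\infty$, which is precisely Novikov. For $\ref{item: suff cond tail}\Rightarrow\ref{item: suff cond delta exp condition}$, the layer-cake identity $\E_\rho[e^{\delta|\sigma^- b_{\mathrm{irr}}(x)|^2}] = 1 + \delta\int_0^\infty e^{\delta r}\,P(|\sigma^- b_{\mathrm{irr}}(x)|^2>r)\,\d r$, split at $R$ and bounded with the exponential tail, is finite for every $\delta<c$.

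The crux, which I expect to be the main obstacle, is to show that condition \ref{item: suff cond delta exp condition} alone forces the martingale property. A pointwise stationary exponential moment does not control $\frac12\langle M\rangle_T$ over a long horizon---condition \ref{item: suff cond novikov} can genuinely fail for large $T$---so one cannot route through Novikov on all of $[0,T]$. Instead I would localise: partition $[0,T]$ into sub-intervals $[t_{k-1},t_k]$ of mesh at most $\delta/2$, and on each apply Jensen's inequality to the normalised time-average $\frac{1}{t_k-t_{k-1}}\int_{t_{k-1}}^{t_k}(\cdot)\,\d s$ together with stationarity (so $x_s\sim\rho$ for \emph{every} $s$, not only $s=0$) to obtain
\begin{equation*}
\E_P\!\left[\exp\!\Big(2\int_{t_{k-1}}^{t_k}|\sigma^- b_{\mathrm{irr}}(x_s)|^2\,\d s\Big)\right]\le \E_\rho\!\left[\exp\!\big(2(t_k-t_{k-1})\,|\sigma^- b_{\mathrm{irr}}(x)|^2\big)\right]<\infty.
\end{equation*}
Finiteness of these sub-interval exponential moments is exactly the hypothesis of the localised Novikov criterion (Karatzas--Shreve, Cor.~3.5.14), which stitches the pieces together to conclude that $Z$ is a true martingale on $[0,T]$, whence $\E_P[Z_T]=1$. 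The delicate points are applying Jensen against the correctly normalised Lebesgue measure on each sub-interval, and invoking stationarity at every intermediate time to reduce the sub-interval moment to the single stationary moment controlled by \ref{item: suff cond delta exp condition}.
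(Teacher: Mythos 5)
Your proposal is correct, and for most of the proposition it follows the same route as the paper: condition \ref{item: suff cond martingale} gives the conclusion because a martingale with $Z_0=1$ has constant expectation; conditions \ref{item: suff cond novikov} and \ref{item: suff cond Kazamaki} are recognised as exactly Novikov's and Kazamaki's criteria for the stochastic exponential $Z=\mathcal{E}(M)$ of $M_t=-2\int_0^t\langle\sigma^-b_{\mathrm{irr}}(x_s),\d w_s\rangle$ (the paper handles both with a single citation to Ruf's theorem); condition \ref{item: suff cond Dellacherie meyer} implies Novikov via the Dellacherie--Meyer exponential-moment bound for the increasing process $A_t=2\int_0^t|\sigma^-b_{\mathrm{irr}}(x_s)|^2\,\d s$ (the paper cites Dellacherie--Meyer, Theorem 105(b), rather than re-deriving the $(1-K)^{-1}$ bound by iteration, but it is the same lemma); and condition \ref{item: suff cond tail} implies \ref{item: suff cond delta exp condition} by the same layer-cake computation. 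The genuine difference is the crux you identified: how condition \ref{item: suff cond delta exp condition} alone yields \eqref{eq: exponential condition}. The paper black-boxes this step, citing Da Prato--Zabczyk (Proposition 10.17(ii)), which asserts that $\sup_{t\in[0,T]}\E_P\left[e^{\delta|\psi_t|^{2}}\right]<+\infty$ for some $\delta>0$ suffices for the Girsanov condition, and then uses stationarity only to rewrite $\E_P\left[e^{\delta|\psi_t|^{2}}\right]=\E_{\rho}\left[e^{4\delta|\sigma^-b_{\mathrm{irr}}(x)|^{2}}\right]$, a constant in $t$. You instead reprove that sufficiency in the stationary setting: partition $[0,T]$ into intervals of length at most $\delta/2$, apply Jensen against the normalised Lebesgue measure together with Tonelli and stationarity to dominate each sub-interval exponential moment by $\E_{\rho}\left[e^{\delta|\sigma^-b_{\mathrm{irr}}(x)|^{2}}\right]$, and then invoke the localised Novikov criterion (Karatzas--Shreve, Corollary 3.5.14) to stitch the pieces into the martingale property on all of $[0,T]$. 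This is essentially how results of the Da Prato--Zabczyk type are themselves proved, so your argument is a correct, self-contained replacement for the citation: it buys transparency about where stationarity enters and why a pointwise-in-time exponential moment controls an arbitrarily long horizon, at the cost of length, whereas the paper's citation buys brevity. The two routes also impose the same harmless scaling relation between the $\delta$ of condition \ref{item: suff cond delta exp condition} and, respectively, your partition mesh or the $\delta$ appearing in the cited proposition.
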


A proof is provided in Section \ref{app: epr regularity}. 

\subsection{Singular case}
\label{sec: ep singularity}

When the time-irreversible part of the drift is not always in the range of the volatility matrix field, we have a different result.

\begin{theorem}
\label{thm: epr singular}
Suppose that the Itô SDE \eqref{eq: Ito SDE} satisfies Assumption \ref{ass: coefficients time reversal} and that the volatility is twice continuously differentiable $\sigma\in C^2(\R^d, \R^{d\times m})$. Furthermore suppose that the solution $(x_t)_{t\in [0,T]}$ is stationary with respect to a density $\rho$ satisfying Assumption \ref{ass: steady-state time reversal}. Denote by $\p_{[0, T]}, \bar \p_{[0, T]}$ the path space measures on $C([0,T], \R^d)$ of the forward and time-reversed processes, respectively (Definition \ref{def: path space measure}). If $b_{\mathrm{irr}}(x) \in \im\sigma(x)$ does not hold for $\rho$-a.e. $x \in \R^d$, then
    \begin{align*}
        \p_{[0, T]} \perp \bar \p_{[0, T]} \text{ and } e_p = +\infty.
    \end{align*}
\end{theorem}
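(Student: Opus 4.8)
The plan is to reduce the statement to proving mutual singularity $\p_{[0,T]} \perp \bar \p_{[0,T]}$, and then to establish singularity by isolating a deterministic, noise-free component of the dynamics along which the forward and time-reversed diffusions are forced to disagree. The reduction is immediate once singularity is in hand: the relative entropy of two mutually singular measures is $+\infty$ (there is no Radon--Nikodym derivative to integrate against), so Definition \ref{def: epr} gives $e_p = \frac1T \H[\p_{[0,T]} \mid \bar \p_{[0,T]}] = +\infty$. Everything therefore reduces to showing $\p_{[0,T]} \perp \bar \p_{[0,T]}$.

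For the set-up, I use Theorem \ref{thm: time reversal of diffusions} and Proposition \ref{prop: helmholtz decomposition}: the forward and reversed processes are diffusions sharing the same volatility $\sigma$, with drifts $b = b_{\mathrm{rev}} + b_{\mathrm{irr}}$ and $\bar b = b_{\mathrm{rev}} - b_{\mathrm{irr}}$, so their drift difference is $b - \bar b = 2 b_{\mathrm{irr}}$, and this difference is unchanged upon passing to Stratonovich form since $b^s_{\mathrm{irr}} = b_{\mathrm{irr}}$ by \eqref{eq: irr drift equals ito and stratonovich}. The governing heuristic is the Girsanov dichotomy: two diffusions with identical volatility are mutually absolutely continuous when their drift difference lies in $\im\sigma$ (so it can be absorbed into a change of the driving Brownian motion), and are mutually singular otherwise. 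As $b_{\mathrm{irr}}(x) \notin \im\sigma(x)$ holds on a set $A$ with $\mu(A) > 0$, I expect singularity; the task is to make this rigorous in the degenerate setting, where Girsanov cannot be applied directly.

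To exhibit the singularity concretely, fix $x_0 \in A$. Using $\sigma \in C^2$, restrict to a small ball $B$ on which $\operatorname{rank}\sigma$ is locally constant, so that $\im\sigma(\cdot)$ is a smooth subbundle over $B$ and the orthogonal projection onto $(\im\sigma)^\perp$ is smooth; choose a smooth $g$ supported in $B$ with $\nabla g(x) \in (\im\sigma(x))^\perp$ and $\langle \nabla g, b_{\mathrm{irr}}\rangle > 0$ near $x_0$. Then $\sigma^\top \nabla g = 0$, so by It\^o's formula the scalar process $g(x_t)$ has vanishing quadratic variation; it is of finite variation and obeys $g(x_t) - g(x_0) = \int_0^t \L g(x_s)\, ds$ under $\p$, whereas the canonical path obeys $g(\omega_t) - g(\omega_0) = \int_0^t \bar \L g(\omega_s)\, ds$ under $\bar \p$. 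These two path identities differ by $\int_0^t (b - \bar b)\cdot\nabla g\, ds = \int_0^t 2\, b_{\mathrm{irr}}\cdot \nabla g\, ds$, which is strictly positive whenever the path dwells in $\{\nabla g \cdot b_{\mathrm{irr}} > 0\}$ for positive time. Stationarity supplies exactly this: since $\mu(A) > 0$, the expected occupation time $\E\int_0^T \mathbf 1_{B}(x_t)\, dt = T\,\mu(B) > 0$, so with positive probability the process spends a positive duration in $B$, and a stopping-time argument upon entry into $B$ yields a measurable path functional equal to its forward value $\p$-a.s.\ and to its strictly different reversed value $\bar \p$-a.s. The associated events partition path space and deliver $\p_{[0,T]} \perp \bar \p_{[0,T]}$.

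The main obstacle is the degeneracy of $\sigma$: the construction above needs $\im\sigma$ to form a smooth subbundle, which fails precisely where $\operatorname{rank}\sigma$ jumps, and a single $g$ covers only part of $A$. Resolving this is the crux of the argument. One decomposes $A$ according to the (lower-semicontinuous) rank of $\sigma$, works on the open sets of locally constant rank, and verifies that a portion of $A$ of positive $\mu$-measure is captured there. Alternatively, one invokes the Stroock--Varadhan support theorem directly, characterising $\operatorname{supp}\p^x$ and $\operatorname{supp}\bar\p^x$ as closures of solutions to the associated control systems: along $(\im\sigma)^\perp$ the forward admissible velocities lie in the affine set $b^s + \im\sigma$ and the reversed ones in $\bar b^s + \im\sigma$, which are disjoint on $A$ because their difference $2 b_{\mathrm{irr}} \notin \im\sigma$, so the topological supports are separated and mutual singularity follows.
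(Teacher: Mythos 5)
Your closing ``alternative'' route is, in fact, the paper's proof: pass to the Stratonovich form of both SDEs (this is where $\sigma\in C^2(\R^d,\R^{d\times m})$ is used, both for the It\^o--Stratonovich conversion and for the support theorem with locally Lipschitz coefficients), characterise $\supp \p^{x}_{[0,T]}$ and $\supp \bar\p^{x}_{[0,T]}$ as sup-norm closures of controlled ODE solutions, observe that the initial velocities of these controlled paths lie in the disjoint affine sets $b^s(x)+\im\sigma(x)$ and $\bar b^s(x)+\im\sigma(x)$ whenever $b_{\mathrm{irr}}(x)\notin\im\sigma(x)$, and conclude $\p^{x}_{[0,T]}\perp\bar\p^{x}_{[0,T]}$ for every such $x$. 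Note, though, how the paper then finishes: it does \emph{not} go through your reduction (global singularity $\Rightarrow$ infinite entropy), but instead applies Proposition \ref{prop: aggregating local ep} to get $e_p = \tfrac1T\E_{x\sim\rho}\left[\H[\p^{x}_{[0,T]}\mid\bar\p^{x}_{[0,T]}]\right]\geq \tfrac 1 T\rho\left(\{x : b_{\mathrm{irr}}(x)\notin\im\sigma(x)\}\right)\cdot\infty=+\infty$, i.e.\ infinitude of $e_p$ is obtained by integrating the fibrewise singular relative entropies over a positive-$\rho$-measure set of initial conditions.

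Your primary construction, by contrast, has a genuine gap, and the rank-stratification issue you flag is not the essential one. The identity $\sigma^\top\nabla g\equiv 0$ on an open set forces $X_i g=0$ for every column $X_i$ of $\sigma$, hence $[X_i,X_j]g=0$ and, inductively, $\nabla g$ annihilates the whole Lie algebra generated by the noise fields. So your method can only detect components of $b_{\mathrm{irr}}$ transverse to this Lie closure, whereas the theorem assumes only $b_{\mathrm{irr}}(x)\notin\im\sigma(x)$: when the columns of $\sigma$ are bracket-generating near $x_0$ (e.g.\ the paper's non-elliptic example at the end of Section \ref{sec: ep singularity}, perturbed by an irreversible drift), every admissible $g$ is locally constant and $\langle\nabla g,b_{\mathrm{irr}}\rangle>0$ is unattainable. (A smaller symptom of the same problem: compact support in a small ball is incompatible with $\nabla g\in(\im\sigma)^\perp$, since the support of such a $g$ is a union of orbits---for underdamped Langevin the admissible $g$ depend on position only, so their supports are cylinders in phase space.) Second, even where a good $g$ exists, the argument does not deliver what you claim: backward paths that never enter $B$ satisfy the same identity $g(\omega_t)-g(\omega_0)=\int_0^t \L g(\omega_s)\,ds$ as forward paths, so your functional is \emph{not} ``equal to its strictly different reversed value $\bar\p$-a.s.''; it separates the two laws only on the event of visiting $\{b_{\mathrm{irr}}\cdot\nabla g>0\}$. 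Running it in both directions yields $\p_{[0,T]}\not\ll\bar\p_{[0,T]}$ and $\bar\p_{[0,T]}\not\ll\p_{[0,T]}$---enough for $e_p=+\infty$ where $g$ exists---but not the asserted mutual singularity. As written, the full statement therefore rests entirely on your support-theorem fallback.
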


A proof is provided in Section \ref{app: epr singular}. The proof uses a version of the Stroock-Varadhan support theorem to show that there are paths that can be taken by the forward diffusion process that cannot be taken by the backward diffusion process---and vice-versa. Specifically, when considering the two processes initialised at a point $x \in \R^d$ where $b_{\mathrm{irr}}(x) \notin \im\sigma(x)$, we can see that the derivatives of their respective possible paths at time $0$ span different tangent sub-spaces at $x$. Thus the path space measures $\p^{x}_{[0,T]} , \bar \p^{x}_{[0,T]}$ are mutually singular. Since such $x$ occur with positive probability under the stationary density, it follows that the path space measures of the forward and time-reversed stationary processes $\p^{x}_{[0,T]} , \bar \p^{x}_{[0,T]}$ are also mutually singular. Finally, the relative entropy between two mutually singular measures is infinity, hence the $e_p$ must be infinite.

By Theorem \ref{thm: epr singular} and \eqref{eq: drift image of volatility additive noise} we can readily see that the underdamped \eqref{eq: underdamped Langevin dynamics}
and generalised Langevin \cite[eq. 8.33]{pavliotisStochasticProcessesApplications2014} processes in phase-space have infinite entropy production. Expert statistical physicists will note that this contrasts with previous results in the literature stating that these diffusions have finite entropy production. There is no contradiction as physicists usually add an additional operator to the definition of the entropy production in these systems (Remark \ref{rem: physical relevance}). While obviously informative of the underlying process, statistical physicists should take the results of Theorems \ref{thm: epr regular case simple}, \ref{thm: epr regular case general} and \ref{thm: epr singular} to be physically relevant to entropy production insofar as the definition of entropy production we adopted (Definition \ref{def: epr}) is physically meaningful for the system at hand. What if it is not? We will return to this in the discussion (Section \ref{sec: discussion generalised non-reversibility}).

In contrast to the underdamped and generalised Langevin equations, there exist hypoelliptic, non-elliptic diffusions with finite entropy production. 
For example, consider the following volatility matrix field
\begin{align*}
    \sigma(x,y,z) = \begin{bmatrix} x & 1  \\1&1\\ 0&1 
    \end{bmatrix}.
\end{align*}
By Hörmander's theorem \cite[Theorem 1.3]{hairerMalliavinProofOrmander2011}, for any smooth, confining (e.g., quadratic) potential $V: \R^3 \to \R$, the process solving the SDE
\begin{equation*}
   dx_t =- D\nabla V(x_t)dt + \nabla \cdot D(x_t)dt + \sigma(x_t) dw_t 
\end{equation*}
is hypoelliptic and non-elliptic. Furthermore, it is stationary and time-reversible at the Gibbs density $\rho(x)\propto \exp(-V(x))$.

\section{Examples and $e_p$ of numerical simulations}
\label{sec: illustrating results epr}

We illustrate these results for linear diffusion processes, underdamped Langevin dynamics and their numerical simulations.

\subsection{Linear diffusion processes}
\label{sec: OU process}

Given matrices $B\in \R^{d \times d}, \sigma \in \R^{d \times m}$, and a standard Brownian motion $(w_t)_{t\in [0, +\infty)}$ on $\R^m$, consider a linear diffusion process (i.e., a multivariate Ornstein-Uhlenbeck process)
\begin{equation}
\label{eq: OU process}
\begin{split}
    dx_t &= b(x_t)dt +\sigma(x_t) dw_t, \quad b(x)= -Bx, \quad \sigma(x) \equiv \sigma.
\end{split}
\end{equation}
This process arises, for example, in statistical physics as a model of the velocity of a massive Brownian particle subject to friction \cite{uhlenbeckTheoryBrownianMotion1930}; it covers the case of interacting particle systems when the interactions are linear in the states (e.g., the one dimensional ferromagnetic Gaussian spin model \cite{godrecheCharacterisingNonequilibriumStationary2019}); or when one linearises the equations of generic diffusion processes near the stationary density. 

By solving the linear diffusion process (e.g., \cite[Section 2.2]{godrecheCharacterisingNonequilibriumStationary2019}) one sees that the solution can be expressed as a linear operation on Brownian motion---a Gaussian process---thus the process must itself be Gaussian, and its stationary density as well (when it exists). Consider a Gaussian density $\rho$
\begin{align*}
    \rho (x) = \mathcal N (x;0, \Pi^{-1}), \quad -\log \rho(x) = \frac 1 2 x^\top \Pi x,
\end{align*}
where $\Pi \in \R^{d\times d}$ is the symmetric positive definite \emph{precision} matrix. By the Helmholtz decomposition (Propositions \ref{prop: helmholtz decomposition} \& \ref{prop: characterisation of irreversible drift}), $\rho$ is a stationary density if and only if we can decompose the drift as follows:
\begin{align*}
    b = b_{\mathrm{rev}} + b_{\mathrm{irr}}, \quad b_{\mathrm{rev}}(x)= -D\Pi x, \quad b_{\mathrm{irr}}(x)= -Q\Pi x,
\end{align*}
where $Q=-Q^\top \in \R^{d\times d}$ is an arbitrary antisymmetric matrix, and, recall $D= \sigma \sigma^\top/2\in \R^{d\times d}$ is the diffusion tensor. In particular, the drift of the forward and the time-reversed dynamic, are, respectively,
\begin{align*}
    b(x) = -Bx, \quad B = (D+ Q)\Pi,\quad 
    \bar b(x) = -Cx, \quad C:= (D-Q)\Pi.
\end{align*}



Suppose that $b_{\mathrm{irr}}(x)\in \im \sigma $ for any $x\in\R^d$. By definiteness of $\Pi$ this is equivalent to $\im Q \subseteq \im \sigma$. By Theorem \ref{thm: epr regular case simple}, and applying the trace trick to compute the Gaussian expectations of a bilinear form, we obtain\footnote{To obtain the last equality we used $\tr(D^- D \Pi Q)=\tr(\Pi Q D^- D)= - \tr(D D^- Q \Pi) $. By standard properties of the pseudo-inverse $D D^-$ is the orthogonal projector onto $\im D= \im \sigma$. Thus, $\im Q \subseteq \im \sigma$ implies $D D^-Q=Q$. Finally, the trace of a symmetric matrix $\Pi$ times an antisymmetric matrix $Q$ vanishes.}
\begin{equation}
\label{eq: ep OU regular}
\begin{split}
     e_p &= \int_{\R^d} b_{\mathrm{irr}}^\top D^- b_{\mathrm{irr}} \rho(x)\d x = -\int_{\R^d} x^\top \Pi Q D^- Q\Pi x \,\rho(x)\d x \\
     &= -\tr(\Pi Q D^- Q\Pi\Pi^{-1})
     = -\tr(D^- Q \Pi Q) \\
    &=-\tr(D^- B Q)+\underbrace{\tr(D^- D \Pi Q)}_{=0}=-\tr(D^- B Q).   
\end{split}
\end{equation}
This expression for the entropy production is nice as it generalises the usual formula to linear diffusion processes to degenerate noise, simply by replacing inverses with pseudo-inverses, cf. \cite[eqs. 2.28-2.29]{godrecheCharacterisingNonequilibriumStationary2019} and \cite{mazzoloNonequilibriumDiffusionProcesses2023,buissonDynamicalLargeDeviations2022}.

Contrariwise, suppose that $\im Q \not \subseteq \im \sigma$. Then by Theorem \ref{thm: epr singular},
\begin{align}
\label{eq: ep OU singular}
    e_p = +\infty.
\end{align}

\subsubsection{Exact numerical simulation and entropy production rate}

Linear diffusion processes can be simulated exactly as their transition kernels are known. Indeed, by solving the process, one obtains the transition kernels of the Markov semigroup as a function of the drift and volatility \cite[Theorem 9.1.1]{lorenziAnalyticalMethodsMarkov2006}. The forward and time-reserved transition kernels are the following:
\begin{equation}
\label{eq: OU transition kernels}
\begin{split}
    p_\e(\cdot,x) &= \mathcal N(e^{-\e B}x, S_\e),\quad S_\e:= \int_0^\e e^{-tB}\sigma \sigma ^\top e^{-tB^\top} \d t, \\
    \bar p_\e(\cdot,x) 
    &= \mathcal N (e^{-\e C}x, \bar S_\e),\quad \bar S_\e
    := \int_0^\e e^{-t C}\sigma \sigma^\top e^{-tC^\top}\d t.
\end{split}
\end{equation}
Sampling from the transition kernels allows one to simulate the process exactly, and offers an alternative way to express the entropy production rate. Recall from Proposition \ref{prop: epr transition kernels} that the $e_p$ is the infinitesimal limit of the entropy production rate of an exact numerical simulation $e_p(\e)$ with time-step $\e$
    \begin{align*}
        e_p &= \lim_{\e\downarrow 0} e_p(\e), \quad e_p(\e) = \frac{1}{\e} \E_{x\sim \rho}[\H[p_\e(\cdot,x)\mid \bar p_\e(\cdot,x)]].
    \end{align*}
We can leverage the Gaussianity of the transition kernels to compute the relative entropy and obtain an alternative formula for the entropy production rate:

\begin{lemma}
\label{lemma: limit ep formula OU process}
The entropy production rate of the stationary linear diffusion process can also be expressed as 
\begin{equation}
\label{eq: epr hypo OU process lim}
\begin{split}
    e_p &= \lim_{\e \downarrow 0} e_p(\e),\\
    e_p(\e)&= \frac{1}{2\e} \left[\tr(\bar S_\e^{-}S_\e)-\rank \sigma + \log \frac{\det^*(\bar S_\e)}{\det^*(S_\e)}\right. \\
    &\left.+\tr \left( \Pi^{-1} (e^{-\e C}-e^{-\e B})^\top \bar S_\e^{-}(e^{ -\e C}-e^{-\e B})\right)\right],
\end{split}
\end{equation}
where $\cdot^-$ is the Moore-Penrose pseudo-inverse and $\det^*$ is the pseudo-determinant.
\end{lemma}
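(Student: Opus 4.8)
The plan is to compute the relative entropy $\H[p_\e(\cdot,x)\mid\bar p_\e(\cdot,x)]$ between the two Gaussian transition kernels in closed form, and then to average it over $x\sim\rho$; dividing by $\e$ gives $e_p(\e)$ and Proposition \ref{prop: epr transition kernels} supplies both the identity $e_p(\e)=\frac1\e\E_{x\sim\rho}[\H[p_\e(\cdot,x)\mid\bar p_\e(\cdot,x)]]$ and the limit $e_p=\lim_{\e\downarrow 0}e_p(\e)$. Since both kernels are Gaussian, $p_\e(\cdot,x)=\mathcal N(e^{-\e B}x,S_\e)$ and $\bar p_\e(\cdot,x)=\mathcal N(e^{-\e C}x,\bar S_\e)$, the whole computation reduces to the Kullback--Leibler divergence between two (possibly degenerate) Gaussians, for which I would use the pseudo-inverse/pseudo-determinant form of the usual formula. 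The first thing to secure is therefore that the two measures are mutually absolutely continuous, so that this divergence is finite and the degenerate formula is legitimate; this requires that $S_\e$ and $\bar S_\e$ share a common range and that the difference of means lies in that range.

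The key structural step is to show $\im S_\e=\im\sigma=\im\bar S_\e$, which also pins down the rank appearing in the statement. Because $S_\e=\int_0^\e e^{-tB}\sigma\sigma^\top e^{-tB^\top}\d t$ is the controllability Gramian of the pair $(B,\sigma)$, its range is the smallest $B$-invariant subspace containing $\im\sigma$; hence it suffices to prove that $\im\sigma$ is already $B$-invariant. Using $B=(D+Q)\Pi$, the identity $\im D=\im\sigma$, and the regular-case hypothesis $\im Q\subseteq\im\sigma$ (equivalent to $b_{\mathrm{irr}}(x)\in\im\sigma$ since $\Pi$ is invertible), for $v\in\im\sigma$ one gets $Bv=D\Pi v+Q\Pi v\in\im\sigma$. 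The identical argument with $C=(D-Q)\Pi$ gives $C$-invariance, so $\rank S_\e=\rank\bar S_\e=\rank\sigma$. For the means, writing $C^k-B^k=\sum_{j=0}^{k-1}C^j(C-B)B^{k-1-j}$ and noting $C-B=-2Q\Pi$ has range in $\im\sigma$, one concludes $\im(e^{-\e C}-e^{-\e B})\subseteq\im\sigma$, so the mean difference $(e^{-\e C}-e^{-\e B})x$ lies in the common range for every $x$. This confirms mutual absolute continuity.

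With this in place I would apply the degenerate-Gaussian divergence formula, which reads
\begin{align*}
\H[p_\e(\cdot,x)\mid\bar p_\e(\cdot,x)]=\tfrac12\Big[\tr(\bar S_\e^- S_\e)-\rank\sigma+\log\tfrac{\det^*\bar S_\e}{\det^* S_\e}+(e^{-\e C}x-e^{-\e B}x)^\top\bar S_\e^-(e^{-\e C}x-e^{-\e B}x)\Big].
\end{align*}
Only the last summand depends on $x$. Taking $\E_{x\sim\mathcal N(0,\Pi^{-1})}$ and using $\E[x^\top M x]=\tr(M\Pi^{-1})$ together with cyclicity of the trace turns the quadratic term into $\tr\big(\Pi^{-1}(e^{-\e C}-e^{-\e B})^\top\bar S_\e^-(e^{-\e C}-e^{-\e B})\big)$, while the remaining three terms pass through the expectation unchanged. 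Multiplying by $1/\e$ yields the claimed expression for $e_p(\e)$, and the limit follows from Proposition \ref{prop: epr transition kernels}.

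I expect the main obstacle to be the careful justification of the degenerate case rather than the algebra. Specifically, one must verify the pseudo-inverse/pseudo-determinant KL formula by writing the densities of the degenerate Gaussians with respect to Hausdorff measure on the shared affine support $e^{-\e B}x+\im\sigma$, and one must emphasize the subtle point that the regular-case condition $\im Q\subseteq\im\sigma$ is exactly what keeps the Gramians from gaining rank: without it the process would be hypoelliptic, the Gramian $S_\e$ would acquire full rank for $\e>0$ while $\bar S_\e$ could differ, and the two kernels would fail to be mutually absolutely continuous, consistent with the singular case of Theorem \ref{thm: epr singular}.
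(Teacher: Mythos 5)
Your route is the paper's route: insert the Gaussian kernels $p_\e(\cdot,x)=\mathcal N(e^{-\e B}x,S_\e)$ and $\bar p_\e(\cdot,x)=\mathcal N(e^{-\e C}x,\bar S_\e)$ into the pseudo-inverse/pseudo-determinant KL formula (Lemma \ref{lemma: KL Gaussian}), average the quadratic mean term with the trace trick $\E_{x\sim\rho}[x^\top M x]=\tr(M\Pi^{-1})$, divide by $\e$, and invoke Proposition \ref{prop: epr transition kernels} for the limit. Your verification that, under $\im Q\subseteq\im\sigma$, one has $\im S_\e=\im\bar S_\e=\im\sigma$ (via $B$- and $C$-invariance of $\im\sigma$ and the Gramian argument) and $\im(e^{-\e C}-e^{-\e B})\subseteq\im\sigma$ is in fact more careful than the paper's own proof, which applies Lemma \ref{lemma: KL Gaussian} without checking mutual absolute continuity or that $\rank S_\e=\rank\sigma$; that check is precisely what legitimises the $-\rank\sigma$ term.

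The gap is one of scope, and your justification for the restriction is incorrect. The lemma is stated for the stationary linear diffusion with no regularity hypothesis, and the paper uses it exactly in the singular regime: the text following the lemma claims the limit recovers both \eqref{eq: ep OU regular} and \eqref{eq: ep OU singular}, and Figures \ref{fig: OU process b not in Im sigma} and \ref{fig: exact underdamped} evaluate $e_p(\e)$ for processes with $b_{\mathrm{irr}}\notin\im\sigma$, finding it finite for every $\e>0$ and divergent only as $\e\downarrow 0$. Your closing claim—that without $\im Q\subseteq\im\sigma$ the kernels "fail to be mutually absolutely continuous"—is false in general: hypoellipticity typically makes them equivalent. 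Underdamped Langevin in a quadratic potential is the canonical counterexample: $\rank\sigma=n$, yet $S_\e$ and $\bar S_\e$ are full-rank for every $\e>0$, so both kernels are non-degenerate Gaussians with finite KL. (Also, $\rank S_\e=\rank\bar S_\e$ always holds at stationarity, since both are Schur complements of the same joint Gaussian covariance of $(x_0,x_\e)$, so $\bar S_\e$ cannot differ in rank.) What your regular-case analysis actually exposes is the sharper point that outside the regular case $\rank S_\e>\rank\sigma$, so the displayed formula read literally with $-\rank\sigma$ is no longer the averaged KL divergence. To prove the lemma in the generality in which it is used, you should state $e_p(\e)$ with $\rank S_\e$ (as Lemma \ref{lemma: KL Gaussian} dictates), note that $\rank S_\e=\rank\sigma$ exactly in the regular case you treated, and observe that in the singular case the nonnegative discrepancy $(\rank S_\e-\rank\sigma)/(2\e)$ does not affect the conclusion $\lim_{\e\downarrow 0}e_p(\e)=+\infty=e_p$, consistent with Theorem \ref{thm: epr singular}.
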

A proof is provided in Section \ref{app: exact simul OU process}. Computing the limit \eqref{eq: epr hypo OU process lim} analytically, gives us back \eqref{eq: ep OU regular}, \eqref{eq: ep OU singular}, however, we will omit those details here. For our purposes, this gives us a way to numerically verify the value of $e_p$ that was obtained from theory. See Figures \ref{fig: OU process b in Im sigma} and \ref{fig: OU process b not in Im sigma} for illustrations.

\begin{figure}[t!] 
    \centering
    \includegraphics[width= 0.45\textwidth]{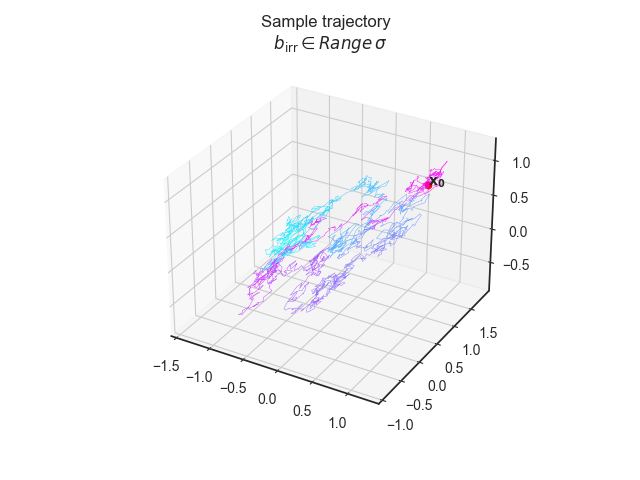}
    \includegraphics[width= 0.45\textwidth]{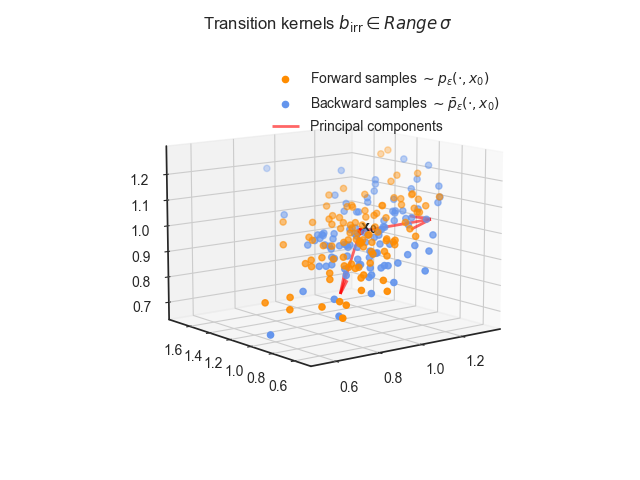}
    \includegraphics[width= 0.5\textwidth]{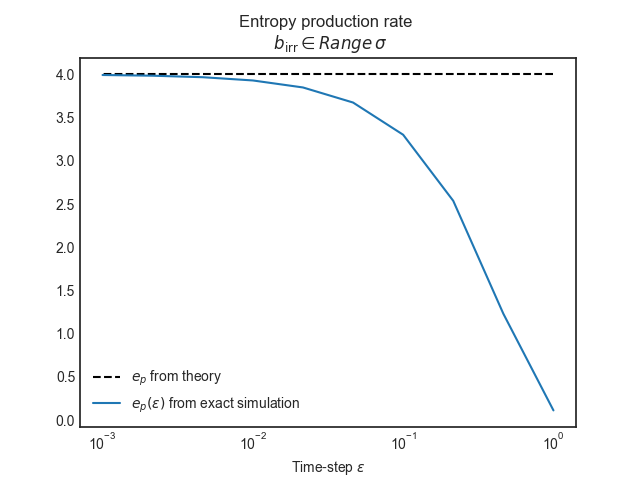}
    \caption[Exact simulation of linear diffusion process with $b_{\mathrm{irr}}(x) \in \im\sigma$]{\footnotesize\textbf{Exact simulation of linear diffusion process with $b_{\mathrm{irr}}(x) \in \im\sigma$.} This figure considers an OU process in 3d space driven by degenerate noise, i.e., $\rank \sigma <3$. The coefficients are such that $\sigma=Q$ are of rank $2$. In particular, $b_{\mathrm{irr}}(x) \in \im\sigma$ holds for every $x$. The process is not elliptic nor hypoelliptic, but it is elliptic over the subspace in which it evolves. The upper-left panel shows a sample trajectory starting from $x_0=(1,1,1)$. The upper-right panel shows samples from different trajectories after a time-step $\e$. There are only two principal components to this point cloud as the process evolves on a two dimensional subspace. In the bottom panel, we verify the theoretically predicted value of $e_p$ by evaluating the entropy production of an exact simulation $e_p(\e)$ with time-step $\e$. As predicted, we recover the true $e_p$ in the infinitesimal limit as the time-step of the exact simulation tends to zero $\e \to 0$. Furthermore, since the process is elliptic in its subspace, the entropy production is finite.}
    \label{fig: OU process b in Im sigma}
\end{figure}

\begin{figure}[t!]
    \centering
    \includegraphics[width= 0.45\textwidth]{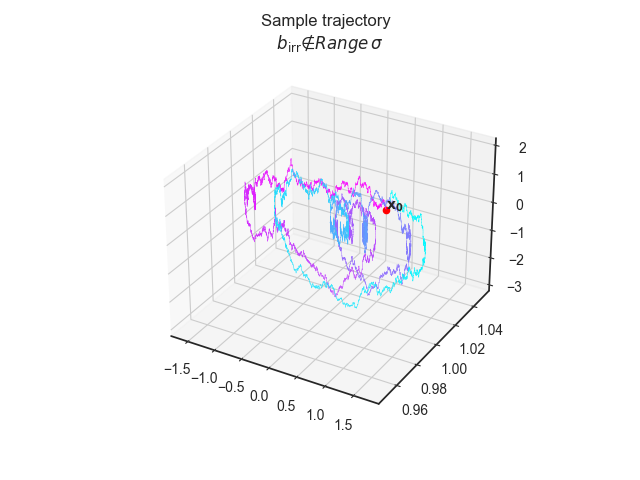}
    \includegraphics[width= 0.45\textwidth]{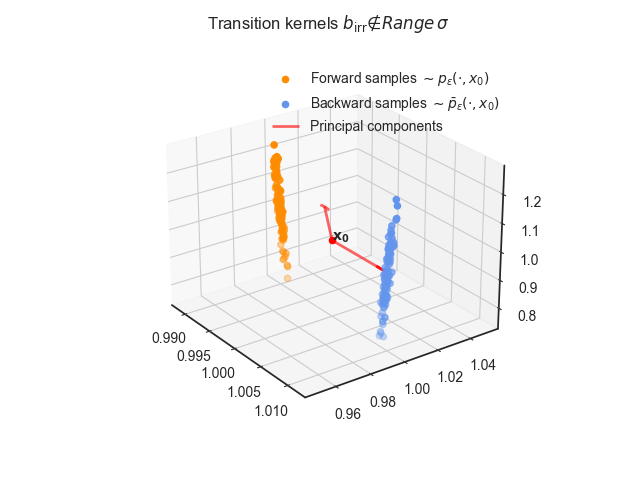}
    \includegraphics[width= 0.5\textwidth]{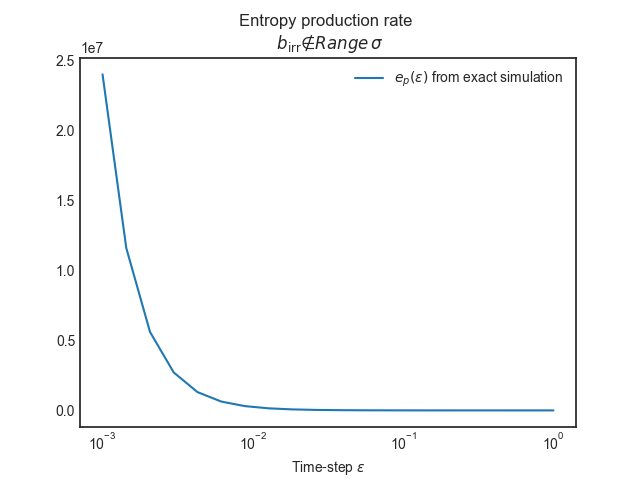}
    \caption[Exact simulation of linear diffusion process with $b_{\mathrm{irr}}(x) \not\in \im\sigma$]{\footnotesize\textbf{Exact simulation of linear diffusion process with $b_{\mathrm{irr}}(x) \not\in \im\sigma$.} This figure considers an OU process in 3d space driven by degenerate noise. The coefficients are such that $\im b_{\mathrm{irr}}$ is two-dimensional while $\im\sigma$ is one-dimensional, and such that the process does not satisfy Hörmander's hypoellipticity condition. As such the process is hypoelliptic on a two-dimensional subspace; see a sample trajectory in the upper-left panel. By hypoellipticity its transition kernels are equivalent in the sense of measures, although far removed: On the upper right panel we show samples from different trajectories after a time-step $\e$. There are only two principal components to this data-cloud as the process evolves on a two dimensional subspace. In the bottom panel, we verify the theoretically predicted $e_p$ by evaluating the entropy production of an exact simulation $e_p(\e)$ with time-step $\e$. As predicted, we recover $e_p=+\infty$ in the infinitessimal limit as the time-step of the exact simulation tends to zero $\e \downarrow 0$.  This turns out to be as the transition kernels of the forward and time-reversed processes become more and more mutually singular as the time-step decreases.}
    \label{fig: OU process b not in Im sigma}
\end{figure}

\subsection{Underdamped Langevin dynamics}
\label{sec: underdamped numerical simulations}

In this sub-section, we consider the entropy production rate of underdamped Langevin dynamics and its numerical simulations. Recall that the $e_p$ we compute here is defined \textit{without} an additional momentum flip operator on the path space measure of the time-reversed process (i.e., \eqref{eq: def epr} and not \eqref{eq: rem def gen epr}), and may be a distinct quantity from the entropy production that physicists usually consider in such systems (see the discussion in Section \ref{sec: discussion generalised non-reversibility}).

Consider a Hamiltonian $H(q,p)$ function of positions $q\in \R^n$ and momenta $ p \in \R^n$ of the form
\begin{equation}
H(q, p)=V(q)+\frac{1}{2} p^\top M^{-1} p
\end{equation}
for some smooth potential function $V: \R^n \to \R$ and diagonal mass matrix $M \in \R^{n\times n}$.

The underdamped Langevin process is the solution to the SDE \cite[eq 2.41]{roussetFreeEnergyComputations2010}
\begin{equation}
\begin{cases}
d q_{t} = M^{-1}p_{t} d t \\
d p_{t} =-\nabla V\left(q_{t}\right) d t-\gamma M^{-1} p_{t} d t+\sqrt{2 \gamma \beta^{-1}} d w_{t}
\end{cases}
\end{equation}
for some friction coefficient $\gamma >0$. This process arises in statistical physics, as a model of a particle coupled to a heat bath \cite{rey-belletOpenClassicalSystems2006}, \cite[Chapter 8]{pavliotisStochasticProcessesApplications2014}; in Markov chain Monte-Carlo as an accelerated sampling scheme \cite{maThereAnalogNesterov2021,barpGeometricMethodsSampling2022}; and also as a model of interacting kinetic particles.

The invariant density, assuming it exists, is \cite[Section 2.2.3.1]{roussetFreeEnergyComputations2010}
\begin{align*}
    \rho(q,p)=\frac 1 Z e^{-\beta H(q,p)}=\frac 1 Z e^{-\beta V(q)}e^{-\frac{\beta}{2} p^\top M^{-1} p}.
\end{align*}

Since the noise is additive the Itô interpretation of the SDE coincides with the Stratonovich interpretation, thus the irreversible drift is in the range of the volatility if and only if the drift is in the range of the volatility \eqref{eq: drift image of volatility additive noise}. Observe that when the momentum is non-zero $p\neq 0$ the drift is not in the image of the volatility: in the $q$ components the drift is non-zero while the volatility vanishes. Since $p \neq 0$ has full measure under the stationary density $\rho(q,p)$ we obtain, from Theorem \ref{thm: epr singular}, 
\begin{align}
\label{eq: ep underdamped}
    e_p = +\infty.
\end{align}

Note that the entropy production of an exact numerical simulation with time-step $\e >0$ is usually finite by hypoellipticity\footnote{\eqref{eq: finite ep for exact simulation of underdamped} always holds in a quadratic potential, whence the process is a linear diffusion and the results from Section \ref{sec: OU process} apply. We conjecture this to hold in the non-linear case as well, as hypoellipticity guarantees that the transition kernels are mutually equivalent in the sense of measures.}
\begin{align}
\label{eq: finite ep for exact simulation of underdamped}
    e_p(\e) <+\infty.
\end{align}
Figure \ref{fig: exact underdamped} illustrates this with an exact simulation of underdamped Langevin dynamics in a quadratic potential.

\begin{figure}[t!]
    \centering
    \includegraphics[width= 0.45\textwidth]{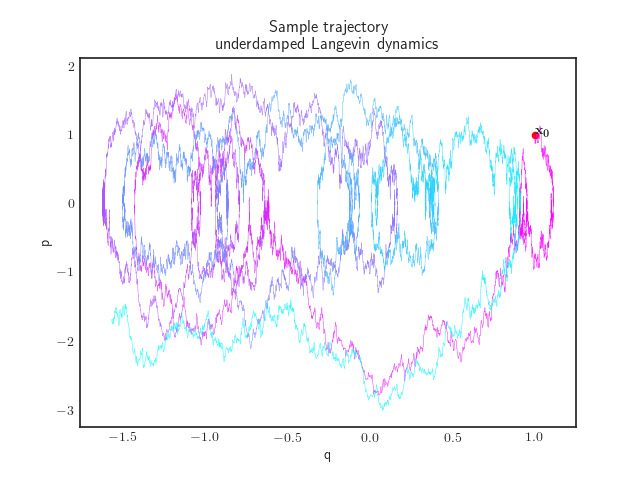}
    \includegraphics[width= 0.45\textwidth]{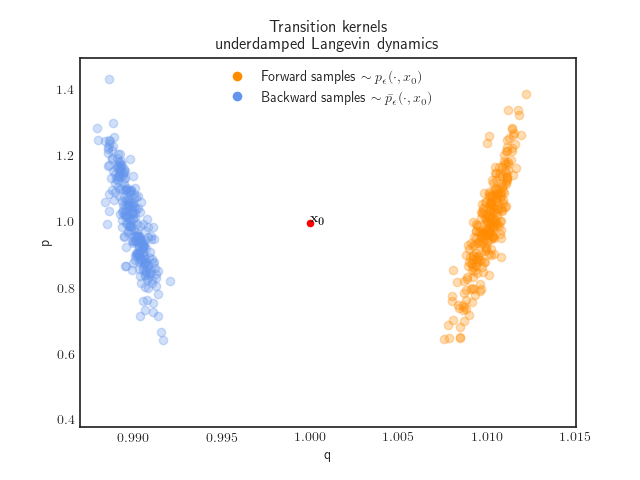}
    \includegraphics[width= 0.5\textwidth]{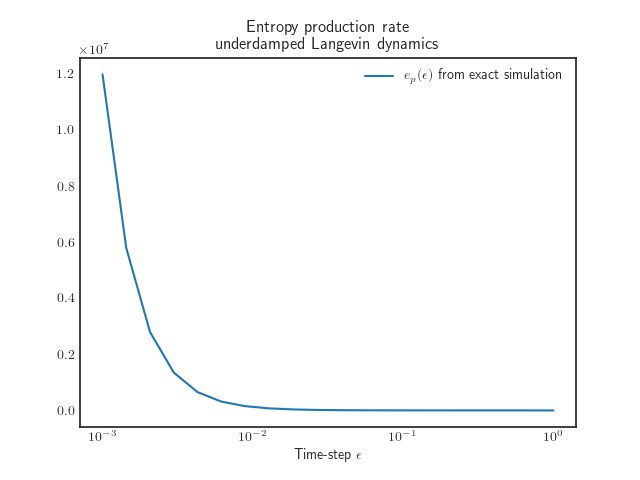}
    \caption[Exact simulation of underdamped Langevin dynamics]{\footnotesize\textbf{Exact simulation of underdamped Langevin dynamics.} This figure plots underdamped Langevin dynamics in a quadratic potential. Here, the process is two dimensional, i.e., positions and momenta evolve on the real line. We exploit the fact that underdamped Langevin in a quadratic potential is an Ornstein-Uhlenbeck process to simulate sample paths exactly. The choice of parameters was: $V(q)=q^2/2, M=\gamma =1$. The upper left panel plots a sample trajectory. One observes that the process is hypoelliptic: it is not confined to a prespecified region of space, cf. Figures \ref{fig: OU process b in Im sigma}, \ref{fig: OU process b not in Im sigma}, even though random fluctuations affect the momenta only. The upper right panel plots samples of the forward and time-reversed processes after a time-step of $\e$. In the bottom panel, we verify the theoretically predicted $e_p$ by evaluating the entropy production of an exact simulation $e_p(\e)$ with time-step $\e$. As predicted, we recover $e_p=+\infty$ in the infinitessimal limit as the time-step of the exact simulation tends to zero $\e \downarrow 0$. This turns out to be because the transition kernels of the forward and time-reversed processes become more and more mutually singular as the time-step decreases.}
    \label{fig: exact underdamped}
\end{figure}

When the potential is non-quadratic, the underdamped process is a non-linear diffusion and one is usually unable to simulate it exactly. Instead, one resolves to numerical approximations to the solution of the process. We now turn to two common numerical discretisations of underdamped: the Euler-Maruyama and BBK discretisations. 
We will examine whether these discretisations are good approximations to the true process by computing their entropy production rate.

\subsubsection{Euler-Maruyama discretisation}

In this section, we show that an Euler-Maruyama (E-M) discretisation of underdamped Langevin dynamics at any time-step $\e>0$ has infinite entropy production
\begin{align}
\label{eq: ep Euler}
    e_p^{\text{E-M}}(\e)=+\infty.
\end{align}

To see this, we take a step back and consider an arbitrary Itô SDE in $\R^d$
\begin{align*}
    dx_t = b(x_t)dt + \sigma(x_t)dw_t
\end{align*}
The Euler-Maruyama discretisation for some time-step $\e >0$ is
\begin{align*}
    x_{i+1} = x_i + b(x_i)\e + \sigma(x_i)\omega_i, \quad \omega_i \sim \mathcal N(0, \e \operatorname{Id}_d).
\end{align*}
This is a Markov chain with the following transition kernels
\begin{equation}
\label{eq: E-M scheme definition}
\begin{split}
    p^{\text{E-M}}_\e\left(x_{i+1}, x_{i}\right) &=\mathcal N(x_{i+1}; x_i+\e b(x_i), 2\e D(x_i)),\\
    \bar p^{\text{E-M}}_\e\left(x_{i+1}, x_{i}\right)&:=p^{\text{E-M}}_\e\left(x_{i},x_{i+1}\right),
\end{split}
\end{equation}
where $\bar p^{\text{E-M}}_\e$ denotes the transition kernel of the backward chain\footnote{Caution: this is different from the E-M discretisation of the time-reversed process.}.

It turns out that when the SDE is not elliptic the transition kernels $p^{\text{E-M}}_\e\left(\cdot,x \right), \bar p^{\text{E-M}}_\e\left(\cdot,x \right)$ tend to have different supports:

\begin{lemma}
\label{lemma: support of transition kernels Euler}
For any $x\in \R^d$
\begin{align*}
    \operatorname{supp}p^{\mathrm{E}\text{-}\mathrm{M}}_\e\left( \cdot,x \right)&= \overline{\{y: y \in x + \e b(x) + \im D(x)\}} \\
  \operatorname{supp}\bar p^{\mathrm{E}\text{-}\mathrm{M}}_\e\left( \cdot,x \right)&=\overline{\{y:  x  \in y +\e b(y)+\im D(y)\}}
\end{align*}
\end{lemma}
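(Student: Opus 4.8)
The plan is to reduce both identities to a single elementary fact about Gaussian measures on $\R^d$: for a mean $m \in \R^d$ and a symmetric positive semi-definite matrix $\Sigma$, the support of $\mathcal N(m,\Sigma)$ is the affine subspace $m + \im \Sigma$. I would first record this fact (via the representation $\mathcal N(m,\Sigma)=\operatorname{Law}(m+\Sigma^{1/2}Z)$ with $Z\sim\mathcal N(0,\operatorname{Id})$, together with the identity $\im\Sigma^{1/2}=\im\Sigma$ valid for symmetric positive semi-definite $\Sigma$), stressing that it treats the degenerate case $\det\Sigma=0$ on exactly the same footing as the non-degenerate one. I would also note the algebraic identities $\im(2\e D(x))=\im D(x)$ (since $\e>0$) and $\im D(x)=\im\sigma(x)$ (since $D=\sigma\sigma^\top/2$ and $\im(\sigma\sigma^\top)=\im\sigma$), so that the image appearing in the statement may be read interchangeably.

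For the forward kernel the claim is then immediate: by \eqref{eq: E-M scheme definition} we have $p^{\text{E-M}}_\e(\cdot,x)=\mathcal N(x+\e b(x),\,2\e D(x))$, so the Gaussian-support fact gives $\operatorname{supp}p^{\text{E-M}}_\e(\cdot,x)=x+\e b(x)+\im D(x)$. Here $x$ is fixed, so this set is an affine subspace and is therefore already closed; the closure written in the statement is harmless in this case.

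For the backward kernel I would unwind the definition $\bar p^{\text{E-M}}_\e(y,x):=p^{\text{E-M}}_\e(x,y)$ and observe that a point $y$ carries backward mass towards $x$ exactly when the forward Gaussian issued from $y$ charges $x$, that is, when $x\in\operatorname{supp}p^{\text{E-M}}_\e(\cdot,y)=y+\e b(y)+\im D(y)$. Taking the closure of the set of such $y$ yields the stated formula. The delicate point — and the step I expect to be the main obstacle — is interpreting the support of the backward kernel when $D(y)$ is degenerate: then $p^{\text{E-M}}_\e(x,\cdot)$ has no Lebesgue density, so $\bar p^{\text{E-M}}_\e(y,x)$, viewed as a function of $y$, cannot be read off as the pointwise positivity of a density and must instead be encoded through the membership condition $x\in\operatorname{supp}p^{\text{E-M}}_\e(\cdot,y)$. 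I would therefore characterise $\operatorname{supp}\bar p^{\text{E-M}}_\e(\cdot,x)$ as the smallest closed set outside of which the backward kernel places no mass, and verify that its complement is precisely $\{y:x\notin y+\e b(y)+\im D(y)\}$. Finally, unlike the forward case, here $y$ ranges through the nonlinear maps $b$ and $D$, and $\im D(y)$ may drop in dimension along limits (rank being only lower semi-continuous in $y$); the preimage set can thus fail to be closed, which is exactly why the closure is genuinely needed on the backward side.
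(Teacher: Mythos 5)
Your proposal is correct and follows essentially the same route as the paper: the paper's proof is likewise immediate from the Gaussian form of the Euler--Maruyama kernel \eqref{eq: E-M scheme definition}, identifying $\operatorname{supp}\bar p^{\text{E-M}}_\e(\cdot,x)$ as the closure of the set of points whose forward successor can be $x$. Your additional observations (the support of a degenerate Gaussian $\mathcal N(m,\Sigma)$ being $m+\im\Sigma$, and the closure being genuinely needed on the backward side because $\im D(y)$ can drop rank along limits) simply make explicit what the paper leaves as immediate.
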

Lemma \ref{lemma: support of transition kernels Euler} is immediate from \eqref{eq: E-M scheme definition} by noting that the support of $\bar p^{\text{E-M}}_\e\left( \cdot,x \right)$ is the closure of those elements whose successor by the forward process can be $x$.

Unpacking the result of Lemma \ref{lemma: support of transition kernels Euler} in the case of underdamped Langevin dynamics yields
\begin{align*}
   \operatorname{supp}p^{\text{E-M}}_\e\left( \cdot,x \right)=\overline{\{y : y_q = x_q +\e x_p\}}, \quad
   \operatorname{supp}\bar p^{\text{E-M}}_\e\left(\cdot,x \right) = \overline{\{y: y_q + \e y_p = x_q\}},
\end{align*}
where $x:=(x_q,x_p)$ respectively denote position and momenta. One can see that $p^{\text{E-M}}_\e\left( \cdot,x \right) \perp \bar p^{\text{E-M}}_\e\left( \cdot,x \right), \forall x\in \R^d$. From Definition \ref{def: EPR numerical scheme}, we deduce that the entropy production rate of E-M applied to the underdamped process is infinite for any time-step $\e >0$.


\begin{figure}[ht!]
    \centering
    \includegraphics[width= 0.45\textwidth]{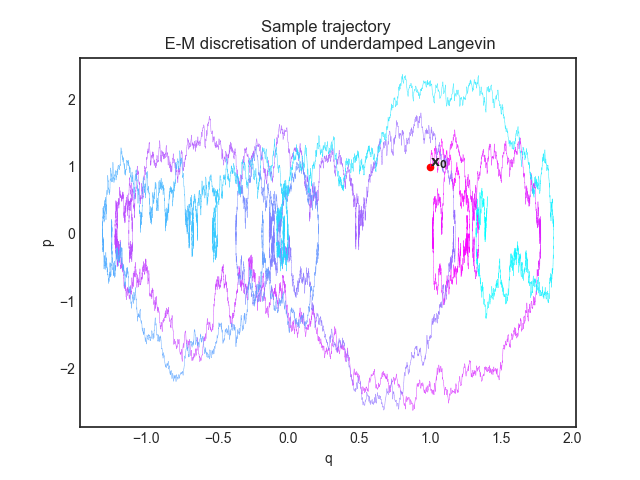}
    \includegraphics[width= 0.45\textwidth]{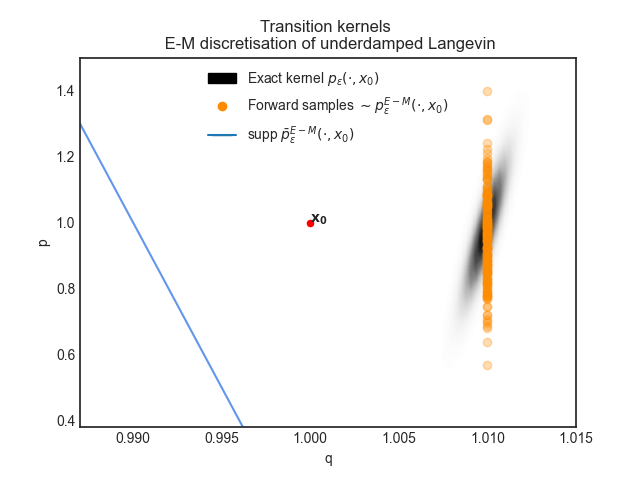}
    \caption[Euler-Maruyama simulation of underdamped Langevin dynamics]{\footnotesize\textbf{Euler-Maruyama simulation of underdamped Langevin dynamics.} This figure compares the Euler-Maruyama simulation of underdamped Langevin dynamics with the exact simulation available in Figure \ref{fig: exact underdamped}. The choice of parameters was the same: $V(q)=q^2/2, M=\gamma =1$. The upper left panel plots a sample trajectory of the numerical scheme. One observes that the numerical scheme is not confined to a prespecified region of space like the true process. The upper right panel plots samples of the numerical scheme after a time-step of $\e$ (in orange) given an initial condition at $x_0$ (in red). This is superimposed onto a heat map of the true transition kernel (in black). We see that samples from the numerical scheme are in the right region of space, but are confined to a subspace which is not aligned with the heat map of the true transition kernel. The support of the transition kernel of the time-reversed scheme is shown in blue. One sees that the supports of forward and reverse transition kernels are mutually singular, thus the entropy production of the numerical discretisation is infinite for any time-step, which differs from the true process which has finite entropy production for any positive time-step.
    }
    \label{fig: Euler underdamped}
\end{figure}

\subsubsection{BBK discretisation}

Contrariwise to Euler, the BBK integrator \cite{brungerStochasticBoundaryConditions1984,roussetFreeEnergyComputations2010}
is a splitting scheme that when applied to underdamped Langevin yields absolutely continuous transition kernels. The numerical scheme consists of three intermediate steps
$$
\begin{aligned}
&p_{i+\frac{1}{2}}=p_{i}-\nabla V\left(q_{i}\right) \frac{\e}{2}-\gamma M^{-1} p_{i} \frac{\e}{2}+\sqrt{2 \gamma \beta^{-1}} \omega_{i} \\
&q_{i+1}=q_{i}+M^{-1} p_{i+\frac{1}{2}} \e \\
&p_{i+1}=p_{i+\frac{1}{2}}-\nabla V\left(q_{i+1}\right) \frac{\e}{2}-\gamma M^{-1} p_{i+\frac{1}{2}} \frac{\e}{2}+\sqrt{2 \gamma \beta^{-1}} \omega_{i+\frac{1}{2}}
\end{aligned}
$$
with $\omega_{i}, \omega_{i+\frac{1}{2}} \sim N\left(0, \frac{\e}{2} \operatorname{Id}\right)$. Its stability and convergence properties were studied in \cite{brungerStochasticBoundaryConditions1984,roussetFreeEnergyComputations2010} and its ergodic properties in \cite{mattinglyConvergenceNumericalTimeAveraging2010,mattinglyErgodicitySDEsApproximations2002,talayStochasticHamiltonianSystems2002}.

It was shown in \cite[Theorem 4.3]{katsoulakisMeasuringIrreversibilityNumerical2014} that the BBK discretisation of the underdamped Langevin process is quasi time-reversible, so that
\begin{align}
\label{eq: ep BBK}
    e_p^{\mathrm{BBK}}\leq O(\e).
\end{align}

One sees (cf. Figures \ref{fig: Euler underdamped} and \ref{fig: BBK underdamped} vs Figure \ref{fig: exact underdamped}) that the BBK integrator better approximates the transition kernels than E-M does, however BBK still is largely inaccurate from the point of view of the entropy production rate as the simulation becomes reversible when the time-step tends to zero.

\begin{figure}[ht!]
    \centering
    \includegraphics[width= 0.45\textwidth]{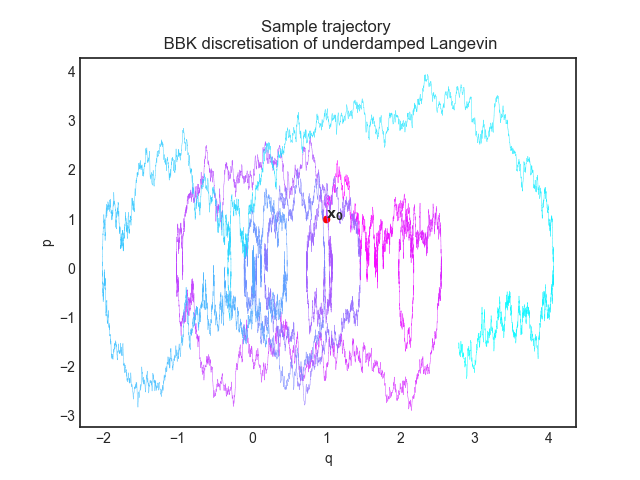}
    \includegraphics[width= 0.45\textwidth]{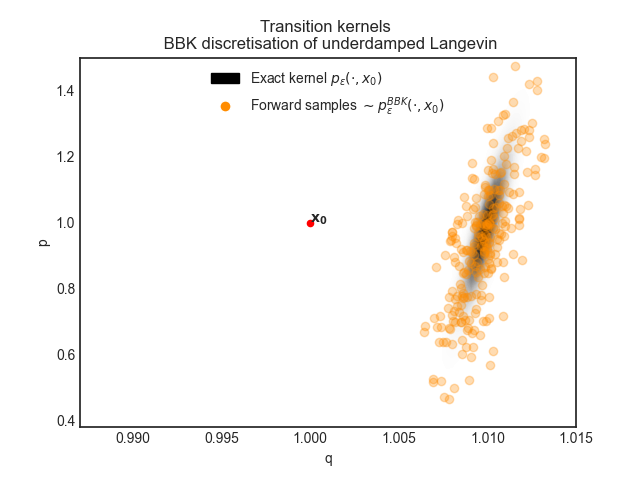}
    \includegraphics[width= 0.5\textwidth]{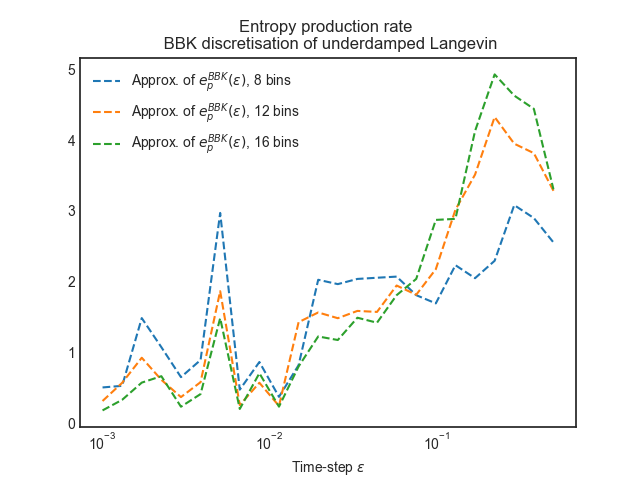}
    \caption[BBK simulation of underdamped Langevin dynamics]{\footnotesize\textbf{BBK simulation of underdamped Langevin dynamics.} This figure compares the BBK simulation of underdamped Langevin dynamics with the exact simulation available in Figure \ref{fig: exact underdamped}. The choice of parameters was the same: $V(q)=q^2/2, M=\gamma =1$. The upper left panel plots a sample trajectory of the numerical scheme. One observes that the numerical scheme is not confined to a prespecified region of space like the true process. The upper right panel plots samples of the numerical scheme after a time-step of $\e$ (in orange) given an initial condition at $x_0$ (in red). This is superimposed onto a heat map of the true transition kernel (in black). We see that samples from the numerical scheme fit the true transition kernel relatively well, but have a higher variance. The bottom panel estimates the entropy production rate of the numerical scheme for several choices of time-step $\e$. This is done by discretising the state-space into a number of bins and numerically evaluating the entropy production of the resulting Markov chain using samples, see Section \ref{sec: ep of numerical schemes theory} for details. The numerical values are consistent with the theoretical result \eqref{eq: ep BBK}.
    }
    \label{fig: BBK underdamped}
\end{figure}

\subsubsection{Summary}

In summary, the underdamped Langevin process has infinite entropy production rate in phase space\footnote{Recall that the $e_p$ we computed here is defined \textit{without} an additional momentum flip operator on the path space measure of the time-reversed process, i.e., \eqref{eq: def epr} and not \eqref{eq: rem def gen epr}. See also the discussion in Section \ref{sec: discussion generalised non-reversibility},}, but finite entropy production rate for any exact simulation with a positive time-step. When the potential is non-quadratic, the process is a non-linear diffusion that one usually cannot simulate exactly. To simulate it as accurately as possible, one should seek an approximating numerical scheme that has finite entropy production for any time-step, and whose entropy production tends to infinity for infinitesimally small time-steps.

Two well-known choices of numerical discretisation are the Euler-Maruyama and BBK schemes. By comparing their transition kernels with an exact simulation, we saw that the BBK scheme is a much better approximation to the true process than Euler-Maruyama. Analysis of the entropy production rate shows how these discretisations still far short in capturing important statistical properties of the process: the E-M discretisation has infinite entropy production for any time-step; while the BBK discretisation has finite entropy production for any time-step, and vanishing entropy production for infinitesimally small time-steps. Whenever possible, a good way to choose a time-step size for the BBK integrator might be matching its entropy production rate with that of an exact simulation. These results indicate that employing a BBK scheme with very small time-steps might be inadequate. Luckily, large step-sizes are usually preferred in practice.

In conclusion, the entropy production rate is a useful statistic of stochastic processes that can be used as a tool to devise accurate numerical schemes, particularly in a non-equilibrium statistical physics or sampling context where preserving the amount of time-irreversibility is important. Future development of numerical schemes should take entropy production into account; for example, in developing numerical schemes for underdamped Langevin, one should seek a finite entropy production rate for any positive time-step, which tends to infinity when time-steps become infinitesimally small. Other numerical schemes should be analysed in future work, such as those based on the lexicon for the approximation of the underdamped process developed by Leimkühler and Matthews \cite[p. 269 \& 271]{leimkuhlerMolecularDynamicsDeterministic2015}.

\section{Discussion}
\label{sec: discussion}

Briefly, we unpack a couple of observations and possible extensions of this work.

\subsection{$e_p$ and sampling efficiency}

A well-known criterion for efficient sampling is time-irreversibility~\cite{hwangAcceleratingDiffusions2005,rey-belletIrreversibleLangevinSamplers2015,duncanVarianceReductionUsing2016,barpGeometricMethodsSampling2022}. Intuitively, non-reversible processes backtrack less often and thus furnish more diverse samples~\cite{nealImprovingAsymptoticVariance2004}. Furthermore, the time-irreversible part of the drift flows along the contours of the stationary probability density which yields mixing and accelerates convergence to the target measure. It is well known that removing non-reversibility worsens the spectral gap and the asymptotic variance of the MCMC estimator~\cite{rey-belletIrreversibleLangevinSamplers2015,duncanVarianceReductionUsing2016,hwangAcceleratingDiffusions2005}, which are two main indicators of the speed of convergence to stationary state \cite{barpGeometricMethodsSampling2022}. Thus efficient samplers at non-equilibrium steady-state have positive entropy production.

In elliptic linear diffusions, one can construct the optimal time-irreversible drift to optimise the spectral gap~\cite{lelievreOptimalNonreversibleLinear2013,wuAttainingOptimalGaussian2014}
or the asymptotic variance~\cite{duncanVarianceReductionUsing2016}. 
This indicates that one cannot optimise elliptic samplers by simply increasing their entropy production at steady-state without any other constraints, as, we recall, $e_p$ is a quadratic form of the strength of the time-irreversible drift (Figure \ref{fig: EPR as a function of gamma}).


Beyond this, the entropy production rate of general diffusions (Theorems \ref{thm: epr regular case simple}, \ref{thm: epr regular case general}) 
bears a formal resemblance to the Donsker-Varadhan functional \cite[Theorem 2.2]{rey-belletIrreversibleLangevinSamplers2015}, %
from which the asymptotic variance of MCMC estimators is derived \cite{rey-belletIrreversibleLangevinSamplers2015}. It is entirely possible that one might be able to relate the non-stationary entropy production rate (\eqref{eq: time dependent EPR} or \cite[eq. 3.19]{jiangMathematicalTheoryNonequilibrium2004}) 
to the Donsker-Varadhan functional, and thus give a more complete characterisation of sampling efficiency in terms of entropy production.

Many diffusion models of efficient sampling (the underdamped \eqref{eq: underdamped Langevin dynamics} and generalised \cite[eq. 8.33]{pavliotisStochasticProcessesApplications2014} Langevin dynamics, the fastest converging linear diffusion \cite{guillinOptimalLinearDrift2021}), and stochastic optimisation (stochastic gradient descent in deep neural networks \cite{chaudhariStochasticGradientDescent2018}) are not elliptic; that is, they are driven by less Brownian drivers than there are dimensions to their phase space. In particular, these processes have their forward and backward path space measures which are mutually singular, and infinite entropy production\footnote{\cite[Section 5]{chaudhariStochasticGradientDescent2018} shows that stochastic gradient descent is out of equilibrium. Furthermore, it shows empirically that the rank of the diffusion matrix is about 1\% of its dimension in deep neural networks. The sparsity of the noise with respect to the highly out-of-equilibrium behaviour they observe conjectures $b\irr(x) \not \in \operatorname{Range} \sigma(x)$ and thus, mutual singularity of forward and backward path space measures.}. In light of this, we conjecture that mutual singularity of the forward and backward path space measures is an important facet of sampling efficiency (provided the process is ergodic). Mutual singularity apparently exacerbates the mixing effect that time-irreversibility introduces in the elliptic case. Heuristically, if some paths can be taken by the forward process and not by the backward process, these trajectories cannot be reversed, thus the process is constantly forced to visit new regions of phase space, which contributes to the (non-reversible) convergence to steady-state. 

If the above intuition holds, a useful statistic of sampling efficiency might be the probability that the forward process takes paths that cannot be taken by the backward process. By the Lebesgue decomposition theorem we can decompose the forward path space measure $\p$ into $\p_{\textrm{reg}}+\p_{\textrm{sing}}$ such that $\p_{\textrm{reg}} \ll \bar \p$ and $\p_{\textrm{sing}} \perp \bar \p$.  
 This statistic is the non-negative real number 
 \begin{align*}
    \p(\{\gamma \in C([0,T], \R^d): d\p/d \bar \p(\gamma)= +\infty\})= \p_{\textrm{sing}}\left(C([0,T], \R^d)\right),
 \end{align*}
where $d\p/d \bar \p$ is the Lebesgue derivative between forward and backward path space measures. 
Note that the linear diffusion that converges fastest to steady-state maximises the latter (under the constraint that the process remains ergodic) since it has only one Brownian driver \cite{guillinOptimalLinearDrift2021}. However, this statistic does not tell us all since the direction of the Brownian driver with respect to the drift and the stationary density is important to determine sampling efficiency. Yet, these observations indicate that employing diffusions with less Brownian drivers might be an advantage for sampling and optimisation (provided ergodicity is maintained). 
A careful investigation of these relationships is left to future work.

\subsection{Generalised non-reversibility and entropy production rate}
\label{sec: discussion generalised non-reversibility}

Many diffusions studied in statistical physics are not time-reversible but they are generalised reversible; that is, they are time-reversible up to a one-to-one transformation $\theta$ of phase-space which leaves the stationary measure invariant \cite[Section 5.1]{duongNonreversibleProcessesGENERIC2021}, \cite[eq. 2.32]{roussetFreeEnergyComputations2010}. For example, the underdamped langevin equation is generalised reversible---it is reversible up to momentum reversal (Example \ref{eg: Time reversal of underdamped Langevin dynamics}); the generalised Langevin equation is also generalised reversible. 


The entropy production, as defined in Definition \ref{def: epr}, measures time-irreversibility as opposed to generalised non-reversibility. However, as pointed out in Remark \ref{rem: physical relevance}, the physically meaningful definition of entropy production rate sometimes comprises additional operators applied to the path-space measure of the time-reversed process. This modified notion of $e_p$, which we refer to as generalised entropy production, usually takes the form of

\begin{equation}
\label{eq: def gen epr}
    e_p^{\mathrm{gen}, \theta} := \lim _{\e \downarrow 0} \frac{1}{\e} \H\left[\p_{[0, \e]}, \theta_\#\bar \p_{[0, \e]}\right],
\end{equation}
where $\theta_\#$ is the pushforward operator associated to an involution of phase-space $\theta$ that leaves the stationary distribution invariant. The generalised entropy production rate measures the generalised non-reversibility of the process; that is, the extent to which the process is time-irreversible up to the one-to-one transformation $\theta$.
Of course, generalised entropy production reduces to entropy production, as defined in Definition \ref{def: epr}, when $\theta \equiv \operatorname{Id}$.

Since generalised entropy production can sometimes be more physically meaningful, we spend the rest of this section computing it in a couple of examples.

It seems to be a general consensus in statistical physics that the physically relevant notion of entropy production for the underdamped Langevin process is the generalised entropy production when $\theta$ is the momentum reversal \cite{vanvuUncertaintyRelationsUnderdamped2019,eckmannEntropyProductionNonlinear1999,spinneyNonequilibriumThermodynamicsStochastic2012,luposchainskyEntropyProductionContinuous2013}. It is then a by-product of Example \eqref{eg: Time reversal of underdamped Langevin dynamics} that underdamped Langevin dynamics has zero (generalised) entropy production $e_p^{\mathrm{gen}, \theta}=0$, which contrasts with the infinite entropy production one obtains in the non-generalised case (Section \ref{sec: ep singularity}) when one sets $\theta \equiv \operatorname{Id}$.

Beyond this, generalised entropy production could be a useful construct to quantify how far certain diffusion processes are from being generalised reversible. 
For example we can quantify to what extent certain time-irreversible perturbations of underdamped Langevin dynamics are far from being generalised reversible up to momentum reversal.

\begin{example}[$e_p^{\mathrm{gen}, \theta}$ of perturbed underdamped Langevin dynamics]
Consider the following perturbations of underdamped Langevin dynamics \cite[eq. 8]{duncanUsingPerturbedUnderdamped2017}
\begin{equation}
\label{eq: perturbation of underdamped}
    \begin{cases}
        \mathrm{d} q_t =M^{-1} p_t \mathrm{~d} t- Q_1 \nabla V\left(q_t\right) \mathrm{d} t \\
\mathrm{d} p_t =-\nabla V\left(q_t\right) \mathrm{d} t- Q_2 M^{-1} p_t \mathrm{~d} t-\gamma M^{-1} p_t \mathrm{~d} t+\sqrt{2 \gamma \beta^{-1}} \mathrm{d} w_t,
    \end{cases}
\end{equation}
where $Q_1, Q_2 \in \mathbb{R}^{d \times d}$ are constant antisymmetric matrices. By inspection this equation has a Helmholtz decomposition that is similar to underdamped Langevin dynamics (cf. Example \ref{eg: helmholtz decomp Langevin})
\begin{align*}
    &b_{\mathrm{rev}}(q,p)= D \nabla \log \rho(q,p) , \quad b_{\mathrm{irr}}(q,p)= Q \nabla \log \rho(q,p)\\
    \nabla \log  \rho(q,p)&= -\beta \begin{bmatrix}
         \nabla V(q)\\
         M^{-1}p
    \end{bmatrix},
    \quad D=\begin{bmatrix}
        0 &0\\
        0& \gamma \beta^{-1}\id_n
    \end{bmatrix}
 , \quad Q=\beta^{-1}\begin{bmatrix}
        Q_1 &-\id_n\\
        \id_n & Q_2
    \end{bmatrix}.
\end{align*}
The time-reversed process solves the following SDE (Section \ref{sec: helmholtz decomposition})
\begin{align*}
\begin{cases}
\d \bar q_{t} = -M^{-1}\bar p_{t} \d t+ Q_1 \nabla V\left(\bar q_t\right)\mathrm{d} t \\
\d \bar p_{t} =\nabla V\left(\bar q_t\right) \mathrm{d} t+ Q_2 M^{-1} \bar p_t \mathrm{~d} t-\gamma M^{-1} \bar p_{t} \d t+\sqrt{2 \gamma \beta^{-1}} \d  w_{t}.
\end{cases}
\end{align*}
Define $\theta(q,p)=(q,-p)$ to be the momentum reversal transformation of phase space (that leaves underdamped Langevin dynamics invariant as shown in Example \ref{eg: Time reversal of underdamped Langevin dynamics}). Letting $\hat p_{t} = -\bar p_{t}$, 
the time-reversed momentum-flipped equation looks like
\begin{align}
\label{eq: time-reversed momentum flipped}
\begin{cases}
\d \bar q_{t} = M^{-1}\hat p_{t} \d t+ Q_1 \nabla V\left(\bar q_t\right)\mathrm{d} t \\
\d \hat p_{t} =-\nabla V\left(\bar q_t\right) \mathrm{d} t+ Q_2 M^{-1} \hat p_t \mathrm{~d} t-\gamma M^{-1} \hat p_{t} \d t+\sqrt{2 \gamma \beta^{-1}} \d \hat w_{t}.
\end{cases}
\end{align}
Denote by $b^{\mathrm{gen}, \theta}_{\mathrm{irr}}$ the vector field whose sign changes after successively applying these two transformations:
\begin{align*}
   b^{\mathrm{gen}, \theta}_{\mathrm{irr}}
        (q,p)
  = \begin{bmatrix}
        -Q_1 \nabla V\left( q\right)\\
        - Q_2 M^{-1} p
    \end{bmatrix}.
\end{align*}
It follows that the time-reversed, momentum-flipped equation \eqref{eq: time-reversed momentum flipped} does not induce the same path space measure as the initial equation \eqref{eq: perturbation of underdamped} unless $Q_1=Q_2=0$. To see this,  
we follow the proofs of Theorems \ref{thm: epr singular} and \ref{thm: epr regular case simple} to compute the generalised entropy production rate
\begin{align*}
     Q_1 \neq 0 \Rightarrow \p \perp \theta_\# \bar \p \Rightarrow e_p^{\mathrm{gen}, \theta}&= +\infty,\\
    Q_1 =0 \Rightarrow \p \sim \theta_\# \bar\p \Rightarrow e_p^{\mathrm{gen}, \theta}&= \iint_{\R^n}b^{\mathrm{gen}, \theta}_{\mathrm{irr}}\cdot  D^- b^{\mathrm{gen}, \theta}_{\mathrm{irr}} \rho(q,p) \,\d p \,\d q\\
    &= \gamma^{-1}\beta \int_{\R^n} (Q_2 M^{-1}p)^2\rho(p)\d p \\
    &=-\gamma^{-1}\beta \tr\left(Q_2 M^{-1}Q_2\right)< +\infty.
\end{align*}
The last line equality follows from a standard result about expectations of bilinear forms under Gaussian distributions, since $\rho(p)$ is Gaussian with covariance matrix $M$. As usual, the generalised entropy production rate is a quadratic form of the (generalised) irreversible drift.
\end{example}

\subsection{Geometric interpretation of results}
\label{sec: geom}

Our main results concerning the value of entropy production have a straightforward geometric interpretation. The Stratonovich interpretation of the SDE 
\begin{align*}
	dx_t = b^s(x_t)+ \sigma(x_t)\circ dw_t
\end{align*}
is the natural one to consider in a geometric context, when looking at the directions of the drift $b^s$ and volatility vector fields $\sigma_{\cdot i },i=1, \ldots, m$ (i.e., the columns of the volatility matrix field).

Recall from Remark \ref{rem: stratonovich Helmholtz} that the Stratonovich SDE also admits a Helmholtz decomposition $b^s= b^s\rev + b^s\irr$ with $b^s\irr= b\irr$, so that 
\begin{equation}
\label{eq: sumary}
b^s(x) \in \im \sigma(x) \iff b^s\irr (x) \in \im \sigma(x)\iff \bar b^s(x) \in \im \sigma(x) \text{ for any } x\in \operatorname{supp}\mu,
\end{equation}
where $\bar b^s$ is the drift of the time-reversed Stratonovich SDE. In particular, time-reversal is a transformation that sends $b^s$ to $\bar b^s$, $b\irr^s$ to $-b\irr^s$, or, equivalently, adds $-2 b\irr^s$ to the drift.

Our main results can be summarised in a nutshell:
\begin{equation}
\label{eq: main results stratonovich interpretation summary}
\begin{split}
  	\mu\left( \left\{x \in \R^d: b^s(x)\in \im \sigma(x) \right\}\right)= 1 &\Rightarrow e_p = \intr b\irr^s\cdot  D^- b\irr^s \d \mu \quad \text{(see Theorem \ref{thm: epr regular case simple} or \ref{thm: epr regular case general} for details)},\\
	\mu\left( \left\{x \in \R^d: b^s(x)\in \im \sigma(x) \right\}\right)< 1 &\Rightarrow e_p = +\infty \quad \text{(see Theorem \ref{thm: epr singular} for details)}.  
\end{split}
\end{equation}
We derived our main results using the Itô interpretation of an SDE because this allowed us to make more general statements, notably in the context of the general existence and uniqueness theorem of strong solutions to Itô SDEs; it turns out, however, that these results are more naturally interpreted in the Stratonovich context.

Consider the case where there is noise in the direction of the vector field $b^s$, (almost every-) where the process is; in other words, assume that $\mu\left( \left\{x \in \R^d: b^s (x)\in \im \sigma(x) \right\}\right)= 1$. Consider the process at any point $x\in \supp \mu$. In virtue of \eqref{eq: sumary}, the drifts of the forward and time reversed processes both live in $\im \sigma(x)$, the subset of the tangent space that is spanned by the volatility vector fields. Since the driving fluctuations are Gaussian on $\im \sigma(x)$, the time-reversal transformation will be reversed by the random fluctuations with positive probability. Thus, the forward and time-reversed Markov transition kernels (for an infinitesimally small time-step) have the same support---they are mutually equivalent. Under sufficient regularity, made explicit in Theorems \ref{thm: epr regular case simple} or \ref{thm: epr regular case general}, their relative entropy is finite. The $e_p$ is the relative entropy between such Markov kernels on an infinitesimally small time-step (Proposition \ref{prop: epr transition kernels}), so it too will be finite.

On the other hand, if there exists $x \in \supp \mu$ such that there is no noise in the direction of the vector field $b^s$, that is $b^s (x)\not \in \im \sigma(x)$, then the direction of the forward and time-reversed dynamics in an infinitesimal time-step lie on different tangent spaces, $b^s (x) + \im \sigma(x)$ and $  \bar b^s (x) + \im \sigma(x)$, respectively. This means that the forward and time-reversed transition kernels (for an infinitesimally small time-step) are mutually singular and their relative entropy is infinite; thus, the $e_p$ is also infinite.

In particular, it should be straightforward to extend these observations and calculations to diffusions on manifolds.

\section{Addendum: Proofs for Chapter 1}

Here we provide proofs supporting Chapter 1.

\subsection{The $e_p$ of stationary Markov processes}
\label{app: proofs ep stationary Markov processes}

\subsubsection{$e_p$ in terms of path space measures with deterministic initial condition}
\label{app: aggregating local ep}

We prove Proposition \ref{prop: aggregating local ep}:
\begin{proof}
The proof is straightforward
\begin{equation*}
\begin{split}
     e_p 
     &= \frac 1 t \H\left[\p_{[0,t]}\mid\bar \p_{[0,t]}\right] =  \frac 1 t\E_{x_\bullet \sim \p}\left [\log \frac{d \p_{[0,t]}}{d \bar \p_{[0,t]}}(x_\bullet)\right] \\
     &=  \frac 1 t\E_{x \sim \mu}\left[ \E_{x_\bullet \sim \p^{x}_{[0,t]}}\left[\log \frac{d \p_{[0,t]}}{d \bar \p_{[0,t]}}(x_\bullet)\right]\right]\\
     &=  \frac 1 t\E_{x \sim \mu}\left[ \E_{x_\bullet \sim \p^{x}_{[0,t]}}\left[\log \frac{d \p^{x}_{[0,t]}}{d \bar \p^{x}_{[0,t]}}(x_\bullet)\right]\right] =  \frac 1 t\E_{x \sim \mu}\left[\H\left[\p^{x}_{[0,t]}\mid\bar \p^{x}_{[0,t]}\right]\right].
\end{split}
\end{equation*}
\end{proof}

\subsubsection{$e_p$ in terms of transition kernels}
\label{app: epr transition kernels}

We prove Proposition \ref{prop: epr transition kernels}:
\begin{proof}
By Proposition \ref{prop: aggregating local ep},
\begin{equation*}
\begin{split}
     e_p 
     &= \lim_{\e \downarrow 0}\frac 1 \e \E_{x \sim \mu}\left[ \E_{x_\bullet \sim \p^{x}_{[0,\e]}}\left[\log \frac{d \p^{x}_{[0,\e ]}}{d \bar \p^{x}_{[0,\e]}}(x_\bullet)\right]\right]\\
     &= \lim_{\e \downarrow 0} \frac{1}{\e} \E_{x \sim \mu}\left[\E_{y \sim p_\e(\cdot,x)}\left[ \log \frac{d p_\e(\cdot,x)}{d \bar p_\e(\cdot,x)}(y)\right]\right]\\
     &= \lim_{\e \downarrow 0} \frac{1}{\e} \E_{x\sim \mu}\left[\H\left[p_\e(\cdot,x)\mid\bar p_\e(\cdot,x)\right]\right].
\end{split}
\end{equation*}
\end{proof}

\subsection{Time-reversal of stationary diffusions}

\subsubsection{Conditions for the reversibility of the diffusion property}
\label{app: reversibility of the diffusion property}

We prove Lemma \ref{lemma: reversibility of the diffusion property}:
\begin{proof}
Recall the following facts:
\begin{itemize}
    \item $(x_t)_{t\in [0,T]}$ is a Markov diffusion process. Its generator is an unbounded, linear operator given by
\begin{equation}
\label{eq: generator of diffusion process}
\begin{split}
    \L : C_c^\infty(\R^d) &\subset  \dom \L \subset L^p_\mu(\R^d) \to L^p_\mu(\R^d) , \quad 1 \leq p \leq \infty, \quad \L f = b \cdot \nabla f + D \nabla \cdot \nabla f.
\end{split}
\end{equation}
    \item The time-reversal of a Markov process is also a Markov process. Let $\bar \L$ be the generator of the time-reversed process $(\bar x_t)_{t\in [0,T]}$. It is known that $\bar \L$ is the adjoint of $\L$. In other words, we have the identity
\begin{align}
\label{eq: adjoint generator}
    \intr f \L g \:\d \mu = \intr g \bar \L f \: \d \mu, \quad \forall f \in \dom \bar \L, g \in \dom \L,
\end{align}
where
    $\dom \bar \L =\left\{f \in L^1_\mu(\R^d) \mid  \exists h \in L^1_\mu(\R^d),  \forall g\in \dom \L: \intr f \L g \d \mu =\intr h g \d \mu  \right\}$. 
     This follows from the fact that the Markov semigroup of the time-reversed process is the adjoint semigroup \cite[p. 113]{jiangMathematicalTheoryNonequilibrium2004}, 
    and thus the infinitesimal generator is the adjoint generator
    \cite{yosidaFunctionalAnalysis1995,pazySemigroupsLinearOperators2011}, \cite[Thm 4.3.2]{jiangMathematicalTheoryNonequilibrium2004}. 
    \item $L^1\loc$-functions define distributions, and hence admit distributional derivatives (which need not be functions).
\end{itemize}

We identify the generator of the time-reversed process by computing the adjoint of the generator. In the following, all integrals are with respect to the $d$-dimensional Lebesgue measure. Let $f,g \in C_c^\infty(\R^d)$.  Noting that $f\rho b \cdot \nabla g, f\rho D\nabla \cdot \nabla g \in L^1(\R^d)$, 
we have 
\begin{align*}
    \intr f \L g \rho = \intr f \rho b \cdot \nabla g +\intr f\rho D\nabla \cdot \nabla g.
\end{align*}
On the one hand, noting that $f\rho b ,\rho b \in L^1\loc(\R^d, \R^{d})$, we have
\begin{align*}
    \intr f \rho b \cdot \nabla g &= - \intr g \nabla \cdot (f\rho b) 
    =-\intr g \left( \rho b \cdot \nabla f + f  \nabla \cdot (\rho b) \right)\\
    &= -\intr g\left( \rho b \cdot \nabla f + f \nabla \cdot \nabla \cdot (\rho D) \right),
\end{align*}
where the last equality follows from the stationary Fokker-Planck equation. 
(Recall that local boundedness of coefficients $b, \sigma$, and Itô's formula imply that the stationary density $\rho$ satisfies $\nabla \cdot (-b \rho + \nabla \cdot (D\rho))=0$
where the equality is in a distributional sense).

On the other hand, noting that $f\rho D, \rho D \in L^1\loc(\R^d, \R^{d \times d})$
, we have
\begin{align*}
    \intr f\rho D\nabla \cdot \nabla g &= \intr f\rho D\cdot \nabla \cdot \nabla g= \intr g \nabla \cdot \nabla \cdot (f \rho D)\\&= \intr g \nabla \cdot \left(\rho D \nabla f + f\nabla \cdot (\rho D) \right)\\
    &=\intr g \left( 2\nabla \cdot (\rho D)\cdot \nabla f + \rho D \nabla \cdot \nabla f +  f \nabla \cdot\nabla \cdot (\rho D) \right).
\end{align*}

Finally, summing the previous two equations yields:
\begin{align*}
  \intr f \L g \rho =\intr g \left(-\rho b \cdot \nabla f + 2 \nabla \cdot (D\rho )\cdot \nabla f+ \rho D\nabla \cdot \nabla f \right)  = \intr g \bar \L f \rho.
\end{align*}
And thus, the generator of the time-reversed process satisfies $\rho \bar \L f  = -\rho b \cdot \nabla f + 2 \nabla \cdot (D\rho )\cdot \nabla f+ \rho D\nabla \cdot \nabla f$ for all $ f\in C_c^\infty(\R^d)$. The time-reversed process is a diffusion if its generator is a second order differential operator with no constant part. This is the case here, except for the fact that the generator outputs distributions as opposed to functions. For the generator to be a diffusion operator we need to assume that the distributional derivative $\nabla \cdot (D\rho )$ is indeed a function (which is then necessarily in $L^1\loc(\R^d, \R^d)$). Thus, the following are equivalent:
\begin{itemize}
    \item $\nabla \cdot (D \rho) \in L^1\loc(\R^d, \R^d)$,
    \item $\bar \L f \in L^1_\mu(\R^d)$ for any $ f\in C_c^\infty(\R^d)$, where $\bar \L f= -b \cdot \nabla f + 2 \rho^{-1}\nabla \cdot (D\rho )\cdot \nabla f+ D\nabla \cdot \nabla f$,
    \item $(\bar x_t)_{t\in [0,T]}$ is a Markov diffusion process. 
\end{itemize}
\end{proof}

\subsubsection{The time-reversed diffusion}
\label{app: time reversed diffusion}

We prove Theorem \ref{thm: time reversal of diffusions}.

\begin{proof}
Since $(\bar x_t)_{t\in [0,T]}$ is a Markov diffusion process with generator $\bar \L$, we have shown that its drift and diffusion are indeed $\bar b, D$, in the proof of Lemma \ref{lemma: reversibility of the diffusion property}.

To show that any such diffusion process induces the path space measure of the time-reversed process, it suffices to show that the martingale problem associated to $(\bar \L, \rho)$ is well-posed. First note that, by Assumption \ref{ass: coefficients time reversal}, the Itô SDE \eqref{eq: Ito SDE} has a unique strong solution. Therefore it also has a unique weak solution. 
Therefore, $(x_t)_{t\in [0,T]}$ is the unique solution to the martingale problem associated to the generator $\L= b \cdot \nabla + D \nabla \cdot \nabla $ 
\cite[Theorem 1.1]{kurtzEquivalenceStochasticEquations2011}. 
In other words, the martingale problem associated to $(\L, \rho)$ is well-posed. It remains to show that there is a one-to-one correspondence between stationary solutions to the martingale problem associated to $\L$ and $\bar \L$. 

Consider Markov processes $(y_t)_{t\in [0,T]}$, $(\bar y_t)_{t\in [0,T]}$, $\bar y_t= y_{T-t}$ stationary at the density $\rho$. We show that $(\bar y_t)_{t\in [0,T]}$ solves the martingale problem wrt $\bar \L$ if and only if $(y_t)_{t\in [0,T]}$ solves the martingale problem wrt $\L$.

\begin{itemize}
    \item $(\bar y_t)_{t\in [0,T]}$ solves the martingale problem wrt $\bar \L$ if and only if for arbitrary $0\leq s\leq t \leq T$, $f,g \in C_c^\infty(\R^d)$
\begin{equation}
\label{eq: equivalent formulations of martingale problem}
\begin{split}
    &\E\left[f(\bar y_t)- \int_0^t\bar \L f(\bar y_r) \d r \mid \bar y_\theta, 0 \leq \theta \leq s\right] = f(\bar y_s)- \int_0^s \bar \L f(x_r) \d r\\
   \stackrel{\text{Markov}}{\iff} &\E\left[f(\bar y_t)- \int_0^t\bar \L f(\bar y_r) \d r \mid \bar y_s\right] = f(\bar y_s)- \int_0^s \bar \L f(x_r) \d r\\
    \iff & \E\left[f(\bar y_t)-f(\bar y_s)- \int_s^t\bar \L f(\bar y_r) \d r \mid \bar y_s\right]=0\\
    \iff & \E\left[\left(f(\bar y_t)-f(\bar y_s)- \int_s^t\bar \L f(\bar y_r) \d r \right) g(\bar y_s)\right]=0.
\end{split}
\end{equation}
If we make the change of variable $t \leftarrow T-s, s \leftarrow T-t$, so that $0 \leq s \leq t \leq T$, this is equivalent to:
\begin{align*}
    \iff & \E\left[\left(f(y_s)-f( y_t)- \int_{T-t}^{T-s} \bar \L f( y_{T-r}) \d r \right) g(y_t)\right]=0\\
    \iff & \E\left[\left(f(y_s)-f( y_t)- \int_{s}^t \bar \L f(y_r) \d r \right) g(y_t)\right]=0
\end{align*}
\item Repeating the equivalences in \eqref{eq: equivalent formulations of martingale problem}, $(y_t)_{t\in [0,T]}$ solves the martingale problem wrt $\L$ if and only if for arbitrary $0\leq s\leq t \leq T$, $f,g \in C_c^\infty(\R^d)$
\begin{align*}
    \E\left[\left(g(y_t)-g(y_s)- \int_s^t\L g( y_r) \d r \right) f( y_s)\right]=0.
\end{align*}
\item Thus, it suffices to show that the two last expressions are equal, i.e.,
\begin{align*}
    \E\left[\left(f(y_s)-f( y_t)- \int_{s}^t \bar \L f(y_r) \d r \right) g(y_t)\right]= \E\left[\left(g(y_t)-g(y_s)- \int_s^t\L g( y_r) \d r \right) f( y_s)\right]
\end{align*}

By stationarity, we have 
\begin{align*}
    \E\left[\left(f(y_s)-f( y_t)\right) g(y_t)\right]&=\E\left[f(y_s)g(y_t)-f( y_t)g(y_t)\right]\\
    =\E\left[f(y_s)g(y_t)-f( y_s)g(y_s)\right]&=\E\left[\left(g(y_t)-g(y_s)\right) f( y_s)\right].
\end{align*}
Thus, it remains to show that
\begin{align*}
    \E\left[\int_{s}^t g(y_t) \bar \L f(y_r) \d r \right]= \E\left[\int_s^tf( y_s)\L g( y_r) \d r \right]
\end{align*}
We proceed to do this. On the one hand: 
\begin{align*}
    \E\left[\int_s^tf( y_s)\L g( y_r) \d r \right] &=\int_s^t \E\left[f( y_s)\L g( y_r) \right]\d r \\
    &= \int_s^t \E \left[\E\left[f( y_s)\L g( y_r) \mid y_s\right]\right]\d r\\
    &=\int_s^t \E \left[f( y_s)\E\left[\L g( y_r) \mid y_s\right]\right]\d r\\
    &= \int_s^t \E \left[f( y_s)\operatorname{P}_{r-s}\L g(y_s)\right]\d r\\
    &=\int_s^t \intr f( y)\operatorname{P}_{r-s}\L g(y) \rho(y) \d y\d r \quad \text{(stationarity)}\\
    &= \intr f( y)\int_s^t \operatorname{P}_{r-s}\L g(y) \d r \rho(y) \d y \\
    &=\intr f(y) \left(\operatorname{P}_{t-s} -\operatorname{P}_0\right)g(y)\rho(y) \d y \quad \text{($\partial_t \operatorname{P}_{t}= \operatorname{P}_{t} \L
    $)}\\
    &=\intr g(y) \left(\bar{\operatorname{P}}_{t-s} -\bar{\operatorname{P}}_0\right)f(y)\rho(y) \d y
\end{align*}

On the other hand: 
\begin{align*}
    \E\left[\int_s^t g( y_t)\bar \L f( y_r) \d r \right] &= \int_s^t \E\left[g( y_t)\bar \L g( y_r) \right]\d r\\
    &= \int_s^t \E \left[\E\left[g( y_t)\bar \L f( y_r) \mid y_t\right]\right]\d r\\
    &= \int_s^t \E \left[g( y_t)\E\left[\bar \L f( y_r) \mid y_t\right]\right]\d r\\
    &= \int_s^t \E \left[g( y_t)\bar {\operatorname{P}}_{t-r}\bar \L f( y_t)\right]\d r\\
    &= \int_s^t \intr g( y)\bar {\operatorname{P}}_{t-r}\bar \L f( y)\rho(y) \d y\d r \quad \text{(stationarity)}\\
    &=  \intr g( y) \int_s^t\bar {\operatorname{P}}_{t-r}\bar \L f( y) \d r\rho(y) \d y\\
    &= \intr g( y) \left(\bar{\operatorname{P}}_{t-s}-\bar{\operatorname{P}}_0\right)f(y) \rho(y) \d y \quad \text{($\partial_t \bar{\operatorname{P}}_{t}= \bar{\operatorname{P}}_{t} \bar \L
    $)}
\end{align*}
\end{itemize}
This shows the one-to-one correspondence and completes the proof.
\end{proof}

\subsubsection{The Helmholtz decomposition}
\label{app: helmholtz decomposition}

We prove Proposition \ref{prop: helmholtz decomposition}.


\begin{proof}
\begin{itemize}
    \item["$\Rightarrow$"]
    We define the time-reversible and time-irreversible parts of the drift
    \begin{align*}
        b_{\mathrm{rev}} :=\frac {b + \bar b}{2}, \quad  b_{\mathrm{irr}} :=\frac {b - \bar b}{2}.
    \end{align*}
    We now show that the have the predicted functional form. For $x$ such that $\rho(x) =0$, $b_{\mathrm{rev}} =\left(b + \bar b\right)/{2}=0$. For $x$ such that $\rho(x) >0$ 
    \begin{align}
    \label{eq: time reversible drift}
        b_{\mathrm{rev}} &=\frac {b + \bar b}{2}= \rho^{-1}\nabla \cdot (D \rho) = \rho^{-1}D \nabla \rho +\rho^{-1}\rho \nabla \cdot D = D \nabla \log \rho + \nabla \cdot D.
    \end{align}
    For the time-irreversible drift, first note that the stationary density $\rho$ solves the stationary Fokker-Planck equation \cite{pavliotisStochasticProcessesApplications2014,riskenFokkerPlanckEquationMethods1996}
    \begin{equation*}
        \nabla \cdot (-b\rho + \nabla \cdot (D \rho))=0.
    \end{equation*}
    Decomposing the drift into time-reversible and time-irreversible parts, from \eqref{eq: time reversible drift}
    \begin{align*}
        -b\rho + \nabla \cdot (D \rho)= -b_{\mathrm{rev}}\rho- b_{\mathrm{irr}}\rho + \nabla \cdot (D \rho)=- b_{\mathrm{irr}}\rho,
    \end{align*}
    we obtain that the time-irreversible part produces divergence-free (i.e., conservative) flow w.r.t. the steady-state density
    \begin{equation*}
        \nabla \cdot (b_{\mathrm{irr}}\rho)=0.
    \end{equation*}
    \item["$\Leftarrow$"] From \eqref{eq: time reversible drift} the time-reversible part of the drift satisfies the following identity 
    \begin{align}
        b_{\mathrm{rev}}\rho &= \nabla \cdot (D \rho).
    \end{align}
    
    It follows that the density $\rho$ solves the stationary Fokker-Planck equation
    \begin{align*}
        \nabla \cdot (-b\rho + \nabla \cdot (D \rho)) &= \nabla \cdot (-b_{\mathrm{rev}}\rho-b_{\mathrm{irr}}\rho + \nabla \cdot (D \rho))= \nabla \cdot (-b_{\mathrm{irr}}\rho) =0.
    \end{align*}
\end{itemize}
\end{proof}


We prove Proposition \ref{prop: characterisation of irreversible drift}:

\begin{proof}
\begin{itemize}
    \item["$\Rightarrow$"] Recall that any smooth divergence-free vector field is the divergence of a smooth antisymmetric matrix field $A=-A^\top$ \cite{RealAnalysisEvery,yangBivectorialNonequilibriumThermodynamics2021,grahamCovariantFormulationNonequilibrium1977, eyinkHydrodynamicsFluctuationsOutside1996}
    \begin{equation*}
        b_{\mathrm{irr}}\rho = \nabla \cdot A.
    \end{equation*}
    This result holds most generally a consequence of Poincaré duality in de Rham cohomology \cite[Appendix D]{yangBivectorialNonequilibriumThermodynamics2021}. We define a new antisymmetric matrix field $Q:= \rho^{-1}A$. From the product rule for divergences we can rewrite the time-irreversible drift as required
    \begin{equation*}
       b_{\mathrm{irr}} = Q \nabla \log \rho + \nabla \cdot Q.
    \end{equation*}
    \item["$\Leftarrow$"] Conversely, we define the auxiliary antisymmetric matrix field $A := \rho Q$. Using the product rule for divergences it follows that
    \begin{equation*}
        b_{\mathrm{irr}} = \rho^{-1}\nabla \cdot A.
    \end{equation*}
    Finally, 
    \begin{equation*}
       \nabla \cdot( b_{\mathrm{irr}}\rho) = \nabla \cdot (\nabla \cdot A)= 0
    \end{equation*}
    as the matrix field $A$ is smooth and antisymmetric.
\end{itemize}
\end{proof}

\subsubsection{Multiple perspectives on the Helmholtz decomposition}
\label{app: multiple perspectives on Helmholtz}

We prove Proposition \ref{proposition: hypocoercive decomposition of the generator}:

\begin{proof}[Proof of Proposition \ref{proposition: hypocoercive decomposition of the generator}]
The proof is analogous to \cite[Proposition 3]{villaniHypocoercivity2009}. In view of Section \ref{sec: helmholtz decomp inf gen}, we only need to check that: 1) if $\sqrt 2 \operatorname \Sigma f =\sigma^\top \nabla f$, then $\sqrt 2 \operatorname \Sigma^*g=- \nabla \cdot(\sigma g) - \nabla \log \rho \cdot \sigma g$; 2) the symmetric part of the generator factorises as $\operatorname S = - \Sigma^* \Sigma$. 
\begin{enumerate}
    \item[1)] For any $f,g \in C_c^\infty (\R^d):$
    \begin{align*}
        \langle f, \sqrt 2\Sigma^*g \rangle_{L^2_\mu(\R^d)}&=
        \langle \sqrt 2\Sigma f, g \rangle_{L^2_\mu(\R^d)} = \int g \sigma^\top \nabla f \rho(x) \dx \\
        &= \int \sigma g \rho \cdot \nabla f (x) \dx
        =-\int f \nabla \cdot (\sigma g \rho)(x) \dx \\
        &= \int -f \nabla \log \rho \cdot \sigma g \rho(x)-f \nabla \cdot (\sigma g) \rho(x)  \dx \\
        &= \langle f,  - \nabla \log \rho \cdot \sigma g-\nabla \cdot (\sigma g)  \rangle_{L^2_\mu(\R^d)}
    \end{align*}
    This implies $\sqrt 2 \Sigma^* g= - \nabla \log \rho \cdot \sigma g-\nabla \cdot (\sigma g)  $.
    \item[2)] For any $f \in C_c^\infty(\R^d)$:
    \begin{align*}
        -\Sigma^*\Sigma f&= \nabla \log \rho \cdot D \nabla f+\nabla \cdot ( D\nabla f)\\
        &= \nabla \log \rho \cdot D \nabla f + (\nabla \cdot D)\cdot \nabla f + D\nabla \cdot \nabla f\\
        &= b_{\mathrm{rev}} \cdot \nabla f + D\nabla \cdot \nabla f= \operatorname S f
    \end{align*}
    where the penultimate equality follows since $b_{\mathrm{rev}}= D \nabla \log \rho + \nabla \cdot D$, $\mu$-a.e.
\end{enumerate}
\end{proof}

We now prove Proposition \ref{prop: GENERIC decomposition of the Fokker-Planck equation}:

\begin{proof}[Proof of Proposition \ref{prop: GENERIC decomposition of the Fokker-Planck equation}]
\begin{itemize}
    \item We compute the Fréchet derivative of $\H[\cdot \mid \rho]$. First of all, we compute its Gâteaux derivative in the direction of $\eta$.
    \begin{align*}
        \frac{d}{d\e}\operatorname{H}[\rho_t + \e \eta \mid \rho ]&= \frac{d}{d\e}\intr (\rho_t + \e \eta) \log \frac{\rho_t + \e \eta}{\rho}\d x
        = \intr \eta \log \frac{\rho_t + \e \eta}{\rho} +\eta \d x
        = \intr \eta \left( \log \frac{\rho_t + \e \eta}{\rho} +1\right)\d x.
    \end{align*}
    By definition of the Fréchet derivative, we have $\frac{d}{d\e}\left.\operatorname{H}[\rho_t + \e \eta \mid \rho ]\right|_{\e=0}= \langle \d \operatorname{H}[\rho_t \mid \rho ], \eta \rangle$. This implies $ \d \operatorname{H}[\rho_t \mid \rho ]=  \log \frac{\rho_t }{\rho} +1$ by the Riesz representation theorem.
    \item Recall, from Proposition \ref{proposition: hypocoercive decomposition of the generator}, that
 $\sqrt 2\Sigma= \sigma^\top\nabla$. We identify $\Sigma'$. For any $f,g \in C_c^\infty(\R^d)$
 \begin{align*}
     \langle g, \sqrt 2\Sigma f\rangle= \intr g \sigma^\top\nabla f \d x= \intr \sigma g \cdot \nabla f \d x = - \intr f \nabla \cdot (\sigma g) \d x.
 \end{align*}
 This yields $\sqrt 2\Sigma'g =-\nabla \cdot (\sigma g)$. And in particular, $\Sigma'(\rho_t \Sigma \xi)= -\nabla \cdot (\rho_t D\nabla \xi)$.
 \item We define $M_{\rho_t}(\xi) := \Sigma'(\rho_t \Sigma \xi)=-\nabla \cdot (\rho_t D\nabla \xi)$ and verify that this is a symmetric semi-positive definite operator. For any $g,h \in C_c^\infty(\R^d)$:
 \begin{align*}
   \langle \operatorname M_{\rho_t}h, g\rangle= \langle \Sigma h, \Sigma g \rangle = \langle h, \operatorname M_{\rho_t}g\rangle, \quad
   \langle \operatorname M_{\rho_t}g, g\rangle=\langle \Sigma g, \Sigma g \rangle_{\rho_t}\geq 0.
 \end{align*}
 Also, $-M_{\rho_t}\left( \d \operatorname{H}[\rho_t \mid \rho ]\right)=\nabla \cdot(\rho_t D \nabla \log \frac{\rho_t}{\rho} )$ is immediate.
 \item We define $W(\rho_t) = \nabla \cdot (-b\irr \rho_t)$ and verify the orthogonality relation:
 \begin{align*}
    \langle \operatorname W(\rho_t), \d \operatorname{H}[\rho_t \mid \rho ]\rangle &= \intr \left(\log \frac{\rho_t }{\rho} +1\right)\nabla \cdot (-b\irr \rho_t)\d x= \intr b\irr \rho_t \nabla \left(\log \frac{\rho_t }{\rho} +1\right) \d x \\
    &= \intr b\irr \rho_t \frac{\rho }{\rho_t}\nabla \left(\frac{\rho_t }{\rho}\right) \d x 
    = -\intr \nabla\cdot (b\irr \rho)\frac{\rho_t }{\rho}\d x =0, 
 \end{align*}
 where the last equality holds by Proposition \ref{prop: helmholtz decomposition}.
\end{itemize}
\end{proof}

\subsection{The $e_p$ of stationary diffusions}

\subsubsection{Regular case}
\label{app: epr regularity}

We prove Theorem \ref{thm: epr regular case simple}:

\begin{proof}
By Assumption \ref{ass: coefficients time reversal} the Itô SDE \eqref{eq: Ito SDE} has a unique non-explosive strong solution $\left(x_{t}\right)_{t \geq 0}$ with respect to the given Brownian motion $\left(w_{t}\right)_{t \geq 0}$ on a filtered probability space $(\Omega, \mathscr{F},\left\{\mathscr{F}_{t}\right\}_{t \geq 0}, P)$. (Even though Assumption \ref{ass: coefficients time reversal} ensures non-explosiveness of the solution on a finite time-interval, stationarity implies that we may prolong the solution up to arbitrary large times).


By Theorem \ref{thm: time reversal of diffusions} we know that a solution to the following SDE
\begin{align}
\label{eq: reversed SDE}
d\bar x_t= \bar b(\bar x_t) dt + \sigma(\bar x_t) dw_t, \quad \bar x_0 = 
x_0, 
\end{align}
induces the path space measure of the time-reversed process. By Proposition \ref{prop: helmholtz decomposition}, we can rewrite the (forward and time-reversed) drifts as $b = b_{\mathrm{rev}}+ b_{\mathrm{irr}}$ and $\bar b = b_{\mathrm{rev}}- b_{\mathrm{irr}}$.

We define the \textit{localised} coefficients
\begin{align*}
    b\n(x) &:= b\left( \left(1 \land \frac{n}{|x|}\right) x \right)= \begin{cases}
        b(x) \text{ if } |x| \leq n\\
        b\left(n \frac{x}{|x|}\right) \text{ if } |x| > n,
    \end{cases}
\end{align*}
and analogously for $\bar b\n, \sigma\n, b_{\mathrm{rev}}\n,b_{\mathrm{irr}}\n$. Note that the assignment $\cdot\n$ respects sums and products, in particular 
\begin{equation}
\label{eq: localised drift respects helmholtz decomposition}
\begin{split}
    b\n &= (b_{\mathrm{rev}}+ b_{\mathrm{irr}})\n = b_{\mathrm{rev}}\n+ b_{\mathrm{irr}}\n,\\
    \bar b\n &= (b_{\mathrm{rev}}- b_{\mathrm{irr}})\n = b_{\mathrm{rev}}\n- b_{\mathrm{irr}}\n.
\end{split}
\end{equation}

It is easy to see that the localised SDE
\begin{align*}
        dx_t\n = b\n (x_t\n) dt + \sigma\n (x_t\n) dw_t, \quad x_0\n= x_0
\end{align*}
also has a unique strong solution $x\n=(x_t\n)_{t \geq 0}$ with respect to the given Brownian motion $(w_t)_{t\geq 0}$ on the probability space $(\Omega, \mathscr{F},\left\{\mathscr{F}_{t}\right\}_{t \geq 0}, P)$.
This follows from the fact that the localised SDE has locally Lipschitz continuous and bounded coefficients that satisfy the assumptions of Theorem \cite[Theorem 3.1.1]{prevotConciseCourseStochastic2008}.

From assumption \ref{item: epr regular drift image of volatility}, we obtain that for $\rho$-a.e. $x \in \R^d$
\begin{equation}
\label{eq: localised drift in image of the volatility}
\begin{split}
    b_{\mathrm{irr}}(x) \in \im \sigma(x)
    &\Rightarrow b_{\mathrm{irr}}(x) = \sigma \sigma^- b_{\mathrm{irr}}(x)\\
    b_{\mathrm{irr}}\n(x) \in \im \sigma\n(x)
    &\Rightarrow b_{\mathrm{irr}}\n(x) = \sigma\n \sigma^{(n)^-} b_{\mathrm{irr}}\n(x).
\end{split}
\end{equation}
Then, \eqref{eq: localised drift respects helmholtz decomposition} and \eqref{eq: localised drift in image of the volatility} imply that we can rewrite the localised SDE as
\begin{align*}
        dx_t\n = b_{\mathrm{rev}}\n(x_t\n) dt+ \sigma\n(x_t\n)\left[\sigma^{(n)^-}b_{\mathrm{irr}}\n\left(x_t\n\right)dt+dw_t\right], \quad x_0\n= x_0.
\end{align*}
By the definition of Itô's stochastic calculus, $x_t\n$ is an $\mathscr F_t$-adapted process.
By assumption \ref{item: continuity}, $\sigma^{-}b_{\mathrm{irr}}$ is Borel measurable and thus it follows that the localised map $\sigma^{(n)^-}b_{\mathrm{irr}}\n$ is Borel measurable. 
Thus $-2\sigma^{(n)^-}b_{\mathrm{irr}}\n\left(x_t\n\right)$ is also an $\mathscr F_t$-adapted process. In addition, by continuity and localisation, $\sigma^{(n)^-}b_{\mathrm{irr}}\n$ is bounded. Therefore, by \cite[Proposition 10.17 (i)]{pratoStochasticEquationsInfinite2014} 
applied to $-2\sigma^{(n)^-} b_{\mathrm{irr}}\n\left(x_s\n\right)$,
\begin{align*}
    Z_t\n &= \exp \left[-2 \int_0^t \left\langle\sigma^{(n)^-} b_{\mathrm{irr}}\n\left(x_s\n\right), dw_s\right\rangle + \left|\sigma^{(n)^-} b_{\mathrm{irr}}\n\left(x_s\n\right) \right|^2 ds\right], t \geq 0,
\end{align*}
is a martingale on the probability space $\left(\Omega, \mathscr{F},\left\{\mathscr{F}_{t}\right\}_{t \geq 0}, P\right)$.
We define a new probability measure $\bar P_n$ on the sample space $\Omega$ through
\begin{equation}
\left.\frac{d \bar P_n}{d P}\right|_{\mathscr{F}_{t}}=Z_{t}^{(n)}, \quad \forall t \geq 0.
\end{equation}
By Girsanov's theorem \cite[Theorem 10.14]{pratoStochasticEquationsInfinite2014}, 
$x_t\n$ solves the SDE
\begin{align*}
        d x_t\n &= b_{\mathrm{rev}}\n(x_t\n) dt- \sigma\n(x_t\n)\left[\sigma^{(n)^-}b_{\mathrm{irr}}\n\left(x_t\n\right)dt+dw_t\right], \quad  x_0\n=  x_0.
\end{align*}
on the probability space $\left(\Omega, \mathscr{F},\left\{\mathscr{F}_{t}\right\}_{t \geq 0}, \bar P_{n}\right)$.
Using \eqref{eq: localised drift in image of the volatility}, $x_t\n$ solves
\begin{align*}
        d x_t\n &= \bar b\n ( x_t\n) dt + \sigma\n ( x_t\n) dw_t, \quad  x_0\n=  x_0
\end{align*}
on said probability space.

We define a sequence of stopping times $\tau_0=0$ and $\tau_{n}=\inf \left\{t \geq 0:\left|x_{t}\right|>n\right\}$ for $n>0$. Since $\left(x_{t}\right)_{t \geq 0}$ is non-explosive, $P$-a.s. $\lim_{n \to \infty} \tau_n = +\infty$.
As $x_t = x\n_t$ when $t \leq \tau_n$, we have $P$-a.s.
\begin{align*}
    Z_{t \land \tau_n}\n &= \exp \left[-2 \int_0^{t \land \tau_n} \left\langle \sigma^- b_{\mathrm{irr}}(x_s), dw_s\right\rangle + \left|\sigma^- b_{\mathrm{irr}}(x_s) \right|^2 ds\right].
\end{align*}
As $Z_{t}^{(n+1)} \mathds 1_{\left\{t<\tau_{n}\right\}}=Z_{t}^{(n)} \mathds 1_{\left\{t<\tau_{n}\right\}}$, we define the limit as $n \to +\infty$
\begin{align*}
Z_{t} \equiv \sum_{n=1}^{+\infty} Z_{t}^{(n)} \mathds 1_{\left\{\tau_{n-1} \leq t<\tau_{n}\right\}}=\lim _{n \rightarrow+\infty} Z_{t}^{(n)} \mathds1_{\left\{t<\tau_{n}\right\}}=\lim _{n \rightarrow+\infty} Z_{t \wedge \tau_{n}}^{(n)}.
\end{align*}
By definition, $Z_{t}$ is a continuous local martingale on $
\left(\Omega, \mathscr{F},\left\{\mathscr{F}_{t}\right\}_{t \geq 0}, P\right)$. 

We compute $Z_t$. Let's write $-\log Z_{t \wedge \tau_{n}}^{(n)}=M_{t}^{(n)}+Y_{t}^{(n)}$, where
\begin{align*}
    M_{t}^{(n)}=2 \int_0^{t \land \tau_n} \left\langle \sigma^- b_{\mathrm{irr}}(x_s), dw_s\right\rangle, \text{ and, } Y_{t}^{(n)}=2 \int_0^{t \land \tau_n}  \left|\sigma^- b_{\mathrm{irr}}(x_s) \right|^2 ds.
\end{align*}
We also define $M_{t}=2 \int_0^{t} \left\langle \sigma^- b_{\mathrm{irr}}(x_s), dw_s\right\rangle$ and $Y_t= 2\int_0^{t}  \left|\sigma^- b_{\mathrm{irr}}(x_s) \right|^2 ds$.

From assumption \ref{item: epr bounded}, we have
\begin{align*}
    \int_{\R^d} \left|\sigma^- b_{\mathrm{irr}}(x) \right|^2\rho(x) dx = \frac 1 2\int_{\R^d} b_{\mathrm{irr}}^\top D^- b_{\mathrm{irr}} \rho(x)\d x  <+\infty.
\end{align*}
Thus, we obtain that 
\begin{align*}
\E_P\left|M_{t}^{(n)}-M_{t}\right|^{2} &=4 \E_P\left|\int_{0}^{t} \left\langle \sigma^- b_{\mathrm{irr}}(x_s)\mathds 1_{\left\{s>\tau_{n}\right\}}, dw_s\right\rangle  \right|^{2} \\
&=4 \E_P \int_{0}^{t} \left|\sigma^- b_{\mathrm{irr}}(x_s) \right|^2  \mathds 1_{\left\{s>\tau_{n}\right\}} d s \quad \text{(Itô's isometry)}\\
&\xrightarrow{ n \to+\infty} 0, \\
\E_P\left|Y_{t}^{(n)}-Y_{t}\right| &=2 \E_P \int_{0}^{t} \left|\sigma^- b_{\mathrm{irr}}(x_s) \right|^2\mathds 1_{\left\{s>\tau_{n}\right\}} d s \xrightarrow{ n \to+\infty} 0.
\end{align*}
Thus, $-\log Z_{t}=M_{t}+Y_{t}$. By Itô calculus, $M_{t}$ is a martingale on the probability space $\left(\Omega, \mathscr{F},\left\{\mathscr{F}_{t}\right\}_{t \geq 0}, P\right)$, and in particular $\E_P[M_t]=0$.

Let $T> 0$. Let $\left(C([0,T], \R^d), \mathscr{B}\right)$ denote the path space, where $\mathscr{B}$ is the Borel sigma-algebra generated by the sup norm $\|\cdot\|_\infty$. Denote trajectories of the process by $x_\bullet:=(x_t)_{t \in [0,T]}: \Omega \to C([0,T], \R^d)$. By definition of Itô calculus, $Z_{T}$ is measurable with respect to $\langle x_{s}: 0 \leq t \leq T\rangle$, so there exists a positive measurable function $Z_T^C$ on the path space, such that $P$-a.s. $Z^C_T (x_\bullet(\omega))=Z_T(\omega)$ for $\omega \in \Omega$, i.e., the following diagram commutes
\begin{equation*}
  \begin{tikzcd}
    &C([0,T], \R^d) \arrow[dr, "Z_T^C"] &  \\
    \Omega \arrow[ur, "x_\bullet"] \arrow[rr, "Z_T"']&&\R_{>0}.
    \end{tikzcd}
\end{equation*}

Note that the path space $\left(C([0,T], \R^d), \mathscr{B}\right)$ admits a canonical filtration $(\mathscr{B}_{t})_{t \in [0,T]}$, where
\begin{align*}
    \mathscr{B}_{t} = \left\{A \subset C([0,T], \R^d) : \left.A\right|_{[0,t]} \in \text{Borel sigma-algebra on} \left(C([0,t], \R^d), \|\cdot \|_\infty\right)\right\}.
\end{align*}
For any path $\gamma \in C([0,T], \R^d)$ and $n \in \mathbb N$, we define the hitting time $t_n(\gamma) =\inf \left\{t \geq 0:\left|\gamma_{t}\right|>n\right\}$.

\begin{claim}
\label{claim: hitting time is stopping time}
These hitting times are stopping times wrt to the canonical filtration, i.e., $\{\gamma \in C([0,T], \R^d): t_n(\gamma) \leq t\} \in \mathscr{B}_{t}$ for any $t\in [0,T]$.
\end{claim}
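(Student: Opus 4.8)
The plan is to reduce the canonical sigma-algebra $\mathscr B_t$ to one generated by countably many coordinate evaluations, and then to use continuity of the paths to rewrite $\{t_n \le t\}$ as a condition on the running supremum of $|\gamma_s|$ over a countable dense set of times $s \le t$.

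First I would identify $\mathscr B_t$ explicitly. Let $\pi_s : C([0,T],\R^d) \to \R^d$, $\pi_s(\gamma) = \gamma_s$, denote evaluation at time $s$. Since $(C([0,t],\R^d), \|\cdot\|_\infty)$ is a separable Banach space whose topology is generated by the (continuous, point-separating) evaluation functionals, its Borel sigma-algebra coincides with $\sigma(\pi_s : s \in [0,t])$; pulling this back through the restriction map $\gamma \mapsto \gamma|_{[0,t]}$ gives $\mathscr B_t = \sigma(\pi_s : s \in [0,t])$. By continuity of paths, evaluations over a fixed countable dense set $Q_t \subset [0,t]$ (with $t \in Q_t$) already generate $\mathscr B_t$.

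Next I would express the exit event through the running maximum. By continuity, $\sup_{s \in [0,t]} |\gamma_s| = \sup_{s \in Q_t}|\gamma_s|$, so the latter is $\mathscr B_t$-measurable, being a countable supremum of the $\mathscr B_t$-measurable maps $\gamma \mapsto |\gamma_s|$. Writing the exit time from the ball $\{|x|<n\}$ as the hitting time of the closed set $\{|x|\ge n\}$, one has
\begin{equation*}
  \{\gamma : t_n(\gamma) \le t\} = \Big\{\gamma : \sup_{s\in Q_t}|\gamma_s| \ge n \Big\} = \bigcap_{k \ge 1}\bigcup_{s\in Q_t}\{\gamma : |\gamma_s| > n - \tfrac 1 k\} \in \mathscr B_t.
\end{equation*}

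The main obstacle is the boundary behaviour, i.e.\ the distinction between the open target $\{|x| > n\}$ and the closed target $\{|x| \ge n\}$. For a purely open target the elementary argument only yields $\{t_n < t\} \in \mathscr B_t$ (the optional-time property) and $\{t_n \le t\} \in \mathscr B_{t+}$, because whether the path first strictly exceeds $n$ exactly at time $t$ can depend on its values immediately after $t$. Continuity of the paths is what removes this gap: by passing to the hitting time of the closed set $\{|x| \ge n\}$ (equivalently, replacing the strict inequality by a non-strict one), the event becomes the running-supremum event above and hence lies in $\mathscr B_t$ without invoking right-continuity of the filtration. This substitution is harmless for the localisation argument that uses these stopping times, since under non-explosiveness both versions increase to $+\infty$ almost surely and agree off the boundary set $\{\,|x_t| = n\,\}$.
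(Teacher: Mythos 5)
Your proof is correct, and its skeleton is the same as the paper's: both arguments come down to rewriting $\{t_n \le t\}$ as a running-supremum event on $C([0,t],\R^d)$ and noting that such events are Borel for the sup-norm; the paper does this in one line, asserting $\left.\{\gamma: t_n(\gamma)\le t\}\right|_{[0,t]} = \{\gamma \in C([0,t],\R^d): \|\gamma\|_\infty > n\}$ and stopping there. Where you genuinely add something is the boundary analysis. With the paper's literal definition $t_n(\gamma)=\inf\{s\ge 0: |\gamma_s|>n\}$ (open target), the event $\{t_n\le t\}$ is \emph{not} determined by $\gamma|_{[0,t]}$: two continuous paths agreeing on $[0,t]$ with $|\gamma_t|=n$, one exceeding $n$ immediately after $t$ and one staying below, lie on opposite sides of the event. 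Hence the paper's claimed identity fails exactly on paths whose running supremum equals $n$, and with that definition one only gets an optional time ($\{t_n<t\}\in\mathscr B_t$ and $\{t_n\le t\}\in\mathscr B_{t+}$), as you say. Your substitution of the closed target $\{|x|\ge n\}$ is the right repair: it turns the event into $\{\sup_{s\le t}|\gamma_s|\ge n\}$, an honest $\mathscr B_t$-event, and you correctly note the change is immaterial for the localisation argument, since under non-explosiveness both sequences of times increase to $+\infty$ almost surely. Your reduction of $\mathscr B_t$ to the sigma-algebra generated by countably many evaluations also makes the paper's ``clearly Borel'' explicit; one quibble there is that the evaluations do not generate the sup-norm \emph{topology} (that would be the topology of pointwise convergence)---the identity of the Borel sigma-algebras instead follows from separability, e.g.\ because closed sup-norm balls are countable intersections of cylinder sets over a dense set of times. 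In short: same route as the paper, but your version closes a small gap in the published proof rather than reproducing it.
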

\begin{proofclaim}
Let $A:=\{\gamma \in C([0,T], \R^d): t_n(\gamma) \leq t\}$. Then, 
\begin{align*}
    \left.A\right|_{[0,t]} &= \{\gamma \in C([0,t], \R^d): t_n(\gamma) \leq t\}\\
    &= \left\{\gamma \in C([0,t], \R^d): \min\{s\geq 0: \left|\gamma_{s}\right| > n\} \leq t\right\} \text{ (continuity of $\gamma$)}\\
    &=\left\{\gamma \in C([0,t], \R^d): \|\gamma \|_\infty > n \right\},
\end{align*}
which is clearly a Borel set in $\left(C([0,t], \R^d), \|\cdot \|_\infty\right)$.
\end{proofclaim}

Thus we can define stopping time sigma-algebras in the usual way
\begin{align*}
    \mathscr{B}_{T\land t_n} = \left\{A \in \mathscr{B}_{T}: A \cap\{\gamma \in C([0,T], \R^d): T\land t_n(\gamma) \leq t \} \in \mathscr{B}_{t}, \forall t \in [0,T]\right\}.
\end{align*}

We showed above that, under $P, \bar P_n$, the distributions of $x$ restricted to $(C([0,T], \R^d),\mathscr{B}_{T\land t_n})$ are $\p_{\left[0, T \wedge t_{n}\right]}:=\left.\p_{\left[0, T \right]}\right|_{\mathscr{B}_{T\land t_n}}$ and $ \bar{\p}_{\left[0, T \wedge t_{n}\right]}:=\left.\bar{\p}_{\left[0, T \right]}\right|_{\mathscr{B}_{T\land t_n}}$, respectively.


By inspection, we have, for any $t \geq 0$,
\begin{align*}
    \{T <\tau_n \}\cap \{T \land \tau_n \leq t\} = \begin{cases}
        \{T <\tau_n\} \in \mathscr F_T \subset \mathscr F_t \text{ if } T\leq t\\
        \emptyset \subset \mathscr F_t \text{ if } T> t.
    \end{cases}
\end{align*}
Setting $t= T \land \tau_n$, we have $\{T <\tau_n\} \in \mathscr F_{T \land \tau_n}$, which also yields $\{T \geq \tau_n\} \in \mathscr F_{T \land \tau_n}$ and $\{\tau_{n-1} \leq T < \tau_{n}\} \in \mathcal F_{T \land \tau_n}$.

Fix $i \geq 0$ and $A \in \mathscr{B}_{T \wedge t_{i}}$. Then $x^{-1}_\bullet A \in \mathscr F$ as $x_\bullet$ is measurable. Thus $x^{-1}_\bullet A \cap \{T <\tau_i \} \subset \mathscr F_{T \land \tau_i}$ and $x^{-1}_\bullet A \cap \{\tau_{n-1} \leq T < \tau_{n}\} \subset \mathscr F_{T \land \tau_n}$ for any $n >i$. Finally,
\begin{align*}
    &\E_{\p}\left[Z_{T}^{C}  \mathds 1_{A}\right]=\E_{P}\left[ Z_{T} \mathds 1_{x^{-1}_\bullet A}\right]\\
&=\E_{P} \left[Z_{T}^{(i)} \mathds 1_{x^{-1}_\bullet A \cap\left\{T<\tau_{i}\right\}}\right]+\sum_{n=i+1}^{+\infty} \E_{P}\left[ Z_{T}^{(n)} \mathds 1_{x^{-1}_\bullet A \cap\left\{\tau_{n-1} \leq T<\tau_{n}\right\}}\right]\\
&=\E_{\bar{P}_{i}}\left[ \mathds 1_{x^{-1}_\bullet A \cap\left\{T<\tau_{i}\right\}}\right]+\sum_{n=i+1}^{+\infty} \E_{\bar{P}_{n}}\left[\mathds 1_{x^{-1}_\bullet A \cap\left\{\tau_{n-1} \leq T<\tau_{n}\right\}}\right]\\
&=\E_{\bar{\p}} \left[\mathds 1_{A \cap\left\{T<t_{i}\right\}}\right]+\sum_{n=i+1}^{+\infty} \E_{\bar{\p}} \left[\mathds 1_{A \cap\left\{t_{n-1} \leq T<t_{n}\right\}}\right]=\E_{\bar{\p}} \left[\mathds 1_{A}\right].
\end{align*}
From the arbitrariness of $i$ and that $\lim_{n \to \infty} \tau_n = +\infty$ $P$-a.s., it follows that
\begin{align*}
    \frac{d \bar{\p}_{[0, T]}}{d \p_{[0, T]}}=Z_{T}^{C}.
\end{align*}

Finally, we compute the relative entropy between the forward and backward path-space measures
\begin{align*}
\H\left[\p_{[0, T]}, \bar \p_{[0, T]}\right] &=\E_{\p}\left[\log \frac{d \p_{[0, T]}}{d \bar \p_{[0, T]}}(\gamma)\right] \\
&=\E_{P}\left[\log \frac{d \p_{[0, T]}}{d \bar \p_{[0, T]}}(x(\omega))\right] \\
&= \E_{P}\left[-\log Z_T^C(x(\omega))\right]=\E_{P}\left[-\log Z_{T}\right] = \overbrace{\E_{P}\left[M_T\right]}^{=0} + \E_P[Y_T]\\
&=\E_{P}\left[2\int_0^{T}  \left|\sigma^- b_{\mathrm{irr}}(x_s) \right|^2 ds\right] =2T \int_{\R^d}  \left|\sigma^- b_{\mathrm{irr}}(x) \right|^2\rho(x) dx\\
&= T\int_{\R^d} b_{\mathrm{irr}}^\top D^- b_{\mathrm{irr}} \rho(x) dx,
\end{align*}
where we used Tonelli's theorem and stationarity for the penultimate equality. By Theorem \ref{thm: ep is a rate over time}, we obtain the entropy production rate
\begin{align*}
    e_p = \frac 1 T \H\left[\p_{[0, T]}, \bar \p_{[0, T]}\right] = \int_{\R^d} b_{\mathrm{irr}}^\top D^- b_{\mathrm{irr}} \rho(x) \dx. 
\end{align*}
\end{proof}

We now prove Theorem \ref{thm: epr regular case general}:

\begin{proof}
By assumption \ref{item: epr regular general drift image of volatility}, for $\rho$-a.e. $x \in \R^d$
\begin{equation}
\label{eq: localised drift in image of the volatility general}
\begin{split}
    b_{\mathrm{irr}}(x) \in \im \sigma(x)
    &\Rightarrow b_{\mathrm{irr}}(x) = \sigma \sigma^- b_{\mathrm{irr}}(x).
\end{split}
\end{equation}

Then, \eqref{eq: localised drift in image of the volatility general} implies that we can rewrite the SDE \eqref{eq: Ito SDE} as
\begin{align*}
        dx_t = b_{\mathrm{rev}}(x_t) dt+ \sigma(x_t)\left[\sigma^{-}b_{\mathrm{irr}}\left(x_t\right)dt+dw_t\right], \quad x_0 \sim \rho.
\end{align*}

By assumptions \ref{item: epr regular general adapted}, \ref{item: epr regular general expectation}, we may define a new probability measure $\bar P$ on the sample space $\Omega$ through the relation
\begin{equation}
\left.\frac{d \bar P}{d P}\right|_{\mathscr{F}_{T}}=Z_{T},
\end{equation}
and it follows by Girsanov's theorem \cite[Theorem 10.14]{pratoStochasticEquationsInfinite2014} 
applied to the process $-2\sigma^{-} b_{\mathrm{irr}}\left(x_t\right)$, that $\left(x_{t}\right)_{t \in [0,T]}$ solves the SDE
\begin{align*}
        d x_t &= b_{\mathrm{rev}}(x_t) dt- \sigma(x_t)\left[\sigma^{-}b_{\mathrm{irr}}\left(x_t\right)dt+dw_t\right], \quad  x_0\sim \rho.
\end{align*}
on the probability space $(\Omega, \mathscr{F},\left\{\mathscr{F}_{t}\right\}_{t \geq 0}, \bar P)$.
Using \eqref{eq: localised drift in image of the volatility general}, $\left(x_{t}\right)_{t \in [0,T]}$ solves
\begin{align*}
        d x_t &= \bar b ( x_t) dt + \sigma ( x_t) dw_t, \quad  x_0\sim \rho
\end{align*}
on said probability space. By Theorem \ref{thm: time reversal of diffusions} we know that under $\bar P$, $\left(x_{t}\right)_{t \in [0,T]}$ induces the path space measure of the time-reversed process.

Let $\left(C([0,T], \R^d), \mathscr{B}\right)$ denote the path space, where $\mathscr{B}$ is the Borel sigma-algebra generated by the sup norm $\|\cdot\|_\infty$. Denote trajectories of the process by $x_\bullet:=(x_t)_{t \in [0,T]}: \Omega \to C([0,T], \R^d)$.
In summary, we showed that, under $P, \bar P$, the distribution of $x_\bullet$ on $(C([0,T], \R^d),\mathscr{B})$ are $\p_{\left[0, T\right]}$ and $ \bar{\p}_{\left[0, T\right]}$, respectively.

By definition of Itô calculus, $Z_{T}$ is measurable with respect to $\langle x_{s}: 0 \leq t \leq T\rangle$, so there exists a positive measurable function $Z_T^C$ on the path space, such that $P$-a.s. $Z^C_T (x_\bullet(\omega))=Z_T(\omega)$ for $\omega \in \Omega$, i.e., the following diagram commutes
\begin{equation*}
  \begin{tikzcd}
    &C([0,T], \R^d) \arrow[dr, "Z_T^C"] &  \\
    \Omega \arrow[ur, "x_\bullet"] \arrow[rr, "Z_T"']&&\R_{>0}.
    \end{tikzcd}
\end{equation*}
Fix $A \in \mathscr{B}$. Then $x_\bullet^{-1}A \in \mathscr F$ as $x_\bullet$ is measurable. Obviously,
\begin{align*}
    \E_{\p}\left[Z_{T}^{C}  \mathds 1_{A}\right]=\E_{P}\left[ Z_{T} \mathds 1_{x_\bullet^{-1} A}\right]=\E_{\bar{P}} \left[\mathds 1_{x_\bullet^{-1}A}\right]=\E_{\bar{\p}} \left[\mathds 1_{A}\right].
\end{align*}
It follows that
\begin{align*}
    \frac{d \bar{\p}_{[0, T]}}{d \p_{[0, T]}}=Z_{T}^{C}.
\end{align*}

Through this, we obtain the relative entropy between the forward and backward path-space measures
\begin{align*}
\H\left[\p_{[0, T]}, \bar \p_{[0, T]}\right] &=\E_{\p}\left[\log \frac{d \p_{[0, T]}}{d \bar \p_{[0, T]}}(\gamma)\right] \\
&=\E_{P}\left[\log \frac{d \p_{[0, T]}}{d \bar \p_{[0, T]}}(x(\omega))\right] \\
&= \E_{P}\left[-\log Z_T^C(x(\omega))\right]=\E_{P}\left[-\log Z_{T}\right] \\
&= \E_{P}\left[2 \int_0^T \langle \sigma^- b_{\mathrm{irr}}(x_t), dw_t \rangle\right] + \E_P\left[2 \int_0^T |\sigma^- b_{\mathrm{irr}}(x_t) |^2 dt\right]\\
&=2T \int_{\R^d}  \left|\sigma^- b_{\mathrm{irr}}(x) \right|^2\rho(x) dx\\
&= T\int_{\R^d} b_{\mathrm{irr}}^\top D^- b_{\mathrm{irr}} \rho(x) dx.
\end{align*}
The penultimate equality follows from the fact that Itô stochastic integrals are martingales (and hence vanish in expectation), Tonelli's theorem and stationarity. Finally, by Theorem \ref{thm: ep is a rate over time}, we obtain the entropy production rate
\begin{align*}
    e_p = \frac 1 T \H\left[\p_{[0, T]}, \bar \p_{[0, T]}\right] = \int_{\R^d} b_{\mathrm{irr}}^\top D^- b_{\mathrm{irr}} \rho(x) dx. 
\end{align*}
\end{proof}

We prove Proposition \ref{prop: suff conds exp cond}:

\begin{proof}
\begin{enumerate}
    \item[\textit{\ref{item: suff cond martingale}.}] Follows as $Z_0 =1$ and by definition of a martingale. 
\end{enumerate}
We define the $\mathscr F_t$-adapted process $\psi_t =-2 \sigma^- b_{\mathrm{irr}}(x_t)$.
\begin{enumerate}
\setcounter{enumi}{1}
    \item[\textit{\ref{item: suff cond novikov}} $\Rightarrow$ \textit{\ref{item: suff cond martingale}.}] This follows from \cite[Theorem 1]{rufNewProofConditions2013}.
    \item[\textit{\ref{item: suff cond delta exp condition}.}] By \cite[Proposition 10.17 (ii)]{pratoStochasticEquationsInfinite2014}, a sufficient condition for \eqref{eq: exponential condition} is the existence of a $\delta>0$ such that $\sup _{t \in[0, T]} \mathbb{E}_P\left(e^{\delta|\psi_t|^{2}}\right)<+\infty$. By stationarity of $x_t$ at $\rho$, $\mathbb{E}_P\left(e^{\delta|\psi_t|^{2}}\right)= \mathbb{E}_{\rho}\left(e^{4\delta|\sigma^- b_{\mathrm{irr}}(x)|^{2}}\right)$, and so the result follows.
    \item[\textit{\ref{item: suff cond Kazamaki}} $\Rightarrow$ \textit{\ref{item: suff cond martingale}}.] Follows from \cite[Theorem 1]{rufNewProofConditions2013}.
    \item[\textit{\ref{item: suff cond Dellacherie meyer}} $\Rightarrow$ \textit{\ref{item: suff cond novikov}}.] $A_t:=\frac 1 2\int_0^t\left|\psi_s\right|^2 ds= 2\int_0^t\left|\sigma^-b_{\mathrm{irr}}(x_s)\right|^2 ds$ is a non-decreasing, $\mathcal F_t$-adapted process. By assumption, $\E_P\left[A_{T}-A_{t} \mid \mathcal{F}_{t}\right] \leq K$ for all $t \in [0,T]$. By \cite[Theorem 105 (b)]{dellacherieProbabilitiesPotentialTheory1982} $\E_P\left[\exp(A_{T})\right]<+\infty$.
    \item[\textit{\ref{item: suff cond tail}} $\Rightarrow$ \textit{\ref{item: suff cond delta exp condition}}.]
    Let $\delta \in (0 ,c) $. The first equality follows a standard fact about non-negative random variables:
    \begin{align*}
        \E_\rho\left [e^{\delta |\sigma^- b_{\mathrm{irr}}(x)|^{2}}\right] &= \int_0^\infty   P\left(e^{ \delta |\sigma^- b_{\mathrm{irr}}(x)|^{2}}>r\right)dr\\
        &\leq 1 + \int_{e^{cR}}^\infty P\left(e^{ \delta |\sigma^- b_{\mathrm{irr}}(x)|^{2}}>r\right) dr\\
        &=1 + \int_{e^{cR}}^\infty P\left(|\sigma^- b_{\mathrm{irr}}(x)|^{2}> \delta^{-1}\log r\right)dr\\
        &\leq 1 + C \int_{e^{cR}}^\infty r^{- c \delta^{-1}} dr  \quad \text{(by \ref{item: suff cond tail} as $\delta^{-1}\log r> R$)} \\
        &<+\infty.
    \end{align*}
\end{enumerate}
\end{proof}

\subsubsection{Singularity}
\label{app: epr singular}

We prove Theorem \ref{thm: epr singular}:

\begin{proof}
Under the assumption that $b_{\mathrm{irr}}(x) \in \im \sigma(x)$ does not hold for $\rho$-a.e. $x \in \R^d$, we will show that there are paths taken by the forward process that are not taken by the backward process---and vice-versa---resulting in the mutual singularity of forward and backward path space measures.


Recall from Theorem \ref{thm: time reversal of diffusions} that any solution to the following SDE
\begin{align}
d\bar x_t= \bar b(\bar x_t) dt + \sigma(\bar x_t) dw_t, \quad \bar x_0 \sim \rho,
\end{align}
induces the path space measure of the time-reversed process.

We rewrite the forward and backward SDEs into their equivalent Stratonovich SDEs \cite[eq. 3.31]{pavliotisStochasticProcessesApplications2014}
\begin{align*}
    dx_t &= b^s(x_t) dt + \sigma(x_t) \circ dw_t, \quad  x_0 \sim \rho,\\
    d\bar x_t&= \bar b^s(\bar x_t) dt + \sigma(\bar x_t)\circ  dw_t, \quad \bar x_0 \sim \rho,
\end{align*}
By Remark \ref{rem: stratonovich Helmholtz}, time-reversal and the transformation from Itô to Stratonovich commute so $\bar b^s$ is unambiguous, and $b_{\mathrm{irr}} = b^s(x)-\bar b^s(x)$. The volatility and Stratonovich drifts are locally Lipschitz as $\sigma \in C^2$.

Consider an initial condition $x= x_0= \bar x_0$ to the trajectories, with $\rho(x)>0$. Consider trajectories in the Cameron-Martin space 
\begin{align*}
    \gamma \in \mathcal C:=\{\gamma \in A C\left([0, T] ; \mathbb{R}^m\right) \: \mid \: \gamma(0)=0 \text { and } \dot{\gamma} \in L^2\left([0, T] ; \mathbb{R}^m\right)\}
\end{align*}

Given such a trajectory, the approximating ODEs
\begin{align*}
    dx_t &= b^s(x_t) dt + \sigma(x_t) d\gamma_t, \quad  x_0 =x,\\
    d\bar x_t&= \bar b^s(\bar x_t) dt + \sigma(\bar x_t) d\gamma_t, \quad \bar x_0 =x,
\end{align*}
have a unique solution in $[0,T]$, with $T>0$ uniform in $\gamma$. 

We can thus apply the Stroock-Varadhan support theorem \cite[Theorem 3.10]{prokopTopologicalSupportSolutions2016} 
to state 
the possible paths under the forward and backward protocols. These are as follows
\begin{align*}
   \supp \p^{x}_{[0,T]}&=\overline{\left\{ x_t^\gamma = x + \int_0^t b^s(x_s^\gamma) ds + \int_0^t \sigma(x_s^\gamma) \gamma'(s) ds , t \in [0,T]\mid  \gamma \in \mathcal C \right\}}, \\
   \supp \bar \p^{x}_{[0,T]}&=\overline{\left\{  \bar x_t^\gamma = x + \int_0^t \bar b^s(\bar x_s^\gamma) ds + \int_0^t \sigma(\bar x_s^\gamma) \gamma'(s) ds, t \in [0,T] \mid \gamma \in \mathcal C \right\}}.
\end{align*}
where the closure is with respect to the supremum norm on $C\left([0, T] ; \mathbb{R}^d\right)$. The time derivatives of these paths at $t=0$ are
\begin{align*}
   \partial_{t}\supp \p^{x}_{[0,T]}\mid_{t=0}&=\overline{\{ \partial_t x_t^\gamma |_{t=0} \in \R^d \mid \gamma \in \mathcal C \}} = \{b^s(x) + \sigma(x) v \mid  v\in \R^d\}, \\
  \partial_{t}\supp \bar \p^{x}_{[0,T]}\mid_{t=0}&=\overline{\{ \partial_t \bar x_t^\gamma |_{t=0} \in \R^d \mid \gamma \in \mathcal C \}} = \{\bar b^s(x) + \sigma(x) v \mid  v\in \R^d\}.
\end{align*}
where the closure is with respect to the sup norm on $\R^d$.

Consider an initial condition $x$ with $b_{\mathrm{irr}}(x)\not \in \im \sigma(x)$. This implies that the forward and backward path space measures are mutually singular because the time derivatives of the possible paths differ at the origin 
\begin{align*}
    & 2b_{\mathrm{irr}}(x) = b^s(x)-\bar b^s(x) \not \in \im \sigma(x) \\
     \iff& b^s(x) + \im \sigma(x) \neq \bar b^s(x) + \im \sigma(x)\\
    \iff &\partial_{t}\supp \p^{x}_{[0,T]}\mid_{t=0}\not \subset \partial_{t}\supp \bar \p^{x}_{[0,T]}\mid_{t=0} \text{and vice-versa} \\
    \Rightarrow& \supp \p^{x}_{[0,T]} \not \subset \supp \bar \p^{x}_{[0,T]} \text{ and vice-versa}\\
    \Rightarrow& \p^{x}_{[0,T]} \perp \bar \p^{x}_{[0,T]}
\end{align*}

Finally, from Proposition \ref{prop: aggregating local ep}
\begin{align*}
     e_p 
     &= \E_{x \sim \rho}\left[\H\left[\p^{x}_{[0,T]}\mid \bar \p^{x}_{[0,T]}\right]\right]\\
     &\geq \rho\left(\{x \in \R^d \mid  \p^{x}_{[0,T]} \perp \bar \p^{x}_{[0,T]} \}\right)\cdot \infty\\
     &\geq \underbrace{\rho\left(\{x \in \R^d  : b_{\mathrm{irr}}(x) \not \in \im \sigma(x) \}\right)}_{>0} \cdot \infty\\
     &= +\infty.
\end{align*}
\end{proof}

\subsection{Entropy production rate of the linear diffusion process}
\label{app: exact simul OU process}

We require the following Lemma, which can be proved by adjusting the derivation of the relative entropy between non-degenerate Gaussian distributions, cf. \cite[Section 9]{duchiDerivationsLinearAlgebra}.
\begin{lemma}
\label{lemma: KL Gaussian}
On $\R^d$
    \begin{align*}
       2 \H[\mathcal N(\mu_0, \Sigma_0) \mid\mathcal N(\mu_1, \Sigma_1)] &= \operatorname{tr}\left(\Sigma_{1}^{-} \Sigma_{0}\right)-\rank \Sigma_0
       +\log \left(\frac{\operatorname{det}^* \Sigma_{1}}{\operatorname{det}^* \Sigma_{0}}\right) \\
       &+\left(\mu_{1}-\mu_{0}\right)^{\top} \Sigma_{1}^{-}\left(\mu_{1}-\mu_{0}\right).
    \end{align*}
where $\cdot^-$ is the Moore-Penrose pseudo-inverse and $\operatorname{det}^*$ is the pseudo-determinant. 
\end{lemma}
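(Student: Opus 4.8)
The plan is to mirror the classical computation of the relative entropy between two nondegenerate Gaussians (as in \cite[Section 9]{duchiDerivationsLinearAlgebra}), replacing ordinary inverses and determinants by their pseudo-counterparts and carefully tracking the degenerate directions. The starting observation is that a (possibly degenerate) Gaussian $\mathcal N(\mu,\Sigma)$ is supported on the affine subspace $\mu + \im \Sigma$ and, with respect to the $(\rank\Sigma)$-dimensional Lebesgue measure on that subspace, admits the density
\begin{equation*}
    g_{\mu,\Sigma}(x) = \left((2\pi)^{\rank\Sigma}\operatorname{det}^*\Sigma\right)^{-1/2}\exp\left(-\tfrac 1 2 (x-\mu)^\top \Sigma^-(x-\mu)\right),
\end{equation*}
where $\Sigma^-$ acts as the inverse of $\Sigma$ restricted to $\im\Sigma$ and $\operatorname{det}^*\Sigma$ is the product of the nonzero eigenvalues. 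This is the natural generalisation of the usual Gaussian density and is verified by diagonalising $\Sigma$ and integrating out the degenerate coordinates.

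First I would settle the finiteness condition. The relative entropy $\H[\mathcal N(\mu_0,\Sigma_0)\mid\mathcal N(\mu_1,\Sigma_1)]$ is finite only when the two measures are mutually absolutely continuous, which for Gaussians forces the supports to coincide: $\im\Sigma_0 = \im\Sigma_1 =: V$ and $\mu_1-\mu_0 \in V$ (if $\im\Sigma_0$ were a proper subspace of $\im\Sigma_1$, the lower-dimensional support of the first measure would be null for the second, giving $\H=+\infty$). Under this condition both densities are taken with respect to the same Lebesgue measure on the common affine support, so the Radon--Nikodym derivative there equals $g_{\mu_0,\Sigma_0}/g_{\mu_1,\Sigma_1}$, and
\begin{equation*}
    2\H = \E_{x\sim\mathcal N(\mu_0,\Sigma_0)}\!\left[\log\frac{\operatorname{det}^*\Sigma_1}{\operatorname{det}^*\Sigma_0} - (x-\mu_0)^\top\Sigma_0^-(x-\mu_0) + (x-\mu_1)^\top\Sigma_1^-(x-\mu_1)\right],
\end{equation*}
the factors of $(2\pi)^{\rank\Sigma}$ cancelling since the ranks agree.

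The remaining steps are the routine term-by-term evaluation of this expectation via the trace trick, which survives for degenerate covariance. For the middle term, $\E[(x-\mu_0)^\top\Sigma_0^-(x-\mu_0)] = \operatorname{tr}(\Sigma_0^-\Sigma_0) = \operatorname{tr}(P_V) = \rank\Sigma_0$, since $\Sigma_0^-\Sigma_0$ is the orthogonal projector $P_V$ onto $V$. For the last term I would expand $x-\mu_1 = (x-\mu_0)+(\mu_0-\mu_1)$; the cross term vanishes because $\E[x-\mu_0]=0$, leaving $\operatorname{tr}(\Sigma_1^-\Sigma_0) + (\mu_1-\mu_0)^\top\Sigma_1^-(\mu_1-\mu_0)$. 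Collecting the three contributions yields the claimed identity.

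The main obstacle is not any single calculation but the bookkeeping of the degenerate directions: one must justify that the relevant densities live on the same subspace $V$ (the mutual absolute continuity step) and that $\Sigma^-$ and $\operatorname{det}^*\Sigma$ genuinely reproduce the inverse and determinant of the nondegenerate restriction $\Sigma|_V$. This is seen cleanly by writing $\Sigma_i = U\tilde\Sigma_i U^\top$ with $U$ an orthonormal basis of $V$ and $\tilde\Sigma_i = U^\top\Sigma_i U$ positive definite, whence $\Sigma_i^- = U\tilde\Sigma_i^{-1}U^\top$ and $\operatorname{det}^*\Sigma_i = \det\tilde\Sigma_i$. An equivalent and perhaps cleaner route is to push both measures forward by the isometry $x\mapsto U^\top(x-\mu_1)$ onto $\R^{\rank\Sigma_0}$, where they become nondegenerate Gaussians $\mathcal N(U^\top(\mu_0-\mu_1),\tilde\Sigma_0)$ and $\mathcal N(0,\tilde\Sigma_1)$, apply the classical formula there, and transport the result back using the identities above; relative entropy is invariant under this bijection of supports.
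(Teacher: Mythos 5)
Your proposal is correct and takes essentially the same route as the paper, which proves this lemma only by remarking that it "can be proved by adjusting the derivation of the relative entropy between non-degenerate Gaussian distributions" (the cited Duchi notes) --- i.e., rerunning the classical trace-trick computation with $\Sigma^{-}$ and $\operatorname{det}^*$ in place of $\Sigma^{-1}$ and $\det$, exactly as you do. Your explicit handling of the degenerate directions (the common support condition $\im \Sigma_0 = \im \Sigma_1$, $\mu_1-\mu_0 \in \im\Sigma_1$, and the reduction to a nondegenerate Gaussian on that subspace) fills in details the paper leaves implicit.
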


\begin{proof}[Proof of Lemma \ref{lemma: limit ep formula OU process}]
We insert the definitions of the transition kernels \eqref{eq: OU transition kernels} into Lemma \ref{lemma: KL Gaussian}.
\begin{equation*}
\begin{split}
    &\E_{x\sim \rho}[2\H[p_\e(\cdot,x)\mid \bar p_\e(\cdot,x)]] \\
    &=\tr(\bar S_\e^{-}S_\e)-\rank \sigma + \log \frac{\det^*(\bar S_\e)}{\det^*(S_\e)} \\
&+ \E_{x \sim\rho} \left[ x^\top(e^{-\e C}-e^{-\e B})^\top \bar S_\e^{-}(e^{-\e C}-e^{-\e B})x \right] \\
&=\tr(\bar S_\e^{-}S_\e)-\rank \sigma + \log \frac{\det^*(\bar S_\e)}{\det^*(S_\e)}\\
    &+\tr( \Pi^{-1} (e^{-\e C}-e^{-\e B})^\top \bar S_\e^{-}(e^{-\e C}-e^{-\e B}))
\end{split}
\end{equation*}
To obtain the last line, we used the trace trick for computing Gaussian integrals of bilinear forms. The proof follows by Proposition \ref{prop: epr transition kernels}.
\end{proof}

\Xchapter{Bayesian mechanics}{Bayesian mechanics for stationary processes}{By Lancelot Da Costa, Karl Friston, Conor Heins, Grigorios A. Pavliotis\blfootnote{\normalsize \textbf{Adapted from:} L Da Costa, K Friston, C Heins, GA Pavliotis. Bayesian mechanics for stationary processes. \textit{Proceedings of the
Royal Society A}. 2021.}}



\newpage

\section{Abstract}
This chapter develops a Bayesian mechanics for adaptive systems.

Firstly, we model the interface between a system and its environment with a Markov blanket. This affords conditions under which states internal to the blanket encode information about external states.

Second, we introduce dynamics and represent adaptive systems as Markov blankets at steady-state. This allows us to identify a wide class of systems whose internal states appear to infer external states, consistent with variational inference in Bayesian statistics and theoretical neuroscience.

Finally, we partition the blanket into sensory and active states. It follows that active states can be seen as performing active inference and well-known forms of stochastic control (such as PID control), which are prominent formulations of adaptive behaviour in theoretical biology and engineering.

\textbf{Keywords: }Markov blanket, variational Bayesian inference, active inference, non-equilibrium steady-state, predictive processing, free-energy principle

\section{Introduction}
Any object of study must be, implicitly or explicitly, separated from its environment. This implies a boundary that separates it from its surroundings, and which persists for at least as long as the system exists.

In this article, we explore the consequences of a boundary mediating interactions between states internal and external to a system. This provides a useful metaphor to think about biological systems, which comprise spatially bounded, interacting components, nested at several spatial scales \cite{hespMultiscaleViewEmergent2019,kirchhoffMarkovBlanketsLife2018}: for example, the membrane of a cell acts as a boundary through which the cell communicates with its environment, and the same can be said of the sensory receptors and muscles that bound the nervous system.

By examining the dynamics of persistent, bounded systems, we identify a wide class of systems wherein the states internal to a boundary appear to infer those states outside the boundary---a description which we refer to as Bayesian mechanics. Moreover, if we assume that the boundary comprises sensory and active states, we can identify the dynamics of active states with well-known descriptions of adaptive behaviour from theoretical biology and stochastic control. 

In what follows, we link a purely mathematical formulation of interfaces and dynamics with descriptions of belief updating and behaviour found in the biological sciences and engineering. Altogether, this can be seen as a model of adaptive agents, as these interface with their environment through sensory and active states and furthermore behave so as to preserve a target steady-state.

\subsection{Outline of chapter}

This chapter has three parts, each of which introduces a simple, but fundamental, move.
\begin{enumerate}
    \item The first is to partition the world into internal and external states whose boundary is modelled with a Markov blanket \cite{pearlGraphicalModelsProbabilistic1998,bishopPatternRecognitionMachine2006}. This allows us to identify conditions under which internal states encode information about external states.
    \item The second move is to equip this partition with stochastic dynamics. The key consequence of this is that internal states can be seen as continuously inferring external states, consistent with variational inference in Bayesian statistics and with predictive processing accounts of biological neural networks in theoretical neuroscience.
    \item The third move is to partition the boundary into sensory and active states. It follows that active states can be seen as performing active inference and stochastic control, which are prominent descriptions of adaptive behaviour in biological agents, machine learning and robotics.
\end{enumerate}

\subsection{Related work}

The emergence and sustaining of complex (dissipative) structures have been subjects of long-standing research starting from the work of Prigogine \cite{nicolisSelforganizationNonequilibriumSystems1977,goldbeterDissipativeStructuresBiological2018}, followed notably by Haken’s synergetics \cite{hakenSynergeticsIntroductionNonequilibrium1978}, and in recent years, the statistical physics of adaptation \cite{perunovStatisticalPhysicsAdaptation2016}. A central theme of these works is that complex systems can only emerge and sustain themselves far from equilibrium \cite{jefferyStatisticalMechanicsLife2019,englandStatisticalPhysicsSelfreplication2013,skinnerImprovedBoundsEntropy2021}.

Information processing has long been recognised as a hallmark of cognition in biological systems. In light of this, theoretical physicists have identified basic instances of information processing in systems far from equilibrium using tools from information theory, such as how a drive for metabolic efficiency can lead a system to become predictive \cite{dunnLearningInferenceNonequilibrium2013,stillThermodynamicCostBenefit2020,stillThermodynamicsPrediction2012,ueltzhofferThermodynamicsPredictionDissipative2020}.

A fundamental aspect of biological systems is a self-organisation of various interacting components at several spatial scales \cite{hespMultiscaleViewEmergent2019,kirchhoffMarkovBlanketsLife2018}. Much research currently focuses on multipartite processes---modelling interactions between various sub-components that form biological systems---and how their interactions constrain the thermodynamics of the whole \cite{kardesThermodynamicUncertaintyRelations2021,wolpertMinimalEntropyProduction2020,wolpertUncertaintyRelationsFluctuation2020,crooksMarginalConditionalSecond2019,horowitzThermodynamicsContinuousInformation2014}.

At the confluence of these efforts, researchers have sought to explain cognition in biological systems. Since the advent of the 20th century, Bayesian inference has been used to describe various cognitive processes in the brain \cite{pougetInferenceComputationPopulation2003,knillBayesianBrainRole2004,fristonFreeenergyPrincipleUnified2010,raoPredictiveCodingVisual1999,fristonActionBehaviorFreeenergy2010}. In particular, the free energy principle \cite{fristonFreeenergyPrincipleUnified2010}, a prominent theory of self-organisation from the neurosciences, postulates that Bayesian inference can be used to describe the dynamics of multipartite, persistent systems modelled as Markov blankets at non-equilibrium steady-state \cite{fristonParcelsParticlesMarkov2020,fristonLifeWeKnow2013,parrMarkovBlanketsInformation2020,fristonFreeEnergyPrinciple2019a,fristonStochasticChaosMarkov2021}.

This chapter connects and develops some of the key themes from this literature. Starting from fundamental considerations about adaptive systems, we develop a physics of things that hold beliefs about other things--consistently with Bayesian inference--and explore how it relates to known descriptions of action and behaviour from the neurosciences and engineering. Our contribution is theoretical: from a biophysicist's perspective, this chapter describes how Bayesian descriptions of biological cognition and behaviour can emerge from standard accounts of physics. From an engineer's perspective this chapter contextualises some of the most common stochastic control methods and reminds us how these can be extended to suit more sophisticated control problems.

\subsection{Notation}

Let $\Pi \in \R^{d \times d}$ be a square matrix with real coefficients. Let $\eta, b, \mu$ denote a partition of the states $[\![1, d]\!]$, so that
\begin{align*}
    \Pi = \begin{bmatrix} \Pi_{\eta} & \Pi_{\eta b} & \Pi_{\eta \mu}\\
 \Pi_{b \eta} &\Pi_{b}&\Pi_{b \mu} \\
 \Pi_{\mu \eta}& \Pi_{\mu b}& \Pi_{\mu} \end{bmatrix}.
\end{align*}
We denote principal submatrices with one index only (i.e., we use $\Pi_{\eta}$ instead of $\Pi_{\eta\eta}$). Similarly, principal submatrices involving various indices are denoted with a colon

\begin{align*}
    \Pi_{\eta :b } &:=\begin{bmatrix} \Pi_{\eta} & \Pi_{\eta b} \\
 \Pi_{b \eta} &\Pi_{b} \end{bmatrix}.
\end{align*}

When a square matrix $\Pi$ is symmetric positive-definite we write $\Pi \succ 0$. $\ker$, $\operatorname{Im}$ and $\cdot^-$ respectively denote the kernel, image and Moore-Penrose pseudo-inverse of a linear map or matrix, e.g., a non-necessarily square matrix such as $\Pi_{\mu b}$. In our notation, indexing takes precedence over (pseudo) inversion, for example,
\begin{align*}
    \Pi_{\mu b}^{-} := \left(\Pi_{\mu b}\right)^- \neq (\Pi^-)_{\mu b}.
\end{align*}

\section{Markov blankets}
\label{sec: markov blanket}

The section formalises the notion of boundary between a system and its environment as a Markov blanket \cite{pearlGraphicalModelsProbabilistic1998,bishopPatternRecognitionMachine2006}, depicted graphically in Figure \ref{fig: 3 way MB}. Intuitive examples of a Markov blanket are that of a cell membrane, mediating all interactions between the inside and the outside of the cell, or that of sensory receptors and muscles that bound the nervous system.

\begin{figure}[!h]
\centering\includegraphics[width=0.6\textwidth]{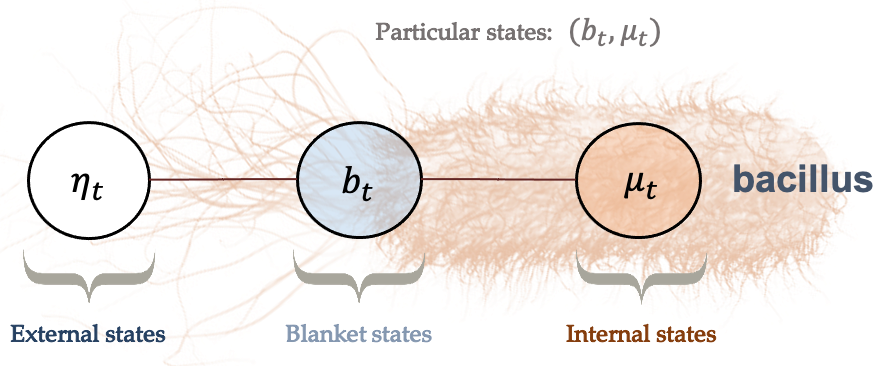}
\caption[Markov blanket]{\textbf{Markov blanket} depicted graphically as an undirected graphical model, also known as a Markov random field \cite{wainwrightGraphicalModelsExponential2007,bishopPatternRecognitionMachine2006}. (A Markov random field is a Bayesian network whose directed arrows are replaced by undirected arrows). The circles represent random variables. The lines represent conditional dependencies between random variables. The Markov blanket condition means that there is no line between $\mu$ and $\eta$. This means that, $\mu$ and $\eta$ are conditionally independent given $b$. In other words, knowing the internal state $\mu$, does not afford additional information about the external state $\eta$ when the blanket state $b$ is known. Thus blanket states act as an informational boundary between internal and external states.}
\label{fig: 3 way MB}
\end{figure}

To formalise this intuition, we model the world's state as a random variable $x$ with corresponding probability distribution $p$ over a state-space $\mathcal X= \R^d$. We partition the state-space of $x$ into \textit{external}, \textit{blanket} and \textit{internal} states:
\begin{align*}
x&= (\eta, b, \mu)\\
\mathcal X&= \mathcal E \times \mathcal B \times \mathcal I.
\end{align*}
External, blanket and internal state-spaces ($\mathcal E, \mathcal B, \mathcal I$) are taken to be Euclidean spaces for simplicity.

A Markov blanket is a statement of conditional independence between internal and external states given blanket states.

\begin{definition}[Markov blanket]
A Markov blanket is defined as

\begin{equation}
\label{eq: def Markov blanket}\tag{M.B.}
          \eta \perp \mu \mid b
\end{equation}
That is, blanket states are a Markov blanket separating $\mu, \eta$ \cite{pearlGraphicalModelsProbabilistic1998,bishopPatternRecognitionMachine2006}.
\end{definition}

The existence of a Markov blanket can be expressed in several equivalent ways
\begin{align}
\label{eq: Markov blanket densities factorising}
  \eqref{eq: def Markov blanket}  \iff p(\eta, \mu | b) = p(\eta | b)p(\mu | b)
  \iff  p(\eta | b, \mu) = p(\eta | b)   \iff p(\mu | b, \eta) = p(\mu | b).
\end{align}

For now, we will consider a (non-degenerate) Gaussian distribution $p$ encoding the distribution of states of the world
\begin{align*}
    p(x)&:= \mathcal N(x; 0,\Pi^{-1}), \quad \Pi \succ 0,
\end{align*}
with associated precision (i.e., inverse covariance) matrix $\Pi$. Throughout, we will denote the (positive definite) covariance by $\Sigma := \Pi^{-1}$. Unpacking \eqref{eq: Markov blanket densities factorising} in terms of Gaussian densities, we find that a Markov blanket is equivalent to a sparsity in the precision matrix

\begin{align}
\label{eq: MB is sparsity in precision matrix}
    \eqref{eq: def Markov blanket} \iff \Pi_{\eta \mu}= \Pi_{\mu \eta}=0.
\end{align}

\begin{example}
For example,
\begin{align*}
 &\Pi = \begin{bmatrix} 2 & 1 &0\\
 1&2&1 \\
 0& 1& 2\end{bmatrix} \Rightarrow \Sigma_{\eta:b}^{-1} = \begin{bmatrix} 2 & 1 \\
 1&1.5\end{bmatrix}, \Sigma_{b: \mu}^{-1} = \begin{bmatrix} 1.5 & 1 \\
 1& 2\end{bmatrix}
\end{align*}
Then,
\begin{align*}
    p(\eta, \mu |b) &\propto p(\eta, \mu , b)\propto \exp\left( -\frac 1 2 x\cdot \Pi x\right)\\
    &\propto \exp\left( -\frac 1 2 \begin{bmatrix} \eta , b \end{bmatrix}\Sigma_{\eta:b}^{-1} \begin{bmatrix} \eta \\ b \end{bmatrix} - \frac 1 2 \begin{bmatrix} b , \mu\end{bmatrix}\Sigma_{b: \mu}^{-1} \begin{bmatrix}  b\\\mu \end{bmatrix}\right)  \propto p(\eta, b)p( b, \mu ) \propto p(\eta|b)p(\mu |b).
\end{align*}
Thus, the Markov blanket condition \eqref{eq: Markov blanket densities factorising} holds.
\end{example}

\subsection{Expected internal and external states}

Blanket states act as an information boundary between external and internal states. Given a blanket state, we can express the conditional probability densities over external and internal states (using \eqref{eq: Markov blanket densities factorising} and \cite[Prop. 3.13]{eatonMultivariateStatisticsVector2007a})\footnote{Note that $\Pi_{\eta},\Pi_{\mu}$ are invertible as principal submatrices of a positive definite matrix.}

\begin{equation}
\label{eq: posterior beliefs}
\begin{split}
    p(\eta|b)&= \mathcal N(\eta; \Sigma_{\eta b} \Sigma_{b}^{-1} b,\: \Pi_{\eta}^{-1}), \\ 
    p(\mu|b)&= \mathcal N(\mu; \Sigma_{\mu b} \Sigma_{b}^{-1} b,\: \Pi_{\mu}^{-1}). 
\end{split}
\end{equation}

This enables us to associate to any blanket state its corresponding expected external and expected internal states:
    
    \begin{align*}
    \boldsymbol \eta(b) &= \E [\eta \mid b]= \E_{p(\eta|b)}[\eta] = \Sigma_{\eta b} \Sigma_{b}^{-1} b \in \mathcal E\\
    \boldsymbol \mu(b) &= \E [\mu \mid b] =\E_{p(\mu|b)}[\mu] =\Sigma_{\mu b} \Sigma_{b}^{-1} b \in \mathcal I.
    \end{align*}

Pursuing the example of the nervous system, each sensory impression on the retina and oculomotor orientation (blanket state) is associated with an expected scene that caused sensory input (expected external state) and an expected pattern of neural activity in the visual cortex (expected internal state) \cite{parrComputationalNeurologyActive2019}.

\subsection{Synchronisation map}

A central question is whether and how expected internal states encode information about expected external states. For this, we need to characterise a synchronisation function $\sigma$, mapping the expected internal state to the expected external state, given a blanket state $\sigma(\boldsymbol \mu(b))=\boldsymbol \eta(b)$. This is summarised in the following commutative diagram:

\begin{equation*}
  \begin{tikzcd}
     & b \in \mathcal B \arrow[dl, "\boldsymbol \eta"']\arrow[dr, "\boldsymbol \mu"]  &  \\
    \operatorname{Image}(\boldsymbol \eta) \arrow[rr, leftarrow, dotted, "\sigma"'] &  & \operatorname{Image}(\boldsymbol \mu)
    \end{tikzcd}
\end{equation*}

The existence of $\sigma$ is guaranteed, for instance, if the expected internal state completely determines the blanket state---that is, when no information is lost in the mapping $b \mapsto \boldsymbol \mu(b)$ in virtue of it being one-to-one. In general, however, many blanket states may correspond to an unique expected internal state. Intuitively, consider the various neural pathways that compress the signal arriving from retinal photoreceptors \cite{meisterNeuralCodeRetina1999}, thus many different (hopefully similar) retinal impressions lead to the same signal arriving in the visual cortex.

\subsubsection{Existence}

The key for the existence of a function $\sigma$ mapping expected internal states to expected external states given blanket states, is that for any two blanket states associated with the same expected internal state, these be associated with the same expected external state. This non-degeneracy means that the internal states (e.g., patterns of activity in the visual cortex) have enough capacity to represent all possible expected external states (e.g., 3D scenes of the environment). We formalise this in the following Lemma:

\begin{lemma}
\label{lemma: equiv sigma well defined}
The following are equivalent:
\begin{enumerate}
    \item There exists a function $\sigma : \operatorname{Image}( \boldsymbol \mu)   \to \operatorname{Image} (\boldsymbol \eta)$ such that for any blanket state $b \in \mathcal B$
    \begin{equation*}
        \sigma(\boldsymbol \mu(b))=\boldsymbol \eta(b).
    \end{equation*}
    \item For any two blanket states $b_1, b_2 \in \B$
        \begin{equation*}
        \boldsymbol \mu(b_1)= \boldsymbol \mu(b_2) \Rightarrow \boldsymbol \eta(b_1)= \boldsymbol \eta(b_2).
        \end{equation*}
    \item $\ker \Sigma_{\mu b} \subset \ker \Sigma_{\eta b}$.
    \item $\ker \Pi_{\mu b} \subset \ker \Pi_{\eta b}$.
\end{enumerate}
\end{lemma}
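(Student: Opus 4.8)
The plan is to prove the chain of equivalences by grouping them into three links: $(1)\Leftrightarrow(2)$, $(2)\Leftrightarrow(3)$, and $(3)\Leftrightarrow(4)$, each of which is short once the right observation is in place. The guiding principle throughout is that both $\boldsymbol\mu(b)=\Sigma_{\mu b}\Sigma_b^{-1}b$ and $\boldsymbol\eta(b)=\Sigma_{\eta b}\Sigma_b^{-1}b$ are \emph{linear} maps of the blanket state $b$, so every statement can be recast as an inclusion of kernels of linear maps. Note at the outset that $\Sigma_b$ is invertible (being a principal submatrix of the positive definite $\Sigma=\Pi^{-1}$), and likewise $\Pi_\eta,\Pi_\mu$ are invertible (principal submatrices of $\Pi\succ0$, as already observed in the footnote); these invertibility facts are what make the kernel manipulations reversible.

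For $(1)\Leftrightarrow(2)$ I would invoke the elementary factorisation lemma for maps: since $\boldsymbol\mu$ is by construction surjective onto $\operatorname{Image}(\boldsymbol\mu)$, a function $\sigma$ with $\sigma\circ\boldsymbol\mu=\boldsymbol\eta$ exists if and only if $\boldsymbol\eta$ is constant on the fibres of $\boldsymbol\mu$, i.e. $\boldsymbol\mu(b_1)=\boldsymbol\mu(b_2)\Rightarrow\boldsymbol\eta(b_1)=\boldsymbol\eta(b_2)$. The forward direction is immediate by composing, and the reverse direction defines $\sigma(m):=\boldsymbol\eta(b)$ for any $b$ with $\boldsymbol\mu(b)=m$, well-definedness being exactly condition (2). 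For $(2)\Leftrightarrow(3)$, linearity turns (2) into the kernel inclusion $\ker\boldsymbol\mu\subseteq\ker\boldsymbol\eta$. Writing $b\in\ker\boldsymbol\mu\iff\Sigma_b^{-1}b\in\ker\Sigma_{\mu b}$ and similarly for $\boldsymbol\eta$, and using that $b\mapsto\Sigma_b^{-1}b$ is a bijection of $\mathcal B$, the inclusion $\ker\boldsymbol\mu\subseteq\ker\boldsymbol\eta$ is equivalent to $\ker\Sigma_{\mu b}\subseteq\ker\Sigma_{\eta b}$.

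The link $(3)\Leftrightarrow(4)$ is where the Markov blanket hypothesis enters, and I expect it to be the main obstacle, since it requires translating between the covariance and precision off-diagonal blocks. The plan is to read off the $b$-indexed block column of the identity $\Pi\Sigma=I$, which gives the three equations $\Pi_\eta\Sigma_{\eta b}+\Pi_{\eta b}\Sigma_b+\Pi_{\eta\mu}\Sigma_{\mu b}=0$, the middle row equals $I$, and $\Pi_{\mu\eta}\Sigma_{\eta b}+\Pi_{\mu b}\Sigma_b+\Pi_\mu\Sigma_{\mu b}=0$. Imposing the blanket condition $\Pi_{\eta\mu}=\Pi_{\mu\eta}=0$ from \eqref{eq: MB is sparsity in precision matrix} collapses the first and third to
\begin{align*}
\Sigma_{\eta b}=-\Pi_\eta^{-1}\Pi_{\eta b}\Sigma_b,\qquad \Sigma_{\mu b}=-\Pi_\mu^{-1}\Pi_{\mu b}\Sigma_b.
\end{align*}
Because $\Pi_\eta^{-1},\Pi_\mu^{-1}$ are invertible, left-multiplication does not change kernels, so $\ker\Sigma_{\eta b}=\ker(\Pi_{\eta b}\Sigma_b)$ and $\ker\Sigma_{\mu b}=\ker(\Pi_{\mu b}\Sigma_b)$; and since $\Sigma_b$ is invertible, $\ker(\Pi_{\cdot b}\Sigma_b)=\Sigma_b^{-1}\ker\Pi_{\cdot b}$. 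Applying the bijection $\Sigma_b$ to both sides of the desired inclusion then shows $\ker\Sigma_{\mu b}\subseteq\ker\Sigma_{\eta b}$ is equivalent to $\ker\Pi_{\mu b}\subseteq\ker\Pi_{\eta b}$, completing the cycle. The only delicate bookkeeping is keeping the direction of the bijections straight so that each kernel-inclusion equivalence genuinely goes both ways, which is guaranteed precisely by the invertibility of $\Sigma_b,\Pi_\eta,\Pi_\mu$.
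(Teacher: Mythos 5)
Your proposal is correct and follows essentially the same route as the paper's proof: $(1)\Leftrightarrow(2)$ by the universal property of factoring a map through a surjection, $(2)\Leftrightarrow(3)$ by linearity plus invertibility of $\Sigma_b$, and $(3)\Leftrightarrow(4)$ via the identities $\Pi_\eta\Sigma_{\eta b}=-\Pi_{\eta b}\Sigma_b$, $\Pi_\mu\Sigma_{\mu b}=-\Pi_{\mu b}\Sigma_b$ combined with invertibility of $\Pi_\eta,\Pi_\mu,\Sigma_b$. The only cosmetic difference is that you derive these block identities directly from the $b$-column of $\Pi\Sigma=I$ under the blanket condition $\Pi_{\eta\mu}=0$, where the paper cites them from Horn and Johnson; the content is identical.
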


See Section \ref{app: sync map existence proof} for a proof of Lemma \ref{lemma: equiv sigma well defined}.

\begin{example}
\begin{itemize}
    \item When external, blanket and internal states are one-dimensional, the existence of a synchronisation map is equivalent to $\Pi_{\mu b} \neq 0$ or $\Pi_{\mu b}= \Pi_{\eta b} =0$.
    \item If $\Pi_{\mu b}$ is chosen at random--its entries sampled from a non-degenerate Gaussian or uniform distribution--then $\Pi_{\mu b}$ has full rank with probability $1$. If furthermore the blanket state-space $\mathcal B$ has lower or equal dimensionality than the internal state-space $\mathcal I$, we obtain that $\Pi_{\mu b}$ is one-to-one (i.e., $\ker \Pi_{\mu b}= 0$) with probability $1$. Thus, in this case, the conditions of Lemma \ref{lemma: equiv sigma well defined} are fulfilled with probability $1$.
\end{itemize}
\end{example}

\subsubsection{Construction}

The key idea to map an expected internal state $\boldsymbol \mu(b)$ to an expected external state $\boldsymbol \eta(b)$ is to: 1) find a blanket state that maps to this expected internal state (i.e., by inverting $\boldsymbol \mu$) and 2) from this blanket state, find the corresponding expected external state (i.e., by applying $\boldsymbol \eta$):

\begin{equation*}
  \begin{tikzcd}
    & & b \in \mathcal B \arrow[ddll, "\boldsymbol \eta"']\arrow[ddrr, "\boldsymbol \mu"]  &  &\\
    &&&& \\
    \operatorname{Image}(\boldsymbol \eta) \arrow[rrrr, leftarrow, "\sigma = \boldsymbol \eta \circ \boldsymbol \mu^-"'] &  & && \operatorname{Image}(\boldsymbol \mu) \arrow[uull, bend left, rightarrow, "\boldsymbol \mu^-" near end]
    \end{tikzcd}
\end{equation*}

We now proceed to solving this problem. Given an internal state $\mu$, we study the set of blanket states $b$ such that $\boldsymbol \mu(b)= \mu$
\begin{equation}
\label{eq: inverse problem blanket states}
    \boldsymbol \mu(b)= \Sigma_{\mu b} \Sigma_{b}^{-1} b= \mu \iff b \in \boldsymbol \mu^{-1} (\mu)= \Sigma_{b}\Sigma_{\mu b}^{-1}\mu.
\end{equation}
Here the inverse on the right hand side of \eqref{eq: inverse problem blanket states} is understood as the preimage of a linear map. We know that this system of linear equations has a vector space of solutions given by \cite{jamesGeneralisedInverse1978} 

\begin{equation}
\label{eq: solution system of equations}
    \boldsymbol \mu^{-1} (\mu)=\left\{\Sigma_{b}\Sigma_{\mu b}^-\mu + \left(\id - \Sigma_{b}\Sigma_{\mu b}^-\Sigma_{\mu b} \Sigma_{b}^{-1}\right)b : b \in \mathcal B\right\}.
\end{equation}
Among these, we choose
\begin{equation*}
    \boldsymbol \mu^- (\mu)= \Sigma_{b}\Sigma_{\mu b}^-\mu.
\end{equation*}

\begin{definition}[Synchronisation map]
We define a synchronisation function that maps to an internal state a corresponding most likely internal state\footnote{This mapping was derived independently of our work in \cite[Section 3.2]{aguileraHowParticularPhysics2022a}.}\footnote{Replacing $\boldsymbol \mu^- (\mu)$ by any other element of \eqref{eq: solution system of equations} would lead to the same synchronisation map provided that the conditions of Lemma \ref{lemma: equiv sigma well defined} are satisfied.}
\begin{equation*}
\label{eq: sync map def}
\begin{split}
    \sigma &: \operatorname{Im} \boldsymbol \mu   \to \operatorname{Im} \boldsymbol \eta \\
    \mu &\mapsto \boldsymbol{\eta}(\boldsymbol \mu^- (\mu)) =  \Sigma_{\eta b}\Sigma_{\mu b}^-\mu = \Pi^{-1}_\eta \Pi_{\eta b} \Pi_{\mu b}^- \Pi_\mu \mu.
\end{split}
\end{equation*}
The expression in terms of the precision matrix is a byproduct of Section \ref{app: sync map existence proof}.
\end{definition}

Note that we can always define such $\sigma$, however, it is only when the conditions of Lemma \ref{lemma: equiv sigma well defined} are fulfilled that $\sigma$ maps expected internal states to expected external states $\sigma (\boldsymbol \mu (b)) = \boldsymbol \eta (b)$. When this is not the case, the internal states do not fully represent external states, which leads to a partly degenerate type of representation, see Figure \ref{fig: sigma example non-example} for a numerical illustration obtained by sampling from a Gaussian distribution, in the non-degenerate (left) and degenerate cases (right), respectively.

\begin{figure}
    \centering
    \includegraphics[width= 0.47\textwidth]{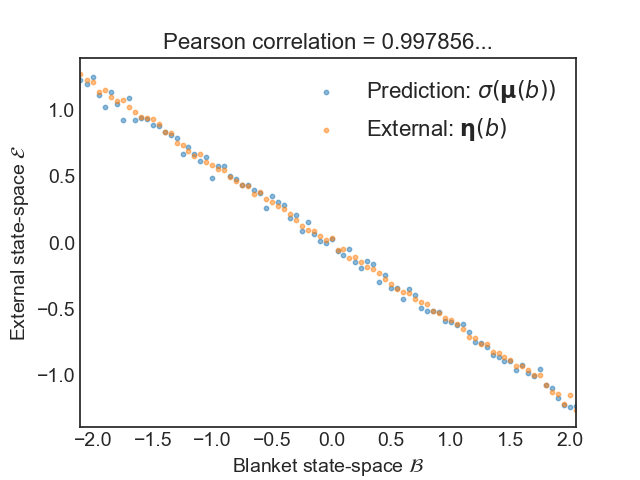}
    \includegraphics[width= 0.47\textwidth]{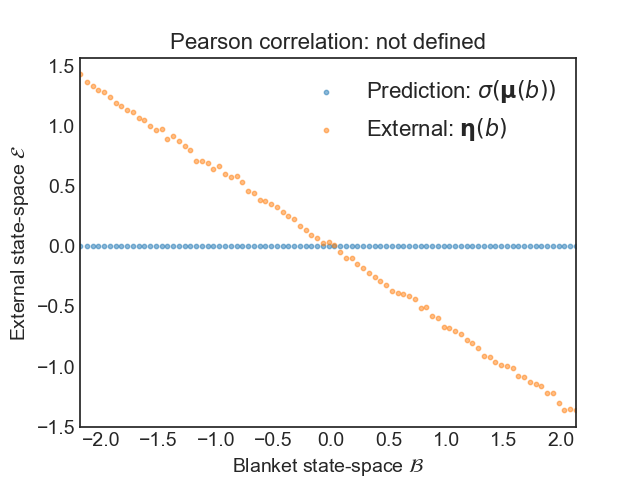}
    \caption[Synchronisation map: example and non-example]{\textbf{Synchronisation map: example and non-example.} This figure plots expected external states given blanket states $\boldsymbol \eta (b)$ (in orange), and the corresponding prediction encoded by internal states $\sigma (\boldsymbol \mu (b))$ (in blue). In this example, external, blanket and internal state-spaces are taken to be one dimensional. We show the correspondence under the conditions of Lemma \ref{lemma: equiv sigma well defined} (left panel) and when these are not satisfied (right panel). To generate these data, 1) we drew $10^6$ samples from a Gaussian distribution with a Markov blanket, 2) we partitioned the blanket state-space into several bins, 3) we obtained the expected external and internal states given blanket states empirically by averaging samples from each bin, and finally, 4) we applied the synchronisation map to the (empirical) expected internal states given blanket states.}
    \label{fig: sigma example non-example}
\end{figure}

\section{Bayesian mechanics}

In order to study the time-evolution of systems with a Markov blanket, we introduce dynamics into the external, blanket and internal states. Henceforth, we assume a synchronisation map under the conditions of Lemma \ref{lemma: equiv sigma well defined}.

\subsection{Processes at a Gaussian steady-state}

We consider stochastic processes at a Gaussian steady-state $p$ with a Markov blanket. The steady-state assumption means that the system's overall configuration persists over time (e.g., it does not dissipate). In other words, we have a Gaussian density $p = \mathcal N(0, \Pi^{-1})$ with a Markov blanket \eqref{eq: MB is sparsity in precision matrix} and a stochastic process distributed according to $p$ at every point in time
\begin{equation*}
    x_t \sim p = \mathcal N(0, \Pi^{-1}) \quad \text{ for any } t.
\end{equation*}
Recalling our partition into external, blanket and internal states, this affords a Markov blanket that persists over time, see Figure \ref{fig: 3 way dynamical MB}
\begin{equation}
\label{eq: MB over time}
   x_t= (\eta_t,b_t, \mu_t) \sim p \Rightarrow \eta_t \perp \mu_t \mid b_t.
\end{equation}

\begin{figure}
    \centering
    \includegraphics[width= 0.7\textwidth]{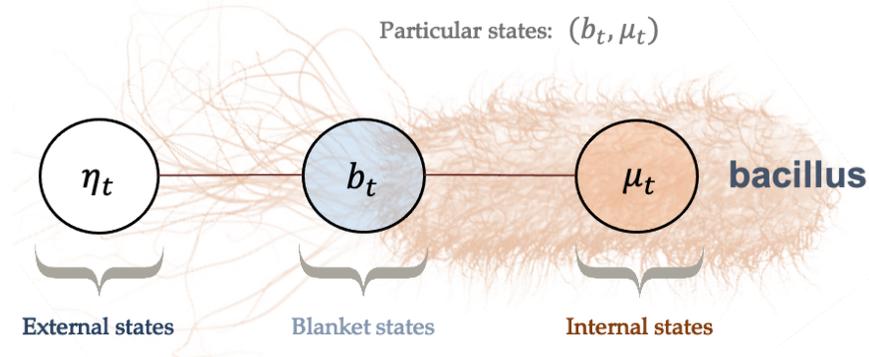}
    \caption[Markov blanket evolving in time]{\textbf{Markov blanket evolving in time.} We use a bacillus to depict an intuitive example of a Markov blanket that persists over time. Here, the blanket states represent the membrane and actin filaments of the cytoskeleton, which mediate all interactions between internal states and the external medium (external states).}
    \label{fig: 3 way dynamical MB}
\end{figure}

Note that we do not require $x_t$ to be independent samples from the steady-state distribution $p$. On the contrary, $x_t$ may be generated by extremely complex, non-linear, and possibly stochastic equations of motion. See Example \ref{eg: processes at a gaussian steady-state} and Figure \ref{fig: stationary process with a Markov blanket} for details.

\begin{figure}
    \centering
    \includegraphics[width= 0.47\textwidth]{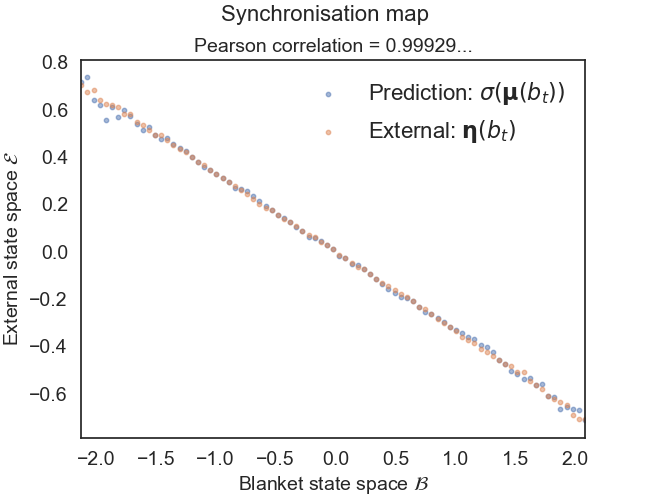}
    \includegraphics[width= 0.47\textwidth]{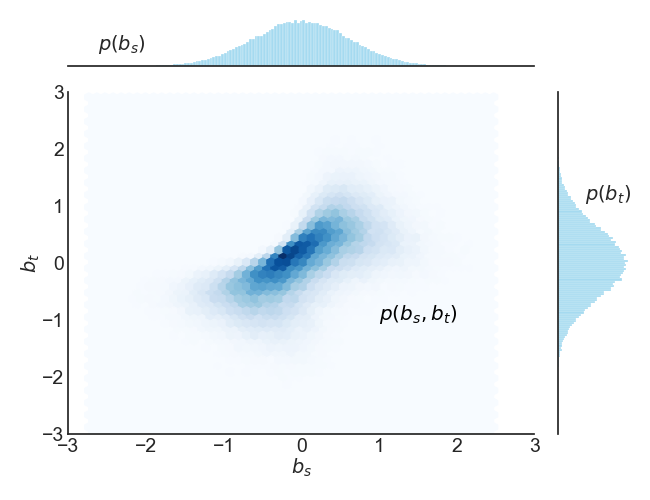}
    \caption[Processes at a Gaussian steady-state]{\textbf{Processes at a Gaussian steady-state}. This figure illustrates the synchronisation map and transition probabilities of processes at a Gaussian steady-state. \textit{Left:} We plot the synchronisation map as in Figure \ref{fig: sigma example non-example}, only, here, the samples are drawn from trajectories of a diffusion process \eqref{eq: Gaussian preserving diffusions} with a Markov blanket. Although this is not the case here, one might obtain a slightly noisier correspondence between predictions $\sigma(\boldsymbol \mu(b_t))$ and expected external states $\boldsymbol \eta(b_t)$---compared to Figure \ref{fig: sigma example non-example}---in numerical discretisations of a diffusion process. This is because the steady-state of a numerical discretisation usually differs slightly from the steady-state of the continuous-time process \cite{mattinglyConvergenceNumericalTimeAveraging2010}.
    \textit{Right:} This panel plots the transition probabilities of the same diffusion process  \eqref{eq: Gaussian preserving diffusions}, for the blanket state at two different times. The joint distribution (depicted as a heat map) is not Gaussian but its marginals---the steady-state density---are Gaussian. This shows that in general, processes at a Gaussian steady-state are not Gaussian processes. In fact, the Ornstein-Uhlenbeck process is the only stationary diffusion process \eqref{eq: Gaussian preserving diffusions} that is a Gaussian process, so the transition probabilities of non-linear diffusion processes \eqref{eq: Gaussian preserving diffusions} are never multivariate Gaussians.}
    \label{fig: stationary process with a Markov blanket}
\end{figure}

\begin{example}
\label{eg: processes at a gaussian steady-state}
The dynamics of $x_t$ are described by a stochastic process at a Gaussian steady-state $p = \mathcal N(0, \Pi^{-1})$. There is a large class of such processes, for example:

\begin{itemize}
\item Stationary diffusion processes, with initial condition $x_0 \sim p$. Their time-evolution is given by an Itô stochastic differential equation (see Section \ref{sec: helmholtz decomposition}):
\begin{equation}
\label{eq: Gaussian preserving diffusions}
\begin{split}
    dx_t &= (\Gamma+Q)(x_t) \nabla \log p( x_t) dt + \nabla \cdot (\Gamma+Q)(x_t) dt+ \varsigma(x_t) dW_t, \quad  \\
    &=-(\Gamma+Q)(x_t) \Pi x_t dt +\nabla \cdot (\Gamma+Q)(x_t) dt+ \varsigma (x_t) dW_t\\
    \Gamma&:= \varsigma \varsigma^\top/2 ,\quad  Q= -Q^\top.
\end{split}
\end{equation}
Here, $W_t$ is a standard Brownian motion (a.k.a., Wiener process) \cite{rogersDiffusionsMarkovProcesses2000a, pavliotisStochasticProcessesApplications2014} and $\varsigma,\Gamma,Q$ are sufficiently well-behaved matrix fields (see Section \ref{sec: helmholtz decomposition}). Namely, $\Gamma$ is the diffusion tensor (half the covariance of random fluctuations), which drives dissipative flow; $Q$ is an arbitrary antisymmetric matrix field which drives conservative (i.e., solenoidal) flow. We emphasise that there are no non-degeneracy conditions on the matrix field $\varsigma$---in particular, the process is allowed to be non-ergodic or even completely deterministic (i.e., $\varsigma \equiv 0$). Also, $\nabla \cdot $ denotes the divergence of a matrix field defined as $\left(\nabla \cdot (\Gamma + Q) \right)_i := \sum_{j} \frac{\partial}{\partial x_j} (\Gamma + Q)_{ij}$.
\item More generally, $x_t$ could be generated by any Markov process at steady-state $p$, such as the zig-zag process or the bouncy particle sampler \cite{bierkensZigZagProcessSuperefficient2019, bierkensPiecewiseDeterministicScaling2017,bouchard-coteBouncyParticleSampler2018}, by any mean-zero Gaussian process at steady-state $p$ \cite{rasmussenGaussianProcessesMachine2004} or by any random dynamical system at steady-state $p$ \cite{arnoldRandomDynamicalSystems1998}.
\end{itemize}
\end{example}

\begin{remark}
When the dynamics are given by an Itô stochastic differential equation \eqref{eq: Gaussian preserving diffusions}, a Markov blanket of the steady-state density \eqref{eq: MB is sparsity in precision matrix} does not preclude reciprocal influences between internal and external states \cite{biehlTechnicalCritiqueParts2021,fristonInterestingObservationsFree2021}. For example,
\begin{align*}
 &\Pi = \begin{bmatrix} 2 & 1 &0\\
 1&2&1 \\
 0& 1& 2\end{bmatrix}, \quad Q \equiv \begin{bmatrix} 0 & 0 &1\\
 0&0&0 \\
 -1& 0& 0\end{bmatrix},\quad  \varsigma \equiv \operatorname{Id}_3 \\
  &\Rightarrow d\begin{bmatrix} \eta_t \\b_t \\ \mu_t \end{bmatrix}=- \begin{bmatrix} 1 & 1.5 &2 \\
  0.5 & 1 &0.5 \\
  -2 &-0.5&1\end{bmatrix} \begin{bmatrix} \eta_t \\b_t \\ \mu_t \end{bmatrix} dt + \varsigma dW_t.
\end{align*}
Conversely, the absence of reciprocal coupling between two states in the drift in some instances, though not always, leads to conditional independence \cite{biehlTechnicalCritiqueParts2021,aguileraHowParticularPhysics2022a,fristonStochasticChaosMarkov2021}.
\end{remark}

\subsection{Maximum a posteriori estimation}

The Markov blanket \eqref{eq: MB over time} allows us to harness the construction of Section \ref{sec: markov blanket} to determine expected external and internal states given blanket states
    \begin{align*}
         \boldsymbol \eta_t := \boldsymbol \eta(b_t) \qquad \boldsymbol \mu_t:= \boldsymbol \mu(b_t).
    \end{align*}
Note that $\boldsymbol \eta, \boldsymbol \mu$ are linear functions of blanket states; since $b_t$ generally exhibits rough sample paths, $\boldsymbol \eta_t,\boldsymbol \mu_t$ will also exhibit very rough sample paths.

We can view the steady-state density $p$ as specifying the relationship between external states ($\eta$, causes) and particular states ($b, \mu$, consequences). In statistics, this corresponds to a generative model, a probabilistic specification of how (external) causes generate (particular) consequences.

By construction, the expected internal states encode expected external states via the synchronisation map
\begin{align*}
    \sigma( \boldsymbol \mu_t)&=\boldsymbol \eta_t,
\end{align*}
which manifests a form of generalised synchrony across the Markov blanket \cite{jafriGeneralizedSynchronyCoupled2016,cuminGeneralisingKuramotoModel2007,palaciosEmergenceSynchronyNetworks2019}. Moreover, the expected internal state $\boldsymbol \mu_t$ effectively follows the most likely cause of its sensations
    \begin{align*}
       \sigma( \boldsymbol \mu_t)&= \arg \max p(\eta_t \mid b_t) \quad \text{for any } t.
    \end{align*}
This has an interesting statistical interpretation as expected internal states perform maximum a posteriori (MAP) inference over external states.

\subsection{Predictive processing}

We can go further and associate to each internal state $\mu$ a probability distribution over external states, such that each internal state encodes beliefs about external states
    \begin{align}
        q_\mu(\eta)&:= \mathcal N(\eta; \sigma(\mu), \Pi_{\eta}^{-1}). \label{eq: def approx posterior}
    \end{align}
We will call $q_\mu$ the approximate posterior belief associated with the internal state $\mu$ due to the forecoming connection to inference. Under this specification, the mean of the approximate posterior depends upon the internal state, while its covariance equals that of the true posterior w.r.t. external states \eqref{eq: posterior beliefs}. It follows that the approximate posterior equals the true posterior when the internal state $\mu$ equals the expected internal state $\boldsymbol{\mu}(b)$ (given blanket states):

\begin{equation}
    q_{\mu}(\eta)=p(\eta|b) \iff \mu = \boldsymbol{\mu}(b). \label{eq: approx posterior equals true posterior}
\end{equation}
  
Note a potential connection with epistemic accounts of quantum mechanics; namely, a world governed by classical mechanics ($\sigma \equiv 0$ in \eqref{eq: Gaussian preserving diffusions}) in which each agent encodes Gaussian beliefs about external states could appear to the agents as reproducing many features of quantum mechanics \cite{bartlettReconstructionGaussianQuantum2012}.

Under this specification \eqref{eq: approx posterior equals true posterior}, expected internal states are the unique minimiser of a Kullback-Leibler divergence \cite{kullbackInformationSufficiency1951} 
\begin{align*}
   \boldsymbol \mu_t = \arg \min_\mu \dkl[q_{\mu}(\eta)\| p(\eta|b)]
\end{align*}
that measures the discrepancy between beliefs about the external world $q_\mu(\eta)$ and the posterior distribution over external variables. Computing the KL divergence (see Section \ref{app: free energy}), we obtain

\begin{align}
\label{eq: precision weighted prediction error}
  \boldsymbol \mu_t = \arg \min_\mu (\sigma(\mu)-\boldsymbol \eta_t)\Pi_\eta(\sigma(\mu)-\boldsymbol \eta_t)
\end{align}

In the neurosciences, the right hand side of \eqref{eq: precision weighted prediction error} is commonly known as a (squared) precision-weighted prediction error: the discrepancy between the prediction and the (expected) state of the environment is weighted with a precision matrix \cite{bogaczTutorialFreeenergyFramework2017,raoPredictiveCodingVisual1999,fristonPredictiveCodingFreeenergy2009} that derives from the steady-state density. This equation is formally similar to that found in predictive coding formulations of biological function \cite{chaoLargeScaleCorticalNetworks2018,iglesiasHierarchicalPredictionErrors2013,dawModelBasedInfluencesHumans2011,raoPredictiveCodingVisual1999}, which stipulate that organisms minimise prediction errors, and in doing so optimise their beliefs to match the distribution of external states.

\subsection{Variational Bayesian inference}

\begin{figure}
    \centering
    \includegraphics[width= 0.47\textwidth]{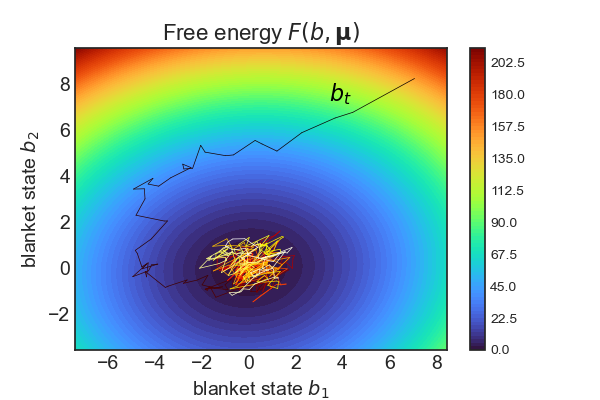}
    \includegraphics[width= 0.47\textwidth]{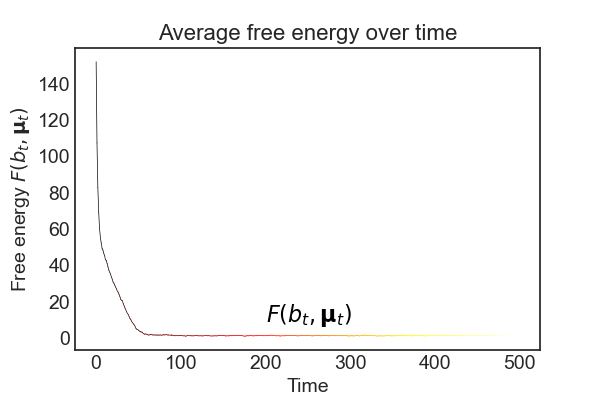}
    \includegraphics[width= 0.47\textwidth]{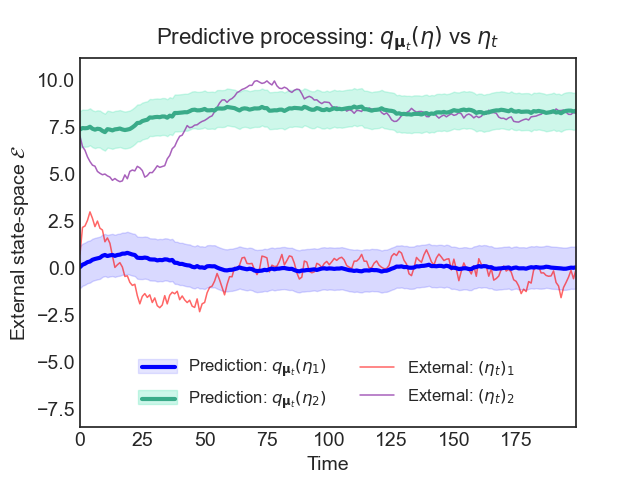}
    \includegraphics[width= 0.47\textwidth]{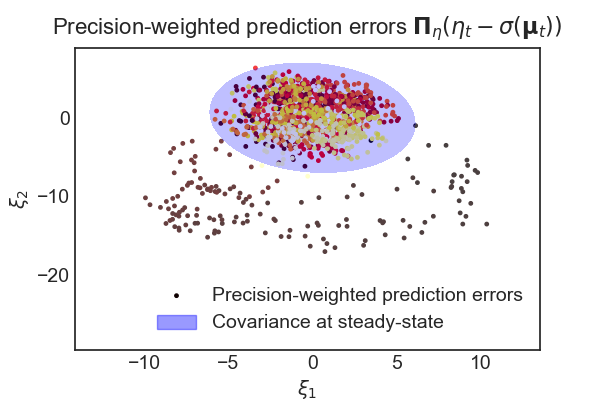}
    \caption[Variational inference and predictive processing, averaging internal variables for any blanket state]{\textbf{Variational inference and predictive processing, averaging internal variables for any blanket state}. This figure illustrates a system's behaviour after experiencing a surprising blanket state, averaging internal variables for any blanket state. This is a multidimensional Ornstein-Uhlenbeck process, with two external, blanket and internal variables, initialised at the steady-state density conditioned upon an improbable blanket state $p(x_0 |b_0)$. \textit{Upper left:} we plot a sample trajectory of the blanket states as these relax to steady-state over a contour plot of the free energy (up to a constant). \textit{Upper right:} this plots the free energy (up to a constant) over time, averaged over multiple trajectories. In this example, the rare fluctuations that climb the free energy landscape vanish on average, so that the average free energy decreases monotonically. This need not always be the case: conservative systems (i.e., $\varsigma \equiv 0$ in \eqref{eq: Gaussian preserving diffusions}) are deterministic flows along the contours of the steady-state density (see Section \ref{sec: helmholtz decomposition}). Since these contours do not generally coincide with those of $F(b , \boldsymbol{\mu})$ it follows that the free energy oscillates between its maximum and minimum value over the system's periodic trajectory. Luckily, conservative systems are not representative of dissipative, living systems. Yet, it follows that the average free energy of expected internal variables may increase, albeit only momentarily, in dissipative systems \eqref{eq: Gaussian preserving diffusions} whose solenoidal flow dominates dissipative flow. \textit{Lower left}: we illustrate the accuracy of predictions over external states of the sample path from the upper left panel. At steady-state (from timestep $\sim 100$), the predictions become accurate. The prediction of the second component is offset by four units for greater visibility, as can be seen from the longtime behaviour converging to four instead of zero. \textit{Lower right:} We show the evolution of precision-weighted prediction errors $\xi_t := \boldsymbol{\Pi}_{\eta}(\eta_t - \sigma(\boldsymbol{\mu}_t))$ over time. These are normally distributed with zero mean at steady-state.}
    \label{fig: Bayesian mechanics 6d}
\end{figure}

\begin{figure}
    \centering
    \includegraphics[width= 0.47\textwidth]{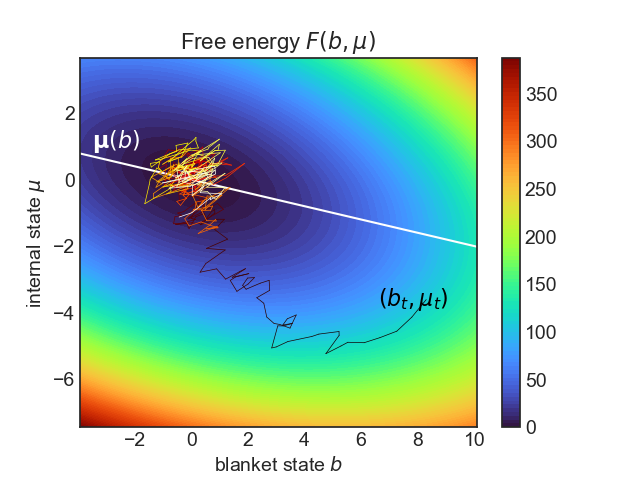}
    \includegraphics[width= 0.47\textwidth]{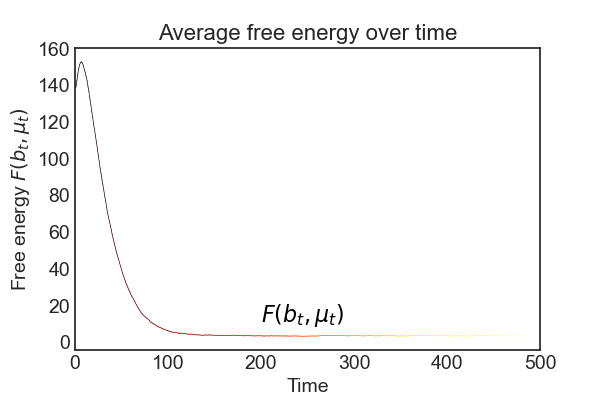}
    \includegraphics[width= 0.47\textwidth]{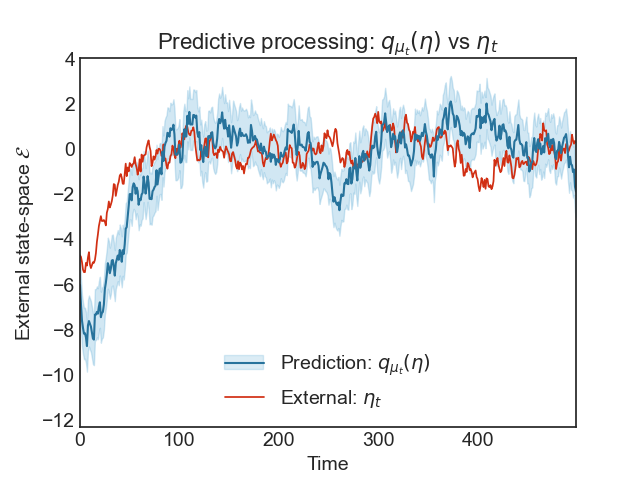}
    \includegraphics[width= 0.47\textwidth]{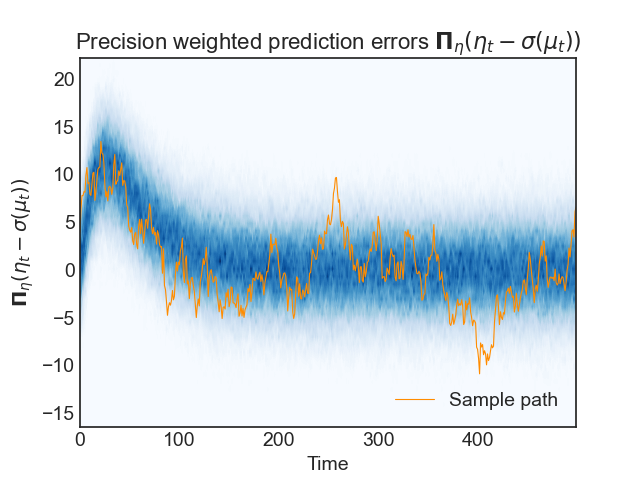}
    \caption[Variational inference and predictive processing]{\textbf{Variational inference and predictive processing}. This figure illustrates a system's behaviour after experiencing a surprising blanket state. This is a multidimensional Ornstein-Uhlenbeck process, with one external, blanket and internal variable, initialised at the steady-state density conditioned upon an improbable blanket state $p(x_0 |b_0)$.
    \textit{Upper left:} this plots a sample trajectory of particular states as these relax to steady-state over a contour plot of the free energy. The white line shows the expected internal state given blanket states, at which point inference is exact. After starting close to this line, the process is driven by solenoidal flow to regions where inference is inaccurate. Yet, solenoidal flow makes the system converge faster to steady-state \cite{ottobreMarkovChainMonte2016,rey-belletIrreversibleLangevinSamplers2015} at which point inference becomes accurate again. \textit{Upper right:} this plots the free energy (up to a constant) over time, averaged over multiple trajectories. \textit{Lower left:} we illustrate the accuracy of predictions over external states of the sample path from the upper left panel. These predictions are accurate at steady-state (from timestep $\sim 100$). \textit{Lower right:} we illustrate the (precision weighted) prediction errors over time. In orange we plot the prediction error corresponding to the sample path in the upper left panel; the other sample paths are summarised as a heat map in blue.}
    \label{fig: Bayesian mechanics 3d}
\end{figure}

We can go further and associate expected internal states to the solution to the classical variational inference problem from statistical machine learning \cite{bleiVariationalInferenceReview2017} and theoretical neurobiology \cite{bogaczTutorialFreeenergyFramework2017,buckleyFreeEnergyPrinciple2017}. Expected internal states are the unique minimiser of a free energy functional (i.e., an evidence bound \cite{bealVariationalAlgorithmsApproximate2003,bleiVariationalInferenceReview2017})
    \begin{equation}
    \label{eq: free energy def}
    \begin{split}
        F(b_t, \mu_t) &\geq F(b_t, \boldsymbol \mu_t)\\
        F(b, \mu) &= \dkl[q_{\mu}(\eta)\| p(\eta|b)] -\log p(b, \mu)\\
        &= \underbrace{\E_{q_{\mu}(\eta)}[-\log p(x) ]}_{\text{Energy}}- \underbrace{\H[q_\mu]}_{\text{Entropy}}.
    \end{split}
    \end{equation}
The last line expresses the free energy as a difference between energy and entropy: energy or accuracy measures to what extent predicted external states are close to the true external states, while entropy penalises beliefs that are overly precise.

At first sight, variational inference and predictive processing are solely useful to characterise the average internal state given blanket states at steady-state. It is then surprising to see that the free energy says a great deal about a system's expected trajectories as it relaxes to steady-state. Figure \ref{fig: Bayesian mechanics 6d} and \ref{fig: Bayesian mechanics 3d} illustrate the time-evolution of the free energy and prediction errors after exposure to a surprising stimulus. In particular, Figure \ref{fig: Bayesian mechanics 6d} averages internal variables for any blanket state: In the neurosciences, perhaps the closest analogy is the event-triggered averaging protocol, where neurophysiological responses are averaged following a fixed perturbation, such a predictable neural input or an experimentally-controlled sensory stimulus (e.g., spike-triggered averaging, event-related potentials) \cite{schwartzSpiketriggeredNeuralCharacterization2006,sayerTimeCourseAmplitude1990,luckIntroductionEventRelatedPotential2014}.

The most striking observation is the nearly monotonic decrease of the free energy as the system relaxes to steady-state. This simply follows from the fact that regions of high density under the steady-state distribution have a low free energy. This \textit{overall decrease} in free energy is the essence of the free-energy principle, which describes self-organisation at non-equilibrium steady-state \cite{fristonFreeenergyPrincipleUnified2010,fristonFreeEnergyPrinciple2019a,parrMarkovBlanketsInformation2020}. Note that the free energy, even after averaging internal variables, may decrease non-monotonically. See the explanation in Figure \ref{fig: Bayesian mechanics 6d}.

\section{Active inference and stochastic control}
\label{sec: active inference and stoch control}

In order to model agents that interact with their environment, we now partition blanket states into sensory and active states

    \begin{align*}
        b_t &= (s_t,a_t)\\
        x_t &= (\eta_t, s_t, a_t, \mu_t).
    \end{align*}
Intuitively, sensory states are the sensory receptors of the system (e.g., olfactory or visual receptors) while active states correspond to actuators through which the system influences the environment (e.g., muscles). See Figure \ref{fig: 4 way dynamical MB}. The goal of this section is to explain how autonomous states (i.e., active and internal states) respond adaptively to sensory perturbations in order to maintain the steady-state, which we interpret as the agent's preferences or goal. This allows us to relate the dynamics of autonomous states to active inference and stochastic control, which are well-known formulations of adaptive behaviour in theoretical biology and engineering.

\begin{figure}[!h]
\centering\includegraphics[width=0.7\textwidth]{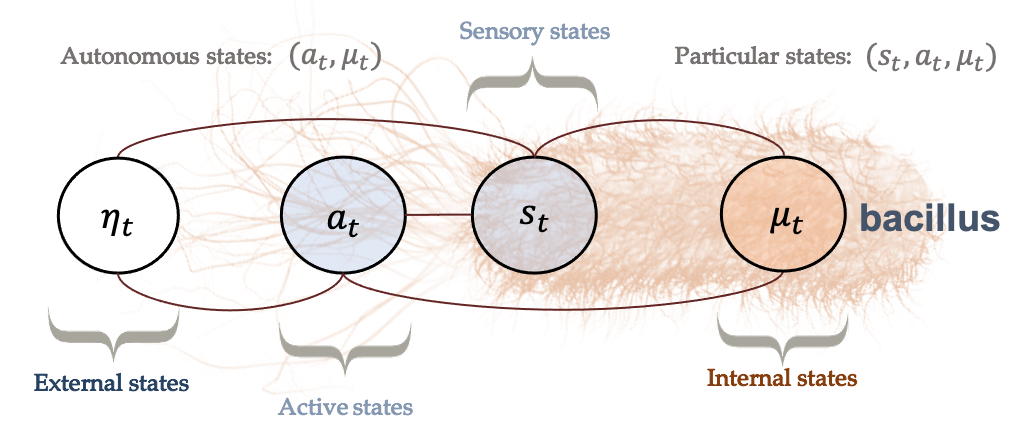}
\caption[Markov blanket evolving in time comprising sensory and active states]{\textbf{Markov blanket evolving in time comprising sensory and active states.} We continue the intuitive example from Figure \ref{fig: 3 way dynamical MB} of the bacillus as representing a Markov blanket that persists over time. The only difference is that we partition blanket states into sensory and active states. In this example, the sensory states can be seen as the bacillus' membrane, while the active states correspond to the actin filaments of the cytoskeleton.}
\label{fig: 4 way dynamical MB}
\end{figure}

\subsection{Active inference}

We now proceed to characterise autonomous states, given sensory states, using the free energy. Unpacking blanket states, the free energy \eqref{eq: free energy def} reads
    \begin{align*}
    F(s, a, \mu) &=\dkl[q_{\mu}(\eta)\| p(\eta|s,a)]-\log p( \mu |s,a ) -\log p(a|s ) -\log p(s ).
    \end{align*}

Crucially, it follows that the expected autonomous states minimise free energy
    \begin{align*}
        F(s_t, a_t, \mu_t) &\geq F( s_t, \boldsymbol a_t,\boldsymbol \mu_t),\\
    \boldsymbol a_t := \boldsymbol a(s_t) &:=  \E_{p(a_t|s_t)}[a_t] =\Sigma_{a s} \Sigma_{s}^{-1} s_t,
    \end{align*}
where $\boldsymbol a_t$ denotes the expected active states given sensory states, which is the mean of $p(a_t|s_t)$. This result forms the basis of active inference, a well-known framework to describe and generate adaptive behaviour in neuroscience, machine learning and robotics \cite{buckleyFreeEnergyPrinciple2017,ueltzhofferDeepActiveInference2018,millidgeDeepActiveInference2020,heinsDeepActiveInference2020,lanillosRobotSelfOther2020,verbelenActiveInferenceFirst2020,adamsPredictionsNotCommands2013,pezzatoNovelAdaptiveController2020,oliverEmpiricalStudyActive2021,fristonActionBehaviorFreeenergy2010}. See Figure \ref{fig: active inference}.

\begin{figure}
    \centering
    \includegraphics[width= 0.47\textwidth]{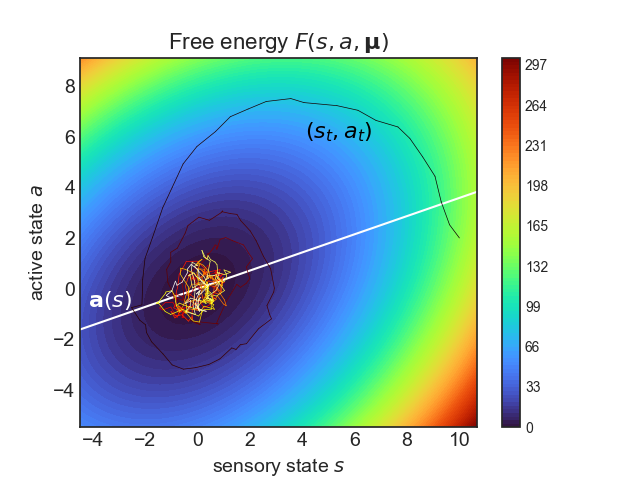}
    \includegraphics[width= 0.47\textwidth]{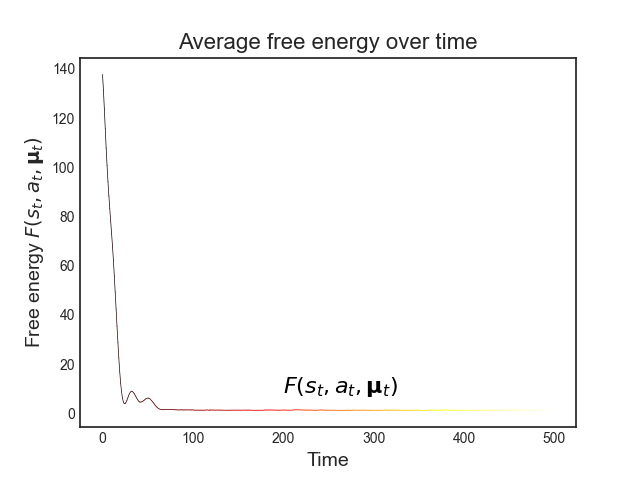}
    \caption[Active inference]{\textbf{Active inference}. 
    This figure illustrates a system's behaviour after experiencing a surprising sensory state, averaging internal variables for any blanket state. We simulated an Ornstein-Uhlenbeck process with two external, one sensory, one active and two internal variables, initialised at the steady-state density conditioned upon an improbable sensory state $p(x_0 |s_0)$. \textit{Left:} The white line shows the expected active state given sensory states: this is the action that performs active inference and optimal stochastic control. As the process experiences a surprising sensory state, it initially relaxes to steady-state in a winding manner due to the presence of solenoidal flow. Even though solenoidal flow drives the actions away from the optimal action initially, it allows the process to converge faster to steady-state \cite{rey-belletIrreversibleLangevinSamplers2015,ottobreMarkovChainMonte2016,lelievreOptimalNonreversibleLinear2013} where the actions are again close to the optimal action from optimal control. \textit{Right:} We plot the free energy of the expected internal state, averaged over multiple trajectories. In this example, the average free energy does not decrease monotonically---see Figure \ref{fig: Bayesian mechanics 6d} for an explanation.}
    \label{fig: active inference}
\end{figure}

\subsection{Multivariate control}

Active inference is used in various domains to simulate control \cite{koudahlWorkedExampleFokkerPlanckBased2020,verbelenActiveInferenceFirst2020,oliverEmpiricalStudyActive2021,ueltzhofferDeepActiveInference2018,fristonWhatOptimalMotor2011,sancaktarEndtoEndPixelBasedDeep2020,baltieriPIDControlProcess2019,pezzatoNovelAdaptiveController2020}, thus, it is natural that we can relate the dynamics of active states to well-known forms of stochastic control.

By computing the free energy explicitly (see Section \ref{app: free energy}), we obtain that

\begin{align}
   (\boldsymbol a_t, \boldsymbol \mu_t) \quad &\text{minimises}\quad  (a, \mu) \mapsto \begin{bmatrix}s_t,a, \mu  \end{bmatrix} K \begin{bmatrix}s_t\\a\\ \mu  \end{bmatrix} \label{eq: multivariate control}\\
   K &:= \Sigma_{b:\mu}^{-1} \nonumber
\end{align}
where we denoted by $K$ the concentration (i.e., precision) matrix of $p(s,a, \mu)$. We may interpret $(\boldsymbol a, \boldsymbol \mu)$ as controlling how far particular states $\left[s,a, \mu  \right]$ are from their target set-point of $\left[0,0, 0  \right]$, where the error is weighted by the precision matrix $K$. See Figure \ref{fig: stochastic control}. (Note that we could choose any other set-point by translating the frame of reference or equivalently choosing a Gaussian steady-state centred away from zero). In other words, there is a cost associated to how far away $s, a, \mu$ are from the origin and this cost is weighed by the precision matrix, which derives from the stationary covariance of the steady-state. In summary, the expected internal and active states can be seen as performing multivariate stochastic control, where the matrix $K$ encodes control gains. From a biologist’s perspective, this corresponds to a simple instance of homeostatic regulation: maintaining physiological variables within their preferred range.

\begin{figure}
    \centering
    \includegraphics[width= 0.47\textwidth]{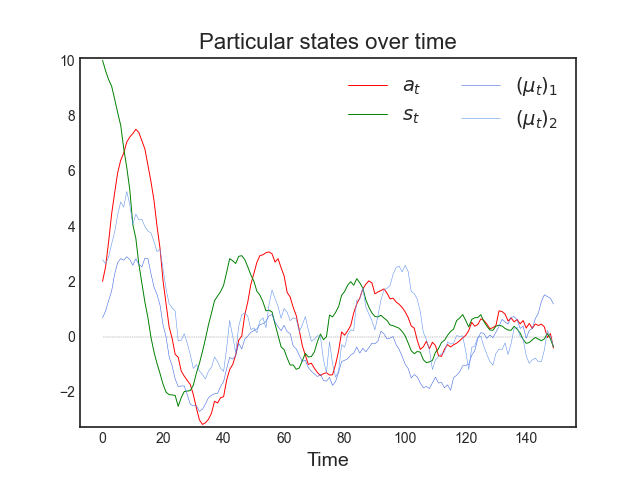}
    \caption[Stochastic control]{\textbf{Stochastic control}. This figure plots a sample path of the system's particular states after it experiences a surprising sensory state. This is the same sample path as shown in Figure \ref{fig: active inference} (left panel), however, here the link with stochastic control is easier to see. Indeed, it looks as if active states (in red) are actively compensating for sensory states (in green): rises in the active state-space lead to plunges in the sensory state-space and vice-versa. Notice the initial rise in active states to compensate for the perturbation in the sensory states. Active states follow a similar trajectory as sensory states, with a slight delay, which can be interpreted as a reaction time \cite{kosinskiLiteratureReviewReaction2012}. In fact, the correspondence between sensory and active states is a consequence of the solenoidal flow--see Figure \ref{fig: active inference} (left panel). The damped oscillations as the particular states approach their target value of $0$ (in grey) is analogous to that found in basic implementations of stochastic control, e.g., \cite[Figure 4.9]{roskillyMarineSystemsIdentification2015}.}
    \label{fig: stochastic control}
\end{figure}

\subsection{Stochastic control in an extended state-space}

More sophisticated control methods, such as PID (Proportional-Integral-Derivative) control \cite{astromPidControllers1995,baltieriPIDControlProcess2019}, involve controlling a process and its higher orders of motion (e.g., integral or derivative terms). So how can we relate the dynamics of autonomous states to these more sophisticated control methods? The basic idea involves extending the sensory state-space to replace the sensory process $s_t$ by its various orders of motion $\tilde s_t = \left(s^{(0)}_t, \ldots, s^{(n)}_t \right)$ (integral, position, velocity, jerk etc, up to order $n$). To find these orders of motion, one must solve the stochastic realisation problem.

\subsubsection{The stochastic realisation problem}
\label{sec: stochastic realisation}

Recall that the sensory process $s_t$ is a stationary stochastic process (with a Gaussian steady-state). The following is a central problem in stochastic systems theory: Given a stationary stochastic process $s_t$, find a Markov process $\tilde s_t$, called the state process, and a function $f$ such that 
\begin{align}
\label{eq: stochastic realisation problem}
    s_t=f(\tilde s_t) \quad \text{for all}\quad  t.
\end{align}
Moreover, find an Itô stochastic differential equation whose unique solution is the state process $\tilde s_t$. The problem of characterising the family of all such representations is known as the stochastic realisation problem \cite{mitterTheoryNonlinearStochastic1981}.

What kind of processes $s_t$ can be expressed as a function of a Markov process \eqref{eq: stochastic realisation problem}?

There is a rather comprehensive theory of stochastic realisation for the case where $s_t$ is a Gaussian process (which occurs, for example, when $x_t$ is a Gaussian process). This theory expresses $s_t$ as a linear map of an Ornstein-Uhlenbeck process \cite{lindquistLinearStochasticSystems2015,lindquistRealizationTheoryMultivariate1985,pavliotisStochasticProcessesApplications2014}. The idea is as follows: as a mean-zero Gaussian process, $s_t$ is completely determined by its autocovariance function $C(t-r)=\E\left[s_{t} \otimes s_{r}\right]$, which by stationarity only depends on $|t-r|$. It is well known that any mean-zero stationary Gaussian process with exponentially decaying autocovariance function is an Ornstein-Uhlenbeck process (a result sometimes known as Doob's theorem) \cite{doobBrownianMovementStochastic1942,wangTheoryBrownianMotion2014,rey-belletOpenClassicalSystems2006,pavliotisStochasticProcessesApplications2014}. Thus if $C$ equals a finite sum of exponentially decaying functions, we can express $s_t$ as a linear function of several nested Ornstein-Uhlenbeck processes, i.e., as an integrator chain from control theory \cite{kryachkovFinitetimeStabilizationIntegrator2010,zimenkoFinitetimeFixedtimeStabilization2018}

\begin{equation}
\label{eq: integrator chain}
\begin{split}
    s_t &= f(s^{(0)}_t)\\
    ds^{(0)}_t &= f_0(s^{(0)}_t, s^{(1)}_t) dt + \varsigma_0 dW^{(0)}_t \\
    ds^{(1)}_t &= f_1(s^{(1)}_t, s^{(2)}_t) dt + \varsigma_1 dW^{(1)}_t \\
    \vdots\quad  & \qquad \vdots\qquad \vdots \\
    ds^{(n-1)}_t &= f_{n-1}(s^{(n-1)}_t, s^{(n)}_t) dt + \varsigma_{n-1} dW^{(n-1)}_t \\
    ds^{(n)}_t &= f_n(s^{(n)}_t) dt + \varsigma_n dW^{(n)}_t. \\
\end{split}
\end{equation}

In this example, $f,f_i$ are suitably chosen linear functions, $\varsigma_i$ are matrices and $W^{(i)}$ are standard Brownian motions. Thus, we can see $s_t$ as the output of a continuous-time hidden Markov model, whose (hidden) states $s^{(i)}_t$ encode its various orders of motion: position, velocity, jerk etc. These are known as generalised coordinates of motion in the Bayesian filtering literature \cite{fristonVariationalFiltering2008a,fristonVariationalTreatmentDynamic2008,fristonGeneralisedFiltering2010}. See Figure \ref{fig: HMM}. 

\begin{figure}
    \centering
    \includegraphics[width=0.5\textwidth]{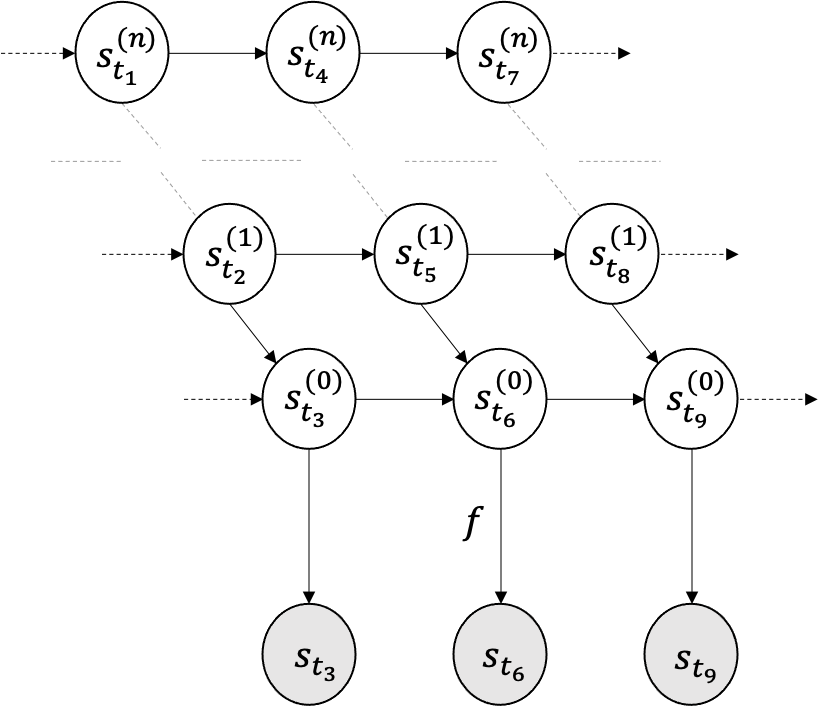}
    \caption[Continuous-time Hidden Markov model]{\textbf{Continuous-time Hidden Markov model}. This figure depicts \eqref{eq: integrator chain} in a graphical format, as a Bayesian network \cite{wainwrightGraphicalModelsExponential2007,pearlGraphicalModelsProbabilistic1998}. The encircled variables are random variables---the processes indexed at an arbitrary sequence of subsequent times $t_1<t_2<\ldots <t_9$. The arrows represent relationships of causality. In this hidden Markov model, the (hidden) state process $\tilde s_t$ is given by an integrator chain---i.e., nested stochastic differential equations $s^{(0)}_t, s^{(1)}_t, \ldots, s^{(n)}_t$. These processes $s^{(i)}_t, i\geq 0$, can respectively be seen as encoding the position, velocity, jerk etc, of the process $s_t$.}
    \label{fig: HMM}
\end{figure}

More generally, the state process $\tilde s_t$ and the function $f$ need not be linear, which enables to realise non-linear, non-Gaussian processes $s_t$ \cite{fristonVariationalFiltering2008a, parrComputationalNeurologyMovement2021a,gomesMeanFieldLimits2020}. Technically, this follows as Ornstein-Uhlenbeck processes are the only stationary Gaussian Markov processes. Note that stochastic realisation theory is not as well developed in this general case \cite{tayorNonlinearStochasticRealization1989,mitterTheoryNonlinearStochastic1981,frazhoStochasticRealizationTheory1982,gomesMeanFieldLimits2020,fristonVariationalFiltering2008a}.

\subsubsection{Stochastic control of integrator chains}

Henceforth, we assume that we can express $s_t$ as a function of a Markov process $\tilde s_t$ \eqref{eq: stochastic realisation problem}. Inserting \eqref{eq: stochastic realisation problem} into \eqref{eq: multivariate control}, we now see that the expected autonomous states minimise how far themselves and $f(\tilde s_t)$ are from their target value of zero

\begin{align}
\label{eq: stoch control of integrator chains}
 (\boldsymbol a_t, \boldsymbol \mu_t) \quad \text{minimises}\quad (a, \mu)\mapsto  \begin{bmatrix}f(\tilde s_t),a, \mu  \end{bmatrix}  K \begin{bmatrix}f(\tilde s_t)\\a\\ \mu  \end{bmatrix}.
\end{align}

Furthermore, if the state process $\tilde s_t$ can be expressed as an integrator chain, as in \eqref{eq: integrator chain}, then we can interpret expected active and internal states as controlling each order of motion $s^{(i)}_t$. For example, if $f$ is linear, these processes control each order of motion $s^{(i)}_t$ towards its target value of zero.

\subsubsection{PID-like control}

Proportional-integral-derivative (PID) control is a well-known control method in engineering \cite{astromPidControllers1995,baltieriPIDControlProcess2019}. More than 90\% of controllers in engineered systems implement either PID or PI (no derivative) control. The goal of PID control is to control a signal $s^{(1)}_t$, its integral $s^{(0)}_t$, and its derivative $s^{(2)}_t$ close to a pre-specified target value \cite{baltieriPIDControlProcess2019}.

This turns out to be exactly what happens here when we consider the stochastic control of an integrator chain \eqref{eq: stoch control of integrator chains} with three orders of motion $(n=2)$. When $f$ is linear, expected autonomous states control integral, proportional and derivative processes $s^{(0)}_t,s^{(1)}_t,s^{(2)}_t$ towards their target value of zero. Furthermore, from $f$ and $K$ one can derive integral, proportional and derivative gains, which penalise deviations of $s^{(0)}_t,s^{(1)}_t,s^{(2)}_t$, respectively, from their target value of zero. Crucially, these control gains are simple by-products of the steady-state density and the stochastic realisation problem.

Why restrict ourselves to PID control when stochastic control of integrator chains is available? It turns out that when sensory states $s_t$ are expressed as a function of an integrator chain \eqref{eq: integrator chain}, one may get away by controlling an approximation of the true (sensory) process, obtained by truncating high orders of motion as these have less effect on the dynamics, though knowing when this is warranted is a problem in approximation theory. This may explain why integral feedback control ($n=0$), PI control ($n=1$) and PID control ($n=2$) are the most ubiquitous control methods in engineering applications. However, when simulating biological control---usually with highly non-linear dynamics---it is not uncommon to consider generalised motion to fourth ($n=4$) or sixth ($n=6$) order \cite{fristonGraphicalBrainBelief2017,parrComputationalNeurologyMovement2021a}.

It is worth mentioning that PID control has been shown to be implemented in simple molecular systems and is becoming a popular mechanistic explanation of behaviours such as bacterial chemotaxis and robust homeostatic algorithms in biochemical networks \cite{baltieriPIDControlProcess2019,chevalierDesignAnalysisProportionalIntegralDerivative2019,yiRobustPerfectAdaptation2000}. We suggest that this kind of behaviour emerges in Markov blankets at non-equilibrium steady-state. Indeed, stationarity means that autonomous states will look as if they respond adaptively to external perturbations to preserve the steady-state, and we can identify these dynamics as implementations of various forms of stochastic control (including PID-like control).

\section{Discussion}

In this chapter, we considered the consequences of a boundary mediating interactions between states internal and external to a system. On unpacking this notion, we found that the states internal to a Markov blanket look as if they perform variational Bayesian inference, optimising beliefs about their external counterparts. When subdividing the blanket into sensory and active states, we found that autonomous states perform active inference and various forms of stochastic control (i.e., generalisations of PID control). 

\textbf{Interacting Markov blankets:}
The sort of inference we have described could be nuanced by partitioning the external state-space into several systems that are themselves Markov blankets (such as Markov blankets nested at several different scales \cite{hespMultiscaleViewEmergent2019}). From the perspective of internal states, this leads to a more interesting inference problem, with a more complex generative model. It may be that the distinction between the sorts of systems we generally think of as engaging in cognitive, inferential, dynamics \cite{fristonHierarchicalModelsBrain2008} and simpler systems rest upon the level of structure of the generative models (i.e., steady-state densities) that describe their inferential dynamics.

\textbf{Temporally deep inference:} This distinction may speak to a straightforward extension of the treatment on offer, from simply inferring an external state to inferring the trajectories of external states. This may be achieved by representing the external process in terms of its higher orders of motion by solving the stochastic realisation problem. By repeating the analysis above, internal states may be seen as inferring the position, velocity, jerk, etc of the external process, consistently with temporally deep inference in the sense of a Bayesian filter \cite{fristonGeneralisedFiltering2010} (a special case of which is an extended Kalman–Bucy filter \cite{kalmanNewApproachLinear1960}).

\textbf{Bayesian mechanics in non-Gaussian steady-states:} The treatment from this chapter extends easily to non-Gaussian steady-states, in which internal states appear to perform approximate Bayesian inference over external states. Indeed, any arbitrary (smooth) steady-state density may be approximated by a Gaussian density at one of its modes using a so-called Laplace approximation. This Gaussian density affords one with a synchronisation map in closed form\footnote{Another option is to empirically fit a synchronisation map to data \cite{fristonLifeWeKnow2013}.} that maps the expected internal state to an approximation of the expected external state. It follows that the system can be seen as performing approximate Bayesian inference over external states---precisely, an inferential scheme known as variational Laplace \cite{fristonVariationalFreeEnergy2007}. We refer the interested reader to a worked-out example involving two sparsely coupled Lorenz systems \cite{fristonStochasticChaosMarkov2021}. Note that variational Laplace has been proposed as an implementation of various cognitive processes in biological systems \cite{buckleyFreeEnergyPrinciple2017,bogaczTutorialFreeenergyFramework2017, fristonActionBehaviorFreeenergy2010} accounting for several features of the brain’s functional anatomy and neural message passing \cite{fristonTheoryCorticalResponses2005,fristonHierarchicalModelsBrain2008,fristonPredictiveCodingFreeenergy2009,adamsPredictionsNotCommands2013,pezzuloActiveInferenceView2012}.

\textbf{Modelling real systems:} The simulations presented here are as simple as possible and are intended to illustrate general principles that apply to all stationary processes with a Markov blanket \eqref{eq: MB over time}. These principles have been used to account for synthetic data arising in more refined (and more specific) simulations of an interacting particle system \cite{fristonLifeWeKnow2013} and synchronisation between two sparsely coupled stochastic Lorenz systems \cite{fristonStochasticChaosMarkov2021}. Clearly, an outstanding challenge is to account for empirical data arising from more interesting and complex structures. To do this, one would have to collect time-series from an organism's internal states (e.g., neural activity), its surrounding external states, and its interface, including sensory receptors and actuators. Then, one could test for conditional independence between internal, external and blanket states \eqref{eq: MB over time} \cite{pelletUsingMarkovBlankets2008}. One might then test for the existence of a synchronisation map (using Lemma \ref{lemma: equiv sigma well defined}). This speaks to modelling systemic dynamics using stochastic processes with a Markov blanket. For example, one could learn the volatility, solenoidal flow and steady-state density in a stochastic differential equation \eqref{eq: Gaussian preserving diffusions} from data, using supervised learning \cite{tzenNeuralStochasticDifferential2019}.

\section{Conclusion}

This chapter outlines some of the key relationships between stationary processes, inference and control. These relationships rest upon partitioning the world into those things that are internal or external to a (statistical) boundary, known as a Markov blanket. When equipped with dynamics, the expected internal states appear to engage in variational inference, while the expected active states appear to be performing active inference and various forms of stochastic control.

The rationale behind these findings is rather simple: if a Markov blanket derives from a steady-state density, the states of the system will look as if they are responding adaptively to external perturbations in order to recover the steady-state. Conversely, well-known methods used to build adaptive systems implement the same kind of dynamics, implicitly so that the system maintains a steady-state with its environment.

\section{Addendum: Proofs for Chapter 3}

\subsection{Existence of synchronisation map: proof}
\label{app: sync map existence proof}

We prove Lemma \ref{lemma: equiv sigma well defined}.

\begin{proof}
$(i) \iff (ii)$ follows by definition of a function.

$(ii) \iff (iii)$ is as follows
\begin{equation*}
\begin{split}
   & \quad \forall b_1, b_2 \in \B: \boldsymbol \mu(b_1)= \boldsymbol \mu(b_2) \Rightarrow \boldsymbol \eta(b_1)= \boldsymbol \eta(b_2)\\
   &\iff \left(\forall b_1, b_2 \in \B: \Sigma_{\mu b} \Sigma_{b}^{-1} b_1 = \Sigma_{\mu b} \Sigma_{b}^{-1} b_2 \Rightarrow \Sigma_{\eta b} \Sigma_{b}^{-1} b_1 = \Sigma_{\eta b} \Sigma_{b}^{-1} b_2\right) \\
    &\iff \left(\forall b \in \B:  \Sigma_{\mu b} \Sigma_{b}^{-1} b=0 \Rightarrow \Sigma_{\eta b} \Sigma_{b}^{-1} b =0 \right) \\
    &\iff  \ker \Sigma_{\mu b} \subset  \ker \Sigma_{\eta b}\\
\end{split}
\end{equation*}

$(iii) \iff (iv)$ From \cite[Section 0.7.3]{hornMatrixAnalysisSecond2012}, using the Markov blanket condition \eqref{eq: MB is sparsity in precision matrix}, we can verify that
\begin{align*}
   \Pi_{\mu} \Sigma_{\mu b} &=- \Pi_{\mu b} \Sigma_b \\
   \Pi_{\eta}\Sigma_{\eta b} &=-  \Pi_{\eta b} \Sigma_b.
\end{align*}
Since $ \Pi_{\mu}, \Pi_{\eta}, \Sigma_b$ are invertible, we deduce
\begin{equation*}
\begin{split}
& \quad \ker \Sigma_{\mu b} \subset  \ker \Sigma_{\eta b}\\
&\iff \ker \Pi_{\mu} \Sigma_{\mu b} \subset  \ker \Pi_{\eta}\Sigma_{\eta b} \\
&\iff \ker -\Pi_{\mu b} \Sigma_b \subset  \ker -\Pi_{\eta b} \Sigma_b \\
&\iff \ker \Pi_{\mu b} \subset  \ker \Pi_{\eta b}.
\end{split}
\end{equation*}
\end{proof}

\subsection{Free energy computations}
\label{app: free energy}

The free energy reads \eqref{eq: free energy def}
\begin{align*}
    F(b, \mu) &=\dkl[q_{\mu}(\eta)\| p(\eta|b)]-\log p(b, \mu ).
\end{align*}
Recalling from \eqref{eq: posterior beliefs}, \eqref{eq: def approx posterior} that $q_{\mu}(\eta)$ and $ p(\eta|b)$ are Gaussian, the KL divergence between multivariate Gaussians is well-known
\begin{align*}
    &q_{\mu}(\eta) = \mathcal N(\eta ; \sigma(\mu), \Pi_\eta^{-1}), \quad p(\eta|b) = \mathcal N(\eta ; \boldsymbol \eta(b), \Pi_\eta^{-1}), \\
    &\Rightarrow \dkl[q_{\mu}(\eta)\| p(\eta|b)] = \frac 1 2 (\sigma(\mu)-\boldsymbol \eta(b))\Pi_\eta(\sigma(\mu)-\boldsymbol \eta(b)).
\end{align*}
Furthermore, we can compute the log partition
\begin{align*}
    -\log p(b, \mu) = \frac 1 2 \begin{bmatrix}b, \mu  \end{bmatrix} \Sigma_{b:\mu}^{-1} \begin{bmatrix}b \\ \mu  \end{bmatrix} \quad \text{(up to a constant)}.
\end{align*}
Note that $\Sigma_{b:\mu}^{-1}$ is the inverse of a principal submatrix of $\Sigma$, which in general differs from $\Pi_{b:\mu}$, a principal submatrix of $\Pi$. Finally,
\begin{align*}
    F(b, \mu) &= \frac 1 2 (\sigma(\mu)-\boldsymbol \eta(b))\Pi_\eta(\sigma(\mu)-\boldsymbol \eta(b)) +\frac 1 2 \begin{bmatrix}b, \mu  \end{bmatrix} \Sigma_{b:\mu}^{-1} \begin{bmatrix}b \\ \mu  \end{bmatrix}  \quad \text{(up to a constant)}.
\end{align*}

\Xchapter{The free-energy principle}{The free-energy principle made simpler but not too simple}{By Karl Friston, Lancelot Da Costa, Noor Sajid, Conor Heins, Kai Ueltzhoeffer, Grigorios A. Pavliotis, Thomas Parr\blfootnote{\normalsize \textbf{Adapted from:} K Friston, L Da Costa, N Sajid, C Heins, K Ueltzhoeffer, GA Pavliotis, T Parr. The free energy principle made simpler but not too simple. \textit{Physics Reports}. 2023}}
\newpage
\section{Abstract}

This chapter provides a concise description of the free energy principle, starting from a formulation of random dynamical systems in terms of a Langevin equation and ending with a Bayesian mechanics that can be read as a physics of sentience.\footnote{Sentience here is meant for a thing as having sensory states through which it is coupled with its external states, and being described as behaving adaptively in accordance to the sensory states.} It rehearses the key steps using standard results from statistical physics. These steps entail (i) establishing a particular partition of states based upon conditional independencies that inherit from sparsely coupled dynamics, (ii) unpacking the implications of this partition in terms of Bayesian inference and (iii) describing the paths of particular states with a variational principle of least action. Teleologically, the free energy principle offers a normative account of self-organisation in terms of optimal Bayesian design and decision-making, in the sense of maximising marginal likelihood or Bayesian model evidence. In summary, starting from a description of the world in terms of random dynamical systems, we end up with a description of self-organisation as sentient behaviour that can be interpreted as self-evidencing; namely, self-assembly, autopoiesis or active inference.

\textbf{Keywords}: self-organisation, nonequilibrium, variational inference, Bayesian, Markov blanket.

\section{Introduction}
\label{sec: intro}

It is said that the free energy principle is difficult to understand. This is ironic on three counts. First, the free energy principle (FEP) is so simple that it is (almost) tautological. Indeed, philosophical accounts compare its explanandum to a desert landscape, in the sense of Quine~\cite{clarkWhateverNextPredictive2013}. Second, a tenet of the FEP is that everything must provide an accurate account of things that is as simple as possible—including itself. Finally, the FEP rests on straightforward results from statistical physics. This review tries to present the free energy principle as simply as possible but without sacrificing too much technical detail. It steps through the formal arguments that lead from a description of the world as a random dynamical system~\cite{crauelAttractorsRandomDynamical1994,arnoldRandomDynamicalSystems1998} to a description of self-organisation in terms of active inference and self-evidencing~\cite{hohwySelfEvidencingBrain2016}. The evidence in question is Bayesian model evidence, which speaks to the Bayesian mechanics on offer~\cite{fristonFreeEnergyPrinciple2019a}. These mechanics have the same starting point as = statistical and classical mechanics. The only difference is that careful attention is paid to the way that the internal states of something couple to its external states.

To make the following account accessible, we use a conversational style, explaining the meaning of key mathematical expressions intuitively. Accordingly, simplifying notation and assumptions are used to foreground the basic ideas. Before starting, it might help to clarify what the free energy principle is—and why it is useful. Many theories in the biological sciences are answers to the question: “what must things do, in order to exist?” The FEP turns this question on its head and asks: “if things exist, what must they do?” More formally, if we can define what it means to be something, can we identify the physics or dynamics that a thing must possess? To answer this question, the FEP calls on some mathematical truisms that follow from each other. Much like Hamilton's principle of least action\footnote{Perhaps a better analogy would be Noether's theorem (Beren Millidge – personal communication)~\cite{noetherInvariantenBeliebigerDifferentialausdrucke1918}.}, it is not a falsifiable theory about the way ‘things’ behave—it is a general description of ‘things’ that are defined in a particular way. As such, the FEP is not falsifiable as a mathematical statement, but it may as well be falsifiable to the extent that its postulates refer to a specific class of empirical phenomena that the principle aims to describe.

Is such a description useful? In itself, the answer is probably no—in the sense that the principle of least action does not tell you how to throw a ball. However, the principle of least action furnishes everything we need to know to simulate the trajectory of a ball in a particular instance. In the same sense, the FEP allows one to simulate and predict the sentient behaviour of a particle, person, artefact or agent (i.e., some ‘thing’). This allows one to build sentient artefacts or use simulations as observation models of particles (or people). These simulations rest upon specifying a \textit{generative model} that is apt to describe the behaviour of the particle (or person) at hand. At this point, committing to a specific generative model can be taken as a commitment to a specific—and falsifiable—theory. Later, we will see some examples of these simulations.

The remaining sections describe the FEP. Each section focuses on an equation—or set of equations—used in subsequent sections. The ensuing narrative is meant to be concise, taking us from the beginning to the end as succinctly as possible. To avoid disrupting the narrative, we use footnotes to address questions that are commonly asked at each step. We also use figure legends to supplement the narrative with examples from neurobiology. Most of the following can be found in the literature~\cite{fristonFreeEnergyPrinciple2019a,dacostaBayesianMechanicsStationary2021a,fristonStochasticChaosMarkov2021}; however, there are a few simplifications that replace earlier accounts. 

\section{Systems, states and fluctuations}

We start by describing the world with a stochastic differential equation~\cite{pavliotisStochasticProcessesApplications2014}. So why start here? The principal reason is that we want a description that is consistent with physics. This follows because things like the fluctuation theorems in statistical mechanics and the Lagrangian formulation of classical mechanics can all be derived from this starting point~\cite{seifertStochasticThermodynamicsFluctuation2012}. In short, if one wants a physics of sentience, this is a good place to start. 

We are interested in systems that have characteristic states. Technically, this means the system has a pullback attractor; namely, a set of states a system will come to occupy from any initial state~\cite{crauelAttractorsRandomDynamical1994,crauelGlobalRandomAttractors1999}. Such systems can be described with stochastic differential equations, such as the Langevin equation describing the rate of change of some states $x(\tau)$, in terms of their flow $f(x)$, and random fluctuations $\omega(\tau)$. The fluctuations are usually assumed to be a normally distributed (white noise) process, with a covariance of $2\Gamma$:
\begin{equation}
\label{eq: system}
\begin{split}
\dot{x}(\tau) &=f(x)+\omega(\tau) \\
p(\omega \mid x) &=\mathcal{N}(\omega; 0,2 \Gamma) \Rightarrow p(\dot{x} \mid x)=\mathcal{N}(\dot{x}; f, 2 \Gamma) \\
p(x) &=?
\end{split}
\end{equation}
The dot notation denotes a derivative with respect to time\footnote{\textbf{Question}: why is the flow in \eqref{eq: system} not a function of time? Many treatments of stochastic thermodynamics allow for time-dependent flows when coupling one system (e.g., an idealised gas) to another (e.g., a heat reservoir), where it is assumed that the other system changes very slowly, e.g.,~\cite{jarzynskiNonequilibriumEqualityFree1997,seifertStochasticThermodynamicsFluctuation2012}. However, the ambition of the FEP is to describe this coupling under a partition of states. In this setting, separation of temporal scales is an emergent property, where \eqref{eq: system} holds at any given temporal scale. See~\cite{fristonFreeEnergyPrinciple2019a} for a treatment using the apparatus of the renormalisation group.}. This means that time and causality are baked into everything that follows, in the sense that states cause their motion. The Langevin equation is itself an approximation to a simpler mapping from some variables to changes in those variables with time. This follows from the separation into states and random fluctuations implicit in \eqref{eq: system}, where states change slowly in relation to fast fluctuations. This (adiabatic) approximation is ubiquitous in physics~\cite{carrApplicationsCentreManifold1982,hakenSynergeticsIntroductionNonequilibrium1978,koidePerturbativeExpansionIrreversible2017}. In brief, it means we can ignore temporal correlations in the fast fluctuations and assume—by the central limit theorem—that they have a Gaussian distribution. This equips the fluctuations with a probability density, which means we know their statistical behaviour but not their trajectory or path, which itself is a random variable~\cite{crauelAttractorsRandomDynamical1994,arnoldRandomDynamicalSystems1998,pavliotisStochasticProcessesApplications2014}.

The next step, shared by all physics, is to ask whether anything can be said about the probability density over the states—the ‘?’ in \eqref{eq: system}. A lot can be said about this probability density, which can be expressed in two complementary ways; namely, as \textit{density dynamics} using the Fokker-Planck equation (a.k.a. the forward Kolmogorov equation) or in terms of the probability of a path through state-space using the \textit{path-integral formulation}. The Fokker-Planck equation describes the change in the density due to random fluctuations and the flow of states through state-space~\cite{riskenFokkerPlanckEquationMethods1996,pavliotisStochasticProcessesApplications2014}:

\begin{equation}
\label{eq: FP eq}
\dot{p}(x, \tau)=\nabla \cdot(\Gamma \nabla-f(x)) p(x, \tau)
\end{equation}

The Fokker-Planck equation describes our stochastic process in terms of deterministic density dynamics—instead of specific realisations—where the density in question is over \emph{states}
$x(\tau)=x_{\tau}$. Conversely, the path-integral formulation considers the probability of a trajectory or \textit{path} $
x[\tau]\triangleq [x(t): 0 \leq t \leq \tau]
$ in terms of its \emph{action} $\mathcal{A}$ (omitting additive constants here and throughout)\footnote{\textbf{Question}: where does the divergence in the third equality come from? This term arises from the implicit use of Stratonovich path integrals~\cite{seifertStochasticThermodynamicsFluctuation2012}. Note that we have assumed that the amplitude of random fluctuations is state—and therefore path—independent in \eqref{eq: system}, which means we can place it outside the integral in the second equality.}:

\begin{equation}
\label{eq: action}
\begin{aligned}
\mathcal{A}(x[\tau]) &=-\ln p\left(x[\tau] \mid x_{0}\right) \\
&=\frac{\tau}{2} \ln |(4 \pi)^{n}\Gamma|+\int_{0}^{\tau} d t \mathcal{L}(x, \dot{x}) \\
\mathcal{L}(x, \dot{x}) &=\frac{1}{2}\left[(\dot{x}-f) \cdot \frac{1}{2 \Gamma}(\dot{x}-f)+\nabla \cdot f\right]
\end{aligned}
\end{equation}
 
Both the Fokker-Planck and path-integral formulations inherit their functional form from assumptions about the statistics of random fluctuations in \eqref{eq: system}. For example, the most likely path—or path of least action—is the path taken when the fluctuations take their most likely value of zero. This means that variations away from this path always increase the action. This is expressed mathematically by saying that its variation is zero when the action is minimised.
\footnote{Omitting the contribution of the divergence term in the Lagrangian to obtain the expression for the path of least action for simplicity, cf. \cite{durrOnsagerMachlupFunctionLagrangian1978}. Taking this simplification at face value means that we are either: 1) considering a description on a short time-scale as the flow can be approximated by a linear function with impunity (e.g., linear response theory, see \cite{pavliotisStochasticProcessesApplications2014}); or 2) we are considering the limit where random fluctuations have vanishingly small amplitude (e.g., precise particles, see Sections \ref{sec: precise} and \ref{sec: curious}).} 

\begin{equation}
\label{eq: path of least action}
\begin{aligned}
\mathbf{x}[\tau] &=\arg \min _{x[\tau]} \mathcal{A}(x[\tau]) \\
& \Leftrightarrow \delta_{x} \mathcal{A}(\mathbf{x}[\tau])=0 \\
& \Leftrightarrow \dot{\mathbf{x}}(\tau)=f(\mathbf{x})
\end{aligned}
\end{equation}
 								
In short, the motion on the path of least action is just the flow without random fluctuations. Paths of least action will figure prominently below; especially, when considering systems that behave in a precise or predictable way. We will denote the most likely states and paths with a bold typeface.

Although equivalent, the Fokker-Planck and path-integral formalisms provide complementary perspectives on dynamics. The former deals with time-dependent probability densities over \textit{states}, while the latter considers time-independent densities over \textit{paths}. The density over n states at any particular time is the time-marginal of the density over trajectories. These probabilities can be conveniently quantified in terms of their negative logarithms (or potentials) leading to surprisal and action, respectively (omitting the divergence of the flow in the last line for simplicity):

\begin{equation}
\label{eq: 5}
\begin{aligned}
\Im(x, \tau) &\triangleq -\ln p(x, \tau) \\
\mathcal{A}(x[\tau]) &\triangleq -\ln p\left(x[\tau] \mid x_{0}\right) \\
\H[p(x, \tau)] &=\mathbb{E}[\Im(x, \tau)] \\
\H\left[p\left(x[\tau] \mid x_{0}\right)\right] &=\mathbb{E}[\mathcal{A}(x[\tau])] \\
&=\frac{\tau}{2} \ln \left[(4 \pi)^{n}|\Gamma|\right]+\int_{0}^{\tau} d t \frac{1}{2} \mathbb{E}_{p(\omega)}\left[\omega(t) \cdot \frac{1}{2 \Gamma} \omega(t)\right]=\frac{\tau}{2} \ln \left[(4 \pi e)^{n}|\Gamma|\right]
\end{aligned}
\end{equation}

The second set of equalities shows that the uncertainty (or entropy) about states and their paths is the expected surprisal and action, respectively. Perhaps counterintuitively, the entropy of paths is easier to specify than the entropy of states. This follows because the only source of uncertainty about paths—given an initial state—are the random fluctuations~\cite{seifertStochasticThermodynamicsFluctuation2012,pavliotisStochasticProcessesApplications2014}, whose probability density does not change with time. The last pair of equalities in \eqref{eq: 5} show that the amplitude of random fluctuations determines the entropy of paths. Intuitively, if the fluctuations are large, then many distinct paths become equally plausible, and the entropy of paths increases\footnote{From a thermodynamic perspective, uncertainty about paths increases with temperature. For example, the Einstein-Smoluchowski relation relates the amplitude of random fluctuations to a mobility coefficient times the temperature $\Gamma=\mu_{m} k_{B} T$.}. 

\section{Solutions, steady-states and nonequilibria}

So far, we have equations that describe the relationship between the dynamics of a system and probability densities over fluctuations, states and their paths. This is sufficient to elaborate much of physics. For example,
we could focus on systems that comprise statistical ensembles of similar states to derive stochastic and statistical mechanics in terms of fluctuation theorems~\cite{seifertStochasticThermodynamicsFluctuation2012}. Finally, we could consider large systems—in which the fluctuations are averaged away—to derive classical mechanics such as electromagnetism.
All of these mechanics require some boundary conditions: for example, 
a heat bath or reservoir in statistical mechanics and a classical potential for Lagrangian mechanics. At this point, the FEP steps back and asks, where do these boundary conditions come from? Indeed, this was implicit in Schrodinger's question:

\say{\textit{How can the events in space and time which take place within the spatial boundary of a living organism be accounted for by physics and chemistry?}}~\cite{schrodingerWhatLifeMind2012}.

We read a boundary in a statistical sense as a Markov boundary~\cite{pearlGraphicalModelsProbabilistic1998}\footnote{A Markov boundary is a subset of states of the system that renders the states of a ‘thing’ or particle conditionally independent from all other states~\cite{pearlCausality2009}.}. Why? Because the only thing we have at hand is a probabilistic description of the system. And the only way to separate the states of something from its boundary states is in terms of probabilistic independencies—in this instance, conditional independencies\footnote{Noting that if two subsets of states were independent, as opposed to being conditionally independent, we would be describing two separate systems.}. This means we need to identify a partition of states that assigns a subset to a ‘thing’ or particle and another subset to the boundary that separates the thing from some ‘thing’ else. In short, one has to define ‘thingness’ in terms of conditional independencies.

However, if things are defined in terms of conditional independencies and conditional independencies are attributes of a probability density, where does the density come from? The Fokker-Planck equation shows that the density over states depends upon time, even if the flow does not. This means that if we predicate ‘thingness’ on a probability density, it may only exist for a vanishingly small amount of time. This simple observation compels us to consider probability densities that do not change with time, namely: (i) steady-state solutions to the Fokker-Planck equation or (ii) the density over paths. We will start with the (slightly more delicate) treatment of steady-state solutions and then show that the (slightly more straightforward) treatment of densities over paths leads to the same notion of ‘thingness’.

The existence of things over a particular timescale implies the density in \eqref{eq: FP eq} does not change over that timescale. This is what is meant by a steady-state solution to the Fokker-Planck equation. The ensuing density is known as a \textit{steady-state density} and, in random dynamical systems, implies the existence of a pullback attractor~\cite{crauelAttractorsRandomDynamical1994,arnoldRandomDynamicalSystems1998}. The notion of an attractor is helpful here, in the sense that it comprises a set of characteristic states, to which the system is attracted over time\footnote{More precisely, the time-dependent solutions to the Fokker-Planck equation will tend towards the stationary solution, or steady-state. In other words, the steady-state density becomes a point attractor in the space of probability densities.}. In short, to talk about ‘things’, we are implicitly talking about a partition of states in a random dynamical system that has an attracting set—i.e., a steady-state solution to the Fokker-Planck equation. In short, we consider systems that self-organise towards a steady-state density\footnote{At this point, the formalism applies equally to steady-states with a high or low entropy, as we have not committed to a particular form of the steady-state density. Later, we will specialise to steady-states with a low entropy to characterise the sort of self-organisation that describes biological systems, e.g., swarming or flocking~\cite{hakenSynergeticsIntroductionNonequilibrium1978,nicolisSelforganizationNonequilibriumSystems1977}}. This solution is also known as a nonequilibrium steady-state (NESS) density, where the ‘nonequilibrium’ aspect rests upon solenoidal flow, as we will see next.

The existence of a solution to the Fokker-Planck equation—i.e., the existence of something—means that we can express the flow of states in terms of the steady-state density (or corresponding surprisal) using a generalisation of the Helmholtz decomposition. This decomposes the flow into conservative (rotational, divergence-free) and dissipative (irrotational, curl-free) components—with respect to the steady-state density—referred to as \textit{solenoidal} and \textit{gradient} flows, respectively~\cite{grahamCovariantFormulationNonequilibrium1977,eyinkHydrodynamicsFluctuationsOutside1996,shiRelationNewInterpretation2012,maCompleteRecipeStochastic2015,barpUnifyingCanonicalDescription2021,dacostaEntropyProductionStationary2022,pavliotisStochasticProcessesApplications2014}:

\begin{equation}
\label{eq: 6}
\begin{aligned}
&\dot{p}(x)=0 \Leftrightarrow f(x)=\Omega(x) \nabla \Im(x)-\Lambda(x)=\underbrace{Q(x) \nabla \Im(x)}_{\text {Solenoidal flow }}-\underbrace{\Gamma \nabla \Im(x)}_{\text {Gradient flow }}-\Lambda(x) \\
&\Im(x)=-\ln p(x), \quad Q=-Q^{T}, \quad 
\Lambda_{i}\triangleq\sum_{j} \frac{\partial \Omega_{i j}}{\partial x_{j}} = \sum_j \frac{\partial Q_{ij}}{\partial x_{j}}.
\end{aligned}
\end{equation}
  
This can be understood intuitively as a decomposition of the flow into two parts. The first (conservative) part of the flow is a \textit{solenoidal} circulation on the isocontours of the steady-state density (or surprisal). This component breaks \textit{detailed balance} and renders the steady-state density a \textit{nonequilibrium} steady-state density~\cite{aoPotentialStochasticDifferential2004,yuanPotentialFunctionDynamical2011}. The second (dissipative) part performs a (natural) gradient descent on the steady-state surprisal and depends upon the amplitude of random fluctuations~\cite{girolamiRiemannManifoldLangevin2011,amariNaturalGradientWorks1998}. The final term, $\Lambda$, can be regarded as a correction term, which is neither curl-free nor divergence-free, and which ensures that the probability density remains constant over time~\cite{dacostaEntropyProductionStationary2022}.

\subsection*{Summary}

We now have a probabilistic description of a system in terms of a (NESS) density that admits conditional independencies among states. These conditional independencies are necessary to separate the states of things from their boundaries. In the next step, we will see how conditional independencies inherit from sparse coupling among states—and how they are used to establish a particular partition of states.

\section{Particles, partitions and things}
\label{sec: particles, partitions and things}

In associating some (stochastic differential) equations of motion with a unique (NESS) density, we have a somewhat special setup, in which the influences entailed by the equations of motion place constraints on the conditional independencies of the NESS density
. These conditional independencies can be used to identify a particular partition of states into \textit{external}, \textit{sensory}, \textit{active} and \textit{internal} states as summarised below. This is an important move because it separates the states of a \textit{particle} (i.e., internal states and their sensory and active states) from the remaining (i.e., \textit{external}) states. However, to do this we have to establish how the causal dynamics in \eqref{eq: system} underwrite conditional independencies. This can be done simply by using the curvature (Hessian) of surprisal as follows:
\begin{equation}
\label{eq: 7}
\begin{aligned}
\left(x_{u} \perp x_{v}\right) \mid b & \Leftrightarrow p(x)=p\left(x_{u} \mid b\right) p\left(x_{v} \mid b\right) p(b) \\
& \Leftrightarrow \Im(x)=\Im\left(x_{u} \mid b\right)+\Im\left(x_{v} \mid b\right)+\Im(b) \Leftrightarrow \frac{\partial^{2} \Im}{\partial x_{u} \partial x_{v}}=\mathbf{H}_{u v}=\mathbf{H}_{v u}=0.
\end{aligned}
\end{equation}
This says that if the $u$-th state is conditionally independent of the $v$-th state, given the remaining states $b$, then the corresponding element of the curvature—or Hessian matrix—of surprisal must be zero. Conversely, a zero entry in the Hessian implies conditional independence. In sum, any two states are conditionally independent if, and only if, the change of surprisal with one state does not depend on the other. We can now use the Helmholtz decomposition \eqref{eq: 6} to express the Jacobian—i.e., the (linear) coupling—of the flow in terms of the Hessian—that entails conditional independencies (with a slight abuse of the dot product notation):
\begin{equation}
\label{eq: 8}
\begin{aligned}
f(x) &=\Omega \nabla \Im-\Lambda \\
& \Rightarrow \mathbf{J}=\Omega \mathbf{H}+\nabla \Omega \cdot \nabla \Im-\nabla \Lambda \\
& \Rightarrow \mathbf{J}_{u v}=\frac{\partial f_{u}}{\partial x_{v}}=\sum_{i} \Omega_{u i} \mathbf{H}_{i v}+\sum_{i} \frac{\partial \Omega_{u i}}{\partial x_{v}} \frac{\partial \Im}{\partial x_{i}}-\sum_{i} \frac{\partial^{2} \Omega_{u i}}{\partial x_{i} \partial x_{v}}.
\end{aligned}
\end{equation}
We can now define \textit{sparse coupling} as a solution to this equation, in which all the terms are identically zero\footnote{This implicitly precludes edge cases, in which some non-zero terms cancel.}:
\begin{equation}
\label{eq: 9}
\left.\begin{array}{l}
Q_{u i} \mathbf{H}_{i v} \\
\Gamma_{u} \mathbf{H}_{u v} \\
\partial \Omega_{u i} / \partial x_{v}
\end{array}\right\}=0: \forall i \Rightarrow \mathbf{J}(x)_{u v}=0.
\end{equation}
Sparse coupling means that the Jacobian coupling states $u$ and $v$ is zero, i.e., an absence of coupling from one state to another. This definition precludes solenoidal coupling with $ u$ that depends on $v$. Because
$\mathbf{H}(x)_{v v}$ and $\Gamma_{u}$ are positive definite, sparse coupling requires associated elements of the solenoidal operator and Hessian to vanish at every point in state-space, which in turn, implies conditional independence:
\begin{equation}
\label{eq: 10}
\begin{aligned}
Q_{u v} \mathbf{H}_{v v} &=0 \Rightarrow Q_{u v}=-Q_{v u}=0 \\
\Gamma_{u} \mathbf{H}_{u v} &=0 \Rightarrow \mathbf{H}_{u v}=\mathbf{H}_{v u}=0 \Leftrightarrow\left(x_{u} \perp x_{v}\right) \mid b.
\end{aligned}
\end{equation}

In short, sparse coupling means that any two states are conditionally independent \textit{if one state does not influence the other}. This is an important observation; namely, that sparse coupling implies a NESS density with conditional independencies. In turn, this means any dynamical influence graph with absent or directed edges admits a \textit{Markov blanket} (the states $b$ above). These independencies can now be used to build a particular partition as follows:
\begin{itemize}
    \item The Markov boundary $a\subset x$  of a set of internal states $\mu \subset x$ is the minimal set of states for which there exists a nonzero Hessian submatrix: $\mathbf{H}_{a \mu} \neq 0$. In other words, the internal states are independent of the remaining states, when conditioned upon their Markov boundary, called \textit{active states}. The combination of active and internal states will be referred to as \textit{autonomous states}: $\alpha=(a, \mu)$.
    \item The Markov boundary $s \subset x$ of autonomous states is the minimal set of states for which there exists a nonzero Hessian submatrix: $\mathbf{H}_{s \alpha} \neq 0$. In other words, the autonomous states are independent of the remaining states, when conditioned upon their Markov boundary, called \textit{sensory states}. The combination of active and sensory (i.e., boundary) states constitute \textit{blanket states}: $b=(s, a)$. The internal and blanket states will be referred to as \textit{particular states}: $\pi=(s, \alpha)=(b, \mu)$.
    \item The remaining states constitute \textit{external states}: $x=(\eta, \pi)$. 
\end{itemize}
The names of active and sensory (i.e., blanket) states inherit from the literature, where they are often associated with biotic systems that act on—and sense—their external milieu\footnote{\textbf{Question}: why does a particular partition comprises four sets of states? In other words, why does a particular partition consider two Markov boundaries; namely, sensory and active states? The reason is that this is the minimal partition that allows for directed coupling with blanket states. For example, sensory states can influence internal states—and active states can influence external states—without destroying the conditional independencies of the particular partition (these directed influences are illustrated in the upper panel of Figure \ref{fig: 1} as dotted arrows).}. In this setting, one can regard external states as influencing internal states via sensory states (directly or through active states). And internal states influence external states via active states (directly or through sensory states\footnote{\textbf{Question}: does this mean that I can act on my world through my sense organs? Yes: much of biotic action is mediated by (active) motile cytoskeletal filaments, muscles and secretory organs that lie beneath (sensory) epithelia, such as receptors on the skin or a cell surface.}). We will see later how this implies a synchronisation between internal and external states, in the sense that internal states can be seen as actively inferring external states~\cite{dacostaBayesianMechanicsStationary2021a,fristonStochasticChaosMarkov2021}. The ensuing conditional independencies implied by a particular partition can be summarised as follows:
\begin{equation}
\label{eq: 11}
\begin{aligned}
\mathbf{J}_{\mu \eta} &=0 \Rightarrow \mathbf{H}_{\mu \eta}=0 \Leftrightarrow(\mu \perp \eta) \mid b \\
\mathbf{J}_{a \eta} &=0 \Rightarrow \mathbf{H}_{a \eta}=0 \Leftrightarrow(a \perp \eta) \mid s, \mu \\
\mathbf{J}_{s \mu} &=0 \Rightarrow \mathbf{H}_{s \mu}=0 \Leftrightarrow(s \perp \mu) \mid a, \eta
\end{aligned}
\end{equation}
 					
A normal form for the flow and Jacobian of a particular partition—with sparse coupling—can be expressed as follows, where  $\alpha=(a, \mu)$ and $\beta=(\eta, s)$:
\begin{equation}
\label{eq: 12}
\begin{split}
f(x)&=\Omega \nabla \Im-\Lambda\\
\left[\begin{array}{l}
f_{\eta}(\eta, b) \\
f_{s}(\eta, b) \\
f_{a}(b, \mu) \\
f_{\mu}(b, \mu)
\end{array}\right]&=\left[\begin{array}{cccc}
Q_{\eta \eta}-\Gamma_{\eta} & Q_{\eta s} & & \\
-Q_{\eta s}^{T} & Q_{s s}-\Gamma_{s} & & \\
& & Q_{a a}-\Gamma_{a} & Q_{a \mu} \\
& & -Q_{a \mu}^{T} & Q_{\mu \mu}-\Gamma_{\mu}
\end{array}\right]\left[\begin{array}{c}
\nabla_{\eta} \Im(\eta \mid b) \\
\nabla_{s} \Im(b \mid \eta) \\
\nabla_{a} \Im(b \mid \mu) \\
\nabla_{\mu} \Im(\mu \mid b)
\end{array}\right]-\Lambda\\
\mathbf{J}(x)&=\Omega \mathbf{H}+\nabla \Omega \cdot \nabla \Im-\nabla \Lambda\\
\left[\begin{array}{lllll}
\mathbf{J}_{\eta \eta \eta} & \mathbf{J}_{\eta s} & \mathbf{J}_{\eta a} & \\
\mathbf{J}_{s \eta} & \mathbf{J}_{s s} & \mathbf{J}_{s a} & \\
& \mathbf{J}_{a s} & \mathbf{J}_{a a} & \mathbf{J}_{a \mu} \\
& \mathbf{J}_{\mu s} & \mathbf{J}_{\mu a} & \mathbf{J}_{\mu \mu}
\end{array}\right]&=\left[\begin{array}{cccc}
Q_{\eta \eta}-\Gamma_{\eta} & Q_{\eta s} & & \\
-Q_{\eta s}^{T} & Q_{s s}-\Gamma_{s} & & \\
& & Q_{a a}-\Gamma_{a} & Q_{a \mu} \\
& & -Q_{a \mu}^{T} & Q_{\mu \mu}-\Gamma_{\mu}
\end{array}\right]\left[\begin{array}{cccc}
\mathbf{H}_{\eta \eta} & \mathbf{H}_{\eta s} & & \\
\mathbf{H}_{\eta s}^{T} & \mathbf{H}_{s s} & \mathbf{H}_{s a} & \\
& \mathbf{H}_{s a}^{T} & \mathbf{H}_{a a} & \mathbf{H}_{a \mu} \\
& & \mathbf{H}_{a \mu}^{T} & \mathbf{H}_{\mu \mu}
\end{array}\right]\\
&+\nabla \Omega \cdot \nabla \Im-\nabla \Lambda\\
\nabla_{\eta} \Omega_{\alpha \alpha}&=0, \quad \nabla_{\mu} \Omega_{\beta \beta}=0, \quad \nabla_{\eta} \Lambda_{\alpha}=0, \quad \nabla_{\mu} \Lambda_{\beta}=0
\end{split}
\end{equation}

This normal form means that particular partitions can be defined in terms of sparse coupling. Perhaps the simplest definition—that guarantees a Markov blanket\footnote{In the absence of solenoidal coupling between autonomous and non-autonomous states, and constraints on the partial derivatives of the solenoidal coupling in \eqref{eq: 12}; i.e., solenoidal coupling among autonomous states does not depend upon external states. Similarly, for non-autonomous and internal states.}—is as follows: \textit{external states only influence sensory states and internal states only influence active states}. This means that sensory states are not influenced by internal states and active states are not influenced by external states,
\begin{equation}
\label{eq: 13}
\left[\begin{array}{c}
\dot{\eta}(\tau) \\
\dot{s}(\tau) \\
\dot{a}(\tau) \\
\dot{\mu}(\tau)
\end{array}\right]=\left[\begin{array}{l}
f_{\eta}(\eta, s, a)+\omega_{\eta}(\tau) \\
f_{s}(\eta, s, a)+\omega_{s}(\tau) \\
f_{a}(s, a, \mu)+\omega_{a}(\tau) \\
f_{\mu}(s, a, \mu)+\omega_{\mu}(\tau)
\end{array}\right]
\end{equation}
and the noise processes $\omega_{i}(\tau), i \in \{\eta, s, a, \mu \}$ are independent. Under this sparse coupling, it is simple to show that not only are internal and external states conditionally independent, but their paths are conditionally independent, given initial states, using the path integral formulation.

The uncertainty (i.e., entropy) over paths derives from random fluctuations. This means that if we knew all the influences on the flow at every point in time, we can evaluate the entropy of external and internal paths from \eqref{eq: 5}:
\begin{equation}
\label{eq: 14}
\begin{aligned}
\H\left[p\left(\eta[\tau] \mid b[\tau], x_{0}\right)\right] &=\frac{\tau}{2} \ln \left[(4 \pi e)^{n_\eta}\left|\Gamma_{\eta}\right|\right] \\
\H\left[p\left(\mu[\tau] \mid b[\tau], x_{0}\right)\right] &=\frac{\tau}{2} \ln \left[(4 \pi e)^{n_\mu}\left|\Gamma_{\mu}\right|\right] \\
& \Rightarrow \\
\H\left[p\left(\eta[\tau] \mid b[\tau], x_{0}\right)\right] &=\H\left[p\left(\eta[\tau] \mid \mu[\tau], b[\tau], x_{0}\right)\right] \Rightarrow(\mu[\tau] \perp \eta[\tau]) \mid b[\tau], x_{0} \\
\H\left[p\left(\mu[\tau] \mid b[\tau], x_{0}\right)\right] &=\H\left[p\left(\mu[\tau] \mid \eta[\tau], b[\tau], x_{0}\right)\right] \Rightarrow(\mu[\tau] \perp \eta[\tau]) \mid b[\tau], x_{0}
\end{aligned}
\end{equation}

The final equalities say that the uncertainty about external (resp., internal) paths does not change when we know the internal (resp., external) path because external (resp., internal) states do not influence internal (resp., external) flow. This means the external and internal paths do not share any mutual information and are therefore independent when conditioned on blanket paths (and initial states). From \eqref{eq: 11}, the initial external and internal states are themselves independent, when conditioned on blanket states. 

Note that the conditional independence of paths inherits directly from the sparse coupling, without any reference to the NESS density or Helmholtz decomposition. This can be seen clearly by replacing the partial derivatives in \eqref{eq: 7} with functional derivatives and noting, from \eqref{eq: 12}, that there are no flows that depend on both internal and external states:
\begin{equation}
\label{eq: 15}
\begin{gathered}
\frac{\partial^{2} f}{\partial \eta \partial \mu}=0 \Rightarrow \frac{\partial^{2} \mathcal{L}}{\partial \eta \partial \mu}=\frac{\partial^{2} \mathcal{L}}{\partial \eta \partial \dot{\mu}}=\frac{\partial^{2} \mathcal{L}}{\partial \dot{\eta} \partial \mu}=\frac{\partial^{2} \mathcal{L}}{\partial \dot{\eta} \partial \dot{\mu}}=0 \\
\Rightarrow \frac{\delta^{2} \mathcal{A}(x[\tau])}{\delta \eta[t] \delta \mu[t]}=0 \Leftrightarrow(\mu[\tau] \perp \eta[\tau]) \mid b[\tau], x_{0} \\
\frac{\delta \mathcal{A}(x[\tau])}{\delta \mu[t]}=\int d t^{\prime}\left(\frac{\partial \mathcal{L}}{\partial \mu\left[t^{\prime}\right]} \frac{\delta \mu\left[t^{\prime}\right]}{\delta \mu[t]}+\frac{\partial \mathcal{L}}{\partial \dot{\mu}\left[t^{\prime}\right]} \frac{d}{d t^{\prime}} \frac{\delta \mu\left[t^{\prime}\right]}{\delta \mu[t]}+\ldots\right) \\
\frac{\delta^{2} \mathcal{A}(x[\tau])}{\delta \eta[t] \delta \mu[t]}=\int d t^{\prime} d t^{\prime \prime}\left(\frac{\delta \eta\left[t^{\prime \prime}\right]}{\delta \eta[t]} \frac{\partial^{2} \mathcal{L}}{\partial \eta \partial \mu} \frac{\delta \mu\left[t^{\prime}\right]}{\delta \mu[t]}+\frac{\delta \eta\left[t^{\prime \prime}\right]}{\delta \eta[t]} \frac{\partial^{2} \mathcal{L}}{\partial \eta \partial \dot{\mu}} \frac{d}{d t^{\prime}} \frac{\delta \mu\left[t^{\prime}\right]}{\delta \mu[t]}+\ldots\right)
\end{gathered}
\end{equation}
These expressions mean that the probability of an internal path, given a blanket path (and initial states), does not depend on the external path and \textit{vice versa}.

\subsection*{Summary}

In summary, the internal dynamics (i.e., paths) of some ‘thing’ are conditionally independent of external paths if, and only if, the flow of internal states does not depend on external states and \textit{vice versa} (given initial states). We take this as a necessary and sufficient condition for something to exist, in the sense that it can be distinguished from everything else. When the initial states are sampled from the NESS density, the internal states are conditionally independent of external states (given blanket states), under certain constraints on solenoidal flow. Figure \ref{fig: 1} illustrates the ensuing particular partition. Note that the edges in this graph represent the influence of one state on another, as opposed to conditional dependencies. This is important because directed influences admit conditional independence. These conditional independencies are manifest as zero entries in the Hessian matrices, which inherit from the sparse, directed coupling of the dynamics.

\begin{figure}
    \centering
    \includegraphics[width=0.7\textwidth]{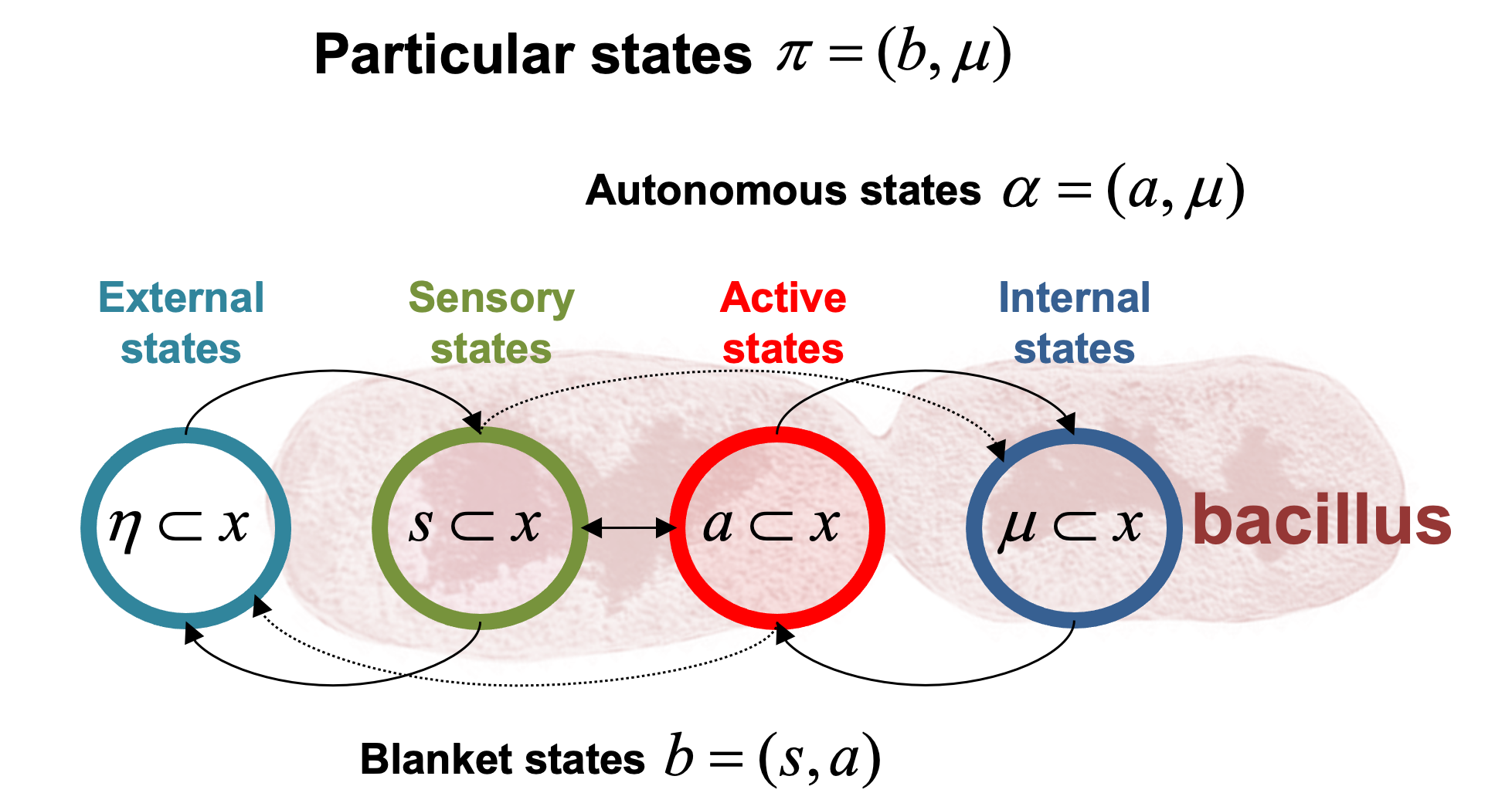}
    \caption[Markov blankets]{\textbf{Markov blankets}. This influence diagram illustrates a particular partition of states into internal states (blue) and external states (cyan) that are separated by a Markov blanket comprising sensory (green) and active states (red). The edges in this graph represent the influence of one state on another, as opposed to conditional dependencies. The diagram shows this partition as it would be applied to a single-cell organism, where internal states are associated with intracellular states, the sensory states become the surface states or cell membrane overlying active states (e.g., the actin filaments of the cytoskeleton). The dotted lines indicate allowable directed influences from sensory (resp., active) to internal (resp., external) states. Particular states constitute a particle; namely, autonomous and sensory states—or blanket and internal states.}
    \label{fig: 1}
\end{figure}

\section{From self-organisation to self-evidencing}

Equipped with a particular partition, we can now talk about things in terms of their internal states and Markov boundary; namely autonomous states. And we can talk about autonomous states and their Markov boundary; namely, particular states—the states of a particle. The next step is to characterise the flow of the autonomous states (of a particle, plant or person) in relation to external states. In other words, we consider the nature of the coupling between the outside and inside of a particle, across its Markov blanket. It is at this point that we move towards a (Bayesian) mechanics that is the special provenance of systems with particular partitions.

The existence of a particular partition means that—given sensory states—one can stipulatively define the conditional density over external states as being parameterised by the most likely internal state~\cite{dacostaBayesianMechanicsStationary2021a}\footnote{In other words, the internal mode supplies the sufficient statistics of the conditional density over external states.}. We will call this a \textit{variational density} parameterised by the \textit{internal} mode $\boldsymbol{\mu}(\tau)$\footnote{\textbf{Question}: what if the conditional densities are not well-behaved, e.g., what if there are no unique modes? The answer is that well-behaved densities are generally guaranteed when increasing the dimensionality of state-spaces using generalised coordinates of motion \cite{kerrGeneralizedPhaseSpace2000,fristonGeneralisedFiltering2010,fristonPathIntegralsParticular2023}. In other words, instead of just dealing with states, we consider states and their generalised motion to arbitrarily high order. We will see examples of this later.}:
\begin{equation}
\label{eq: 16}
\begin{aligned}
q_{\boldsymbol \mu}(\eta) & \triangleq p(\eta \mid s) \\
\boldsymbol{\alpha}(\tau) &=(\mathbf{a}(\tau), \boldsymbol{\mu}(\tau)) \\
\boldsymbol{\alpha}(\tau) &=\arg \min _{\alpha} \Im(\alpha(\tau) \mid s(\tau)) \Rightarrow \\
\boldsymbol{\alpha}[\tau] &=\arg \min _{\alpha} \mathcal{A}(\alpha[\tau] \mid s[\tau]) \Rightarrow \\
\dot{\boldsymbol{\alpha}}(\tau) &=f_{\alpha}(s, \boldsymbol{\alpha})
\end{aligned}
\end{equation} 						
As with the paths of least action, we will use bold typeface to denote a mode or most likely state, given all the states necessary to specify its likelihood. For autonomous states, we only need the sensory states, because the autonomous states are conditionally independent of external states. 

Inducing the variational density is an important move. It means that for every sensory state there is a corresponding active mode and an internal mode (or an autonomous mode in the joint space of active and internal states). The active $\mathbf{a}(\tau)$, internal $\boldsymbol \mu(\tau)$ and autonomous $\boldsymbol \alpha(\tau)$ modes evolve on \textit{active}, \textit{internal} and \textit{autonomous} \textit{manifolds}\footnote{A manifold is a topological (state-) space where each state has a neighbourhood that is homeomorphic to a portion of an Euclidean space of the same dimension~\cite{leeIntroductionTopologicalManifolds2011}. Intuitively, it is a curved space, such as a smooth surface, in a possibly large but finite number of dimensions. In this instance, the states are conditional modes.}, respectively, whose dimensionality is the same as the sensory states\footnote{The dimensionality of the active, internal and autonomous manifolds corresponds to the number of sensory states. This means that both the number of active and internal states must be greater than the number of sensory states. In turn, this limits the straightforward application of the free energy principle to particular partitions where the number of active states—and the number of internal states—exceeds the number of sensory states. In other words, the FEP applies to large particles with a nontrivial internal dynamics.}. We will see later that these manifolds play the role of \textit{centre manifolds}; namely, manifolds on which dynamics do not diverge (or converge) exponentially fast~\cite{carrApplicationsCentreManifold1982}.

Crucially, the internal manifold is also a \textit{statistical manifold} because its states are sufficient statistics for the variational density. In turn, this means that it is equipped with a metric and implicit information geometry~\cite{parrMarkovBlanketsInformation2020,ayInformationGeometry2017,amariInformationGeometryIts2016}. 
Indeed, the Fisher information metric tensor, which measures changes in the Kullback-Leibler (KL) divergence resulting from infinitesimal changes in the internal mode, is a Riemannian metric that yields an information distance~\cite[Appendix B]{dacostaNeuralDynamicsActive2021}. This means we can interpret dynamics on the internal manifold as updating Bayesian beliefs \textit{about} external states. This interpretation can be unpacked in terms of Bayesian inference as follows.

Equation \eqref{eq: 16} means that for every sensory state there is a conditional density over external states and a corresponding internal mode with the smallest surprisal. This mode specifies the variational density, where—by definition—the KL divergence between the variational density and the conditional density over external states is zero\footnote{Since the variational and conditional densities over external states are equal, any divergence between them will vanish, see \cite[Section 3.2]{amariMethodsInformationGeometry2007}.}. This means we can express the autonomous flow as a gradient flow on a free energy functional of the variational density\footnote{A functional is a function of a function, here, the free energy is a function of a conditional density parameterised by the internal mode.}. From \eqref{eq: 12}
 		\begin{equation}
 		\label{eq: 17}
\left[\begin{array}{l}
f_{\eta}(x) \\
f_{s}(x) \\
f_{a}(\pi) \\
f_{\mu}(\pi)
\end{array}\right]=\Omega\left[\begin{array}{c}
\nabla_{\eta} \Im(x) \\
\nabla_{s} \Im(x) \\
\nabla_{a} F(\pi) \\
\nabla_{\mu} F(\pi)
\end{array}\right]-\Lambda,
\end{equation}							
where the free energy in question is (an upper bound on) the surprisal of particular states:
 \begin{equation}
 \label{eq: 18}
\begin{aligned}
F(\pi(\tau)) &=\mathbb{E}_{q}[\ln q(\eta(\tau))-\ln p(\eta(\tau))-\ln p(\pi(\tau) \mid \eta(\tau))]=\Im(\pi(\tau)) \\
&=\underbrace{\mathbb{E}_{q}[\Im(\eta(\tau), \pi(\tau))]}_{\text {Expected energy }}-\underbrace{\H[q(\eta(\tau))]}_{\text {Entropy }} \\
&=\underbrace{\mathbb{E}_{q}[\Im(\pi(\tau) \mid \eta(\tau))]}_{ \text {-ve Accuracy }}+\underbrace{D[q(\eta(\tau)) \| p(\eta(\tau))]}_{\text {Complexity }} \\
&=\underbrace{D[q(\eta(\tau)) \| p(\eta(\tau) \mid \pi(\tau))]}_{=0}+\Im(\pi(\tau)) \\
q &=q_{\boldsymbol \mu}(\eta)=p(\eta \mid s)=p(\eta \mid \pi) \\
\mathbb{E}[F(\pi)] &=\mathbb{E}[\Im(\pi)]=\H[p(\pi)]
\end{aligned}
\end{equation}				
This variational free energy\footnote{\textbf{Question:} why is this functional called \textit{variational} free energy? More generally (for instance in engineering applications where the free energy in question is also called an evidence lower bound \cite{bishopPatternRecognitionMachine2006}) the free energy is a functional of an approximate posterior density $q$ that is an approximation to the Bayesian posterior, as follows:
\begin{equation}
\label{eq: free energy upper bound}
\begin{split}
q_{\boldsymbol \mu}(\eta)\approx p(\eta \mid \pi) \Rightarrow 
F[q] &=\underbrace{D[q(\eta(\tau)) \| p(\eta(\tau) \mid \pi(\tau))]}_{\geq 0}+\Im(\pi(\tau)) \\
\end{split}
\end{equation}
The variational density considered in this article is the minimiser of \eqref{eq: free energy upper bound}, and the free energy evaluated at the variational density is the variational free energy. The term 'variational' inherits from the use of the calculus of variations in variational Bayes (a.k.a., approximate Bayesian inference), applied in the context of a mean field approximation or factorised form of the variational density. The term 'free energy' inherits from Richard Feynman's path integral formulation, in the setting of quantum electrodynamics.} can be rearranged in several ways. First, it can be expressed as expected \textit{energy} minus the \textit{entropy} of the variational density, which licences the name \textit{free energy}\footnote{\textbf{Question}: is variational free energy the same kind of free energy found in statistical mechanics? The answer is no: the entropy term in the variational free energy is the entropy of a variational density—over external states—parameterised by internal states. This entropy is distinct from the entropy of \textit{internal states}. Minimising variational free energy \textit{increases} the entropy of the variational density and, usually, \textit{reduces} the entropy of internal states (see \cite{ueltzhofferVariationalFreeEnergy2021} for an example). 
Mathematically, we can express the different kind of entropies as $\H[q(\eta(\tau))]\neq \H[p(\mu(\tau))]$.}. In this decomposition, minimising variational free energy corresponds to the maximum entropy principle, under the constraint that the expected energy is minimised~\cite{jaynesInformationTheoryStatistical1957,lasotaChaosFractalsNoise1994}. The expected energy is a functional of the NESS density that plays the role of a \textit{generative model}; namely, a joint distribution over causes (external states) and their consequences (particular states)\footnote{\textbf{Question}: in practical applications, variational free energy is usually a function of data or observed (sensory) states. So, why is variational free energy a function of particular states? Later, we will see that practical applications correspond to Bayesian filtering, under the assumption that particular dynamics are very precise. This means that there is no uncertainty about autonomous paths given sensory paths, and the action of a particular path is the action of a sensory path. In generalised coordinates of motion—used in Bayesian filtering—the action of a path becomes the surprisal of a state. In this setting, the variational free energy of particular states is the same as the variational free energy of sensory states.}.

Second, variational free energy can be decomposed into the (negative) log likelihood of particular states (i.e., negative \textit{accuracy}) and the KL divergence between posterior and prior densities (i.e., \textit{complexity}). Finally, it can be written as the self-information associated with particular states (i.e., \textit{surprisal}) plus the KL divergence between the variational and conditional (i.e., posterior) density, which—by construction—is zero. In variational Bayesian inference~\cite{bealVariationalAlgorithmsApproximate2003}, negative surprisal is read as a log marginal likelihood or model evidence, having marginalised over external states. In this setting, negative free energy is an \textit{evidence lower bound} or ELBO~\cite{winnVariationalMessagePassing2005,bishopPatternRecognitionMachine2006}.

So, in what sense can we interpret \eqref{eq: 17} in terms of inference? Let us start by considering the response of autonomous states to some sensory perturbation: that is, the path of autonomous states conditioned upon sensory states. If sensory states change slowly, then the autonomous states will flow towards their most likely value (i.e., their conditional mode) and stay there\footnote{Or, at least in the vicinity, if there are random fluctuations on its motion.}. However, if sensory states are changing, the autonomous states will look as if they are trying to hit a moving target. One can formulate this along the lines of the centre manifold theorem~\cite{carrApplicationsCentreManifold1982,langVoiceRecognitionAphasic2009}, where we have a (fast) flow \textit{off} the centre manifold and a (slow) flow of the autonomous mode \textit{on} the manifold.
 	\begin{equation}
 	\label{eq: 19}
 \begin{aligned}
\alpha(\tau)&= \underbrace{\varepsilon(\tau)}_{\text{Off manifold}} + \underbrace{\boldsymbol{\alpha}(\tau)}_{\text{On manifold}}\\ \varepsilon(\tau) &\triangleq\alpha(\tau)-\boldsymbol{\alpha}(\tau)
\end{aligned}
\end{equation}	

In effect, this is a decomposition in a frame of reference that moves with the autonomous mode, whose path lies on the centre manifold. 
We further describe the off manifold flow using a Taylor expansion around the (time-varying) autonomous mode\footnote{Note that we are performing a Taylor expansion of a (generally rough) stochastic process $\varepsilon$, see \cite[Chapter 5]{kloedenNumericalSolutionStochastic1992}. Alternatively, it may be possible to instead consider motion in generalised coordinates to introduce smooth random fluctuations (see next Section), so that $\varepsilon$ becomes smooth and the usual Taylor expansion applies.} 

\begin{equation}
\label{eq: 19 bis}
    \begin{aligned}
\dot{\varepsilon}(\tau) &=\dot{\alpha}(\tau)-\dot{\boldsymbol{\alpha}}(\tau)=f_{\varepsilon}(0)+\frac{\partial f_{\varepsilon}}{\partial \varepsilon} \cdot \varepsilon+\ldots=\frac{\partial f_{\alpha}}{\partial \alpha} \cdot \varepsilon+\ldots \\
& \Rightarrow \\
\underbrace{\dot{\alpha}(\tau) -\dot{\boldsymbol{\alpha}}(\tau)}_{\text{Off manifold flow}} &=\mathbf{J}_{\alpha} \cdot(\alpha-\boldsymbol{\alpha})+\ldots \\
&=-\underbrace{\left(\Gamma_{\alpha} \nabla_{\alpha \alpha} F\right) \cdot(\alpha-\boldsymbol{\alpha})}_{\text {Flow to centre manifold }}+\underbrace{\left(Q_{\alpha \alpha} \nabla_{\alpha \alpha} F\right) \cdot(\alpha-\boldsymbol{\alpha})}_{\text {Flow parallel to the manifold}}+\ldots
    \end{aligned}
\end{equation}

This means that the flow at the expansion point is zero, leaving the second term of the expansion as the first non-vanishing term. This is the Jacobian of the autonomous flow times the displacement of the current autonomous state from its corresponding mode. The second-order derivatives of the free energy arise from the Jacobian of the flow, i.e., substituting \eqref{eq: 17} into \eqref{eq: 8}. Therefore, the off manifold flow has a component that flows \textit{towards} the centre manifold,\footnote{We know that the flow must be towards the centre manifold because the covariance of random fluctuations is positive definite, and the curvature of the free energy is positive definite at its minima: i.e., around the expansion point.} afforded by the gradient flow, and a component that is \textit{parallel} to the manifold, afforded by the solenoidal flow, cf. \eqref{eq: 6}. Taken together, this means that the autonomous states flow in ever-decreasing circles towards the centre manifold, as illustrated in Figure \ref{fig: 2}.

But what about the flow \textit{on} the centre manifold? We know from \eqref{eq: 17} that the flow of the autonomous mode can be expressed in terms of free energy gradients:
\begin{equation}
\label{eq: 20}
\begin{aligned}
\dot{\boldsymbol{\alpha}}(\tau) &=\left(Q_{\alpha \alpha}-\Gamma_{\alpha}\right) \nabla_{\alpha} F(s, \boldsymbol{\alpha})+\ldots \\
&=\left(Q_{\alpha \alpha}-\Gamma_{\alpha}\right) \underbrace{\nabla_{\alpha} \mathbb{E}_{q}[\Im(s, \boldsymbol{\alpha} \mid \eta)]}_{\text {-ve Accuracy }}+\left(Q_{\alpha \alpha}-\Gamma_{\alpha}\right) \underbrace{\nabla_{\alpha} D\left[q_{\boldsymbol \mu}(\eta) \| p(\eta)\right]}_{\text {Complexity }}+\ldots
\end{aligned}
\end{equation}
 			
This expression unpacks the centre manifold flow in terms of the accuracy and complexity parts of free energy, where the accuracy part depends upon the sensory states, while the complexity part is a function of, and only of, autonomous states. In short, the flow on the centre manifold will look as if it is trying to maximise the accuracy of its predictions, while complying with prior (Bayesian) beliefs.\footnote{The covariance of random fluctuations $\Gamma_\alpha$ is positive definite and the solenoidal matrix field $Q_{\alpha\alpha}$ is skew-symmetric, therefore the flow in \eqref{eq: 20} will seek to minimise complexity minus accuracy.} Here, predictions are read as the expected sensory states, under posterior (Bayesian) beliefs about their causes afforded by the variational density over external states. 

 \begin{figure}
    \centering
    \includegraphics[width=\textwidth]{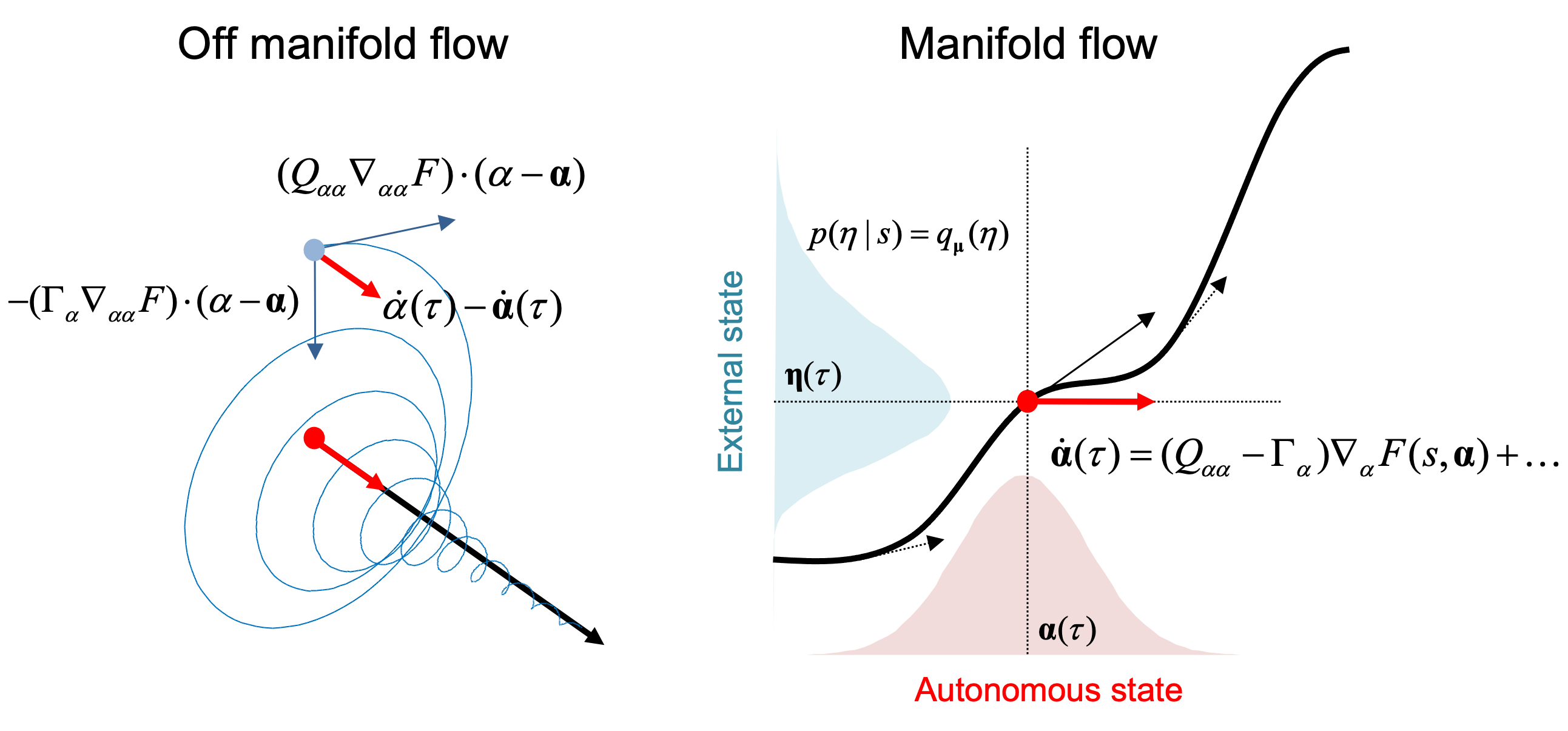}
    \caption[Autonomous flows and Bayesian filters]{\textbf{Autonomous flows and Bayesian filters}. This figure shows two components of the autonomous flow; namely, a (fast) flow $\dot \alpha(\tau)-\dot {\boldsymbol \alpha}(\tau)$ \textit{off} the (centre) manifold, and a (slow) flow $\dot {\boldsymbol \alpha}(\tau)$ \textit{on} the manifold. The manifold here is the set of autonomous modes $\boldsymbol \alpha(\tau)$ given sensory states $s(\tau)$ for all time $\tau$, see \eqref{eq: 16}. The decomposition into fast and slow flows means that the manifold can be thought of as a centre manifold. The left panel shows two components of the fast flow off the manifold; namely, a flow towards the centre manifold and a flow parallel to the manifold, see \eqref{eq: 19 bis}. This decomposition rests upon a first-order Taylor expansion of the off manifold flow. The right panel plots the external mode as a function of the autonomous mode---what is known as a \textit{synchronisation} manifold---as a black curvilinear line. The Gaussian (blue and red) distributions show possible variations in (external and autonomous) conditional modes due to variations in the sensory states. The arrows represent the centre manifold flow $\dot {\boldsymbol \alpha}(\tau)$ in the context of this synchronisation manifold, where the tangent vectors represent possible directions of the flow.
    }
    \label{fig: 2}
\end{figure}

\subsection*{Summary}
In summary, a particular partition of a nonequilibrium steady-state density implies that autonomous dynamics can be interpreted as performing a particular kind of inference. This entails a fast flow towards an autonomous centre manifold and a slow flow on the centre manifold. The centre manifold flow can be interpreted as Bayesian belief updating, where posterior (Bayesian) beliefs are encoded by points on an internal (statistical) manifold. In other words, for every point on the statistical manifold, there is a corresponding variational density or Bayesian belief over external states. We are now in a position to express this belief updating as a variational principle of least action:

\begin{equation}
\label{eq: 21}
\begin{aligned}
\boldsymbol{\alpha}[\tau] &=\arg \min _{\alpha[\tau]} \mathcal{A}(\alpha[\tau] \mid s[\tau]) \\
& \Leftrightarrow \delta_{\alpha} \mathcal{A}(\boldsymbol{\alpha}[\tau] \mid s[\tau])=0 \\
& \Leftrightarrow \\
\dot{\mathbf{a}}(\tau) &=f_{a}(s, \boldsymbol{\alpha})=\left(Q_{a a}-\Gamma_{a}\right) \nabla_{a} F(s, \boldsymbol{\alpha})+\ldots \\
\dot{\boldsymbol{\mu}}(\tau) &=f_{\mu}(s, \boldsymbol{\alpha})=\left(Q_{\mu \mu}-\Gamma_{\mu}\right) \nabla_{\mu} F(s, \boldsymbol{\alpha})+\ldots
\end{aligned}
\end{equation}
 							
This is a basis of the free energy principle. Put simply, it means that the internal states of a particular partition can be cast as encoding conditional or posterior Bayesian beliefs \textit{about} external states. Equivalently, the autonomous path of least action can be expressed as a gradient flow on a variational free energy that can be read as log evidence. This licences a somewhat poetic description of self-organisation as self-evidencing~\cite{hohwySelfEvidencingBrain2016}, in the sense that the surprisal or self-information is known as log model evidence or marginal likelihood in Bayesian statistics\footnote{\textbf{Question}: this Bayesian mechanics seems apt for inference but what about learning over time? We have been dealing with states in a generic sense. However, one can have states that change over different timescales. One can read slowly changing states as special states that play the role of parameters; either parameters of the flow or, implicitly, the generative model. In mathematical and numerical analyses, states and parameters are usually treated identically; i.e., as minimising variational free energy. Indeed, in practical applications of Bayesian filtering schemes that learn, the parameters are treated as slowly changing states. See \cite{schiffKalmanFilterControl2008,fristonGeneralisedFiltering2010} for worked examples.}.

Interestingly, because of the symmetric setup of the Markov blanket, it would be possible to repeat everything above but switch the labels of internal and external states—and active and sensory states—and tell the same story about external states tracking internal states. This evinces a form of generalised synchrony~\cite{huntDifferentiableGeneralizedSynchronization1997,jafriGeneralizedSynchronyCoupled2016,buendiaBroadEdgeSynchronization2022,dacostaBayesianMechanicsStationary2021a}, where internal and external states track each other. Technically, if we consider the (internal and external) manifolds in the joint space of internal and external states, we have something called a synchronisation manifold that offers another perspective on the coupling between the inside and outside~\cite{fristonActiveInferenceCommunication2015,parrMarkovBlanketsInformation2020,dacostaBayesianMechanicsStationary2021a}. 

These teleological interpretations cast particular paths of least action as an optimisation process, where different readings of free energy link nicely to various normative (i.e., optimisation) theories of sentient behaviour. Some cardinal examples are summarised in Figure \ref{fig: 3}; see \cite{fristonFreeenergyPrincipleUnified2010,buckleyFreeEnergyPrinciple2017,fristonFreeEnergyPrinciple2019a,fristonReinforcementLearningActive2009} for some formal accounts of these relationships. Because internal states do not influence sensory (or external) states, they will look as if they are concerned purely with inference, in the sense that they parameterise the variational density over external states. However, active states influence sensory (and external) states and will look as if they play an active role in configuring (and causing) the sensory states that underwrite inference. In the neurosciences, this is known as \textit{active inference}~\cite{fristonActionBehaviorFreeenergy2010,ueltzhofferDeepActiveInference2018,koudahlWorkedExampleFokkerPlanckBased2020}.

The link between optimisation and inference is simply that inference is belief optimisation. However, it is worth unpacking the gradients that ‘drive’ this optimisation. In statistics, variational free energy is used to score the divergence between a variational density and the conditional density over external (i.e., hidden) states, given blanket states~\cite{winnVariationalMessagePassing2005}. Unlike the definition in \eqref{eq: 18}, these densities are not assumed to be equivalent. Variational inference proceeds by optimising the variational density such that it minimises free energy—often using the gradient flows in \eqref{eq: 17}. However, there is a subtle difference between the dynamics of \eqref{eq: 17} and variational inference. In the former, there is no contribution from the KL-divergence as it is stipulated to be zero. In the latter, it is only the divergence term that contributes to free energy gradients. So, is it tenable to interpret gradient flows on variational free energy as variational inference, or is this just teleological window-dressing? The next section addresses this question through the lens of Bayesian filtering. In brief, we will see that the autonomous paths of least action—implied by a particular partition—are the paths of least action of a Bayesian filter. This takes us beyond ‘as if’ arguments by establishing a formal connection between particular dynamics and variational inference.

 \begin{figure}
    \centering
    \includegraphics[width=0.8\textwidth]{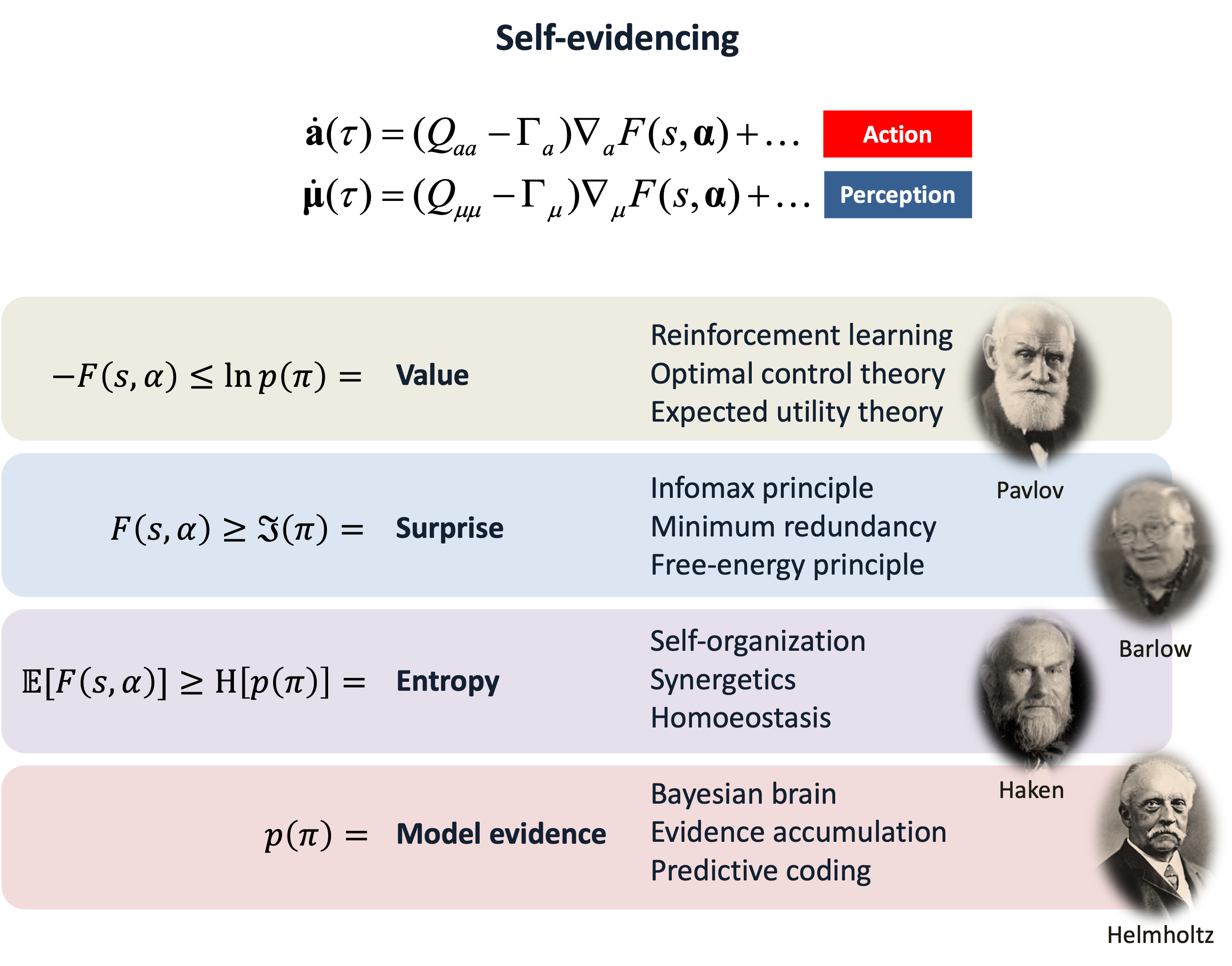}
    \caption[Markov blankets and self-evidencing]{\textbf{Markov blankets and self-evidencing}. This schematic illustrates various points of contact between minimising variational free energy and other normative theories of optimal behaviour. The existence of a Markov blanket entails a certain lack of influences among internal, blanket and external states. These have an important consequence—internal and active states are not influenced by external states, which means their dynamics (i.e., perception and action) are a function of, and only of, particular states, given by a variational (free energy) bound on surprisal. This has a number of interesting interpretations. Given surprisal is the negative log probability of finding a particle or creature in a particular state, minimising surprise corresponds to maximising the value of that state. This interpretation is licensed by the fact that the states with a high probability are, by definition, characteristic of the particle in question. On this view, one could relate this to dynamics in reinforcement learning~\cite{bartoReinforcementLearningIntroduction1992}, optimal control theory~\cite{todorovOptimalFeedbackControl2002} and, in economics, expected utility theory~\cite{bossaertsBehaviouralEconomicsNeuroeconomics2015a,vonneumannTheoryGamesEconomic1944}. Gradient flows that minimise surprisal (i.e., self-information) lead to a series of influential accounts of neuronal dynamics; including the principle of maximum mutual information~\cite{opticanTemporalEncodingTwodimensional1987a,linskerPerceptualNeuralOrganization1990}, the principles of minimum redundancy and maximum efficiency~\cite{barlowPossiblePrinciplesUnderlying1961} and the free energy principle~\cite{fristonFreeEnergyPrinciple2006}. Crucially, the average or expected surprise (over time of particular states) corresponds to entropy. This means that action and perception look as if they are bounding the entropy of particular states. This links nicely with theories of self-organisation, such as synergetics in physics~\cite{nicolisSelforganizationNonequilibriumSystems1977,hakenSynergeticsIntroductionNonequilibrium1978,kauffmanOriginsOrderSelforganization1993} or homoeostasis in physiology~\cite{ashbyPrinciplesSelfOrganizingDynamic1947,conantEveryGoodRegulator1970,bernardLecturesPhenomenaLife1974}. Finally, the probability of a particular state, is, on a statistical view, model evidence or marginal likelihood~\cite{mackayFreeEnergyMinimisation1995,mackayInformationTheoryInference2003}, marginalising over the causes of particular states (i.e., external states). This means that all the above formulations are internally consistent with things like the Bayesian brain hypothesis, evidence accumulation and predictive coding\cite{fristonFreeenergyPrincipleUnified2010,bogaczTutorialFreeenergyFramework2017,buckleyFreeEnergyPrinciple2017}. Most of these formulations inherit from Helmholtz's motion of unconscious inference~\cite{helmholtzHelmholtzTreatisePhysiological1962}, later unpacked in terms of perception as hypothesis testing in psychology~\cite{gregoryPerceptionsHypotheses1980} and machine learning~\cite{dayanHelmholtzMachine1995}. Although not depicted here, the minimisation of complexity—inherent in the minimisation of free energy—enables thermodynamic and metabolic efficiency via Landauer's principle~\cite{landauerIrreversibilityHeatGeneration1961}.}
    \label{fig: 3}
\end{figure}

\section{Lagrangians, generalised states and Bayesian filtering}

Now, say we wanted to emulate or simulate active inference. Given some equations of motion and statistics of random fluctuations, we could find the stationary solution to the Fokker Planck equation and accompanying Helmholtz decomposition. We could then solve \eqref{eq: 21} for the autonomous paths of least action that characterise the expected behaviour of this kind of particle, and obtain realisations of synchronisation and inference. See \cite{fristonStochasticChaosMarkov2021} for a worked example using a system of coupled Lorentz attractors.


In this section, we take a somewhat pragmatic excursion to suggest a simpler way to recover the paths of least action; namely, as the solution to a generic (Bayesian) filtering scheme that is widely used in the engineering literature.

\subsection{Dynamics in generalised coordinates of motion}

Let us go back to the Langevin equation governing our system 

\begin{equation}
\label{eq: langevin fluctuating covariance}
    \dot x(\tau) = f(x) + \omega(\tau).
\end{equation}
In this section, we assume that the random fluctuations driving the motion have smooth (analytic) sample paths; thus, the Langevin equation considered in the rest of the article can be seen as the limit of \eqref{eq: langevin fluctuating covariance} as the fluctuations become rough \cite{wongRelationOrdinaryStochastic1965}. This setup speaks nicely to the fact that, in biology, fluctuations are often smooth up to a certain order---contrariwise to thermal (white noise) fluctuations---as they are the output of other random dynamical systems. 
As before, we assume that the fluctuations are state-independent, and a stationary Gaussian process, e.g., the smoothing of white noise fluctuations with a Gaussian kernel. Just like in the case of white noise, Gaussianity can be motivated by the central limit theorem---fluctuations should be normally distributed at each point in time.

We denote the autocovariance of fluctuations by 
$\Gamma_{h}= \frac 1 2\mathbb E[\omega (\tau) \otimes \omega (\tau+h)].$
The underlying dynamical systems giving rise to this generic type of smooth noise can be recovered through a procedure known as stochastic realisation \cite{dacostaBayesianMechanicsStationary2021a,lindquistRealizationTheoryMultivariate1985,mitterTheoryNonlinearStochastic1981}. The solution to the Langevin equation 
\eqref{eq: langevin fluctuating covariance} can be approximated, on a suitably small interval of time, by a linear Langevin equation in generalised coordinates of motion $\vec{x}=\left(x, x^{\prime}, x^{\prime \prime}, \ldots\right)$ \cite[Section 4]{balajiBayesianStateEstimation2011}:\footnote{\label{footnote expansion in gen coords}The expansion \eqref{eq: 22} is a linear approximation of \eqref{eq: langevin fluctuating covariance} \cite{biscayLocalLinearizationMethod1996}, obtained by recursively differentiating \eqref{eq: langevin fluctuating covariance} and ignoring the contribution of the derivatives of the flow of order higher than one. In other words, the expansion is exact when the flow is linear, and it is accurate on a short time-scale when the flow is non-linear.}\footnote{The curvature (i.e., second derivative) of the autocovariance $\Gamma^{\prime\prime}_{0}$ is a ubiquitous measure of roughness of a stochastic process \cite{coxTheoryStochasticProcesses1977}. Note that in the limit where the fluctuations $\omega$ are uncorrelated (e.g., white noise fluctuations), $\Gamma^{\prime\prime}_{0}$ (and higher derivatives) become infinitely large.}

\begin{equation}
 \label{eq: 22}
 \begin{split}
     \left.\begin{array}{c}
\dot{x}=x^{\prime}=\nabla f \cdot x+ \omega \\
\dot{x}^{\prime}=x^{\prime \prime}=\nabla f \cdot x^{\prime}+\omega^{\prime} \\
\dot{x}^{\prime \prime}=x^{\prime \prime \prime}=\nabla f \cdot x^{\prime \prime}+\omega^{\prime \prime} \\
\vdots
\end{array}\right\} &\Leftrightarrow \begin{array}{r}
\dot{\vec{x}}=\mathbf f (\vec{x})+\vec{\omega} \\
\mathbf{D} \vec{x}
=\mathbf{J} \vec{x}+\vec{\omega} \\
p(\vec{\omega}(\tau))=\mathcal{N}(\vec{\omega}(\tau); 0,2 \boldsymbol{\Gamma})
\end{array}\\
\mathbf{D}=\left[\begin{array}{llll}
0&1 & & \\
&0& 1 & \\
& & 0& \ddots \\
& & & \ddots
\end{array}\right], \quad \mathbf{J}=&\left[\begin{array}{llll}
\nabla f & & & \\
& \nabla f & & \\
& & \nabla f & \\
& & & \ddots
\end{array}\right], \quad \boldsymbol \Gamma=\left[\begin{array}{cccc}
\Gamma_{0} & & \Gamma^{\prime\prime}_{0} & \\
& -\Gamma^{\prime\prime}_{0} & & -\Gamma^{\prime \prime \prime\prime}_{0} \\
\Gamma^{\prime\prime}_{0} & & \Gamma^{\prime \prime \prime\prime}_{0} & \\
& -\Gamma^{\prime \prime \prime\prime}_{0} & & \ddots
\end{array}\right]
 \end{split}
\end{equation}


Here, the different variables $\vec x=\left(x, x^{\prime}, x^{\prime \prime}, \ldots, x^{(n)}, \ldots \right)$ can be seen as the position, velocity, acceleration, jerk, and higher orders of motion of the process, which are treated as separate (generalised) states that are coupled through the Jacobian $\mathbf{J}$. These are driven by smooth fluctuations $\vec{\omega}$ (i.e., the serial derivatives of $\omega$) whose covariance $2\boldsymbol \Gamma$ can be expressed in terms of the serial derivatives of the autocovariance \cite[Appendix A.5.3]{parrActiveInferenceFree2022}.

The generalised states are the coefficients of a Taylor series expansion of the solution to the Langevin equation \eqref{eq: langevin fluctuating covariance}:

\begin{equation}
\label{eq: analytic solution}
   x(\tau) = x(0) + x^{\prime}(0)\tau + \frac {x^{\prime \prime}(0)}2 \tau^2 + \ldots + \frac{x^{(n)}(0)}{n!}\tau^n +\ldots,
\end{equation}
where \eqref{eq: analytic solution} holds, typically, only on a small time-interval to which we restrict ourselves henceforth. In other words, the generalised states at any time-point determine the system's trajectory, and vice versa; that is, there is an isomorphism between generalised states and paths.




 

This line of reasoning has two advantages. First, it means one can let go of white noise assumptions on the random fluctuations and deal with smooth or analytic fluctuations. Second, the linear expansion in generalised coordinates of motion \eqref{eq: 22} means that the distribution of generalised states has a simple Gaussian form
 \begin{equation}
 \label{eq: 23}
\begin{aligned}
\mathcal{L}(\vec{x}(\tau)) &\triangleq -\ln p (\vec{x}(\tau))=\frac{1}{2} \vec{x}(\tau) \cdot \mathbf{M} \vec{x}(\tau) \\
\mathbf{M} &=(\mathbf{D}-\mathbf{J}) \cdot \frac{1}{2 \boldsymbol \Gamma}(\mathbf{D}-\mathbf{J}).
\end{aligned}
\end{equation}
Here, $\mathbf{M}$ can be read as a mass matrix. This suggests that precise particles, with low amplitude random fluctuations, behave like massive bodies. Furthermore, \eqref{eq: 23} is seen as the Lagrangian in generalised coordinates of motion, due to its formal similarity with \eqref{eq: action}.
Under the isomorphism between points and paths in generalised coordinates, the Lagrangian is equivalent to the action; it scores the likelihood of paths of \eqref{eq: langevin fluctuating covariance}, as a path corresponds to a point in generalised coordinates of motion\footnote{\textbf{Question}: how can a point be a path? The generalised states (i.e., temporal derivatives) approximate the path of the solution to \eqref{eq: langevin fluctuating covariance} on a suitably small time interval because they are the coefficients of a Taylor expansion of the path as a function of time \eqref{eq: analytic solution}.}. We will reason about the trajectories of the system by analysing the Lagrangian of generalised states henceforth.

The path of least action corresponds to the minimiser of the Lagrangian, which can be expressed as follows:
 \begin{equation}
  \label{eq: 25}
\begin{aligned}
\overrightarrow{\mathbf{x}}(\tau) &=\arg \min _{\vec{x}(\tau)} \mathcal{L}(\vec{x}(\tau)) \\
& \Leftrightarrow \nabla \mathcal{L}(\overrightarrow{\mathbf{x}}(\tau))=0  \quad \forall \tau .
\end{aligned}
\end{equation}									

We can recover the path of least action by solving the following equation of motion
 \begin{equation}
 \label{eq: 24}
\begin{aligned}
\dot{\vec{x}}(\tau) &=\mathbf{D} \vec{x}-\nabla \mathcal{L}(\vec{x}) \\
\nabla \cdot \mathbf{D} \vec{x} &=0.
\end{aligned}
\end{equation}	
Indeed, this motion can be interpreted as a gradient descent on the Lagrangian, in a frame of reference that moves with the mode of the distribution of generalised states~\cite{fristonGeneralisedFiltering2010}. Thus, the convexity of the Lagrangian means that any solution to \eqref{eq: 24} converges to the path of least action.
In this setting, the divergence-free flow (i.e., the first term) is known as a \textit{prediction} of the generalised state based upon generalised motion, while the curl-free, gradient flow (i.e., the second term) is called an \textit{update}.

\subsection{Particular partitions in generalised coordinates of motion}

We now reintroduce the distinction between internal, external, sensory and active states $x=(\eta, s, a , \mu)$. Briefly, as before, we assume that the Langevin equation \eqref{eq: langevin fluctuating covariance} is sparsely coupled as in \eqref{eq: 13}. This implies that the trajectories internal and external to the particle are conditionally independent given the trajectories of the blanket \eqref{eq: 15}. The same sparse coupling structure carries through the expansion in generalised coordinates \eqref{eq: 22} so that the motion of generalised states entails trajectories with the same conditional independencies. Since paths correspond to generalised states, this yields conditional independence between generalised states, as follows:
\begin{equation}
    (\vec \mu \perp \vec\eta) \mid \vec b \iff \mathcal{L}(\vec{x}) = \mathcal{L}(\vec{\eta} \mid \vec{b}) + \mathcal{L}(\vec{\mu} \mid \vec{b}) + \mathcal{L}(\vec{b}).
\end{equation}

We can now recover paths of least action of the particle by equating the Lagrangian with the variational free energy of generalised states. This allows us to express the internal path of least action as a gradient flow on variational free energy, which can itself be expressed in terms of generalised prediction errors. 
From \eqref{eq: 24}, we have
\begin{equation}
 \label{eq: 26}
\begin{aligned}
\dot{\vec{\mu}}(\tau) & =\mathbf{D} \vec{\mu}-\nabla_{\vec{\mu}} \mathcal{L}(\vec{x})=\mathbf{D} \vec{\mu}-\nabla_{\vec{\mu}} \mathcal{L}(\vec{\mu} \mid \vec{b}) \\
& =\mathbf{D} \vec{\mu}-\nabla_{\vec{\mu}} \mathcal{L}(\vec{\pi}) =\mathbf{D} \vec{\mu}-\nabla_{\vec{\mu}} F(\vec{\pi}),
\end{aligned}
\end{equation}
where the free energy of generalised states is analogous to \eqref{eq: 18}
\begin{equation}
\label{eq: free energy generalised}
    \begin{aligned}
        F(\vec{s}, \vec{a}, \vec{\mu}) & =\underbrace{\mathbb{E}_q[\mathcal{L}(\vec{\eta}, \vec{\pi})]}_{\text {Expected energy }}-\underbrace{H[q(\vec{\eta})]}_{\text {Entropy }} \\
& =\underbrace{\mathbb{E}_q[\mathcal{L}(\vec{\pi} \mid \vec{\eta})]}_{\text {Accuracy }}+\underbrace{D[q(\vec{\eta}) \| p(\vec{\eta})]}_{\text {Complexity }} \\
& =D[q(\vec{\eta}) \| p(\vec{\eta} \mid \vec{\pi})]+\mathcal{L}(\vec{\pi})\\
q_{\vec{\mu}}(\vec{\eta}) & =\mathcal{N}(\vec{\eta}; \overrightarrow{\boldsymbol{\mu}}, \Sigma(\overrightarrow{\boldsymbol{\mu}}))=p(\vec{\eta} \mid \vec{\pi})=p(\vec{\eta} \mid \vec{b}) \\
\mathcal{L}(\vec{\eta}, \vec{\pi}) & =\varepsilon_{\vec{\eta}} \cdot \frac{1}{4 \boldsymbol \Gamma_\eta} \varepsilon_{\vec{\eta}}+\varepsilon_{\vec{s}} \cdot \frac{1}{4 \boldsymbol \Gamma_s} \varepsilon_{\vec{s}}+\ldots \\
\varepsilon_{\vec{\eta}} & \triangleq\mathbf{D} \vec{\eta}-\mathbf f_{\vec{\eta}}(\vec{\eta}, \vec{s}) \\
\varepsilon_{\vec{s}} & \triangleq\mathbf{D} \vec{s}-\mathbf f_{\vec{s}}(\vec{\eta}, \vec{s}).
    \end{aligned}
\end{equation}
The variational free energy of generalised states is easy to evaluate, given a generative model in the form of a state-space model~\cite{fristonGeneralisedFiltering2010}; that is, the generalised flow of external and sensory states $\mathbf f_{\vec{\eta}},\mathbf f_{\vec{s}}$, and the covariance of their generalised fluctuations 
$\boldsymbol \Gamma_\eta, \boldsymbol \Gamma_s$. Note that the parameterisation of the variational density is very simple: the internal states parameterise the expected external states. Furthermore, the quadratic form of the Lagrangian means that the variational density over the generalised motion of external states is Gaussian\footnote{\textbf{Question}: why is the covariance of the variational density only a function of the internal mode? This follows from the quadratic Lagrangian that furnishes an analytic solution to the free energy minimum. Please see \cite{fristonVariationalFreeEnergy2007} for details.}. This licenses a ubiquitous assumption in variational Bayes called the Laplace assumption. Please see~\cite{fristonVariationalFreeEnergy2007} for a discussion of the simplifications afforded by the Laplace assumption. 

Crucially, in the absence of active states, the dynamic in \eqref{eq: 26} coincides with a generalised Bayesian filter. Generalised filtering is a generic Bayesian filtering scheme for nonlinear state-space models formulated in generalised coordinates of motion~\cite{fristonGeneralisedFiltering2010}; 
special cases include variational filtering \cite{fristonVariationalFiltering2008a}, dynamic expectation maximisation \cite{fristonVariationalTreatmentDynamic2008}, extended Kalman filtering \cite{loeligerLeastSquaresKalman2002}, and generalised predictive coding. 

Furthermore, if the autonomous paths are conditionally independent from external paths, given sensory paths\footnote{This is the case for precise particles, which are defined by particular fluctuations of infinitesimally small amplitude---see next Section and \cite{fristonPathIntegralsParticular2023}.}, the autonomous paths of least action can be recovered from a generalised gradient descent on variational free energy:
\begin{equation}
 \label{eq: gen grad descent autonomous states}
\begin{aligned}
\dot{\vec{\alpha}}(\tau) & =\mathbf{D} \vec{\alpha}-\nabla_{\vec{\alpha}} \mathcal{L}(\vec{x})=\mathbf{D} \vec{\alpha}-\nabla_{\vec{\alpha}} \mathcal{L}(\vec{\alpha} \mid \vec{s}) \\
& =\mathbf{D} \vec{\alpha}-\nabla_{\vec{\alpha}} \mathcal{L}(\vec{\pi}) =\mathbf{D} \vec{\alpha}-\nabla_{\vec{\alpha}} F(\vec{\pi}).
\end{aligned}
\end{equation}
In this case, the most likely paths of both internal and active states can be recovered by a gradient descent on variational free energy, and one can simulate active inference using generalisations of linear quadratic control or model predictive control \cite{kappenPathIntegralsSymmetry2005,todorovGeneralDualityOptimal2008a}:

\begin{equation}
\label{eq: 27}
\left[\begin{array}{c}
\dot{\vec{\eta}}(\tau) \\
\dot{\vec{s}}(\tau) \\
\dot{\vec{ a}}(\tau) \\
\dot{\vec{ \mu}}(\tau)
\end{array}\right]=\left[\begin{array}{c}
\mathbf f_{\vec{\eta}}(\vec{\eta}, \vec{s}, \vec{a})+\vec{\omega}_{\eta}(\tau) \\
\mathbf f_{\vec{s}}(\vec{\eta}, \vec{s}, \vec{a})+\vec{\omega}_{s}(\tau) \\
\mathbf{D} \vec{ a}-\nabla_{\vec{a}} F(\vec{s}, \vec{a}, \vec{\mu}) \\
\mathbf{D} \vec{\mu}-\nabla_{\vec{\mu}} F(\vec{s}, \vec{  a}, \vec{ \mu})
\end{array}\right]
\end{equation}
 								
This is effectively a (generalised) version of the particular dynamics in \eqref{eq: 21}.

\subsection*{Summary}

This section has taken a somewhat pragmatic excursion from the FEP narrative to consider generalised coordinates of motion. This excursion is important because it suggests that the gradient flows in systems with attracting sets are the paths of least action in Bayesian filters used to assimilate data in statistics~\cite{loeligerLeastSquaresKalman2002} and, indeed, control theory~\cite{vandenbroekRiskSensitivePath2010}. 

Working in generalised coordinates of motion is effectively working with paths and the path integral formulation. Practically, this is useful because one can use the density over paths directly to evaluate the requisite free energy gradients, as opposed to solving the Fokker-Planck equation to find the NESS density. Effectively, the generative model becomes a state-space model, specified with flows and the statistics of random fluctuations: see \eqref{eq: free energy generalised}. These are the sufficient statistics of the joint density over external and sensory paths.

Hitherto, we have largely ignored random fluctuations in the motion of particular states to focus on the underlying flows. Are these flows ever realised or does the principle of least action in \eqref{eq: 21} only apply to the most likely autonomous paths? In what follows, we will consider a special class of systems, where we suppress particular fluctuations to recover the behaviour of particles that show a precise or predictable response to external states. For this kind of particle, the particular paths are always the paths of least action.

\section{From statistical to classical particles}
\label{sec: precise}

So far, we have a Bayesian mechanics that would be apt to describe a particle or person with a pullback attractor. But what is the difference between a particle and a person? This question speaks to distinct classes of things to which the free energy principle could apply; e.g., molecular versus biological. Here, we associate biotic self-organisation with the precise and predictable dynamics of large particles. Thanks to the Helmholtz decomposition \eqref{eq: 6}, it is known that when random fluctuations are large, dissipative flow dominates conservative flow, and we have ensembles described by statistical mechanics (i.e., small particles). Conversely, when random fluctuations have a low amplitude, solenoidal flow\footnote{And its accompanying correction term $\Lambda$, see \eqref{eq: 6}.} dominates and we have classical mechanics and deterministic chaos (i.e., of heavenly and $n$-body problems). Here, we consider the distinction between statistical and classical mechanics in the setting of a particular partition.

It is often said that the free energy principle explains why biological systems resist the second law and a natural tendency to dissipation and disorder~\cite{fristonLifeWeKnow2013}. However, this is disingenuous on two counts. First, the second law only applies to closed systems, while the free energy principle describes open systems in which internal states are exposed to—and exchange with—external states through blanket states. Second, there is nothing, so far, to suggest that the entropy of particular states or paths is small. Everything we have done would apply equally to particles with high and low entropy densities. So, what distinguishes between high and low entropy systems (e.g., between candle flames and concierges), respectively? 

One answer can be found in the path-integral formulation: from \eqref{eq: 5}, we can associate the entropy of a path (i.e., history or trajectory of particular states) with the amplitude of random fluctuations. This licences the notion of \textit{precise particles} that are characterised by low or vanishing random fluctuations\footnote{\textbf{Question}: but surely my neurons are noisy? There is a substantial literature that refers to neuronal and synaptic noise: e.g., \cite{toutounjiSpatiotemporalComputationsExcitable2014}. However, the population dynamics of neuronal ensembles or assemblies are virtually noiseless by the central limit theorem (because they comprise thousands of neurons), when averaged over suitable spatial and temporal scales. For example, in electrophysiology, averaging several fluctuating single trial responses yields surprisingly stable and reproducible event-related potentials. From the perspective of the FEP, studying single neurons (or trials) is like studying single molecules to characterise fluid dynamics.}. In essence, precise particles are simply ‘things’ that are subject to the classical laws of nature; i.e., Lagrangian mechanics. In the case of vanishing fluctuations on particular states, every autonomous trajectory is a path of least action. From \eqref{eq: 5} and \eqref{eq: 21} this can be expressed as follows:
 \begin{equation}
 \label{eq: 28}
\begin{aligned}
\Gamma_{\pi} &\equiv 0\\
&\Rightarrow\\
\dot{\alpha}(\tau) &=f_{\alpha}(\pi(\tau)) \Leftrightarrow \delta_{\alpha} \mathcal{A}(\alpha[\tau] \mid s[\tau])=0 \Leftrightarrow \alpha[\tau]=\boldsymbol\alpha[\tau] \\
& \Rightarrow \\
\dot{a}(\tau) &=\dot{\mathbf{a}}(\tau)=\left(Q_{a a}-\Gamma_{a}\right) \nabla_{a} F(\pi)+\ldots \\
\dot{\mu}(\tau) &=\dot{\boldsymbol{\mu}}(\tau)=\left(Q_{\mu \mu}-\Gamma_{\mu}\right) \nabla_{\mu} F(\pi)+\ldots
\end{aligned}
\end{equation}	

 \begin{figure}[t!]
    \centering
    \includegraphics[width=0.9\textwidth]{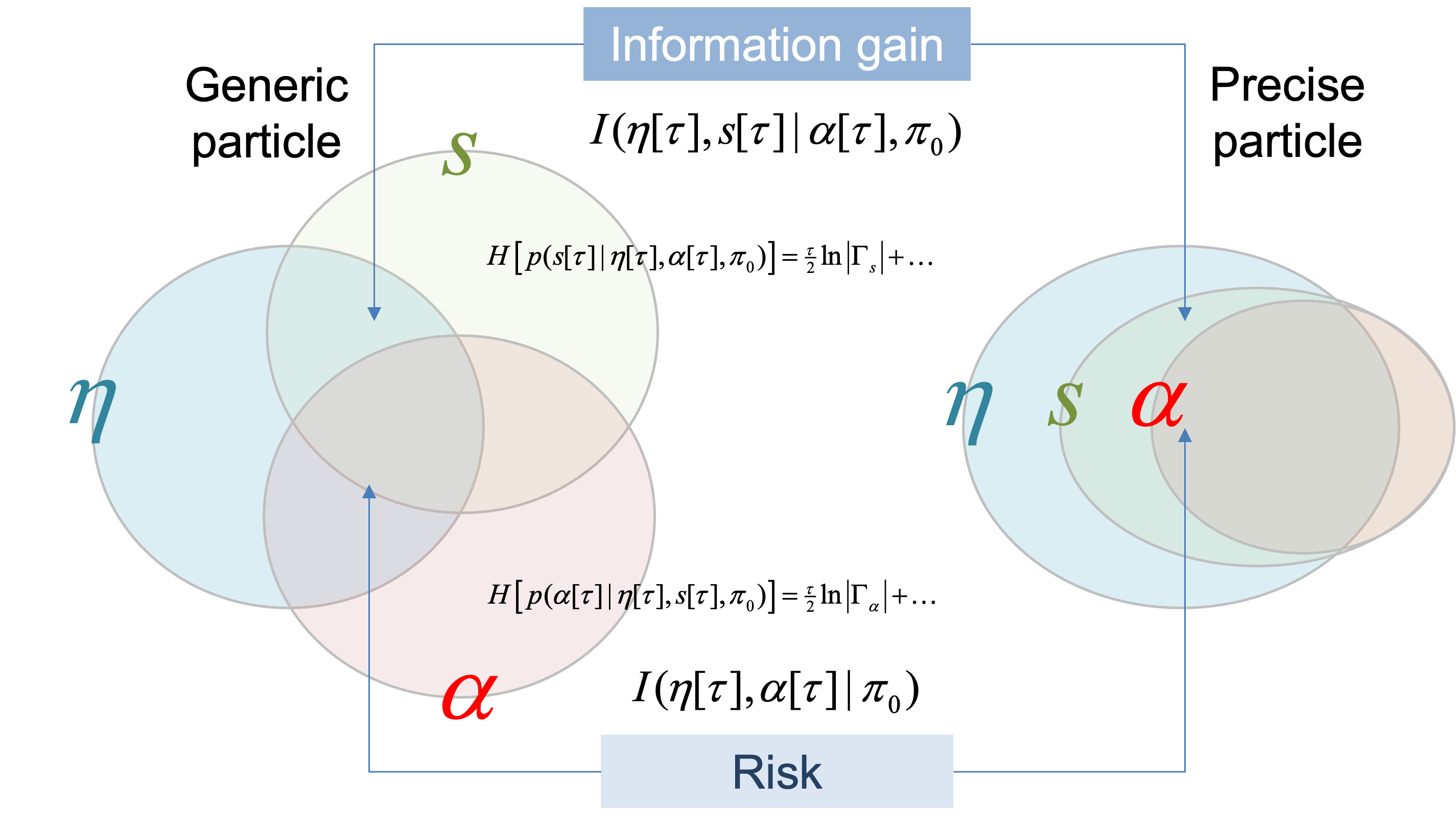}
    \caption[Generic and precise particles]{\textbf{Generic and precise particles}. These information diagrams depict the entropy of external, sensory and autonomous paths, where intersections correspond to shared or mutual information. A conditional entropy corresponds to an area that is outside the variable upon which the entropy is conditioned. The diagram on the left shows the generic case, in which uncertainty about paths inherits from random fluctuations that determine the conditional entropies of paths. When the amplitude of random fluctuations on the motion of particular states is very small, we have precise particles in which there is no uncertainty about autonomous paths, given sensory paths (the right information diagram). Similarly, there is no uncertainty about sensory paths given external and autonomous paths. Note that because we are dealing with continuous states, we are implicitly interpreting the entropies as the limiting density of discrete points (LDDP), which have a lower bound of zero \cite{jaynesInformationTheoryStatistical1957}. (LDDP is an adjustment to differential entropy which ensures that entropy is lower bounded by zero. LDDP equals the negative KL-divergence between the density in question and a uniform density). Two relative entropies (information gain and risk) are highlighted as areas of intersection. These will play an important role later, when decomposing the action (i.e., expected free energy) of autonomous paths.}
    \label{fig: 4}
\end{figure}

This suggests that precise particles—such as you and me—will respond to environmental flows and fluctuations in a precise and predictable fashion. Figure \ref{fig: 4} illustrates the difference between generic and precise particles using an information diagram. Note that for precise particles, there is no uncertainty about autonomous states, given sensory states. This follows because the flow of autonomous states depends only on sensory states and themselves. Is the behaviour of precise particles a sufficient description of sentient behaviour? 

On one reading, perhaps: one can reproduce biological behaviour by numerically integrating \eqref{eq: 21} or \eqref{eq: 27} under a suitable generative (state-space) model specifying the motion of external and sensory states\footnote{A generative model can be specified through the flow of external or sensory states, and the random fluctuations of their motion; that is, the first two lines of \eqref{eq: 27}. Observing that the free energy \eqref{eq: 26} is only a function of these flows and the covariance of fluctuations, it is sufficient to specify those covariances, rather than the whole structure of the fluctuations.}. Figure \ref{fig: 5} illustrates the implicit computational architecture used to simulate sentient behaviour by integrating \eqref{eq: 21}. This scheme allows one to simulate the internal and active states through sensory states caused by external dynamics. Figure \ref{fig: 6} showcases an example from the active inference literature, that integrates \eqref{eq: 27} under a suitably specified generative model, to simulate sentient behaviour that looks like handwriting. The details of the simulation and the details of the generative model are not relevant here but are summarised in the figure legend; what is important is to get a sense of the kind of behaviour that can be reproduced by integrating \eqref{eq: 27}.

 \begin{figure}[t!]
    \centering
    \includegraphics[width=\textwidth]{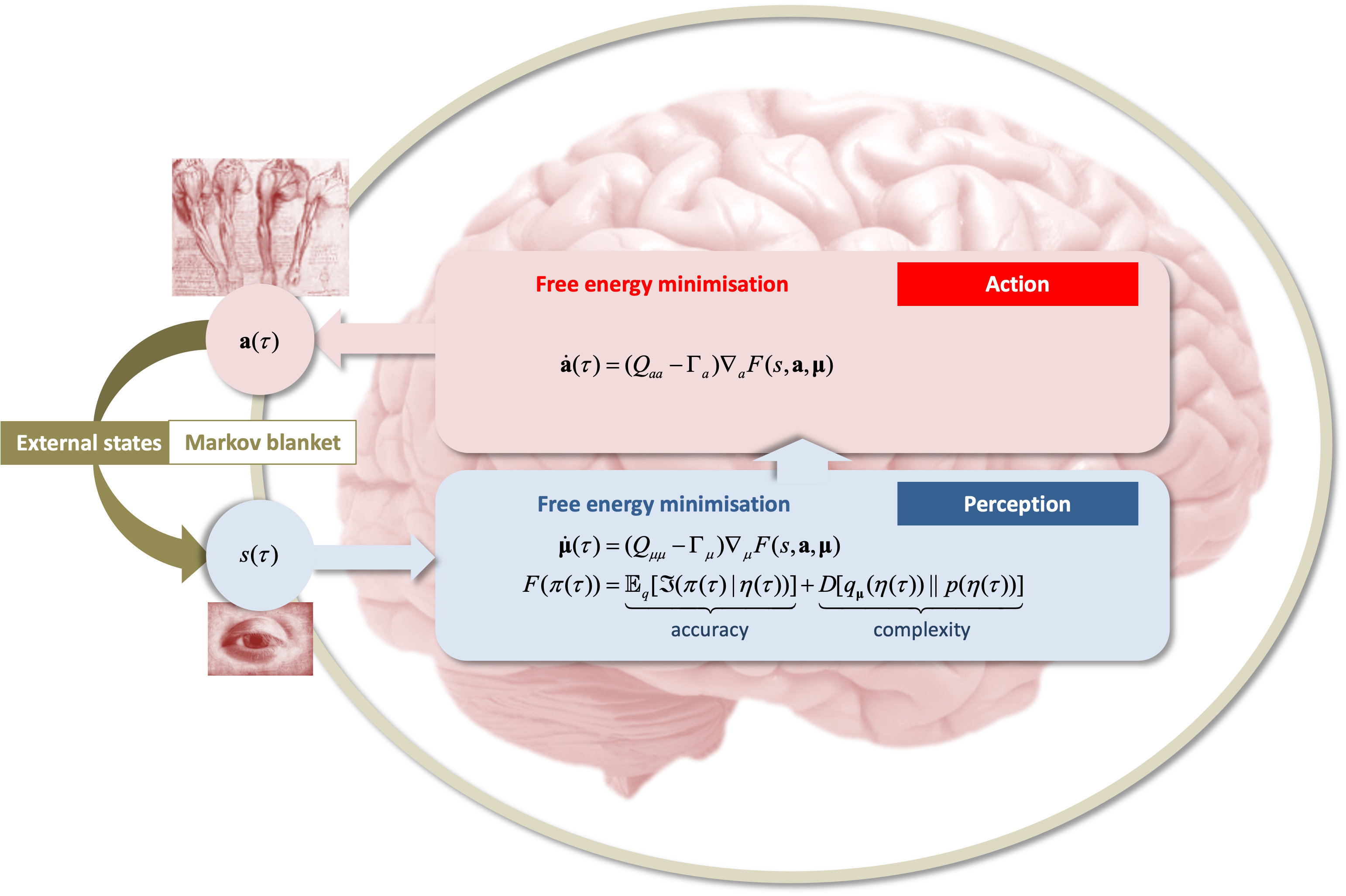}
    \caption[Bayesian mechanics and active inference]{\textbf{Bayesian mechanics and active inference}. This graphic summarises the belief updating implicit in gradient flows on variational free energy. These are the paths taken by a precise particle or the paths of least action of a generic particle. It illustrates a simple form of (active) inference that has been used in a variety of applications and simulations; ranging from handwriting and action observation~\cite{fristonActionUnderstandingActive2011}, through to birdsong and generalised synchrony in communication~\cite{fristonActiveInferenceCommunication2015}. In brief, sensory states furnish free energy gradients (often expressed as prediction errors), under some generative model. Neuronal dynamics are simulated as a flow on the resulting gradients to produce internal states that parameterise posterior beliefs about external states. Similarly, active states are simulated as a flow on free energy gradients that generally play the role of prediction errors. In other words, active states mediate motor or autonomic reflexes~\cite{feldmanNewInsightsActionperception2009,mansellControlPerceptionShould2011}. An example of this kind of active inference is provided in the next figure.}
    \label{fig: 5}
\end{figure}

  \begin{figure}
    \centering
    \includegraphics[width=0.8\textwidth]{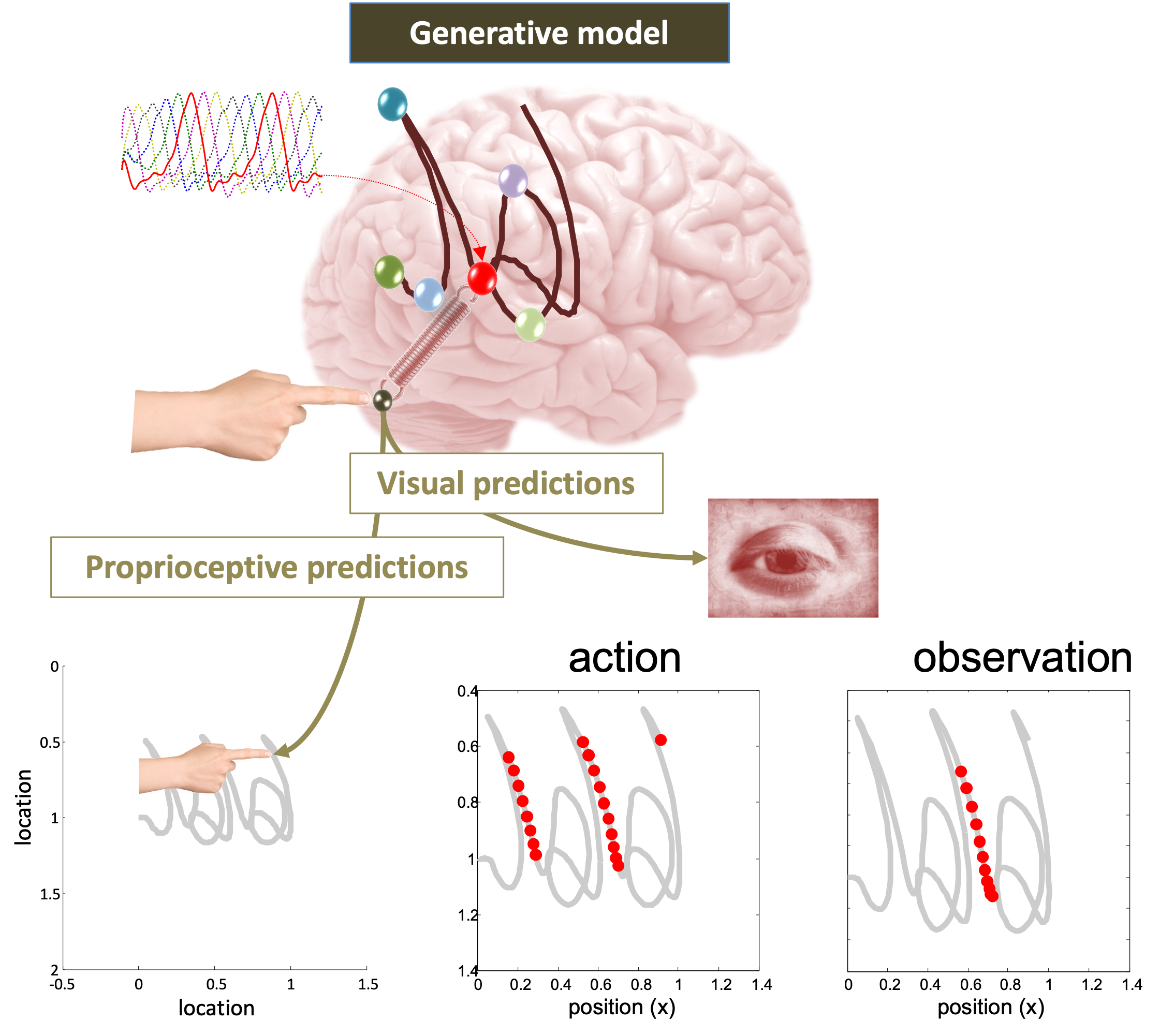}
    \caption[Sentient behaviour and action observation]{\textbf{Sentient behaviour and action observation}. This figure illustrates a simulation of active inference (here, writing) evinced by a precise particle, in terms of inferences about external states of the world, consequent predictions about sensory input, and ensuing action. The autonomous dynamics that underwrite this behaviour rest upon a generative model of sensory states in the form of Lotka-Volterra dynamics; see sample sensory trajectories as (arbitrarily) coloured lines in the upper left inset. The generative model defines the joint density under which internal trajectories can be seen as parameterising external states. This model is not a description of the true external states (which here are simply the positions of the joints in the simulated arm---with dynamics given by simple Newtonian rules). In this generative model, external trajectories are  assumed to follow predator-prey like dynamics such that a succession of peaks are generated for a subset of external states (or coordinates) in turn. Each coordinate is associated with a location in Euclidean space that attracts the agent’s finger (the active states); i.e., with a trajectory towards that attracting point. The resulting attracting point is thus a weighted sum of each possible attracting point weighted by the coordinates following the Lotka-Volterra trajectory. In turn, the internal states supply predictions of what sensory states should register if the agent’s beliefs were true. Active states (i.e., the forces driving changes in the angular velocities of the limb joints) try to suppress the ensuing prediction error by adjusting expected changes in sensed angular velocity, through exerting forces on the agent’s joints (not shown). The subsequent movement of the arm is traced out in the lower-left panel. This trajectory has been plotted in a moving frame of reference so that it looks like handwriting (e.g., a succession of ‘j’ and ‘a’ letters). The lower right panels show the activity of one internal state during distinct phases of ‘action’, and ‘action-observation’. During the action phase, sensory states register the visual and proprioceptive consequences of movement, while under action observation, only visual sensations are available—as if the agent was watching another agent. The red dots correspond to the times during which this internal state exceeded an arbitrary threshold. The key thing to note here is that this internal state responds preferentially when, and only when, the motor trajectory produces a down-stroke, but not an up-stroke—evincing a cardinal feature of neuronal responses, namely, their functional selectivity. Furthermore, with a slight delay, this internal state responds during action and action observation. From a biological perspective, this is interesting because it speaks to an empirical phenomenon known as mirror neuron activity~\cite{galleseMirrorNeuronsSimulation1998,rizzolattiMirrorneuronSystem2004,kilnerPredictiveCodingAccount2007}. Please see~\cite{fristonActionUnderstandingActive2011} for further details.}
    \label{fig: 6}
\end{figure}

The example in Figure \ref{fig: 6} illustrates an application of the free energy principle. Here, instead of describing a system by deriving its NESS density, we have specified some equations of motion (and covariance of random fluctuations) to realise particular dynamics using \eqref{eq: 27} and \eqref{eq: free energy generalised}. In effect, we have simulated self-evidencing, starting from a definition (i.e., state-space generative model) of paths that characterise this kind of particle\footnote{The example in Figure \ref{eq: 6} used \eqref{eq: 27} with generalised coordinates of motion up to fourth order. 
Numerical analyses suggest that simulating generalised motion up to order six (i.e., ignoring all subsequent orders of motions) is sufficient in most circumstances \cite{fristonVariationalTreatmentDynamic2008}.}.

These simulations speak to a key aspect in the applications of the FEP. Hitherto, we have simply defined the variational density as the conditional density over external states given a sensory state. However, when simulating precise particles through a gradient flow on variational free energy, as in \eqref{eq: 21} or \eqref{eq: 27}, the requisite gradients have to be evaluated. In turn, this requires the functional form of the variational density or posterior distribution, which may be difficult to compute exactly\footnote{In Bayesian inference, it is well-known that computing the posterior distribution given data and a generative model $p(\eta \mid \pi) = p(\eta, \pi)/p(\pi)$ is computationally costly as it involves computing a (typically) high-dimensional integral $p(\pi)=\int p(\eta, \pi) d \eta$ (i.e., a partition function).}. In this case, we take a variational density that approximates the true posterior, whence the variational free energy becomes an upper bound on surprisal: see \eqref{eq: free energy upper bound}. From the perspective of Bayesian inference, this takes us from (computationally costly) \textit{exact} Bayesian inference to (computationally cheap) \textit{approximate} Bayesian inference~\cite{bealVariationalAlgorithmsApproximate2003,winnVariationalMessagePassing2005,dauwelsVariationalMessagePassing2007}. On one reading of its inception, this is why variational free energy was introduced \cite{feynmanStatisticalMechanicsSet1998}; namely, to convert a computationally expensive marginalisation problem into a computationally manageable optimisation problem. Note that when using generalised coordinates to realise active inference; i.e., \eqref{eq: 27}, we are generally employing approximate Bayesian inference: the functional form of the variational density inherits directly from Gaussian assumptions about random fluctuations, however the expansion in generalised coordinates on which it is based upon \eqref{eq: 24} is generally an approximation to the underlying dynamic (cf. \ref{footnote expansion in gen coords}).

\subsection*{Summary}
Precise particles, immersed in an imprecise world, respond (almost) deterministically to external fluctuations\footnote{\textbf{Question}: does the absence of random fluctuations preclude dissipative gradient flows? No, because the gradients can increase with the precision of random fluctuations. In the limit of no random fluctuations, the steady-state density tends towards a delta function (i.e., a fixed-point attractor) and the dissipative gradients tend towards infinity.}. This means, given a generative model (i.e., NESS density), one can solve the equations of motion in \eqref{eq: 27} to predict how autonomous states evolve as they pursue their path of least action. So, why might this limiting behaviour be characteristically biological?

Precise particles may be the kind of particles that show lifelike or biotic behaviour, in the sense they respond predictably, given their initial states and the history of external influences. The distinction between imprecise (e.g., statistical) and precise (e.g., classical) particles rests on the relative contribution of dissipative and conservative flow to their path through state-space, where solenoidal flow predominates in the precise setting. This means precise particles exhibit solenoidal behaviour such as oscillatory and (quasi) periodic orbits—and an accompanying loss of detailed balance, i.e., turbulent and time-irreversible dynamics~\cite{andresMotionSpikesTurbulentLike2018,decoTurbulentlikeDynamicsHuman2020,dacostaEntropyProductionStationary2022}. On this view, one might associate precise particles with living systems with characteristic biorhythms~\cite{lopesdasilvaNeuralMechanismsUnderlying1991,kopellNeuronalAssemblyDynamics2011,arnalCorticalOscillationsSensory2012,buzsakiScalingBrainSize2013}; ranging from gamma oscillations in neuronal populations, through slower respiratory and diurnal cycles to, perhaps, lifecycles per se. Turning this on its head, one can argue that living systems are a certain kind of particle that, in virtue of being precise, evince conservative dynamics, biorhythms and time irreversibility.

One might ask if solenoidal flow confounds the gradient flows that underwrite self-evidencing. In fact, solenoidal flow generally augments gradient flows—or at least this is what it looks like. In brief, the mixing afforded by solenoidal flow can render gradient descent more efficient~\cite{ottControllingChaos1990,hwangAcceleratingDiffusions2005,hwangAcceleratingGaussianDiffusions1993,lelievreOptimalNonreversibleLinear2013,aslimaniNewHybridAlgorithm2018}. An intuitive example is stirring sugar into coffee. The mixing afforded by the solenoidal stirring facilitates the dispersion of the sugar molecules down their concentration gradients. On this view, the solenoidal flow can be regarded as circumnavigating the contours of the steady-state density to find a path of steepest descent.

The emerging picture here is that biotic systems feature solenoidal flow, in virtue of being sufficiently large to average away random fluctuations, when coarse-graining their dynamics \cite{fristonFreeEnergyPrinciple2019a}. From the perspective of the information geometry induced by the FEP, this means biological behaviour may be characterised by internal solenoidal flows that do not change variational free energy—or surprisal—and yet move on the internal (statistical) manifold to continually update Bayesian beliefs about external states. Biologically, this may be a description of central pattern generators \cite{arnalCorticalOscillationsSensory2012,grossSpeechRhythmsMultiplexed2013} that underwrite rhythmical activity (e.g., walking and talking) that is characteristic of biological systems \cite{buzsakiNeuronalOscillationsCortical2004}. The example in Figure \ref{fig: 6} was chosen to showcase the role of solenoidal flows in Bayesian mechanics that—in this example—arise from the use of Lotka-Volterra dynamics in the generative model. In psychology, this kind of conservative active inference may be the homologue of being in a ‘flow state’~\cite{csikszentmihalyiFlowPsychologyOptimal2008}.

In short, precise particles may be the kind of particles we associate with living systems. And precise particles have low entropy paths. If so, the question now becomes: what long-term behaviour does this class of particle show? In other words, instead of asking which behaviours \textit{lead} to low entropy dynamics, we can now ask which behaviours \textit{follow from} low entropy dynamics? We will see next that precise particles appear to plan their actions and, perhaps more interestingly, show information and goal-seeking behaviour.

\section{Path integrals, planning and curious particles}
\label{sec: curious}

While the handwriting example in Figure \ref{fig: 6} offers a compelling simulation of self-evidencing—in the sense of an artefact creating its own sensorium—there is something missing as a complete account of sentient behaviour. This is because we have only considered the response of autonomous states to sensory states over limited periods of time. To disclose a deeper Bayesian mechanics, we need to consider the paths of autonomous states over extended periods. This takes us to the final step and back to the path-integral formulation.

In the previous section, we focused on linking dynamics to densities over (generalised) states. In brief, we saw that internal states can be construed as parameterising (Bayesian) beliefs about external states at any point in time. In what follows, we move from densities over \textit{states} to densities over \textit{paths}—to characterise the behaviour of particles in terms of their trajectories.

In what follows, we will be dealing with predictive posterior densities over external and particular paths, given (initial) particular states, which can be expressed in terms of the variational density parameterised by the current (initial) internal state:\footnote{\textbf{Question:} Why is the variational density parameterised by the initial internal state rather than the initial internal mode? The answer is that in precise particles, the absence of fluctuations on particular dynamics means that the internal states always coincides with the internal mode.}
 \begin{equation}
 \label{eq: 29}
\begin{aligned}
q\left(\eta[\tau], \pi[\tau] \mid \pi_{0}\right) & \triangleq \mathbb{E}_{q_{\mu}}\left[p\left(\eta[\tau], \pi[\tau] \mid \eta_{0}, \pi_{0}\right)\right]=p\left(\eta[\tau], \pi[\tau] \mid \pi_{0}\right) \\
q_{\mu}\left(\eta_{0}\right) &=p\left(\eta_{0} \mid \pi_{0}\right).
\end{aligned}
\end{equation}	

All this equation says is that, given the initial particular states, we can evaluate the joint density over external and particular paths, because we know the density over the initial external states, which is parameterised by the initial internal state.

We are interested in characterising autonomous responses to initial particular states. This is given by the action of autonomous paths as a function of particular states. In other words, we seek an expression for the probability of an autonomous path that \textit{(i)} furnishes a teleological description of self-organisation and \textit{(ii)} allows us to simulate the sentient trajectories of particles, given their sensory streams. Getting from the action of particular paths to the action of autonomous paths requires a marginalisation over sensory paths. This is where the precise particle assumption comes in: it allows us to eschew this (computationally costly) marginalisation by expressing the action of particular paths as an \textit{expected free energy}.

Recall that when random fluctuations on the motion of particular states vanish, there is no uncertainty about autonomous paths, given external and sensory paths. And there is no uncertainty about sensory paths given external and autonomous paths. If we interpret entropies as the limiting density of discrete points (see Figure \ref{fig: 4}), then the uncertainty about particular, autonomous and sensory paths, given external paths, become interchangeable:
 \begin{equation}
 \label{eq: 30}
\begin{aligned}
\Gamma_{\pi} &\equiv 0\\
&\Rightarrow\\
\H\left[p\left(\pi[\tau] \mid \eta[\tau], \pi_{0}\right)\right] &=\underbrace{\H\left[p\left(\alpha[\tau] \mid \eta[\tau], s[\tau], \pi_{0}\right)\right]}_{=0}+\H\left[p\left(s[\tau] \mid \eta[\tau], \pi_{0}\right)\right] \\
&=\underbrace{\H\left[p\left(s[\tau] \mid \eta[\tau], \alpha[\tau], \pi_{0}\right)\right]}_{=0}+\H\left[p\left(\alpha[\tau] \mid \eta[\tau], \pi_{0}\right)\right] \\
& \Rightarrow \\
\mathbb{E}_{q}\left[\ln p\left(\pi[\tau] \mid \eta[\tau], \pi_{0}\right)\right] &=\mathbb{E}_{q}\left[\ln p\left(s[\tau] \mid \eta[\tau], \pi_{0}\right)\right]=\mathbb{E}_{q}\left[\ln p\left(\alpha[\tau] \mid \eta[\tau], \pi_{0}\right)\right]
\end{aligned}
\end{equation}
We can leverage this exchangeability to express the action of autonomous paths in terms of an expected free energy. From \eqref{eq: 29} and \eqref{eq: 30}, we have (dropping the conditioning on initial states for clarity):
 \begin{equation}
 \label{eq: 31}
\begin{aligned}
&0=\mathbb{E}_{q}\left[\ln \frac{p(\eta[\tau], \alpha[\tau])}{q(\eta[\tau], \alpha[\tau])}\right]=\mathbb{E}_{q}\left[\ln \frac{p(\alpha[\tau] \mid \eta[\tau]) p(\eta[\tau])}{q(\eta[\tau] \mid \alpha[\tau]) q(\alpha[\tau])}\right]\\
&=\mathbb{E}_{q}\left[\ln \frac{p(s[\tau] \mid \eta[\tau]) p(\eta[\tau])}{q(\eta[\tau] \mid \alpha[\tau])}-\ln q(\alpha[\tau])\right]=\mathbb{E}_{q(\alpha[\tau])}[\mathcal{A}(\alpha[\tau])-G(\alpha[\tau])]\\
&=D\left[q(\alpha[\tau]) \| \mathrm{e}^{-G}\right] \Rightarrow G(\alpha[\tau])=\mathcal{A}(\alpha[\tau])\\
&G(\alpha[\tau])=\mathbb{E}_{q(\eta[\tau], s[\tau] \alpha[\tau])}[\overbrace{\underbrace{\ln q(\eta[\tau] \mid \alpha[\tau])-\ln p(\eta[\tau])}_{\text {Expected complexity }}}^{\text {Risk }} \overbrace{-\underbrace{\ln p(s[\tau] \mid \eta[\tau])}_{\text {Expected accuracy }}}^{\text {Ambiguity }}]
\end{aligned}
\end{equation}	
All we have done here is to exchange the density over autonomous paths, conditioned on external paths, with the corresponding density over sensory paths (in the second line) thanks to the precise particle assumption. By gathering terms into a functional of autonomous paths, we recover autonomous action as an expected free energy.

 \begin{figure}
    \centering
    \includegraphics[width=\textwidth]{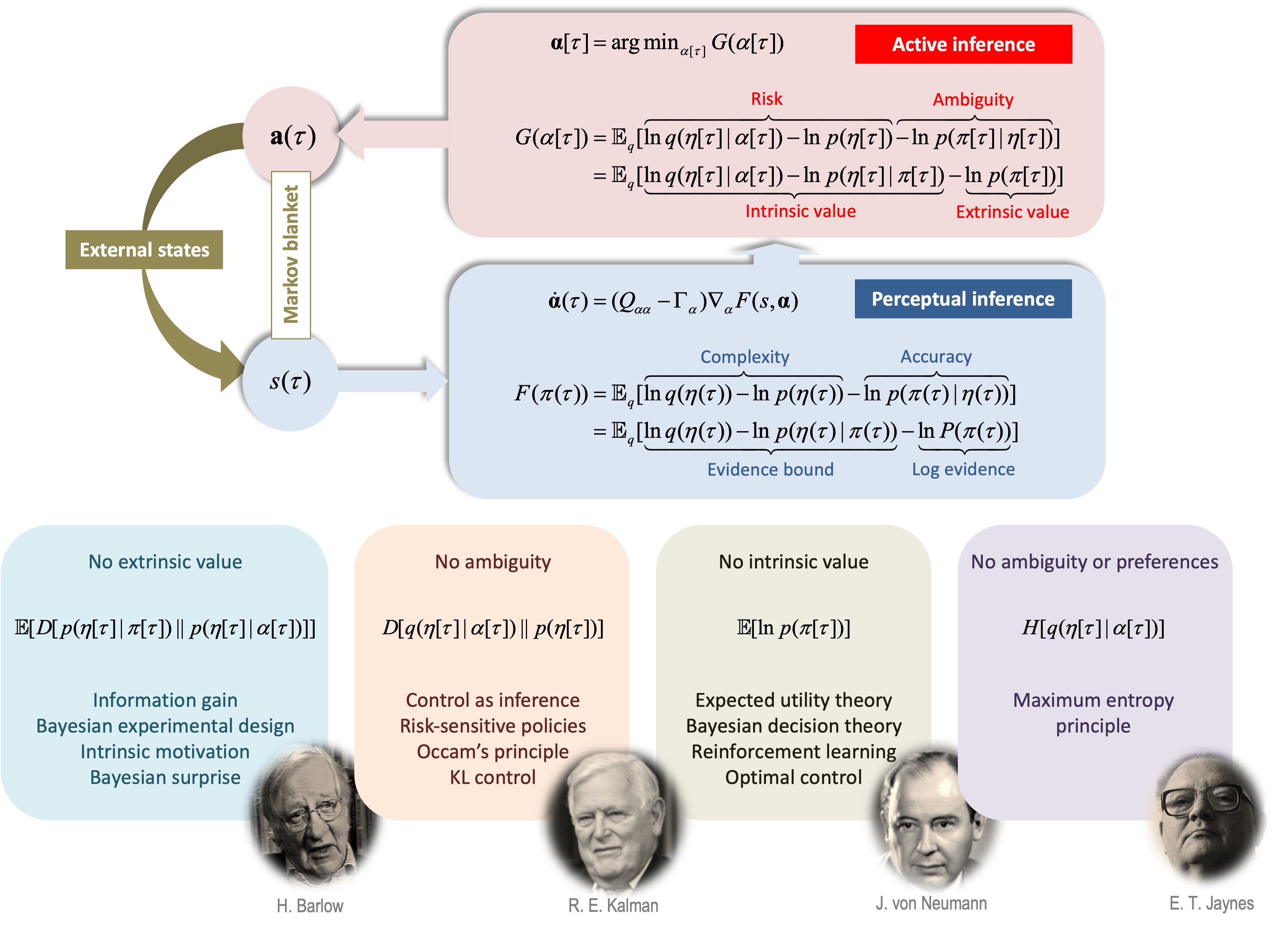}
    \caption[Expected free energy and active inference]{\textbf{Expected free energy and active inference}. This figure illustrates active inference, and highlights various points of contact with other accounts of sentient, purposeful or intelligent behaviour. The upper panel casts action and perception as the minimisation of expected and variational free energy, respectively. Crucially, the path integral formulation of active inference introduces posterior beliefs over autonomous paths (i.e., policies) that entail a description of planning as inference~\cite{attiasPlanningProbabilisticInference2003,botvinickPlanningInference2012,kaplanPlanningNavigationActive2018}. When simulating active inference, posterior beliefs about external paths, under plausible policies, are optimised by a gradient flow on the variational (free energy) bound on log evidence—as in Figure \ref{fig: 3}. These beliefs are then used to evaluate the expected free energy of allowable policies, from which actions can be selected~\cite{fristonActiveInferenceProcess2017,dacostaActiveInferenceDiscrete2020,barpGeometricMethodsSampling2022a}. Crucially, expected free energy contains terms that arise in various formulations of optimal behaviour that predominate in cognitive science, engineering and economics. These terms are disclosed when one removes certain sources of uncertainty.
    For example, if we remove ambiguity, decision-making minimises risk, which corresponds to aligning predictions with preferences about the external course of events. This underwrites prospect theory of human choice behaviour in economics~\cite{kahnemanProspectTheoryAnalysis1979} and modern approaches to control as inference~\cite{levineReinforcementLearningControl2018,rawlikStochasticOptimalControl2013,toussaintRobotTrajectoryOptimization2009}, variously known as Kalman duality~\cite{kalmanNewApproachLinear1960,todorovGeneralDualityOptimal2008a}, KL control~\cite{kappenOptimalControlGraphical2012} and maximum entropy reinforcement learning~\cite{ziebartModelingPurposefulAdaptive2010}.
    If we further remove preferences, decision-making maximises the entropy of external trajectories. This maximum entropy principle~\cite{jaynesInformationTheoryStatistical1957,lasotaChaosFractalsNoise1994} can be interpreted as least committing to a presupposed external trajectory and therefore keeping options open \cite{klyubinKeepYourOptions2008}.
    If we reintroduce ambiguity, but ignore preferences, decision-making maximises intrinsic value or expected information gain~\cite{mackayInformationTheoryInference2003}. This underwrites Bayesian experimental design~\cite{lindleyMeasureInformationProvided1956} and active learning in statistics~\cite{mackayInformationBasedObjectiveFunctions1992}, intrinsic motivation and artificial curiosity in machine learning and robotics~\cite{oudeyerWhatIntrinsicMotivation2007,schmidhuberFormalTheoryCreativity2010,bartoNoveltySurprise2013,sunPlanningBeSurprised2011,deciIntrinsicMotivationSelfDetermination1985}. This is mathematically equivalent to optimising expected Bayesian surprise and mutual information, which underwrites visual search~\cite{ittiBayesianSurpriseAttracts2009,parrGenerativeModelsActive2021} and the organisation of our visual apparatus~\cite{barlowPossiblePrinciplesUnderlying1961,linskerPerceptualNeuralOrganization1990,opticanTemporalEncodingTwodimensional1987a}.
    Lastly, if we remove intrinsic value, we are left with maximising extrinsic value or expected utility, which underwrites expected utility theory~\cite{vonneumannTheoryGamesEconomic1944}, game theory, optimal control~\cite{bellmanDynamicProgramming1957, astromOptimalControlMarkov1965a} and reinforcement learning~\cite{bartoReinforcementLearningIntroduction1992}. Bayesian formulations of maximising expected utility under uncertainty are also known as Bayesian decision theory~\cite{bergerStatisticalDecisionTheory1985}.
    The expressions for variational and expected free energy are arranged to illustrate the relationship between \textit{complexity} and \textit{accuracy}, which become \textit{risk} and \textit{ambiguity} in the path integral formulation. This suggests that risk-averse policies minimise expected complexity or computational cost~\cite{schmidhuberFormalTheoryCreativity2010}.}
    \label{fig: 7}
\end{figure}

By analogy with the expression for variational free energy \eqref{eq: 18}, the expressions for the expected free energy in \eqref{eq: 31} suggest that \textit{accuracy} becomes \textit{ambiguity}, while \textit{complexity} becomes \textit{risk}. So why have we called these terms ambiguity and risk? Ambiguity is just the expected precision or conditional uncertainty about sensory states given external states. A heuristic example of an imprecise likelihood mapping—between external and sensory paths—would be a dark room, where there is no precise information at hand. Indeed, according to \eqref{eq: 31}, sensory paths into dark rooms should be highly unlikely. However, this is not the complete story, in the sense that the risk puts certain constraints on any manifest tendency to minimise ambiguity. 

Here, risk is simply the divergence between external paths given an autonomous path (i.e., policy or plan), relative to external states of affairs. The marginal density over external paths is often referred to in terms of \textit{prior preferences}, because they constitute the priors of the generative model characterising the particle's behaviour \cite{parrGeneralisedFreeEnergy2019}. In short, the expression for expected free energy, suggests that particles will look as if they are (i) minimising the risk of incurring external trajectories that diverge from prior preferences, while (ii) resolving ambiguity in response to external events. In this formulation, autonomous paths play the dual role of registering the influences of external events (via ambiguity), while also causing those events (via risk). 

The autonomous path with the least expected free energy is the most likely path taken by the autonomous states. 
 \begin{equation}
 \label{eq: 32}
\begin{aligned}
G(\alpha[\tau]) &=\mathcal{A}(\alpha[\tau]) \\
& \Rightarrow \boldsymbol \alpha[\tau]=\arg \min _{\alpha[\tau]} G(\alpha[\tau]) \\
& \Rightarrow \delta_{\alpha} G(\boldsymbol \alpha[\tau])=0 \\
\mathbb{E}[G(\alpha[\tau])] &=\mathbb{E}[\mathcal{A}(\alpha[\tau])]=\H[p(\alpha[\tau])]
\end{aligned}
\end{equation}	

In short, expected free energy scores the autonomous action of particles that do not admit noisy dynamics. Expected free energy has a specific form that inherits from the assumption that the amplitude of particular fluctuations is small, which is the case for precise articles by definition. Although variational and expected free energy are formally similar, they are fundamentally different kinds of functionals: variational free energy is a functional of a density over states, while expected free energy is a functional of a density over paths. Variational free energy can also be read as a function of particular states, while expected free energy is a function of an autonomous path. Finally, variational free energy is a bound on surprisal, while expected free energy is not a bound—it is the action of autonomous trajectories.

Expected free energy plays a definitive role in active inference, where it can be regarded as a fairly universal objective function for selecting autonomous paths of least action. Figure \ref{fig: 7} shows that the expected free energy contains terms that arise in various formulations of optimal behaviour; ranging from optimal Bayesian design~\cite{lindleyMeasureInformationProvided1956} through to control as inference~\cite{ziebartModelingPurposefulAdaptive2010,levineReinforcementLearningControl2018}. We refer the reader to ~\cite{dacostaRewardMaximizationDiscrete2023,fristonSophisticatedInference2021,sajidActiveInferenceBayesian2022,hafnerActionPerceptionDivergence2020,millidgeRelationshipActiveInference2020a} for formal investigations of the relationship between these formulations.

Equipped with a specification of the most likely autonomous path—in terms of expected free energy—we can simulate fairly lifelike behaviour, given a suitable generative model. An example is provided in Figure \ref{fig: 9}---relying upon the computational architecture in Figure \ref{fig: 8}---which illustrates the ambiguity resolving part of the expected free energy in a simulation of visual epistemic foraging.

This epistemic aspect of expected free energy can be seen more clearly if we replace the conditional uncertainty about sensory paths with conditional uncertainty about particular paths, noting that they are the same by \eqref{eq: 30}. After rearrangement, we can express expected free energy in terms of \textit{expected value} and \textit{expected information gain}
~\cite{sajidActiveInferenceBayesian2022,barpGeometricMethodsSampling2022a}:
 \begin{equation}
 \label{eq: 33}
\begin{aligned}
&G(\alpha[\tau])=\mathbb{E}_{q(\eta[\tau], s[\tau] \mid  \alpha[\tau])}[\ln q(\eta[\tau] \mid \alpha[\tau])-\ln p(\eta[\tau])-\ln p(\pi[\tau] \mid \eta[\tau])]\\
&=\underbrace{\overbrace{\mathbb{E}_{q(s[\tau] \mid \alpha[\tau]}[\mathcal{A}(\pi[\tau])]}^{\text {Expected value }}}_{\text {Bayes optimal decisions }}-\underbrace{\overbrace{\mathbb{E}_{q(s[\tau] \mid \alpha[\tau])}\left[D[p(\eta[\tau] \mid s[\tau], \alpha[\tau]) \| p(\eta[\tau] \mid \alpha[\tau])]\right.}^{\text {Expected information gain }}}_{\text {Bayes optimal design}}
\end{aligned}
\end{equation}

 \begin{figure}[t!]
    \centering
    \includegraphics[width=\textwidth]{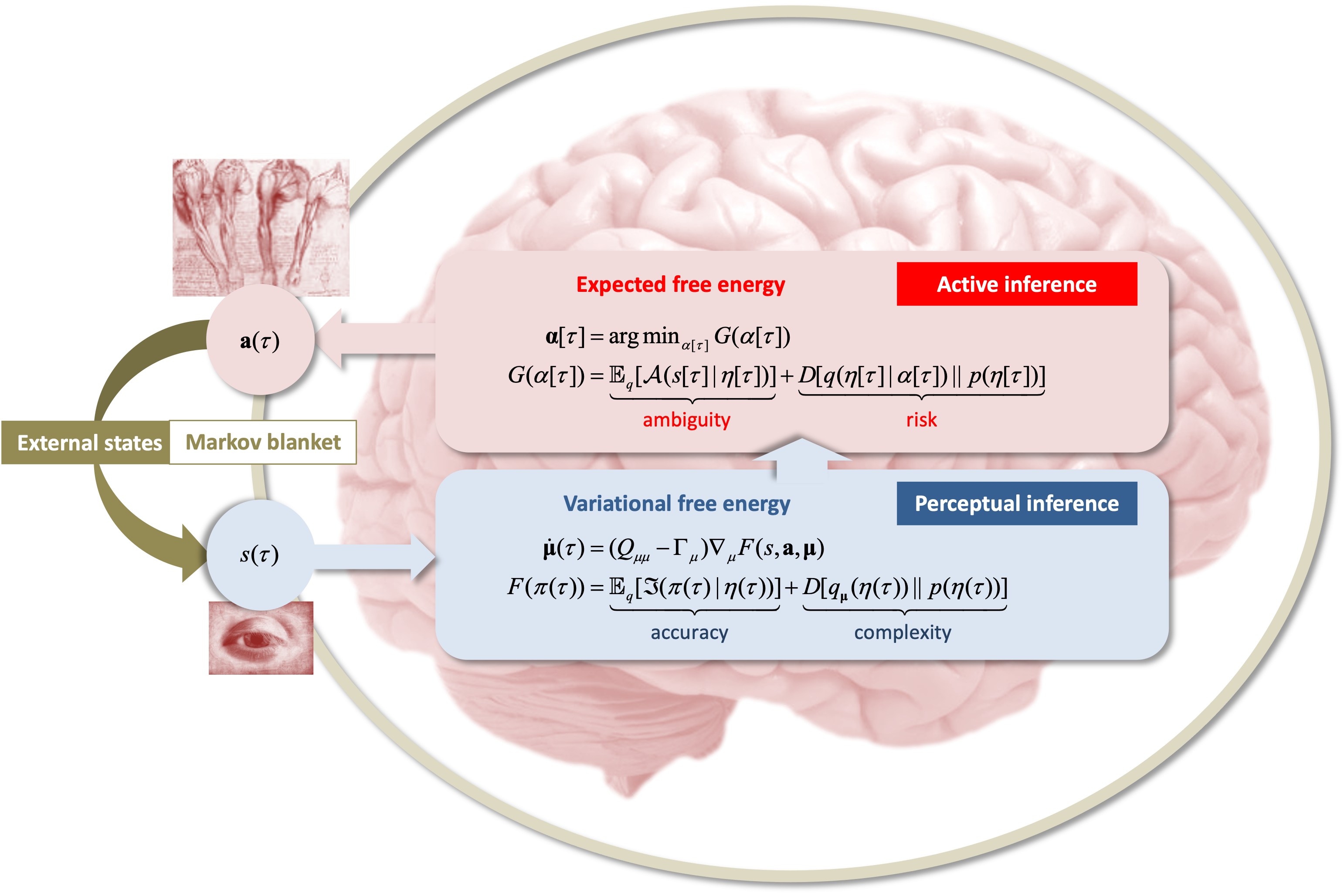}
    \caption[Bayesian mechanics and active inference]{\textbf{Bayesian mechanics and active inference}. This graphic summarises the belief updating implicit in the minimisation of variational and expected free energy. It describes active inference based upon autonomous paths or policies and has been used in a variety of applications and simulations; ranging from games in behavioural economics~\cite{fitzgeraldActiveInferenceEvidence2015} and reinforcement learning~\cite{schwartenbeckEvidenceSurpriseMinimization2015,sajidActiveInferenceDemystified2021} through to language understanding~\cite{fristonDeepTemporalModels2018} and scene construction~\cite{mirzaSceneConstructionVisual2016}. In this setup, actions solicit a sensory outcome that informs approximate posterior beliefs about hidden or external states of the world—via minimisation of variational free energy under a set of plausible policies (i.e., perceptual inference). The approximate posterior beliefs are then used to evaluate expected free energy and subsequent action (i.e., active inference). A key insight from simulations is that the form of the generative model can be quite different from the process by which external states generate sensory states. In effect, this enables agents (i.e., particles) to author their own sensorium in a fashion that has close connections with econiche construction~\cite{bruinebergSelforganizationFreeEnergy2014}. Please see~\cite{fristonGraphicalBrainBelief2017,dacostaActiveInferenceDiscrete2020} for technical details and for a heuristic discussion of how the belief updating could be implemented in the brain.}
    \label{fig: 8}
\end{figure}

This provides a complementary interpretation of expected free energy. The first term can be construed as expected cost in the sense it is the expected action of particular paths. This marginal likelihood scores the plausibility of a particle pursuing this kind of path and is usually interpreted in terms of expected loss (i.e., negative expected reward or utility)~\cite{vonneumannTheoryGamesEconomic1944,bartoReinforcementLearningIntroduction1992}, and pragmatic affordance~\cite{schwartenbeckEvidenceSurpriseMinimization2015,fristonActiveInferenceProcess2017}. The second term corresponds to the expected divergence between posterior beliefs about external paths, given autonomous paths, with and without sensory paths. In other words, it scores the resolution of uncertainty or expected information gain afforded by sensory trajectories arising from a commitment to an autonomous path. In this sense, it is sometimes referred to as epistemic affordance~\cite{fristonGraphicalBrainBelief2017}.

When simulating the kind of planning and active inference afforded by the path integral formulation, one usually works with discrete state-spaces and belief updating over discrete epochs of time~\cite{dacostaActiveInferenceDiscrete2020,fristonActiveInferenceProcess2017}. One can see this as a coarse-graining of continuous space-time into discrete space and time bins, where trajectories of continuous states become sequences of discrete states $x[\tau]=(x_1, \ldots, x_\tau)$. In discrete state-spaces, the generative model is usually formulated as a partially observed Markov decision process \cite{dacostaRewardMaximizationDiscrete2023,dacostaActiveInferenceDiscrete2020,parrActiveInferenceFree2022,smithStepbystepTutorialActive2022}, in which the paths of autonomous states constitute policies, which determine transitions among external states. Plausible policies can then be scored with their expected free energy and the next action is selected from the most likely policy $\alpha=(\alpha_{0}, \ldots, \alpha_{\tau})$\footnote{See \cite{dacostaActiveInferenceDiscrete2020,parrActiveInferenceFree2022} for a derivation of these functional forms in partially observable Markov decision processes.}
 \begin{equation}
 \label{eq: 34}
\begin{aligned}
\mathbf{a} &=\arg \min _{a} G(a, \boldsymbol{\mu}) \\
G &=\mathbb{E}_{Q}\left[\ln Q_{\mu}\left(\eta_{1}, \ldots, \eta_{\tau} \mid \eta_{0}, a\right)-\ln P\left(\eta_{1}, \ldots, \eta_{\tau} \mid \eta_{0}\right)-\ln P\left(s_{1}, \ldots, s_{\tau} \mid \eta_{1}, \ldots, \eta_{\tau}\right)\right] \\
& \approx \sum_{t>0} \underbrace{\mathbb{E}_{Q}\left[\ln Q_{\mu}\left(\eta_{t} \mid a\right)-\ln P\left(\eta_{t} \mid \eta_{0}\right)\right]}_{\text {Risk }} \underbrace{-\mathbb{E}_{Q}\left[\ln P\left(s_{t} \mid \eta_{t}\right)\right]}_{\text {Ambiguity }} \\
\boldsymbol{\mu} &=\arg \min _{\mu} F(s, a, \mu) \\
F&=\sum_{t<\tau} \underbrace{\mathbb{E}_{Q}\left[\ln Q_{ \mu}\left(\eta_{t} \mid a\right)-\ln P\left(\eta_{t+1} \mid \eta_{t}, a\right)-\ln P\left(\eta_{t} \mid \eta_{t-1}, a\right)\right]}_{\text {Complexity }}-\sum_{t \leq 0} \underbrace{\mathbb{E}_{Q}\left[\ln P\left(s_{t} \mid \eta_{t}\right)\right]}_{\text {Accuracy }}.
\end{aligned}
\end{equation}
The conditional independencies among states implicit in partially observed Markov decision processes entail the above functional forms for variational and expected free energies~\cite{dacostaActiveInferenceDiscrete2020,fristonActiveInferenceProcess2017}. Crucially, the posterior over external states uses a mean-field approximation, in which the joint distribution over current and future states factorises into marginal distributions at each point in time [this approximation can be finessed by conditioning on previous states, leading to a different (Bethe) variational free energy~\cite{yedidiaConstructingFreeEnergyApproximations2005,parrNeuronalMessagePassing2019}]. Note that the discrete version of variational free energy is a functional of a distribution over a sequence of states and can be regarded as the discrete homologue of the variational free energy of generalised states in \eqref{eq: free energy generalised}.

The ensuing minimisation of free energy can be formulated as gradient flows following \eqref{eq: 17}—between the discrete arrival of new sensory input—in a way that relates comfortably to neuronal dynamics~\cite{fristonActiveInferenceProcess2017,dacostaActiveInferenceDiscrete2020,dacostaNeuralDynamicsActive2021}. In some simulations, one can mix discrete and continuous state-space models by placing the former on top of the latter, to produce deep generative models that, through active inference, can be used to simulate many known aspects of computational anatomy and physiology in the brain~\cite{fristonGraphicalBrainBelief2017}.

\begin{figure}
    \centering
    \includegraphics[width=0.85\textwidth]{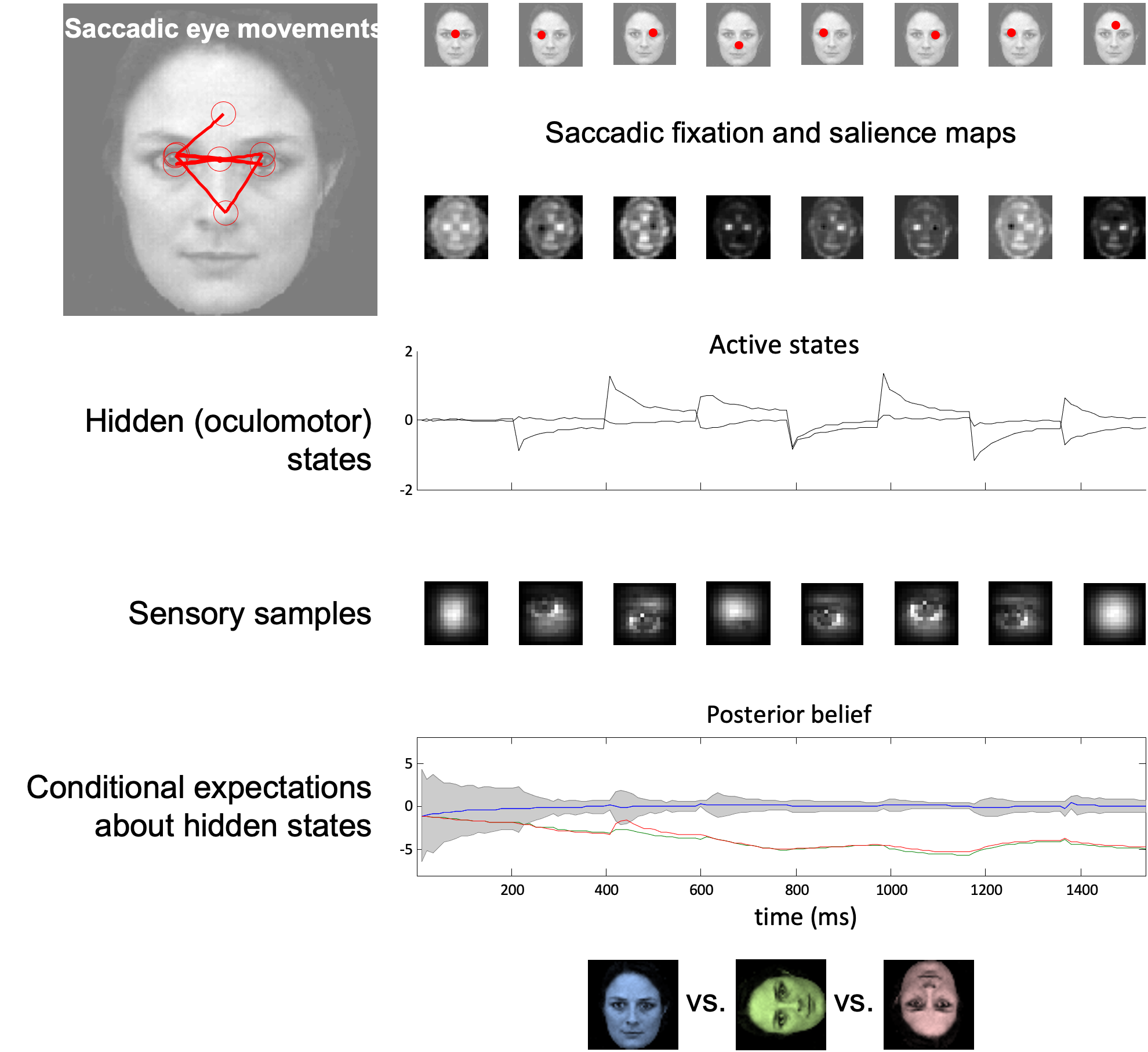}
    \caption[Epistemic foraging]{\textbf{Epistemic foraging}. This figure shows the results of a numerical simulation where a face was presented to an agent, whose responses were obtained by selecting active states that minimised expected free energy following an eye movement. The agent had three internal images or hypotheses (i.e., internal states) about the external state she might sample (an upright face (blue), an inverted face (magenta) and a rotated face (green)---shown at the bottom). The agent was presented with sensory samples of an upright face and her variational posterior over the external state was obtained by descending variational free energy over a 12ms time bin until the next saccade (i.e., action) was emitted. This perception-action cycle was repeated eight times. The agent's eye movements are shown as red dots at the end of each saccade in the upper row. The corresponding sequence of eye movements is shown in the upper-left inset, where the red circles correspond roughly to the proportion of the visual image sampled. These saccades are driven by the salience maps in the second row, which correspond to the expected free energy as a function of the policies; namely, the next saccade or where to look next. As expected free energies are defined in terms of trajectories, it is best to see the locations of on these salience maps as expressing the expected free energy of a trajectory that ends in that location. Note that these maps change with successive saccades as variational posterior beliefs become progressively more confident about the external state. Note also that salience is depleted in locations that were foveated in the previous saccade because these locations no longer have epistemic affordance or expected information gain (i.e., the ability to reduce uncertainty in the expected free energy). In neuroscience, this empirical phenomenon is known as inhibition of return. Oculomotor responses are shown in the third row in terms of the two oculomotor states corresponding to vertical and horizontal eye movements. The associated portions of the image sampled (at the end of each saccade) are shown in the fourth row. The fifth row shows the evolution of variational posterior beliefs about external (a.k.a. hidden) in terms of the log probability they assign to each possible external state (colour coded) and 90\% confidence intervals. The key thing to note is that the credence about the true external state supervenes over alternative expectations and, as a result, confidence about the category increases (and confidence intervals shrink to the mode). This illustrates the nature of evidence accumulation when selecting a hypothesis or percept that best explains sensory states. Please see~\cite{fristonPerceptionsHypothesesSaccades2012} for further details.}
    \label{fig: 9}
\end{figure}

\subsection*{Summary}
In summary, we now have at hand a way of identifying the most likely autonomous trajectory from any initial particular state that can be used to simulate the sentient behaviour of precise particles that we have associated with biotic systems. The expected free energy absorbs two aspects of Bayes optimal behaviour into the same (objective) functional~\cite{sajidActiveInferenceBayesian2022}. On a Bayesian reading, the expected information gain is exactly the same quantity that underwrites the principles of optimal Bayesian design~\cite{lindleyMeasureInformationProvided1956,mackayInformationTheoryInference2003,baliettiOptimalDesignExperiments2021}. In other words, the principles that prescribe the best way to solicit evidence that reduces uncertainty about various hypotheses. The second imperative comes from Bayesian decision theory, where the objective is to minimise some expected cost function expected under a choice or decision~\cite{waldEssentiallyCompleteClass1947,brownCompleteClassTheorem1981,bergerStatisticalDecisionTheory1985}. 

Teleologically, it is worth reflecting upon the differences between the generative models that underwrite state-wise and path-wise descriptions of Bayesian mechanics, respectively. For the state-wise formulation \eqref{eq: 21}, the generative model is just a joint density over external and particular states, supplied by—or supplying—the NESS density. For the path-wise formulation \eqref{eq: 27}, \eqref{eq: 34}, the generative model is a joint distribution over the paths of external and sensory states. In other words, there is an implicit state-space model of dynamics that can be summarised heuristically as modelling the consequences of an action on external and sensory dynamics. Because consequences follow causes, the generative model acquires a temporal depth~\cite{yildizBirdsongHumanSpeech2013,fristonDeepTemporalModels2018}. This depth required to describe any given particle may, of course, be another characteristic that distinguishes different kinds of particles. In short, the path-wise formulation describes \textit{particles that plan}, under a proximal or distal horizon.

\section{Conclusion}

There are many points of contact between the variational formulation above and other normative theories of self-organisation and purposeful behaviour. However, to focus the narrative we have deliberately suppressed demonstrating precedents, variants and special cases. Figure \ref{fig: 3} highlights a few relationships between the free energy principle and various formulations of self-organisation and sentient behaviour. In brief, this casts things like reinforcement learning and optimal control theory as optimising the marginal likelihood of particular states, conditioned upon a generative model supplied by a nonequilibrium steady-state density. It could be argued that the link between the free energy principle and established formulations is most direct for synergetics~\cite{hakenSynergeticsIntroductionNonequilibrium1978,hakenInformationSelforganizationUnifying2016} and related treatments of dissipative structures~\cite{prigogineTimeStructureFluctuations1978}. There is also a formal and direct link to information theoretic formulations and Bayesian statistics. Furthermore, the free energy principle can be regarded as dual to the constrained maximum entropy principle \cite{sakthivadivelEntropyMaximisingDiffusionsSatisfy2023}, where the constraints are supplied by the generative model. Please see~\cite{hafnerActionPerceptionDivergence2020,fristonSophisticatedInference2021} for a treatment of things like empowerment~\cite{klyubinEmpowermentUniversalAgentcentric2005}, information bottleneck~\cite{tishbyInformationBottleneckMethod2000} and predictive information~\cite{stillInformationtheoreticApproachCuriositydriven2012,stillThermodynamicsPrediction2012}.

In a similar vein, there are several accounts of optimal behaviour—in both its epistemic and pragmatic aspects—that are closely related to the path integral formulation of active inference. Some key relationships are highlighted in Figure \ref{fig: 7}, such as intrinsic motivation, artificial curiosity~\cite{oudeyerWhatIntrinsicMotivation2007,schmidhuberFormalTheoryCreativity2010,bartoNoveltySurprise2013} and optimal control~\cite{kappenPathIntegralsSymmetry2005,vandenbroekRiskSensitivePath2010,kappenOptimalControlGraphical2012}. The interesting thing about these other theories is that they are predicated on optimising some objective function that can be recovered from expected free energy by taking various sources of uncertainty off the table. This discloses things like the objective optimised in reinforcement learning and expected utility theory in behavioural psychology and economics, respectively~\cite{botvinickHierarchicallyOrganizedBehavior2009,bossaertsBehaviouralEconomicsNeuroeconomics2015a}. 

This chapter has focused on a single particle and has largely ignored the (external) context that leads to generalised synchrony among internal and external states. This synchronisation goes hand-in-hand with existence \textit{per se} and the Bayesian mechanics supplied by the free energy principle. The very fact that this mechanics rests upon synchronisation may speak to the emergence of synchronisation among formally similar particles; namely, populations or ensembles. In other words, an individual member of an ensemble or ecosystem owes its existence to the ensemble of which it is a member—at the level of multicellular organisation or indeed its conspecifics in evolutionary biology~\cite{manickaModelingSomaticComputation2019}. In a similar vein, the context established by supra- and subordinate scales plays an existential role. In brief, particles at one scale can only exist if there is a nonequilibrium steady-state density at a higher scale that entails Markov blankets of Markov blankets~\cite{palaciosMarkovBlanketsHierarchical2020}. Due to a separation of temporal scales, much of the self-evidencing at one scale is absorbed into the fast, random fluctuations at the scale above. For example, the fast electrophysiological fluctuations of a neuron become, random fluctuations from the point of view of neuronal population dynamics and sensory motor coordination in the brain~\cite{kelsoDynamicPatternsSelfOrganization1995,decoDynamicBrainSpiking2008,kelsoUnifyingLargeSmallScale2021}. This follows in a straightforward way from applying the apparatus of the renormalisation group. Please see~\cite{fristonFreeEnergyPrinciple2019a} for further discussion.

For brevity and focus, we have not considered applications of the free energy principle and active inference in detail. A brief review of the literature in this area will show that that the majority of applications are in the neurosciences~\cite{dacostaActiveInferenceDiscrete2020} with some exceptions: e.g.,~\cite{fristonKnowingOnePlace2015,ramsteadAnsweringSchrodingerQuestion2018}. Recently, there has been an increasing focus on active inference in the setting of machine learning and artificial intelligence~\cite{ueltzhofferDeepActiveInference2018,tschantzScalingActiveInference2019,barpGeometricMethodsSampling2022,fountasDeepActiveInference2020,hafnerActionPerceptionDivergence2020,catalRobotNavigationHierarchical2021,dacostaRewardMaximizationDiscrete2023}. Much of this literature deals with simulation and modelling and, specifically, scaling active inference to real-world problems. These developments speak to the shift in focus from the foundational issues addressed in this article to their applications. It is quite possible that the foundational aspects of the free energy principle may also shift as simpler interpretations and perspectives reveal themselves.

\chapter{Conclusion}
\label{ch:conclusions}

\section{Summary of thesis achievements}

This thesis has focused on three fundamental aspects of biological entities; namely, entropy production, Bayesian mechanics, and the free-energy principle. Its main contributions are threefold: 1) Providing a comprehensive mathematical theory of entropy production for stochastic differential equations that describes a greater number of systems than before, in particular Markovian approximations of stochastic differential equations driven by colored noise. 2) Developing a theory of Bayesian mechanics, with sufficient and necessary conditions for the internal states of a particle or biological entity to synchronize and infer the external states consistently with variational Bayesian inference in statistics and theoretical neuroscience. 3) Refining these conditions to obtain a description that is more exclusive to biological entities --- as opposed to merely physical ones ---called the free-energy principle. Perhaps the main practical outcome here are equations of motions that biological entities must satisfy in virtue of processing a boundary that separates them from their environment. These equations of motion, in terms of minimizing free energy and expected free energy can be used for producing simulations of biotic behavior --- and in artificial intelligence.

\section{Applications}

The main applications of this thesis is providing tools and formulas for computing entropy production in stochastic models of biological systems, and affording equations --- via the free-energy principle --- to simulate practically any kind of biological behavior. More broadly, the mathematical foundation to the free-energy principle presented here unlocks a coherent narrative from stochastic process and statistical physics, to cognitive neuroscience and biophysics, and to artificial intelligence. Indeed the equations provided by the free-energy principle form the basis of a generic framework for generating sentient behavior called \textit{active inference}, which has been used in a plethora of applications that include computational psychiatry, psychology and neurology, all the way to artificial intelligence, robotics and machine learning, passing by simulations of biological populations and ecosystems \cite{dacostaActiveInferenceDiscrete2020,rubinFutureClimatesMarkov2020}. The mathematical foundation of the free-energy principle enables a first principles approach to simulating behavior in all of these fields, which has been taken up by a multitude of research groups worldwide.

\section{Future work}

Perhaps the main strength of the free-energy principle – its genericity – is also its main weakness. By deriving a formal description of biological systems with as few assumptions as possible, we obtain a description of cognition and behavior that may apply to viruses and the human brain alike. As we move forward with this research program, one important goal is to refine the class of biological systems that we are willing to study, by singling out constraints and properties that are fundamental to the human brain and other organisms possessing higher forms of cognition, to derive refined forms of the principle that are more informative about these sentient beings. Another important challenge is to map the local computations done by neurons and neural populations in the brain with the overarching description provided by the free-energy principle, thereby bridging the detailed and coarse-grained descriptions of brain dynamics \cite{dacostaNeuralDynamicsActive2021,isomuraCanonicalNeuralNetworks2022}. A final challenge is to scale the methodologies of active inference to tackle high dimensional complex problems in robotics and artificial intelligence \cite{fountasDeepActiveInference2020,lanillosActiveInferenceRobotics2021}. This challenge will necessitate significant expertise in scalable Bayesian statistics and computer science.

\clearpage

\addcontentsline{toc}{chapter}{Data availability}

\begin{data}
\textbf{Chapter 2:} All data and numerical simulations can be reproduced with code freely available at \url{https://github.com/lancelotdacosta/entropy_production_stationary_diffusions}.

\textbf{Chapter 3:} All data and numerical simulations can be reproduced with code freely available at \url{https://github.com/conorheins/bayesian-mechanics-sdes}.

\textbf{Chapter 4:} All data is contained within the chapter.

\end{data}

\vspace{25pt}

\addcontentsline{toc}{chapter}{Author contributions}

\begin{contributions}
\textbf{Chapter 2:} 
I conceptualized the paper (together with G.A. Pavliotis); I wrote the initial draft; I did the mathematical analysis and the proofs in their entirety; I wrote the accompanying software, performed the simulations and produced the graphics; I reviewed the draft (together with G.A. Pavliotis) and edited for journal submission; I handled journal correspondence and revised the paper for publication.

\textbf{Chapter 3:} 
I conceptualized the paper (together with co-authors); I wrote the initial draft; I did the mathematical analysis and the proofs in their entirety; I wrote the accompanying software, performed the simulations and produced the graphics (all with C. Heins); I reviewed the draft and edited for journal submission (all with co-authors); I handled journal correspondence and revised the paper for publication.

\textbf{Chapter 4:} 
I wrote parts of the draft; I did parts of the mathematical analysis and proofs; I reviewed the draft and edited for journal submission (all with co-authors); I handled journal correspondence and revised the paper for publication.

\end{contributions}

\addcontentsline{toc}{chapter}{Bibliography}
\bibliographystyle{unsrt}
\bibliography{bib}

\end{document}